\documentclass[12pt]{article}
\usepackage{amsmath,amssymb,amsfonts}
\usepackage{graphicx}
\usepackage[round,longnamesfirst]{natbib}
\usepackage{setspace}
\usepackage{lmodern}
\usepackage[T1]{fontenc}
\usepackage{times}
\usepackage{hyperref}
\usepackage{multimedia}
\usepackage{bm}
\usepackage{proba}
\usepackage{enumerate}
\usepackage{natbib}
\usepackage{lscape}
\usepackage{placeins}
\usepackage{longtable}
\usepackage{epstopdf}
\usepackage{graphics}
\usepackage{tikz}
\usepackage{float}
\usepackage{placeins}
\usepackage{longtable}
\usepackage{appendix}
\usepackage{caption}
\usepackage{subcaption}
\usepackage{amstext}
\usepackage{bbm}
\usepackage{latexsym}
\usepackage{bigints}
\usepackage{booktabs}
\usepackage{xr}
\usepackage[normalem]{ulem}
\usepackage[amsmath]{ntheorem}
\usepackage[latin1]{inputenc}
\usepackage{makeidx}
\usepackage{mathtools}
\usepackage[ruled,vlined,linesnumbered,resetcount]{algorithm2e}
\usepackage{xcolor}
\usepackage{bigints}
\usepackage{arydshln}
\usepackage[margin=1in]{geometry}
\usepackage{colortbl}

\renewcommand{\cite}{\citet}
 \bibpunct{(}{)}{,}{a}{,}{,}
\newtheorem{ass}{Assumption}
\newtheorem{lem}{Lemma}[section]
\newtheorem{cor}{Corollary}[section]

\newtheorem{thm}{Theorem}[section]

\definecolor{dgreen}{rgb}{0,0.5,0}
\definecolor{dblue}{rgb}{0,0,0.9}
\definecolor{dred}{rgb}{0.6,0.0,0.1}
\definecolor{dgold}{rgb}{0.5,0.3,0.0}
\definecolor{dvio}{rgb}{0.6,0.3,0.5}
\definecolor{gray}{rgb}{0.5,0.5,0.5}

\newcommand{\bee}{\begin{equation}}
\newcommand{\eee}{\end{equation}}

\newcommand{\wh}{\widehat}

\newcommand{\wtl}{\widetilde}
\newcommand{\lt}{\left}
\newcommand{\rt}{\right}
\newcommand{\EE}{\mathbf{E}}
\newcommand{\argmin}{\mathrm{argmin}}

\newcommand{\argmax}{\mathrm{argmax}}
\newcommand{\plim}{\mathrm{plim}}
\makeatletter \@addtoreset{equation}{section} \makeatother \renewcommand{\theequation}{\thesection.\arabic{equation}}
\bibpunct{(}{)}{,}{a}{,}{,}
\allowdisplaybreaks
\newtheorem{example1}{Example}

\newtheorem{example2}{Example}

\newtheorem{example3}{Example}

\newtheorem{example4}{Example}

\newtheorem{example5}{Example}

\newtheorem{example1case1}{Example}

\input{tcilatex}
\makeindex

\begin{document}

\title{\textbf{Testing for Endogeneity: \\A Moment-Based Bayesian Approach}%
\thanks{%
The views expressed here are our own and do not necessarily represent the
views of the Federal Reserve Bank of Philadelphia or the Federal Reserve
System.}}
\author{Siddhartha Chib\thanks{%
Olin Business School, Washington University in St. Louis, Campus Box 1133, 1
Bookings Drive, St. Louis, MO 63130. e-mail: chib@wustl.edu.} \\
%EndAName
Minchul Shin\thanks{
Research Department, Federal Reserve Bank of Philadelphia, 10 Independence
Mall, Philadelphia, PA 19106, e-mail: visiblehand@gmail.com.}\\
Anna Simoni\thanks{
CREST, CNRS, ENSAE, Ecole Polytechnique, Institut Polytechnique de Paris, 5
Avenue Henry Le Chatelier, 91120 Palaiseau, France, e-mail:
simoni.anna@gmail.com.} }
\maketitle

\begin{abstract}
A standard assumption in the Bayesian estimation of linear regression models
is that the regressors are exogenous in the sense that they are uncorrelated
with the model error term. In practice, however, this assumption can be
invalid. In this paper, using the exponentially tilted empirical likelihood framework, we develop a Bayes factor test for endogeneity that
compares a base model that is correctly specified under exogeneity but
misspecified under endogeneity against an extended model that is correctly
specified in either case. We provide a comprehensive study of the
log-marginal exponentially tilted empirical likelihood. We demonstrate that
our testing procedure is consistent from a frequentist point of view: as the
sample grows, it almost surely selects the base model if and only if
the regressors are exogenous, and the extended model if and only if the
regressors are endogenous. The methods are illustrated with simulated data,
and problems concerning the causal effect of automobile prices on automobile
demand and the causal effect of potentially endogenous airplane ticket
prices on passenger volume.
\end{abstract}

%\date{}

\textbf{Keywords}: Bayesian inference; Causal inference; Exponentially
tilted empirical likelihood; Endogeneity; Exogeneity; Instrumental
variables; Marginal likelihood; Posterior consistency. \thispagestyle{empty}

\thispagestyle{empty}

\newpage

\doublespacing

\setcounter{page}{2}

\section{Introduction}

Consider the semiparametric linear regression model
\[
y = x^{\prime}\beta + z_{1}^{\prime}\gamma + \varepsilon,
\]
where $y \in \mathbb{R}$ is the outcome variable, $x \in \mathbb{R}^{d_x}$ is the treatment vector of interest,
$z_{1} \in \mathbb{R}^{d_{z_{1}}}$ is a vector of controls, and $\varepsilon$ is an unobserved disturbance.
A common assumption in Bayesian analysis is that the regressors $x$ are exogenous, meaning that they are uncorrelated with the error term $\varepsilon$. In many empirical settings this assumption is questionable.
If one has access to a set of valid instruments $z_{2} \in \mathbb{R}^{d_{z_{2}}}$, with dimension at least as large as that of $x$,
it becomes possible to conduct a Bayesian analysis that correctly accounts for endogeneity.
Such analysis can be formulated within both parametric and semiparametric frameworks,
as in the early contributions of \citet{Dreze1976} and the subsequent developments in \citet{KleibergenVanDijk1998},
\citet{CHAOPhillips1998}, \citet{KleibergenZivot2003}, and \citet{Schennach2005},
among many others.
Recent work, including \citet{HoogerheideKleibergenVanDijk200763}, \citet{liao2011},
\citet{FS2012JoE}, \citet{florens_simoni_2016}, \citet{FlorensSimoni2015}, \citet{kato2013},
\citet{Shin2014}, and \citet{ChibShinSimoni2018}, has extended these ideas
to semiparametric and likelihood-free settings.

An important question that has received little attention in the Bayesian literature concerns the testing of endogeneity.
Frequentist methods, such as the classical Durbin-Wu-Hausman test, offer an asymptotic procedure that assesses exogeneity by comparing estimators that are consistent under different assumptions.
These procedures, however, do not translate naturally into the Bayesian framework.
From a Bayesian standpoint, it is more straightforward to conceptualize the test for endogeneity as a comparison of models, rather than that of
parameters. Specifically, one can develop a test that is based on the relative support provided by the data for a model with exogeneity versus a model with endogeneity.

To develop this approach, and to avoid distributional assumptions, we proceed within a Bayesian framework for moment condition models.
We consider two competing specifications. The first is a base model defined by the moment conditions
\[
\mathbf{E}[\varepsilon(\theta)x]=0, \quad
\mathbf{E}[\varepsilon(\theta)z_1]=0, \quad
\mathbf{E}[\varepsilon(\theta)z_2]=0,
\]
where $\varepsilon(\theta) = y - x^{\prime}\beta + z_{1}^{\prime}\gamma$ and $\theta := (\beta,\gamma)$.
The second is an extended model that relaxes the exogeneity restriction and allows
\[
\mathbf{E}[\varepsilon(\theta)x] = v,
\]
where $v$ captures the covariance between the error term and the endogenous variable $x$.
We formulate the prior-posterior analysis of each model through the nonparametric exponentially tilted empirical likelihood (ETEL) and then compare the two models by marginal likelihoods and the Bayes factor. This approach offers several methodological advantages. The Bayes factor measures the strength of evidence for the two models on a continuous scale rather than through a strict accept or reject rule. In addition, the use of ETEL provides robustness to misspecification of the joint distribution of $(y, x, z_1, z_2)$ and allows us to obtain results that remain valid without specifying the distribution of the disturbances.

Using the \citet{Chib1995} marginal likelihood identity, we know that the log marginal likelihood decomposes into three parts:
the log ETEL, the log prior, and the negative log posterior ordinate.
We establish that this expression is asymptotically equal to a term bounded in probability, plus a term proportional to the Kullback-Leibler divergence between the true and the closest probability distribution satisfying the moment restrictions term, plus a penalty that corresponds to the ones of the Bayesian information criterion (BIC). The penalty arises from a change-of-variable transformation of the posterior density evaluated at the true or pseudo-true value of the parameters. The log of the Jacobian of this transformation constitutes the penalty, while the posterior density of the local parameter at zero is bounded in probability as $n$ increases. Accordingly, when $x$ is exogenous, the log-ETELs of the two models are asymptotically the same but the penalties differ. When $x$ is endogenous, the difference in the log-ETELs dominates, which leads to the selection of the extended model. Thus, the test correctly discriminates between the two data-generating processes in large samples. Our construction parallels the logic of the Hausman test, where one compares an estimator that is inconsistent under endogeneity with one that is not. Here, the comparison is between models that differ in the number of overidentifying moment conditions. In this sense, our test may be viewed as the Bayesian analogue of the Hausman specification test.

Compared to \cite{ChibShinSimoni2018}, our work builds on the same Bayesian ETEL framework but makes several key contributions. First, while \cite{ChibShinSimoni2018} explains how to test among different models, it does not address how to construct the specific models required to test hypotheses of interest in practical applications, %to be compared in concrete applications,
such as the endogeneity problem we examine here. In this paper, we explicitly construct the models necessary for testing endogeneity. Second, we introduce an assumption that guarantees the existence of the ETEL function, which, to our knowledge, is absent from the existing ETEL literature. This assumption ensures that the ETEL function exists at least in a suitable neighborhood of the true parameter value with probability approaching one. The issue arises because the ETEL function, as the solution to a constrained optimization problem, may have an empty feasible set for certain parameter values $\theta$. Without this assumption, derivatives of the ETEL function cannot be defined, posing challenges for both frequentist and Bayesian ETEL approaches. Third, we provide a more direct proof demonstrating that the ETEL function is asymptotically equivalent to a quadratic function. This result underpins our establishment of a Bernstein-von Mises theorem, which we then use to prove the consistency of our testing procedure. The direct proof leverages the linearity in $\theta$ of the moment restrictions implied in the instrumental variable (IV) regression problem. Along the same lines, the assumptions in this paper are weaker than those in \cite{ChibShinSimoni2018}, as they exploit the IV linear regression structure.\\
\indent Finally, as a by-product of proving the consistency of our testing procedure, we derive a new asymptotic representation of the log-marginal ETEL function, defined as the ETEL function integrated with respect to the prior distribution of the model parameter. We show that the log-marginal likelihood of each model can be asymptotically decomposed into a Kullback-Leibler (KL) divergence term (between the true distribution and the closest distribution satisfying the model's moment restrictions) plus a BIC-type penalty. We derive this penalty using a novel approach: by re-expressing the posterior ordinate at the true (or pseudo-true) parameter value via a local parameter change of variables, the resulting log-Jacobian yields the penalty term, while the posterior density of the local parameter evaluated at zero is $\mathcal{O}_{p}(1)$ as $n\rightarrow \infty$. This representation clarifies the mechanics of Bayes-factor testing in this context and leads to a more transparent proof of model-selection consistency than that in \cite{ChibShinSimoni2018}. Notably, we emphasize that the penalty plays a role in selecting the correct model only when $x$ is exogenous, in which case both models are correctly specified.

The rest of the paper proceeds as follows.
Section~\ref{sec:Preliminaries} summarizes Bayesian estimation and comparison of moment condition models using ETEL.
Section~\ref{sec:Models} presents the base and extended models and provides a simulated example to illustrate the procedure.
Section~\ref{sec:testing} develops the test for endogeneity and analyzes the large-sample behavior of the log-marginal likelihood, establishing consistency of the test.
Section~\ref{sec:examples} presents empirical examples, and Section~\ref{sec:conclusion} concludes.
An Appendix contains the proofs of the main results.

%------------------------------------------------------------------------------------------------------------------------------------------------------------------------
%------------------------------------------------------------------------------------------------------------------------------------------------------------------------
\section{Preliminaries}\label{sec:Preliminaries}

In this section we briefly provide the background on Bayesian estimation of
moment condition models using the exponentially tilted empirical likelihood
(ETEL). Further details can be found in \cite{Schennach2005} and \cite%
{ChibShinSimoni2018}.
%------------------------------------------------------------------------------------------------------------------------------------------------------------------------
\subsection{Moment restrictions and feasible distributions}\label{subsec:prelim_moments}
Let $w\in \mathbb{R}^{d_w}$ be a random vector, and let $\theta \in \Theta \subset \mathbb{R}^{p}$ denote a generic parameter vector. Let $\mathbb{M}$ denote the set of all probability distributions on $\mathbb{R}^{d_w}$. For a known vector of moment functions
$$g(w,\theta ):\mathbb{R}^{d_w}\times \Theta\rightarrow \mathbb{R}^{d},$$
the moment restriction is given by
\begin{equation}\label{eq:prelim_moment_restriction}
\mathbf{E}^{Q}[g(w,\theta )]=0,
\end{equation}
\noindent where $Q\in\mathbb{M}$ is a probability distribution under which the restriction is imposed and $\EE^Q[\cdot]$ denotes the expectation under $Q$. For each $\theta \in \Theta$, define the subset of distributions that satisfy the moment restriction by
\begin{equation}
\mathcal{Q}(\theta) := \left\{ Q\in \mathbb{M}:\,\mathbf{E}^{Q}[g(w,\theta)]=0\right\}.
\end{equation}
Suppose the data $w_{1:n}:=(w_{1},\ldots ,w_{n})$ are independently drawn from the true distribution $P$, which need not belong to $\mathcal{Q}(\theta)$ for some $\theta \in \Theta $. The expectation taken with respect to the true distribution $P$ is denoted by $\EE[\cdot] \equiv \EE^P[\cdot]$.
%
%------------------------------------------------------------------------------------------------------------------------------------------------------------------------
\subsection{Sample ETEL weights, tilted sample distribution, and likelihood}\label{subsec:prelim_etel_sample}

The empirical counterpart of \eqref{eq:prelim_moment_restriction} is the weighted restriction 
\begin{equation}\label{eq:prelim_sample_moment_restriction}
  \sum_{i=1}^n q_i\,g(w_i,\theta)=0, \qquad q_i>0, \qquad \sum_{i=1}^n q_i=1,
\end{equation}
where $\{q_i\}_{i=1}^n$ is a discrete distribution supported on $\{w_i\}_{i=1}^n$. Equivalently, any such weight vector induces a discrete probability measure on $\mathbb{R}^{d_w}$ with support $\{w_1,\ldots,w_n\}$
\begin{equation}\label{eq:prelim_Qn_def}
  Q_n(\cdot) := \sum_{i=1}^n q_i\,\delta_{w_i}(\cdot),
\end{equation}
where $\delta_{w_i}$ denotes the point mass at $w_i$.\\
\indent Since there might be no $\theta\in\Theta$ such that the uniform empirical distribution $q_i\equiv 1/n$ satisfy \eqref{eq:prelim_sample_moment_restriction}, we define the ETEL weights as the
solution to the following Kullback Leibler (KL)-closest feasible reweighting problem:
%
%$\sum_{i=1}^{n}q_{i}g(w_{i},\theta )=0$ with $\{\widehat{q}_{i}(\theta)\}_{i=1}^{n}$ denoting a discrete distribution depending on $\theta$ with support points $\{w_{i},\,i=1,\ldots ,n\}$. As it might be that the empirical distribution that places probability mass $\{\frac{1}{n}\}$ on each observation does not satisfy the moment restrictions, we consider the problem of finding the discrete distribution $\{\widehat{q}_{i}(\theta )\}_{i=1}^{n}$ that is the nearest in the Kullback-Leibler (KL) discrepancy to the empirical distribution and that satisfies the moment restriction.
%
%Enforcing the requirement that the moment restrictions are satisfied under this discrete distribution, the probability masses emerge as the solution to the optimization program
\begin{align}
  \{\widehat{q}_{i}(\theta )\}_{i=1}^n:=& \argmax_{q_{1}>0,\ldots ,q_{n}>0}\sum_{i=1}^{n}\left[ -q_{i}\log (nq_{i})\right]  \notag \\
  \text{subject to }\sum_{i=1}^{n}q_{i}=& 1,\qquad \sum_{i=1}^{n}q_{i}g(w_{i},\theta )=0 \label{eq:KLbasic}
\end{align}
\noindent which depends on the parameter vector $\theta\in\Theta$. Let $H_n\subset \Theta$ be the set of $\theta$ values for which the program \eqref{eq:KLbasic} is feasible, \textit{i.e.} the set of $\theta$s such that the interior of the convex hull of $\{g(w_{i},\theta ): i=1,\ldots,n\}$ contains zero. %, that is, the set of $\theta$s for which the problem is feasible. If $H_n$ is empty, there is no $\theta $ for which \eqref{eq:KLbasic} has a solution.
Assumption \ref{ass:feasibility} below ensures that $H_n$ is non-empty with probability approaching $1$.\\
\indent For $\theta\in H_n$, the ETEL (sample likelihood) is defined as the product of the ETEL weights:
%
%The ETEL is the likelihood constructed as the product of the probability masses solution to \eqref{eq:KLbasic}. It is defined for every $\theta\in H_n$ as
\begin{equation*}
  \widehat{q}(w_{1:n}|\theta ) = \dprod\limits_{i=1}^{n}\widehat{q}_{i}(\theta).
\end{equation*}
\noindent The ETEL arises as the integrated likelihood obtained by integrating out the unknown distribution $Q$ with respect to a particular nonparametric prior that imposes the moment restrictions \eqref{eq:prelim_sample_moment_restriction} conditional on a $\theta\in H_n$; see \cite{Schennach2005}.\\
\indent Given a prior density $\pi(\theta)$, the ETEL-based posterior is the truncated posterior 
\begin{equation}\label{eq:posterior}
  \pi^{n}(\theta |w_{1:n})\propto \pi(\theta)\,\widehat{q}(w_{1:n}|\theta )\,I[\theta \in H_n],
\end{equation}
where $I[A]$ denotes the indicator function. Since \eqref{eq:posterior} is not available in closed form, posterior summaries are obtained via tailored Markov chain Monte Carlo (MCMC) methods. Appendix \ref{sec:mcmc-ml-computation} describes computational details on MCMC sampling and related calculations.

\subsection{Dual representation and log-ETEL identities}\label{subsec:prelim_dual_identity}
A convenient way to compute $\{\widehat{q}_{i}(\theta )\}$ is via the dual representation of \eqref{eq:KLbasic}. Define the ETEL multiplier as, for every $\theta\in H_n$
\begin{equation*}
  \widehat{\lambda }(\theta)\equiv \widehat{\lambda }(w_{1:n},\theta) := \arg\min_{\lambda \in \mathbb{R}^{d}}\frac{1}{n}\sum_{i=1}^{n}\exp \left(\lambda'g(w_{i},\theta)\right).
\end{equation*}
Then, for every $\theta\in H_n$:
\begin{equation}
  \widehat{q}_{i}(\theta) = \frac{e^{\widehat{\lambda}(\theta)'g(w_{i},\theta)}}{\sum_{j=1}^{n}e^{\widehat{\lambda }(\theta)'g(w_{j},\theta )}},\qquad i\leq n.  \label{eq:ETEL:dual}
\end{equation}
It is useful to view \eqref{eq:ETEL:dual} as an exponential tilting of the uniform empirical distribution $P_n(\cdot) := \frac{1}{n}\sum_{i=1}^n \delta_{w_i}(\cdot)$, which places mass $1/n$ on each observation. 
For fixed $\theta\in H_n$, the ETEL weights $\{\wh q_i(\theta)\}$ therefore define a \textit{tilted sample distribution}
\begin{equation}\label{eq:prelim_Qn_hat_def}
  \wh Q_n(\cdot\mid\theta) := \sum_{i=1}^n \wh q_i(\theta)\,\delta_{w_i}(\cdot),
\end{equation}
that is absolutely continuous with respect to $P_n$. In particular, for each support point $w_i$,
\begin{equation}\label{eq:prelim_RN_Qn_hat}
\frac{d\wh Q_n(\cdot\mid\theta)}{dP_n}(w_i) = n\,\wh q_i(\theta) = \frac{e^{\widehat{\lambda}(\theta)'g(w_{i},\theta)}}{\frac{1}{n}\sum_{j=1}^{n}e^{\widehat{\lambda }(\theta)'g(w_{j},\theta )}},
\end{equation}
so $\wh Q_n(\cdot\mid\theta)$ is the exponential tilt of $P_n$ that enforces the sample moment restriction \eqref{eq:prelim_sample_moment_restriction}. The multiplier $\widehat\lambda(\theta)$ satisfies the sample first-order condition
\begin{equation}\label{eq:etel_sample_foc}
\sum_{i=1}^n \widehat q_i(\theta)\,g(w_i,\theta)=0,
\end{equation}
which is the sample analogue of \eqref{eq:prelim_moment_restriction} under the ETEL weights. For later use, we record the log-ETEL in a form amenable to expansions. Summing $\log\widehat q_i(\theta)$ yields the exact identity
\begin{equation}\label{eq:log_etel_master_identity}
  \log \widehat q(w_{1:n}\mid\theta) = \sum_{i=1}^n \widehat\lambda(\theta)'g(w_i,\theta) - n\log\!\Big(\sum_{j=1}^n \exp\{\widehat\lambda(\theta)' g(w_j,\theta)\}\Big),
\end{equation}
or equivalently,
\begin{equation}\label{eq:log_etel_master_identity_normalized}
  \log \widehat q(w_{1:n}\mid\theta) = -n\log n + \sum_{i=1}^n \widehat\lambda(\theta)'g(w_i,\theta) - n\log\!\Big(\frac{1}{n}\sum_{j=1}^n \exp\{\widehat\lambda(\theta)'g(w_j,\theta)\} \Big).
\end{equation}

\subsection{Population KL projection, exponential tilting, and pseudo-true values}\label{subsec:prelim_population_projection}
The population counterpart of $\{\widehat{q}_{i}(\theta )\}_{i=1}^n$ is the distribution $Q^*(\theta )\in \mathcal{Q}(\theta)$ that is the closest to $P$ in the KL divergence. For each $\theta$ such that $\mathcal{Q}(\theta) \neq \emptyset$, define
\begin{equation*}
Q^*(\theta) := \arg\inf_{Q\in \mathcal{Q}(\theta)}\mathrm{KL}(Q||P),
\end{equation*}
\noindent where 
$$\mathrm{KL}(Q||P):=\int \log \left( \frac{dQ}{dP}\right) dQ$$ 
if $Q$ is absolutely continuous with respect to $P$, and $\mathrm{KL}(Q||P) = +\infty $, otherwise. The population counterpart of the ETEL multiplier $\widehat{\lambda}(\theta)$ is
\begin{equation*}
  \lambda_*(\theta) := \arg \min_{\lambda \in \mathbb{R}^{d}} \mathbf{E}[e^{\lambda'g(w_{i},\theta )}]
\end{equation*}
\noindent for every $\theta\in\Theta$ such that $\mathcal{Q}(\theta) \neq \emptyset$. This induces the population exponential tilt
\begin{equation}\label{eq:prelim_population_tilt_def}
  q(w;\theta) := \frac{\exp\{\lambda_*(\theta)' g(w,\theta)\}}{\EE[\exp\{\lambda_*(\theta)' g(w,\theta)\}]}, \qquad \EE[q(w;\theta)]=1.
\end{equation}
Under mild regularity conditions, the KL projection $Q^*(\theta)$ admits the Radon-Nikodym derivative representation
\begin{equation}\label{eq:prelim_RN_Q_star}
  \frac{dQ^*(\theta)}{dP}(w) = q(w;\theta).
\end{equation}
Thus, if $P$ admits a Lebesgue density $p(w)$, then $Q^*(\theta)$ has density given by an exponential tilt of $p(w)$:
\[
q^*(w;\theta) = p(w)\, \frac{\exp\{\lambda_*(\theta)' g(w,\theta)\}}{\EE[\exp\{\lambda_*(\theta)' g(w,\theta)\}]} .
\]
By a change of measure,
\begin{equation}\label{eq:prelim_change_of_measure}
  \EE^{Q^*(\theta)}\!\big[g(w,\theta)\big] = \EE\!\big[q(w;\theta)\,g(w,\theta)\big].
\end{equation}
The right-hand side is the population tilted moment condition. It is the moment restriction expressed under $P$ rather than under $Q^*(\theta)$. %We also define the tilted second moment matrix
%\begin{equation}\label{eq:prelim_Omega_def}
%  \Omega(\theta) := \EE\!\big[q(w;\theta)\,g(w,\theta)g(w,\theta)'\big],
%\end{equation}
%which will govern the quadratic approximation of the log-ETEL and the limiting distribution of the tilted score process.
%
If one or more moment conditions are misspecified, then $Q^*(\theta)\neq P$ for all $\theta\in\Theta$, and the pseudo-true value $\theta_*$ is defined as the minimizer of the reversed KL divergence
\begin{equation}\label{eq:prelim_eta_star_def}
  \theta_* := \arg\min_{\theta; \mathcal{Q}(\theta) \neq \emptyset}\mathrm{KL}\!\big(P\|Q^*(\theta)\big),
\end{equation}
where
\begin{equation}\label{eq:prelim_KL_reverse_def}
  \mathrm{KL}\!\big(P\|Q^*(\theta)\big) :=  \int \log\!\Big(\frac{dP}{dQ^*(\theta)}\Big)\,dP
\end{equation}
whenever $P$ is absolutely continuous with respect to $Q^*(\theta)$. Under correct specification, there exists $\theta_\circ\in\Theta$ such that $P\in\mathcal{Q}(\theta_\circ)$, in which case $Q^*(\theta_\circ)=P$ and
$\theta_* = \theta_\circ$. Moreover, in that case $\lambda_*(\theta_\circ)=0$ and hence $q(w;\theta_\circ)\equiv 1$. Finally, when the dual representation holds, the pseudo-true value $\theta_*$ may also be expressed in terms of the population tilt as
\begin{equation}\label{eq:prelim_eta_star_dual}
  \theta_* = \arg\max_{\theta; \mathcal{Q}(\theta) \neq \emptyset} \EE\!\left[\log\!\Big(\frac{\exp\{\lambda_*(\theta)' g(w,\theta)\}} {\EE[\exp\{\lambda_*(\theta)' g(w,\theta)\}]}\Big) \right],
\end{equation}
where the term inside the logarithm is the Radon-Nikodym derivative $[dQ^*(\theta)/dP](w)$ in \eqref{eq:prelim_RN_Q_star}.
%
%If one or more moment conditions are misspecified, then $Q^{\ast }(\theta )\neq P$ for any $\theta \in \Theta $ and the pseudo-true value $\theta
%_{\ast }$ is defined as the value of $\theta $ that minimizes KL$(P||Q^{\ast}(\theta ))$ over $\Theta $. Notice the inversion of the probabilities in
%the $\mathrm{KL}$ discrepancies used to define $Q^{\ast }(\theta )$ and $\theta_*$. Assumption \ref{Ass_absolute_continuity} stated in Section \ref{sec:assumptions} guarantees that both KL$(P||Q^{\ast}(\theta ))$ and $\mathrm{KL}(Q||P)$ are well-defined. Under correct specification, $Q^{\ast }(\theta _{\circ})=P $, for some $\theta _{\circ }\in \Theta $, and $\theta _{\ast }=\theta
%_{\circ }$.
%
%When the dual representation of the optimization problem \eqref{eq:KLbasic} holds, the pseudo-true value can also be obtained as
%\begin{equation}
%\theta _{\ast }=\arg \max_{\theta \in H_n}\mathbf{E}\log \Bigl(\frac{e^{\lambda_*^{\prime }(\theta )g(w,\theta )}}{\mathbf{E}[e^{\lambda_{*}'(\theta )g(W,\theta )}]}\Bigr),
%\end{equation}%
%\noindent where in this case the term within the brackets is $[dQ^{\ast
%}(\theta )/dP](w)$.

\section{Models}\label{sec:Models}
In this section we specialize the generic moment-restriction framework in Section \ref{sec:Preliminaries} to the semiparametric linear regression setting introduced in the Introduction. %In this section, we link the general setting presented in Section \ref{sec:Preliminaries} to the semiparametric linear regression presented in Section 1.
\subsection{Data, regression structure, and target parameter}\label{subsec:models_setup}
Let $w:=(y,x,z_{1},z_{2})$ $\in \mathbb{R}^{d+1}$ be distributed according to an unknown probability distribution $P$, where $d := d_{x}+d_{z_{1}}+d_{z_{2}}$ and $d_w = d + 1$. Throughout, $\mathbf{E}[\cdot ]:=\mathbf{E}^{P}[\cdot]$ denotes expectation under $P$. We assume that under $P$, the random vector $w$ follows the regression model
\begin{equation}
y=\beta_{\circ }^{\prime }x+\gamma _{\circ }^{\prime }z_{1}+\varepsilon,\qquad \mathbf{E}[\varepsilon(\theta_{\circ })z_{j}]=0\quad \text{ for }j=1,2,  \label{eq:model}
\end{equation}%
\noindent where $\theta_{\circ }:=(\beta_{\circ },\gamma _{\circ })\in\Theta \subset\mathbb{R}^p$ is the true value of the regression coefficients, viewed as a functional of $P$: $\theta_{\circ} \equiv \theta_{\circ}(P)$ and $p=d_{x}+d_{z_{1}}$. In model \eqref{eq:model}, the vector $z_{1}$ contains exogenous controls (including an intercept), and $z_{2}$ contains instrumental variables. The object of interest is the causal effect of $x$ on $y$, represented by $\beta_{\circ}$. For any $\theta := (\beta',\gamma')'\in\Theta\subset\mathbb{R}^p$, with $p=d_{x}+d_{z_{1}}$ and $\wtl{w}_1 := (x',z_1')'$, define the regression residual
\begin{displaymath}
  \varepsilon(\theta) := y-\beta'x - \gamma'z_{1} \equiv y-\theta'\tilde{w}_1.
\end{displaymath}
If $x$ is endogenous under $P$, then $\EE[\varepsilon(\theta_\circ)x]\neq 0$. When $d_{z_2}\geq d_x$, the instruments $z_2$ help identify $\beta_\circ$ despite endogeneity.
%Under the assumption that $d_{z_{2}}\geq d_{x}$, the instruments help to identify $\beta _{\circ }$ when $\mathbf{E}[\varepsilon _{i}(\theta _{\circ })x_{i}]\neq 0$.
%
\subsection{Base model $\mathcal{M}_b$ (possibly misspecified)}\label{subsec:models_base}
The base model, denoted by $\mathcal{M}_b$, imposes the moment restrictions
\begin{equation}\label{eq:base:model}
  \EE^{Q}[g_{b}(w,\theta )] = 0,\qquad Q\in \mathcal{Q}_{b}(\theta),
\end{equation}
where the base-model moment function is
\begin{equation*}
  g_{b}(w,\theta ):=\varepsilon (\theta )%
\begin{pmatrix}
x \\
z_{1} \\
z_{2}
\end{pmatrix} \in\mathbb{R}^d,\qquad d = d_x + d_{z_1} + d_{z_2},
\end{equation*}
and the set of distributions satisfying the base-model restrictions is
\begin{equation}
  \mathcal{Q}_{b}(\theta) = \left\{ Q\in \mathbb{M};\,\mathbf{E}^{Q}[g_{b}(w,\theta )]=0\right\}.  \label{def:Q:b}
\end{equation}
Here, $\mathbb{M}$ denotes the set of all probability distributions on $\mathbb{R}^{d+1}$.  Under exogeneity, $\EE[\varepsilon(\theta_\circ)x]=0$ and the true distribution $P$ satisfies the base-model moments at $\theta_\circ$, so that $P\in\mathcal{Q}_b(\theta_\circ)$. Under endogeneity, $\EE[\varepsilon(\theta_\circ)x]\neq 0$ and therefore $P\notin\mathcal{Q}_b(\theta)$ for every $\theta\in\Theta$; in that case
$\mathcal{M}_b$ is misspecified. 
%The expectation in \eqref{eq:base:model} is with respect to a distribution $Q\in \mathcal{Q}_{b}(\theta)$ and if $x$ is endogenous under $P$ then, $P\notin \mathcal{Q}_{b}(\theta )$ which means that there is no $\theta $ that satisfies the moment conditions under $P$. 
In this case, the ETEL function, constructed from the sample $w_{1:n}$, is the empirical counterpart of the distribution $Q_{b}^{*}(\theta)$ that for every $\theta$ solves the moment conditions:
\begin{equation*}
\mathbf{E}^{Q_{b}^{\ast }(\theta )}[g_{b}(w,\theta )] = 0
\end{equation*}
%where for every $\theta $, $Q_{b}^{*}(\theta )$ is the distribution in the set $\mathcal{Q}_{b}(\theta )$ the closest to $P$ in the KL divergence, that is,
and that is the closest to $P$ in the KL divergence among all the distribution in the set $\mathcal{Q}_{b}(\theta )$, that is,
\begin{equation*}
  Q_{b}^{*}(\theta ):=\mathrm{arginf}_{Q\in \mathcal{Q}_{b}(\theta)}\mathrm{KL}(Q\Vert P).
\end{equation*}
Notice that $\mathrm{KL}(Q\Vert P)$ is set to $+\infty$ if $Q$ is not absolutely continuous with respect to $P$. In addition,
\begin{equation}
\theta_* := \arg \max_{\theta\in\Theta; \mathcal{Q}_b(\theta)\neq \emptyset}\mathbf{E}\log \Bigl(\frac{e^{\lambda_*^{\prime }(\theta )g_b(w,\theta )}}{\mathbf{E}%
[e^{\lambda_{\ast }^{\prime }(\theta )g_b(W,\theta )}]}\Bigr)
\label{eq:thetastar}
\end{equation}
denotes the pseudo-true value in the base model. Assumption \ref{Ass_absolute_continuity} given in Section \ref{sec:assumptions} below guarantees that this value exists. On the other hand, if $x$ is exogenous, then $Q_{b}^{\ast }(\theta _{\ast})=P$ and $\theta_* = \theta_\circ$, where $\theta_\circ$ denotes the true value of $\theta$ as defined above. In the following we denote the ETEL for the base model by $\wh{q}(w_{1:n}|\theta,\mathcal{M}_b) := \prod_{i=1}^n \wh{q}_i(\theta|\mathcal{M}_b)$, where $\wh{q}_i(\theta|\mathcal{M}_b)$ is constructed as in \eqref{eq:KLbasic} with $g(w_i,\theta)$ replaced by $g_{b}(w_i,\theta)$.
\subsection{Extended model $\mathcal{M}_e$ (correctly specified)}\label{subsec:models_extended}
The extended model, denoted by $\mathcal{M}_e$, augments the base model by explicitly parameterizing the endogeneity component
\[
  v:=\EE[\varepsilon(\theta)x]\in\mathbb{R}^{d_x}.
\]
Let $\mathcal{V}\subset\mathbb{R}^{d_x}$ and define the extended parameter
\begin{equation}\label{eq:psi_def}
  \psi:=(\theta,v)\in\Psi, \qquad \Psi:=\Theta\times\mathcal{V}.
\end{equation}
\noindent The extended-model moment function is
\begin{equation}\label{eq:ge_def}
  g_e(w,\psi) := g_e(w,\theta,v) := \varepsilon(\theta)
\begin{pmatrix}
x\\
z_1\\
z_2
\end{pmatrix}
-
\begin{pmatrix}
v\\
0\\
0
\end{pmatrix}
= g_b(w,\theta)-
\begin{pmatrix}
v\\
0\\
0
\end{pmatrix},
\end{equation}
and the model imposes the moment restrictions
\begin{equation}\label{eq:moment:restrictions}
  \EE^{Q}\!\big[g_e(w,\psi)\big]=0, \qquad Q\in\mathcal{Q}_e(\psi),
\end{equation}
where
\begin{equation}\label{eq:Qe_def}
  \mathcal{Q}_e(\psi) :=  \Big\{Q\in\mathbb{M}:\ \EE^{Q}\!\big[g_e(w,\psi)\big]=0 \Big\}.
\end{equation}
By construction, $\mathcal{M}_e$ is correctly specified under both exogeneity and endogeneity of $x$. Indeed, let
\begin{equation}\label{eq:v_circ_def}
  v_\circ := \EE[\varepsilon(\theta_\circ)x].
\end{equation}
\noindent Under \eqref{eq:model}, we have $\EE[\varepsilon(\theta_\circ)z_1]=0$ and $\EE[\varepsilon(\theta_\circ)z_2]=0$, and therefore
\begin{equation}\label{eq:extended_correct_spec}
  \EE\!\big[g_e(w,\psi_\circ)\big]=0, \qquad \psi_\circ:=(\theta_\circ,v_\circ).
\end{equation}
Consequently, $P\in\mathcal{Q}_e(\psi_\circ)$ and the KL projection satisfies
\begin{equation}\label{eq:Qe_star_equals_P}
  Q_e^*(\psi_\circ)=P.
\end{equation}
%We define the extended model, denoted by $\mathcal{M}_e$, by the moment conditions
%\begin{equation}
%  \mathbf{E}^{Q}[g_{e}(w,\psi )]=0,\qquad Q\in \mathcal{Q}_{e}(\psi), \label{eq:moment:restrictions}
%\end{equation}%
%where
%\begin{equation*}
%g_{e}(w,\psi ):=\varepsilon (\theta )
%\begin{pmatrix}
%x \\
%z_{1} \\
%z_{2}
%\end{pmatrix}
%-
%\begin{pmatrix}
%v \\
%0 \\
%0
%\end{pmatrix}%
%=g_{b}(w,\theta )-%
%\begin{pmatrix}
%v \\
%0 \\
%0%
%\end{pmatrix}%
%,
%\end{equation*}
%$\psi :=(\theta ,v)\in \Psi $, $\Psi :=\Theta \times \mathcal{V}$ with $\mathcal{V}\subset \mathbb{R}^{d_x}$, and $\mathcal{Q}_{e}(\psi) := \left\{Q\in \mathbb{M};\,\mathbf{E}^{Q}[g_{e}(w,\psi )]=0\right\} $. In this model, $v:=\mathbf{E}[\varepsilon (\theta )x]$ is the covariance between the error
%and $x$.
%Note that the extended model is correctly specified under both endogeneity and exogeneity of $x$. For instance, if $\mathbf{E}[\varepsilon (\theta )x]\neq 0$ for every $\theta \in \Theta $, while $\mathbf{E}[\varepsilon(\theta _{\circ })(z_{1}^{\prime },z_{2}^{\prime })^{\prime }]=0$, then $v$ will be equal to $\mathbf{E}[\varepsilon (\theta _{\circ })x]$, and \eqref{eq:moment:restrictions} is satisfied for this $\theta _{\circ }$. In the following, we use the notation $v_{\circ } = \mathbf{E}[\varepsilon (\theta _{\circ })x]$. Therefore,
In the extended model, the minimizer, $Q_{e}^{\ast }(\psi )=\arg\inf_{Q\in \mathcal{Q}_{e}(\psi)} \mathrm{KL}(Q\Vert P)$, is equal to $P$, and the population moment conditions in the extended model are
\begin{equation*}
\mathbf{E}^{P}[g_{e}(w,\psi _{\circ })]=0.
\end{equation*}%
Moreover,
\begin{equation}
  \psi_{\circ} := \arg \max_{\psi\in\mathcal{V}; \mathcal{Q}_e(\psi)\neq \emptyset}\mathbf{E}\log \left( \frac{e^{\lambda_*^{\prime }(\psi )g_{e}(w,\psi )}}{\mathbf{E}[e^{\lambda_{\ast }^{\prime }(\psi )g_{e}(W,\psi )}]}\right) ,  \label{eq:theta:circ}
\end{equation}%
\noindent where $\lambda_*(\psi ):=\arg \min_{\lambda \in \mathbb{R}^{d}}\mathbf{E}[e^{\lambda ^{\prime }g_{e}(w,\psi )}]$ for every $\psi\in\mathcal{V}$ such that $\mathcal{Q}_e(\psi)\neq \emptyset$. In the following we denote the ETEL for the extended model by $\wh{q}(w_{1:n}|\theta,\mathcal{M}_e) := \prod_{i=1}^n \wh{q}_i(\theta|\mathcal{M}_e)$, where $\wh{q}_i(\theta|\mathcal{M}_e)$ is constructed as in \eqref{eq:KLbasic} with $g(w_i,\theta)$ replaced by $g_{e}(w_i,\psi )$.

\subsection{Numerical illustration} \label{sec:numerical_illustration}

To illustrate the fitting of the base and extended models, consider first
the base model under endogeneity. Let the data-generating process (DGP) be
\begin{align*}
y_{i}& =\gamma _{0}+x_{i}\,\beta +z_{1i}\,\gamma _{1}+\varepsilon _{i} \\
x_{i}& =\delta _{0}+z_{1i}\,\delta _{1}+z_{2i}\,\delta _{2}+u_{i} \\
z_{1i}& =v_{i} \\
z_{2i}& =\omega_{i}
\end{align*}%
for $i=1,\ldots ,n$, where $n\in \{250,500,1000,2000\}$. Suppose that the $%
(u_{i},v_{i},\omega_{i})$ are marginally Gaussian, that $\varepsilon _{i}$
is marginally a skewed Gaussian mixture $0.5\mathcal{N}(0.5,0.5^{2})+0.5%
\mathcal{N}(-0.5,1.118^{2})$, that $(\varepsilon _{i},u_{i},v_{i})$ have a
joint distribution induced by a Gaussian copula with covariance matrix $R=%
\begin{psmallmatrix}
1 & 0.6 & 0 \\
0.6 & 1 & 0 \\
0 & 0 & 1%
\end{psmallmatrix}$ and that the covariance of $\omega_{i}$ with each of the
other errors is zero. Also assume that each parameter is one (except for $%
\delta _{1}$, which is .5). Under this DGP, $z_{1i}$ is uncorrelated with $%
\varepsilon _{i}$ and correlated with $x_{i}$ but since $\omega_{i}$ is
uncorrelated with the other shocks, $z_{2i}$ is a valid instrument that is
also relevant for $x_{i} $. For each of the four sample sizes, the posterior
of $\theta :=(\beta ,\gamma _{0},\gamma _{1})$ is calculated from the four
moment conditions
\begin{equation*}
\mathbf{E}\bigl[(y_{i}-x_{i}\,\beta -\gamma _{0}-z_{1i}\,\gamma _{1})%
\begin{pmatrix}
x_{i} \\
1 \\
z_{1i} \\
z_{2i}%
\end{pmatrix}%
\bigr]=%
\begin{pmatrix}
0 \\
0 \\
0 \\
0%
\end{pmatrix}%
.
\end{equation*}%
The ETEL posterior is sampled by the tailored one-block Metropolis-Hastings (M-H) algorithm %
\citep{ChibGreenberg1995} for 20,000 iterations beyond a burn-in of  1,000
cycles. The marginal posterior density of $\beta $ for each sample size is
computed from these MCMC sampled draws. Kernel smoothed versions of the
posterior densities are given in Figure \ref{fig:betabasebyn}. As shown in Figure \ref{fig:betabasebyn}, as the sample
size increases, the posterior density of $\beta$ under the base model concentrates on a value quite different from the true value of $\beta $, indicating misspecification due to neglected endogeneity. 
\begin{figure}[h]
\centering
\includegraphics[width = .95\textwidth]{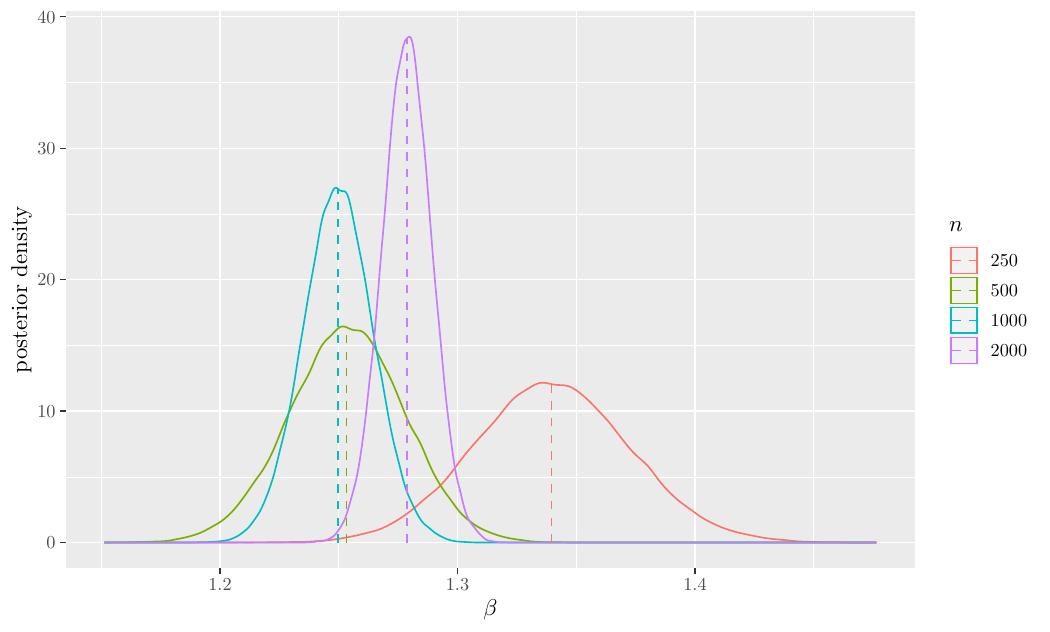}
\caption{Base model under neglected endogeneity: Marginal posterior
densities of $\protect\beta $ for different sample sizes. Posterior mean is
indicated by dashed vertical line.}
\label{fig:betabasebyn}
\end{figure}
In the extended (correctly specified) model we have
\begin{equation*}
\mathbf{E}\bigl[(y_{i}-x_{i}\,\beta -\gamma _{0}-z_{1i}\,\gamma _{1})%
\begin{pmatrix}
x_{i} \\
1 \\
z_{1i} \\
z_{2i}%
\end{pmatrix}%
\bigr]=%
\begin{pmatrix}
v \\
0 \\
0 \\
0%
\end{pmatrix}.
\end{equation*}%
The parameter of interest is now $\psi := (\beta ,\gamma _{0},\gamma _{1},v)$.
We use a default student-t prior on $v$ centered at the Generalized Method of Moments (GMM) estimate and spread given by 4 times the GMM asymptotic variance. The prior of $\theta $
is the same as in the base model. The ETEL posterior for each of the four different sample sizes is sampled by the tailored one-block M-H method for
20,000 iterations beyond a burn-in of 1,000 cycles. The marginal posterior densities of $\beta $ are given in Figure \ref{fig:betaextendedbyn} and
those of $v$ are in Figure \ref{fig:vbyn}. One can see that the posterior of $\beta $, even for $n=250$, is close to the true value of $\beta $, and, for
$n=2,000$, is essentially centered around the true value. In addition, the posterior of $v$, the $\text{cov}(x,\varepsilon)$, tends to concentrate
around the true value of 0.6.

\begin{figure}[h]
\centering
\includegraphics[width = .95\textwidth]{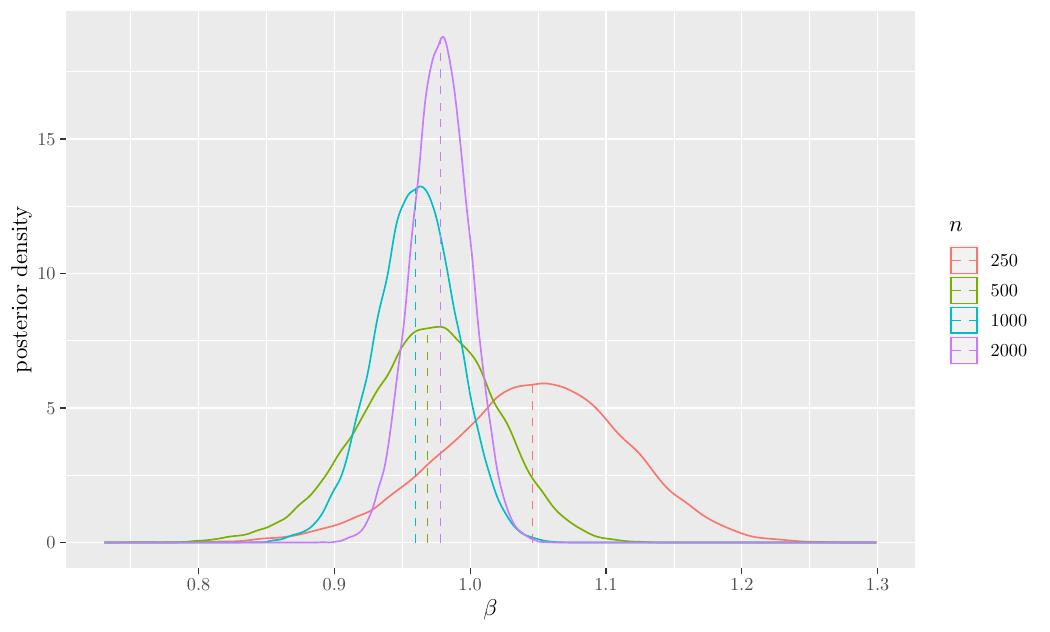}
\caption{Extended model ($x_{i}$ moment is inactive): Marginal posterior
densities of $\protect\beta $ for different sample sizes. Posterior mean is
indicated by dashed vertical line.}
\label{fig:betaextendedbyn}
\end{figure}

\begin{figure}[h]
\centering
\includegraphics[width = .95\textwidth]{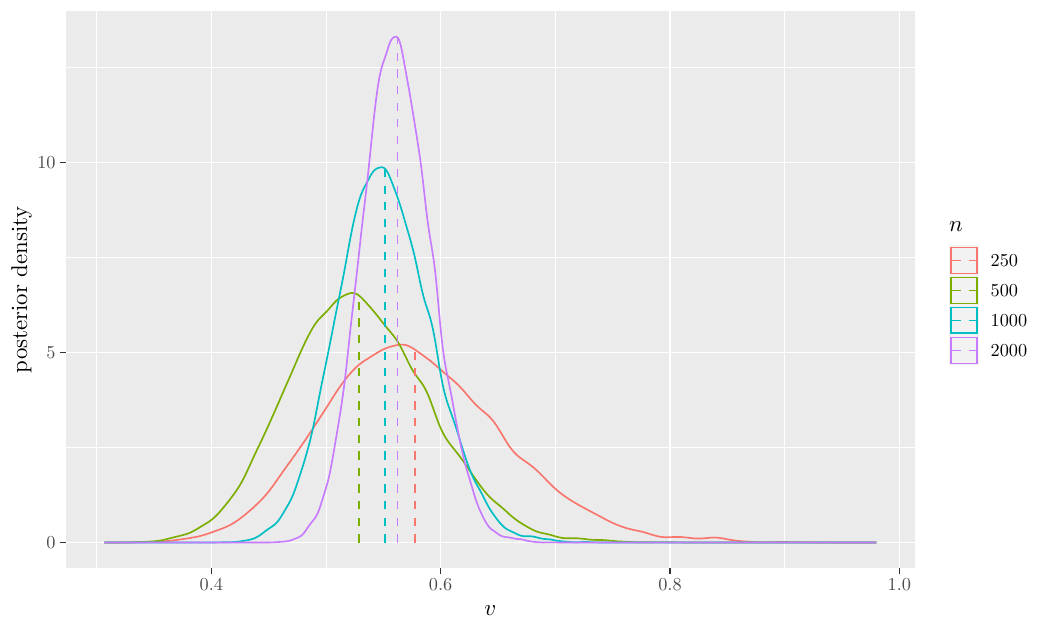}
\caption{Extended model under neglected endogeneity: Marginal posterior
densities of $v = \text{cov}(x,\protect\varepsilon)$ for different sample
sizes. Posterior mean is indicated by dashed vertical line.}
\label{fig:vbyn}
\end{figure}

\section{Testing procedure}

\label{sec:testing}

\subsection{Bayes factor}
Our Bayesian test of endogeneity is given by the Bayes factor of model $\mathcal{M}_{e}$ versus model $\mathcal{M}_{b}$ defined as:
\begin{equation*}
\text{BF}_{eb} = \frac{m(w_{1:n}|\mathcal{M}_{e})}{m(w_{1:n}|\mathcal{M}_{b})},
\end{equation*}%
where $m(w_{1:n}|\mathcal{M}_{b}):=\int_{H_n} \widehat{q}(w_{1:n}|\theta ,\mathcal{M}_{b})\pi (\theta )d\theta $ and $m(w_{1:n}|\mathcal{M}_{e}) := \int_{H_n} \widehat{q}(w_{1:n}|\psi,\mathcal{M}_{e})\pi (\psi )d\psi $ are the model marginal likelihoods arising from the ETEL functions (also called marginal ETEL
functions later on). We compute these by the method of \citet{Chib1995}, as
extended to general M-H chains in \citet{ChibJeliazkov2001}. We select $%
\mathcal{M}_{e}$ over $\mathcal{M}_{b}$ if $\log ($BF$_{eb})>0$, and select $%
\mathcal{M}_{b}$ otherwise.

According to the theory in \cite{ChibShinSimoni2018}, for valid comparisons
of moment condition models, the contending models must arise from a common
encompassing model and should have the same number of moment conditions. We
have ensured that this condition is met by including the $\mathbf{E}%
[\varepsilon _{i}(\theta )z_{2,i}]=0$ restriction in the base model, and not
excluding the $\mathbf{E}[\varepsilon _{i}(\theta )x_{i}]=v$ condition from
the extended model.

Intuitively, the Bayes factor picks the correct model because $\mathcal{M}%
_{b}$ is correctly specified when $x$ is exogenous and misspecified when $x$
is endogenous; however, $\mathcal{M}_{e}$ is correctly specified in both the
cases. Therefore, from \cite{ChibShinSimoni2018}, it follows that $\mathcal{M%
}_{b}$, which has $(d-p)$ overidentifying restrictions, rather than $M_{e}$,
which has $(d-p-d_{x})$ overidentifying restrictions, would be preferred by
the Bayes factor when $x$ is exogenous (because it has more overidentifying
restrictions than $\mathcal{M}_{e}$), whereas $\mathcal{M}_{e}$ would be
preferred when $x$ is endogenous (because $\mathcal{M}_{b}$ in that case
would be misspecified).

\subsection{Rationale}
% beginning of \inpu{theory_ML_new}
In this section we explain the rationale behind our testing procedure. The hypothesis that we want to test is the following:
$$H_{miss}: \quad P \textrm{ is such that }\nexists \theta\in\Theta\subset\mathbb{R}^p \textrm{ such that }\mathbf{E}[\varepsilon_{i}(\theta)x_{i}]= 0\qquad (endogeneity)$$
against
$$H_{cs}: \quad P \textrm{ is such that }\exists \theta\in\Theta\subset\mathbb{R}^p \textrm{ such that }\mathbf{E}[\varepsilon_{i}(\theta)x_{i}]= 0\qquad (exogeneity).$$
Here, the subscripts $miss$ and $cs$ are for misspecification and correct specification, respectively. The previous hypothesis can equivalently be written as $H_{miss}': v\neq 0$ and $H_{cs}': v = 0$. Our approach based on $\text{BF}_{eb}$ is equivalent to a Bayes test for $H_{miss}'$ versus $H_{cs}'$ based on a mixture prior on $v$ of the type $\pi_0\delta_0(v) + (1 - \pi_0)\pi(v)$, where $\delta_0(\cdot)$ denotes a Dirac mass on zero, $\pi_0\in[0,1]$, and $\pi(\cdot)$ is a continuous distribution. The two Bayes factors for these two approaches are numerically the same. The testing procedure works as follows: if $\text{BF}_{eb} \geq 1$, we conclude that $x$ is endogenous (\textit{i.e.} accept $H_{miss}$); if $\text{BF}_{eb} < 1$, we conclude that $x$ is exogenous (\textit{i.e.} accept $H_{cs}$).\\
\indent The next theorem shows that $H_{miss}$ and $H_{cs}$ can be expressed in terms of Kullback-Leibler divergences between $P$ and the set $\mathcal{Q}_{b}(\theta)$ of distributions that satisfy the moment restriction that we want to test as well as additional moment restrictions that are known to hold for $P$.
\begin{thm}\label{thm:equivalence}
  Suppose that there is a $\theta\in\Theta$ such that $\mathbf{E}[\varepsilon_{i}(\theta)(z_{1,i}'z_{2,i}')']= 0$ and that Assumption \ref{Ass_absolute_continuity} holds true. Consider the following statements:
  \begin{enumerate}
    \item[(i).] $P$ is such that $\nexists \theta\in\Theta$ such that $\mathbf{E}[\varepsilon_{i}(\theta)x_{i}] = 0$.
    \item[(ii).] $\mathrm{KL}(P||Q_b^*(\theta_*))>0$.
    \item[(iii). ] $P$ is such that $\exists \theta\in\Theta$ such that $\mathbf{E}[\varepsilon_{i}(\theta)x_{i}] = 0$.
    \item[(iv).] $\mathrm{KL}(P||Q_b^*(\theta_*))=0$.
  \end{enumerate}
  Then, \textit{(i)} is equivalent to \textit{(ii)}, and \textit{(iii)} is equivalent to \textit{(iv)}.
\end{thm}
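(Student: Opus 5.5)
Since (i) and (iii) are negations of each other, and KL divergences are nonnegative so (ii) and (iv) are also negations of each other (given that $\mathrm{KL}(P\|Q_b^*(\theta_*))$ is well defined, which Assumption \ref{Ass_absolute_continuity} guarantees), it suffices to prove one equivalence, say (iii) $\Leftrightarrow$ (iv). Then (i) $\Leftrightarrow$ (ii) follows by taking negations. Throughout I will use the standing hypothesis that some $\theta$ makes $\mathbf{E}[\varepsilon(\theta)(z_1',z_2')']=0$. I will also use the properties of the KL projection recorded in Section \ref{subsec:prelim_population_projection}: existence of $\theta_*$, and the representation $dQ_b^*(\theta)/dP=q(w;\theta)$.

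\emph{(iii) $\Rightarrow$ (iv).} Suppose $\tilde\theta$ satisfies $\mathbf{E}[\varepsilon(\tilde\theta)x]=0$. I would first pin down which $\theta$ this must be. Read (iii) together with the standing hypothesis as saying that a single $\tilde\theta$ satisfies all base moments, so $\mathbf{E}[g_b(w,\tilde\theta)]=0$. If (iii) is instead read literally, with separate $\theta$'s, I will argue that under the model \eqref{eq:model} with $\theta_\circ$ identified by the $z$-moments, the relevant $\theta$ is $\theta_\circ$. Then $P\in\mathcal{Q}_b(\tilde\theta)$, so $\inf_{Q\in\mathcal{Q}_b(\tilde\theta)}\mathrm{KL}(Q\|P)=0$ is attained at $Q=P$, i.e. $Q_b^*(\tilde\theta)=P$ and $\lambda_*(\tilde\theta)=0$. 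Hence the objective in \eqref{eq:thetastar} equals $0$ at $\tilde\theta$. This objective equals $-\mathrm{KL}(P\|Q_b^*(\theta))\le 0$ for every $\theta$, so the maximum is $0$, and $\mathrm{KL}(P\|Q_b^*(\theta_*))=0$.

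\emph{(iv) $\Rightarrow$ (iii).} If $\mathrm{KL}(P\|Q_b^*(\theta_*))=0$, the Gibbs inequality gives $P=Q_b^*(\theta_*)$ as measures. Since $Q_b^*(\theta_*)\in\mathcal{Q}_b(\theta_*)$, we get $\mathbf{E}[g_b(w,\theta_*)]=0$, and in particular $\mathbf{E}[\varepsilon(\theta_*)x]=0$. So (iii) holds with $\theta=\theta_*$. As a by-product, $\theta_*$ also satisfies the $z$-moments, so it coincides with $\theta_\circ$ under identification.

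\emph{Main obstacle.} The delicate step is in (iii) $\Rightarrow$ (iv): guaranteeing that the $\theta$ solving the $x$-moment is the same $\theta$ solving the $z$-moments. Otherwise $P$ need not lie in any $\mathcal{Q}_b(\theta)$. I would handle this by invoking the model structure \eqref{eq:model}: $\theta_\circ$ is the unique solution of the $z$-moments by identification, since the instruments are relevant and $d_{z_2}\ge d_x$. The statement of (iii) is then to be understood as $\mathbf{E}[\varepsilon(\theta_\circ)x]=0$, i.e. $v_\circ=0$, consistent with $H'_{cs}$. A secondary technical point is that \eqref{eq:thetastar} must genuinely equal $-\mathrm{KL}(P\|Q_b^*(\theta))$, which needs the dual representation \eqref{eq:prelim_RN_Q_star}. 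I take this from Assumption \ref{Ass_absolute_continuity}.
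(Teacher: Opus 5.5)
Your proof is correct under the reading of (i)/(iii) that the paper itself adopts: (iii) asserts $P\in\mathcal{Q}_b(\theta)$ for some $\theta$, i.e.\ a single $\theta$ solves all base moments, and (i) is its negation. The paper uses this reading implicitly. Its proof writes ``there is a $\theta$, say $\theta_*$, for which $P\in\mathcal{Q}_b(\theta_*)$'', and its (ii)$\Rightarrow$(i) step actually concludes $P\notin\mathcal{Q}_b(\theta)$ for all $\theta$. Your (iv)$\Rightarrow$(iii) step is the paper's: Gibbs gives $P=Q_b^*(\theta_*)\in\mathcal{Q}_b(\theta_*)$. The differences are small and in your favour. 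The paper proves (i)$\Leftrightarrow$(ii) separately, going from $Q_b^*(\theta)\neq P$ and $\mathrm{KL}(Q_b^*(\theta)\|P)>0$ to positivity of the reverse divergence via mutual absolute continuity; you get it for free by complementation. In (iii)$\Rightarrow$(iv) the paper simply renames the root $\theta_*$, while you derive $\mathrm{KL}(P\|Q_b^*(\theta_*))=0$ from the extremal property of $\theta_*$, which is the step that actually needs justification.

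Two corrections. First, your fallback for the literal reading does not work. $\mathbf{E}[\varepsilon(\theta)x]=0$ is $d_x$ linear equations in $p=d_x+d_{z_1}>d_x$ unknowns, so its solutions typically form an affine set of dimension $d_{z_1}$. That set contains points other than $\theta_\circ$ even under endogeneity, e.g.\ $\gamma=\gamma_\circ$, $\beta=\beta_\circ+\mathbf{E}[xx']^{-1}\mathbf{E}[x\varepsilon]$. Identification of $\theta_\circ$ by the $z$-moments says nothing about such points. If $\Theta$ contains one, the literal (iii) holds while (iv) fails. So the joint reading is a reinterpretation you must impose, not something identification delivers.

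Second, in (iii)$\Rightarrow$(iv) you cannot take the dual form of \eqref{eq:thetastar} from Assumption \ref{Ass_absolute_continuity}, because that assumption only has content under endogeneity. You do not need it. By the primal definition \eqref{eq:prelim_eta_star_def}, $\theta_*$ minimizes $\mathrm{KL}(P\|Q_b^*(\theta))\ge 0$, and the value $0$ is attained at $\tilde\theta$ because $Q_b^*(\tilde\theta)=P$.
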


This theorem makes clear that to test $H_{miss}$ and $H_{cs}$ one can equivalently focus on the Kullback-Leibler divergence $\mathrm{KL}(P||Q_b^*(\theta_*))$. Our Bayes test is based on Bayes factor and comparison of marginal likelihoods. There is a strict link between marginal likelihood and the Kullback-Leibler divergence: log-marginal likelihood of the base model behaves asymptotically as $- n \mathrm{KL}(P||Q_b^*(\theta_*))$ plus a penalty term, where the penalty depends on the number of parameters to estimate, and similarly for the log-marginal likelihood of the extended model. We are going to demonstrate this fact in the rest of this section.\\
\indent From the \citet{Chib1995} identity, we have for the base model: $\forall \theta\in H_n\subset\mathbb{R}^p$,
$$\log m(w_{1:n}|\mathcal{M}_b) = \log \pi(\theta|\mathcal{M}_b) + \log \wh{q}(w_{1:n}|\theta,\mathcal{M}_b) - \log \pi^n(\theta|w_{1:n},\mathcal{M}_b),$$
and similarly for the extended model. Because this identity is true for every $\theta\in H_n$, it is true for $\theta = \theta_*$ under Assumptions \ref{Ass_absolute_continuity} and \ref{ass:feasibility}: $\log m(w_{1:n}|\mathcal{M}_b) = \log \pi(\theta_*|\mathcal{M}_b) + \log \wh{q}(w_{1:n}|\theta_*,\mathcal{M}_b) - \log \pi^n(\theta_*|w_{1:n},\mathcal{M}_b)$. Next, let us introduce the local parameters $h_{\theta}:=\sqrt{n}(\theta - \theta_*)$ and $h_{\psi}:=\sqrt{n}(\psi - \psi_\circ)$, so that by the formula for transformations of random variables: $\pi^n(\theta|w_{1:n},\mathcal{M}_b) = \pi_{h_{\theta}}^n(\sqrt{n}(\theta - \theta_{*})|w_{1:n},\mathcal{M}_b)n^{p/2}$ and $\pi^n(\psi|w_{1:n},\mathcal{M}_e) = \pi_{h_{\psi}}^n(\sqrt{n}(\psi - \psi_{\circ})|w_{1:n},\mathcal{M}_e)n^{(p+d_x)/2}$, where $\pi_{h_{\theta}}^n(\cdot|w_{1:n},\mathcal{M}_b)$ and $\pi_{h_{\psi}}^n(\cdot|w_{1:n},\mathcal{M}_e)$ denote the posterior density of $h_{\theta}$ and $h_{\psi}$, respectively. By replacing this in the expression of the marginal likelihoods we obtain: $\forall \theta\in H_n\subset\mathbb{R}^p$,
% \begin{multline}
%   \log m(w_{1:n}|\mathcal{M}_b) = \log \pi(\theta|\mathcal{M}_b) + \log \wh{q}(w_{1:n}|\theta,\mathcal{M}_b) - \log \pi_{h_{\theta}}^n(\sqrt{n}(\theta - \theta_{*})|w_{1:n},\mathcal{M}_b)\\
%   - \frac{p}{2}\log(n) = \pi(\theta_*|\mathcal{M}_b) + \log \wh{q}(w_{1:n}|\theta_*,\mathcal{M}_b) - \log \pi_{h_{\theta}}^n(0|w_{1:n},\mathcal{M}_b) - \frac{p}{2}\log(n),\label{log_ML_base_initial}
% \end{multline}
% and
% \begin{multline}
%   \log m(w_{1:n}|\mathcal{M}_e) = \log \pi(\psi|\mathcal{M}_e) + \log \wh{q}(w_{1:n}|\psi,\mathcal{M}_e) - \log \pi_{h_{\psi}}^n(\sqrt{n}(\psi - \psi_{\circ})|w_{1:n},\mathcal{M}_e) \\ \hfill
%   - \frac{p+d_x}{2}\log(n)\\
%   = \log \pi(\psi_\circ|\mathcal{M}_e) + \log \wh{q}(w_{1:n}|\psi_\circ,\mathcal{M}_e) - \log \pi_{h_{\psi}}^n(0|w_{1:n},\mathcal{M}_e) - \frac{p+d_x}{2}\log(n)\label{log_ML_extended_initial}.
%\end{multline}
\begin{multline}
  \log m(w_{1:n}|\mathcal{M}_b) = \log \pi(\theta|\mathcal{M}_b) + \log \wh{q}(w_{1:n}|\theta,\mathcal{M}_b) - \log \pi_{h_{\theta}}^n(\sqrt{n}(\theta - \theta_{*})|w_{1:n},\mathcal{M}_b)
  - \frac{p}{2}\log(n) \\
  = \log \pi(\theta_*|\mathcal{M}_b) + \log \wh{q}(w_{1:n}|\theta_*,\mathcal{M}_b) - \log \pi_{h_{\theta}}^n(0|w_{1:n},\mathcal{M}_b) - \frac{p}{2}\log(n),\label{log_ML_base_initial}
\end{multline}
and, $\forall \psi\in H_n\subset\mathbb{R}^{p+d_x}$,
\begin{multline}
  \log m(w_{1:n}|\mathcal{M}_e) = \log \pi(\psi|\mathcal{M}_e) + \log \wh{q}(w_{1:n}|\psi,\mathcal{M}_e) \\
  \quad - \log \pi_{h_{\psi}}^n(\sqrt{n}(\psi - \psi_{\circ})|w_{1:n},\mathcal{M}_e)   - \frac{p+d_x}{2}\log(n) \\
  = \log \pi(\psi_\circ|\mathcal{M}_e) + \log \wh{q}(w_{1:n}|\psi_\circ,\mathcal{M}_e) - \log \pi_{h_{\psi}}^n(0|w_{1:n},\mathcal{M}_e) - \frac{p+d_x}{2}\log(n). \label{log_ML_extended_initial}
\end{multline}
The intuition for expressing the posterior of $\theta$ in terms of the posterior of the local parameter is that the Jacobian of the transformation makes explicit the role played by the dimension of the model, while the local parameter has a posterior distribution that is approximately Gaussian. This is true in both cases \textit{(i)} and \textit{(iii)} of Theorem \ref{thm:equivalence}. Hence, the Jacobian induces an explicit dimension-dependent penalty through posterior concentration.\\
\indent Therefore, the log-marginal likelihood decomposes into two terms that are bounded in probability as $n\rightarrow\infty$ (the prior ordinate evaluated at the pseudo-true value and the posterior ordinate of the local parameter) and two terms that diverge with $n$: the log-ETEL term and a model-dimension penalty of order $\tfrac{1}{2}\log n$ per parameter. Asymptotically, the marginal likelihood behaves like a penalized log-ETEL criterion, where the penalty arises endogenously from posterior concentration via the local reparameterization, rather than being imposed ad hoc.\\

Of course, for a testing procedure based on marginal likelihoods to be valid, it is necessary to establish that 
$\pi_{h_{\theta}}^n(\sqrt{n}(\theta - \theta_{*}) \mid w_{1:n},\mathcal{M}_b)$ and 
$\pi_{h_{\psi}}^n(\sqrt{n}(\psi - \psi_{\circ}) \mid w_{1:n},\mathcal{M}_e)$ 
are bounded in probability as $n \rightarrow \infty$. This requirement can be quite challenging to verify, particularly in non-standard settings such as the one considered here, where there is no parametric likelihood and the models may be misspecified. We establish these results in Theorems \ref{thm_BvM_misspecified} and \ref{thm_BvM:extended} in the Online Appendix, which refine Theorems~1 and~2 of \cite{ChibShinSimoni2018}. 

A critical step in proving these results is to show that the log-ETEL function satisfies a stochastic local asymptotic normality (LAN) property. While the remainder of the Bernstein--von Mises argument follows standard lines, establishing stochastic LAN is challenging because the ETEL function is itself a random likelihood. In this paper, we provide a new and more direct proof of the LAN property for the log-ETEL function (see Theorems \ref{thm:stochasticLAN:endogeneity}, \ref{thm:stochasticLAN:exogeneity}, and \ref{thm:stochasticLAN:extended} in the Online Appendix). Our proof leverages the specific structure of the IV regression problem: due to linearity, each term in the Mean Value Theorem expansion of the log-ETEL function around $\theta_*$ can be controlled more directly and uniformly in $h_{\theta}$ over compact sets. This allows us to avoid the empirical process theory used in \cite{ChibShinSimoni2018}. 

The final step toward understanding the asymptotic behavior of the marginal
likelihood is provided by Theorems \ref{lem_likelihood_ET_approx} and
\ref{lem_likelihood_ET_approx_extended}, which derive stochastic expansions
of the log-ETEL function in the base and extended models. These expansions,
which were not made explicit in \cite{ChibShinSimoni2018}, are new to the best
of our knowledge.

Our starting point is the exact master identity for the log-ETEL,
\eqref{eq:log_etel_master_identity_normalized}. Evaluating this identity at
$\theta=\theta_*$ and writing
$\widehat\lambda(\theta_*)=\lambda_*(\theta_*)+
(\widehat\lambda(\theta_*)-\lambda_*(\theta_*))$,
we obtain the expansion by performing a second-order Taylor expansion in
$\lambda$ around $\lambda_*(\theta_*)$. The stochastic LAN property delivers
a linear representation for $\sqrt n(\widehat\lambda(\theta_*)-\lambda_*(\theta_*))$,
which, when substituted back into the master identity, yields the quadratic
empirical-process terms reported below.

The assumptions under which the results hold are collected in
Section~\ref{sec:assumptions}. We use the notation
$\EE_n[\cdot]=n^{-1}\sum_{i=1}^n(\cdot)$ for the empirical mean and
$\mathbb G_n f=\sqrt n(\EE_n[f]-\EE[f])$ for the centered empirical process.

\begin{thm}[Base model: stochastic expansion of log-ETEL]
\label{lem_likelihood_ET_approx}
Let Assumptions \ref{Ass_absolute_continuity}--\ref{Ass_2_NS} and
\ref{Ass_3}\textit{(d)}--\textit{(f)} hold. Then,
\begin{multline}
\log \widehat q(w_{1:n}\mid\theta_*,\mathcal M_b)
= -n\log n
+ \sum_{i=1}^n
\log\!\left(
\frac{\exp\{\lambda_*(\theta_*)'g_b(w_i,\theta_*)\}}
{\EE_n[\exp\{\lambda_*(\theta_*)'g_b(w_j,\theta_*)\}]}
\right)
\\
 - \mathbb G_n\!\left[
\tau_i^{\dagger}(\lambda_*,\theta_*)\,g_b(w_i,\theta_*)'
\right]
\Omega_*^{\dagger}(\theta_*)^{-1}
\mathbb G_n\!\left[g_b(w_i,\theta_*)\right]
+ n(\wh\lambda(\theta_*) - \lambda_*(\theta_*))'\EE[g_b(w_i,\theta_*)]
\\
+ \frac12\,
\mathbb G_n\!\left[
\tau_i^{\dagger}(\lambda_*,\theta_*)\,g_b(w_i,\theta_*)'
\right]
\Omega_*^{\dagger}(\theta_*)^{-1}
\mathbb G_n\!\left[
\tau_i^{\dagger}(\lambda_*,\theta_*)\,g_b(w_i,\theta_*)
\right]
+ o_p(1).
\label{lem_likelihood_eq:1}
\end{multline}
Moreover,
\[
\mathbb G_n\!\left[
\tau_i^{\dagger}(\lambda_*,\theta_*)\,g_b(w_i,\theta_*)
\right]
\xrightarrow{d}
\mathcal N\!\left(0,\Omega_*^{\dagger}(\theta_*)\right),
\]
where
$\tau_i^{\dagger}(\lambda_*,\theta_*):=[dQ^*(\theta_*)/dP](w_i)$ and
$\Omega_*^{\dagger}(\theta_*):=
\EE^{Q^*(\theta_*)}[\varepsilon_i(\theta_*)^2\widetilde w_i\widetilde w_i']$.
In addition,
\[
\sqrt n(\widehat\lambda(\theta_*)-\lambda_*(\theta_*))
= -\Omega_*^{\dagger}(\theta_*)^{-1}
\mathbb G_n\!\left[
\tau_i^{\dagger}(\lambda_*,\theta_*)\,g_b(w_i,\theta_*)
\right]
+ o_p(1).
\]
\end{thm}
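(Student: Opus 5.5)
Evaluate the normalized master identity \eqref{eq:log_etel_master_identity_normalized} at $\theta=\theta_*$ and write $\wh\lambda:=\wh\lambda(\theta_*)$, $\lambda_*:=\lambda_*(\theta_*)$, $g_i:=g_b(w_i,\theta_*)$ and $\Delta:=\wh\lambda-\lambda_*$. The identity then reads
\[
\log\wh q(w_{1:n}\mid\theta_*,\mathcal M_b)=-n\log n+\sum_{i=1}^n\lambda_*'g_i-n\log\EE_n[e^{\lambda_*'g_i}]+n\Delta'\EE_n[g_i]-n\Big(\log\EE_n[e^{\wh\lambda'g_i}]-\log\EE_n[e^{\lambda_*'g_i}]\Big).
\]
The first three terms give exactly the leading $\sum_i\log(\cdot)$ term of \eqref{lem_likelihood_eq:1}. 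Split the fourth term as $n\Delta'\EE_n[g_i]=n\Delta'\EE[g_i]+\sqrt n\Delta'\mathbb G_n[g_i]$. The first piece is kept as it appears in the statement. The second piece becomes the cross term $-\mathbb G_n[\tau_i^\dagger g_i']\Omega_*^{\dagger-1}\mathbb G_n[g_i]$ once the linear representation of $\sqrt n\Delta$ is available. What remains is to expand the last bracket to second order in $\lambda$ around $\lambda_*$.

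\textbf{Step 1: the multiplier expansion.} The plan uses the existence and consistency of $\wh\lambda(\theta_*)$ for $\lambda_*(\theta_*)$, which follow from Assumption \ref{ass:feasibility} and the LAN-type results in the Online Appendix. Here $\wh\lambda$ solves $\EE_n[e^{\lambda'g_i}g_i]=0$ and $\lambda_*$ solves $\EE[e^{\lambda'g_i}g_i]=0$. A mean value expansion gives
\[
0=\EE_n[e^{\lambda_*'g_i}g_i]+\EE_n[e^{\bar\lambda'g_i}g_ig_i']\Delta .
\]
Since $\EE[e^{\lambda_*'g_i}g_i]=0$, dividing by $c_*:=\EE[e^{\lambda_*'g_i}]$ gives $\sqrt n\,\EE_n[e^{\lambda_*'g_i}g_i]/c_*=\mathbb G_n[\tau_i^\dagger g_i]$ up to replacing $\tau_i^\dagger$ by its version normalized with $\EE$. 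A uniform law of large numbers on a shrinking neighbourhood of $\lambda_*$, using the moment bounds of Assumption \ref{Ass_3}(d)--(f) together with dominated envelopes $\sup_\lambda e^{\lambda' g}\|g\|^2$, gives $\EE_n[e^{\bar\lambda'g_i}g_ig_i']/c_*\to_p\Omega_*^\dagger(\theta_*)$. Inverting yields the stated representation for $\sqrt n\Delta$.

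The central limit theorem for $\mathbb G_n[\tau_i^\dagger g_i]$ is the Lindeberg--L\'evy theorem. Its mean is $\EE[\tau_i^\dagger g_i]=0$. Its variance is $\EE[\tau^{\dagger2}gg']$, and this is where care is needed: this equals $\EE^{Q^*}[\tau^\dagger gg']$, which coincides with $\Omega_*^\dagger$ exactly under correct specification. I would check how $\Omega_*^\dagger$ is defined in the assumptions (or read the stated covariance as holding with this convention). Since $\theta_*$ is fixed, no empirical-process uniformity in $\theta$ is needed.

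\textbf{Step 2: expansion of the log-partition term.} Write $\log\EE_n[e^{\wh\lambda'g_i}]-\log\EE_n[e^{\lambda_*'g_i}]$ as a second-order Taylor expansion in $\lambda$. The gradient is $\EE_n[e^{\lambda_*'g}g]/\EE_n[e^{\lambda_*'g}]$. The Hessian at an intermediate point converges to $\Omega_*^\dagger$ minus an outer product of the tilted mean, and that outer product is $O_p(n^{-1/2})$ near $\lambda_*$. Multiplying by $n$, the linear term is $\sqrt n\Delta'\mathbb G_n[\tau^\dagger g]+o_p(1)$ and the quadratic term is $\tfrac12 n\Delta'\Omega_*^\dagger\Delta+o_p(1)$. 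Substituting $\sqrt n\Delta=-\Omega_*^{\dagger-1}\mathbb G_n[\tau^\dagger g]+o_p(1)$ gives $-\mathbb G\Omega^{-1}\mathbb G+\tfrac12\mathbb G\Omega^{-1}\mathbb G=-\tfrac12\,\mathbb G_n[\tau^\dagger g']\Omega_*^{\dagger-1}\mathbb G_n[\tau^\dagger g]$. Because this enters the identity with a minus sign, it produces the $+\tfrac12$ quadratic term in \eqref{lem_likelihood_eq:1}.

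\textbf{Step 3: remainder control.} Collect the $o_p(1)$ remainders. Every product of an $O_p(1)$ quantity with an $o_p(1)$ quantity is negligible, so the only delicate point is the replacement of $\EE_n[e^{\lambda_*'g}]$ by $c_*$ inside $\tau^\dagger$, which costs $O_p(n^{-1/2})$ times $O_p(1)$. I expect the main obstacle to be the uniform convergence of the Hessians at random intermediate points under misspecification. Because $\lambda_*\neq0$, this requires exponential-moment envelopes, which I would take from Assumption \ref{Ass_3}(d)--(f), exploiting that $g_b$ is linear in $\theta$ so that the envelopes depend only on the moments of $(y,\wtl w)$.
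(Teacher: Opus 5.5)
Your proposal is correct and follows essentially the paper's proof. It uses the same master identity and the same regrouping into $\sum_i\log(\cdot)+n\Delta'\EE[g_i]+\sqrt n\,\Delta'\mathbb G_n[g_i]$, and the same mean-value linearization of the first-order condition for $\wh\lambda$ (the paper's Lemma~\ref{lem:1}). Consistency of $\wh\lambda(\theta_*)$ comes from Lemma~\ref{lem:2}, not from the LAN theorems, which themselves rely on it. The only difference is cosmetic: you Taylor-expand $\lambda\mapsto\log\EE_n[e^{\lambda'g_i}]$ directly, whereas the paper expands $\EE_n[e^{\lambda'g_i}]$ to second order and then linearizes $\log$ using $|\EE_n[e^{\wh\lambda'g_i}]-\EE_n[e^{\lambda_*'g_i}]|=\mathcal O_p(1/n)$.

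Your caution about the covariance is justified. We have $\mathrm{Var}(\tau_i^\dagger g_i)=\EE[(\tau_i^\dagger)^2g_ig_i']=\EE^{Q^*(\theta_*)}[\tau_i^\dagger\varepsilon_i(\theta_*)^2\wtl w_i\wtl w_i']$, which in general coincides with $\Omega_*^\dagger(\theta_*)=\EE[\tau_i^\dagger g_ig_i']$ only when $\tau_i^\dagger\equiv1$, i.e.\ under exogeneity. So the stated limit $\mathcal N(0,\Omega_*^\dagger(\theta_*))$ is wrong under misspecification. The paper's proof merely asserts it, while its own Lemma~\ref{lem:asymp:normality:FOC} uses the $(\tau^\dagger)^2$-weighted matrix. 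This does not affect \eqref{lem_likelihood_eq:1}, which only needs $\mathbb G_n[\tau_i^\dagger g_i]=\mathcal O_p(1)$.
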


This theorem establishes a decomposition of the log-ETEL function for the base model $\mathcal{M}_b$, characterizing its asymptotic behaviour. This decomposition is used to prove the consistency of our Bayes factor testing procedure. Theorem \ref{lem_likelihood_ET_approx} states that $\log \wh{q}(w_{1:n}|\theta_*,\mathcal{M}_{b}) + n\log n$ can be expressed, up to an $o_p(1)$ term, as the sum of four random components. The third and fifth terms on the right-hand side of equation  \eqref{lem_likelihood_eq:1} are both of order $\mathcal{O}_p(1)$, while the second and fourth terms are of order $\mathcal{O}_p(n)$ and $\mathcal{O}_p(\sqrt{n})$, respectively, when $x_i$ is endogenous, and equal to zero when $x_i$ is exogenous. The fourth term is linear, with its rate $\mathcal{O}_p(\sqrt{n})$ following from the last part of the theorem, whereas the fifth term in \eqref{lem_likelihood_eq:1} is quadratic. The term $n\log n$ does not affect the comparison, as it cancels out with the corresponding term in the log-ETEL function of the extended model, as shown in the next theorem.\\
\indent For the extended model, we recall that $\psi_\circ = (\theta_\circ',v_\circ')'$ denotes the true value of the parameter in the extended model with $v_{\circ} = \EE[\varepsilon(\theta_\circ)x]$.
\begin{thm}[Extended model.]\label{lem_likelihood_ET_approx_extended}
  Let Assumptions \ref{Ass_0_NS}, \ref{ass:feasibility}, \ref{Ass_1_NS} with $\Theta$ replaced by $\Psi$ and Assumptions \ref{Ass_2_NS} and \ref{Ass_3} \textit{(d)}-\textit{(f)} with $\theta_*$ and $\lambda_*(\theta_*)$ replaced with $\theta_\circ$ and $0$, respectively, hold. Then,
  \begin{equation}\label{eq_lem_likelihood_ET_approx_extended}
    \log \wh{q}(w_{1:n}|\psi_\circ,\mathcal{M}_{e}) =   - n\log n - \frac{1}{2}\mathbb{G}_n\left[g_e(w_i,\psi_\circ)'\right] \Omega_{\psi_{\circ}}^{-1} \mathbb{G}_n\left[g_e(w_i,\psi_\circ)\right] + o_p(1),%\\
    %- 2\sqrt{n}\mathbb{G}_n\left[g_b(w_i,\theta_\circ)'\right]\Omega_{\psi_\circ}^{-1}\wtl{v}_{\circ} + n\wtl{v}_\circ' \Omega_{\psi_\circ}^{-1}\wtl{v}_\circ + o_p(1),
  \end{equation}
\noindent where $\Omega_{\psi_{\circ}} := \EE[g_e(w_i,\psi_\circ)g_e(w_i,\psi_\circ)']$, and $\mathbb{G}_n\left[g_e(w_i,\psi_\circ)'\right] \Omega_{\psi_\circ}^{-1} \mathbb{G}_n\left[g_e(w_i,\psi_\circ)\right]\xrightarrow{d}\chi_{d}^2$, where $\chi_{d}^2$ denotes a chi square distribution with $d$ degrees of freedom. %If $\EE[\varepsilon_i(\theta_\circ) x_i] = 0$, then
%\begin{equation}
%  \log \wh{q}(w_{1:n}|\psi_\circ,M_{b}) = - n\log n - \frac{1}{2}\mathbb{G}_n\left[g_b(w_i,\theta_\circ)'\right] \Omega_{\circ}^{-1} \mathbb{G}_n\left[g_b(w_i,\theta_\circ)\right] + o_p(1)
%\end{equation}
%\noindent and $\mathbb{G}_n\left[g_b(w_i,\psi_\circ)'\right] \Omega_{\psi_\circ}^{-1} \mathbb{G}_n\left[g_b(w_i,\psi_\circ)\right]\xrightarrow{d}\chi_{d}^2$.
\end{thm}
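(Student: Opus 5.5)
Since $\mathcal M_e$ is correctly specified at $\psi_\circ$, we have $\EE[g_e(w_i,\psi_\circ)]=0$ by \eqref{eq:extended_correct_spec}. Hence $\lambda_*(\psi_\circ)=0$ and the population tilt is identically one. The plan is to specialise the master identity \eqref{eq:log_etel_master_identity_normalized} at $\psi=\psi_\circ$, which gives
\[
\log\widehat q(w_{1:n}\mid\psi_\circ,\mathcal M_e)=-n\log n+n\,\widehat\lambda'\,\EE_n[g_{e,i}]-n\log\EE_n[\exp\{\widehat\lambda' g_{e,i}\}],
\]
with $\widehat\lambda:=\widehat\lambda(\psi_\circ)$ and $g_{e,i}:=g_e(w_i,\psi_\circ)$. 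Feasibility, i.e.\ $\psi_\circ\in H_n$ with probability approaching one, comes from Assumption \ref{ass:feasibility}. This is exactly the base-model argument of Theorem \ref{lem_likelihood_ET_approx} with $\lambda_*=0$. Since the expansion there is centred at $\lambda_*(\theta_*)$, setting $\lambda_*=0$ gives $\tau_i^\dagger\equiv1$ and $\Omega_*^\dagger=\Omega_{\psi_\circ}$, and the fourth term vanishes. One could simply read the result off Theorem \ref{lem_likelihood_ET_approx}. I would nevertheless redo the short computation directly, since the assumptions are only stated ``with replacements''.

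\textbf{Step 1 (rate for $\widehat\lambda$).} I would show $\widehat\lambda=O_p(n^{-1/2})$ and
\[
\sqrt n\,\widehat\lambda=-\Omega_{\psi_\circ}^{-1}\mathbb G_n[g_{e,i}]+o_p(1).
\]
The first-order condition \eqref{eq:etel_sample_foc} reads $\EE_n[e^{\widehat\lambda'g_{e,i}}g_{e,i}]=0$. I would first obtain consistency $\widehat\lambda\to_p 0$ from the strict convexity of $\lambda\mapsto\EE_n[e^{\lambda'g_{e,i}}]$. Its population limit is uniquely minimised at $0$, and the convergence is uniform on compacts by a ULLN under the moment conditions of Assumption \ref{Ass_3}(d)--(f). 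A mean-value expansion of the first-order condition then gives
\[
0=\EE_n[g_{e,i}]+\EE_n[e^{\bar\lambda'g_{e,i}}g_{e,i}g_{e,i}']\,\widehat\lambda .
\]
The key requirement is that $\EE_n[e^{\bar\lambda'g_{e,i}}g_{e,i}g_{e,i}']\to_p\Omega_{\psi_\circ}$ uniformly over $\bar\lambda$ in a shrinking neighbourhood of $0$. I would get this from an envelope bound of the form $\sup_{\|\lambda\|\le\delta}e^{\lambda'g}\|g\|^2$ being integrable, together with nonsingularity of $\Omega_{\psi_\circ}$ (Assumption \ref{Ass_2_NS}). This uniform control is the main obstacle: because $g_e$ is linear in $\psi$ and quadratic in the data, the needed exponential-moment condition must be checked carefully against the assumptions. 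Once it holds, the linear representation follows.

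\textbf{Step 2 (second-order expansion).} Next I would expand $\log\EE_n[e^{\widehat\lambda'g_{e,i}}]$ to second order in $\widehat\lambda$ around $0$:
\[
\log\EE_n[e^{\widehat\lambda'g_{e,i}}]=\widehat\lambda'\EE_n[g_{e,i}]+\tfrac12\widehat\lambda'\bigl(\EE_n[g_{e,i}g_{e,i}']-\EE_n[g_{e,i}]\EE_n[g_{e,i}]'\bigr)\widehat\lambda+O_p(n^{-3/2}).
\]
The remainder is $O_p(n^{-3/2})$ because it is cubic in $\widehat\lambda$ and controlled by the same envelope as in Step 1. Multiplying by $n$, the linear terms cancel against $n\widehat\lambda'\EE_n[g_{e,i}]$, which leaves $-\tfrac n2\widehat\lambda'\Omega_{\psi_\circ}\widehat\lambda+o_p(1)$. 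This uses the LLN $\EE_n[g_{e,i}g_{e,i}']\to_p\Omega_{\psi_\circ}$ and the fact that $\EE_n[g_{e,i}]=O_p(n^{-1/2})$. Substituting the representation from Step 1 yields \eqref{eq_lem_likelihood_ET_approx_extended}.

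\textbf{Step 3 (limit law).} Finally, the observations are i.i.d.\ with $\EE[g_{e,i}]=0$ and finite second moments, so the CLT gives $\mathbb G_n[g_{e,i}]\to_d\mathcal N(0,\Omega_{\psi_\circ})$. The continuous mapping theorem then shows that the quadratic form converges to $\chi^2_d$.
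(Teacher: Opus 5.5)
Your proposal is correct and follows essentially the paper's route. Both arguments use the master identity at $\psi_\circ$, the representation $\sqrt n\,\wh\lambda(\psi_\circ)=-\Omega_{\psi_\circ}^{-1}\mathbb G_n[g_e(w_i,\psi_\circ)]+o_p(1)$ (Lemma \ref{lem:1:extended}, which you re-derive via convexity and a mean-value expansion of the first-order condition rather than citing the rate from Lemma \ref{lem:2:extended}), a second-order expansion in $\lambda$ around $0$ with Hessian tending to $\Omega_{\psi_\circ}$, and the CLT.

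The differences are cosmetic: you expand $\log\EE_n[e^{\lambda'g_{e,i}}]$ directly so the linear terms cancel identically, whereas the paper expands $\EE_n[e^{\lambda'g_{e,i}}]$ and then uses $\log u=u-1+o(|u-1|)$. Two small corrections. First, invertibility of $\Omega_{\psi_\circ}$ is the content of Assumption \ref{Ass_1_NS}\textit{(c)} (read for $g_e$ at $\lambda=0$), not Assumption \ref{Ass_2_NS}, which concerns the rank of $\EE[\wtl w_i\wtl w_{1,i}']$. Second, a Lagrange-form second-order remainder already gives the needed $o_p(1/n)$ from the second-moment envelope, so your claimed cubic bound (which would need a third-moment envelope) is unnecessary.
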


Theorem \ref{lem_likelihood_ET_approx_extended} establishes the asymptotic behaviour of the log-ETEL function of the extended model $\mathcal{M}_e$. Unlike the base model, $\log \wh{q}(w_{1:n}|\psi_\circ,\mathcal{M}_{e}) + n\log n$ is equal, up to an asymptotically negligible term, to a quadratic random term that remains bounded in probability as $n\rightarrow \infty$.\\
\indent If $\EE[\varepsilon_i(\theta_\circ) x_i] = 0$ (exogenous case), so that the assumptions in Theorem \ref{lem_likelihood_ET_approx} hold with $\theta_*$ replaced by $\theta_\circ$ and $\lambda_*(\theta_*) = \lambda_*(\theta_\circ) = 0$, then the log-ETEL function in the base model simplifies as
\begin{equation}\label{eq_lem_likelihood_ET_approx}
  \log \wh{q}(w_{1:n}|\theta_\circ,\mathcal{M}_{b}) = -n\log n - \frac{1}{2}\mathbb{G}_n\left[g_b(w_i,\theta_\circ)'\right] \Omega_\circ^{-1} \mathbb{G}_n\left[g_b(w_i,\theta_\circ)\right] + o_p(1),
\end{equation}
\noindent where $\Omega_\circ := \EE[\varepsilon_i(\theta_\circ)^2 \wtl{w}_i\wtl{w}_i']$, $\mathbb{G}_n\left[g_b(w_i,\theta_\circ)'\right] \Omega_\circ^{-1} \mathbb{G}_n\left[g_b(w_i,\theta_\circ)\right]\xrightarrow{d}\chi_{d}^2$, and $\chi_{d}^2$ denotes a chi square distribution with $d$ degrees of freedom. For the extended model, if $\EE[\varepsilon_i(\theta_\circ) x_i] = 0$ then $g_e(w_i,\psi_\circ) = g_b(w_i,\theta_\circ)$ and the log-ETEL function slightly simplifies as:
\begin{equation}
  \log \wh{q}(w_{1:n}|\psi_\circ,\mathcal{M}_{e}) = - n\log n - \frac{1}{2}\mathbb{G}_n\left[g_b(w_i,\theta_\circ)'\right] \Omega_{\circ}^{-1} \mathbb{G}_n\left[g_b(w_i,\theta_\circ)\right] + o_p(1),
\end{equation}
\noindent where $\mathbb{G}_n\left[g_b(w_i,\psi_\circ)'\right] \Omega_{\psi_\circ}^{-1} \mathbb{G}_n\left[g_b(w_i,\psi_\circ)\right]\xrightarrow{d}\chi_{d}^2$. Hence, when $x$ is exogenous, $\log \wh{q}(w_{1:n}|\theta_\circ,\mathcal{M}_{b}) $ and $\log \wh{q}(w_{1:n}|\psi_\circ,\mathcal{M}_{e})$ are equal asymptotically and they cancel in the comparison of the marginal likelihoods.\\
\indent In case of endogeneity, instead, $\log \wh{q}(w_{1:n}|\theta_\circ,\mathcal{M}_{b}) $ and $\log \wh{q}(w_{1:n}|\psi_\circ,\mathcal{M}_{e})$ are different and they play a central role in the comparison of marginal likelihoods. In this case, it is important to consider the behaviour of the average log-ETEL function $\frac{1}{n}\log \wh{q}(w_{1:n}|\theta_*,\mathcal{M}_b)$ which stays bounded asymptotically. The following two corollaries demonstrates that asymptotically the average log-ETEL functions behave as the Kullback-Leibler divergence, up to a $\log(n)$ term. While these results are implicit in the definition of the ETEL, we provide here a formal and explicit statement and in the Appendix their proof. This allows us to understand the behaviour of the marginal likelihood.
\begin{cor}[Base model.]\label{thm_limit_ETEL}
Suppose Assumptions \ref{Ass_absolute_continuity} - \ref{Ass_2_NS} and \ref{Ass_3} \textit{(d)}-\textit{(f)} hold. Then, as $n\rightarrow \infty$,
  \begin{equation}
    \frac{1}{n}\log \wh{q}(w_{1:n}|\theta_*,\mathcal{M}_b) + \log(n) \xrightarrow{p} \mathbf{E}\left[\log(dQ_b^*(\theta_*)/dP)\right],
  \end{equation}
  \noindent where $\mathbf{E}\left[\log(dQ_b^*(\theta_*)/dP)\right] = \EE\left[\log\Bigl(\frac{e^{\lambda_{*}(\theta_*)'g_b(w,\theta_*)}}{\mathbf{E}[e^{\lambda_*(\theta_*)'g_b(w,\theta_*)}]}\Bigr)\right] = -\mathrm{KL}(P||Q_b^*(\theta_*))$.
\end{cor}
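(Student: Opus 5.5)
Dividing the expansion of Theorem \ref{lem_likelihood_ET_approx} by $n$ and taking limits term by term is the natural route. Adding $n\log n$ to both sides of \eqref{lem_likelihood_eq:1} and dividing by $n$ gives
\[
\frac1n\log \wh q(w_{1:n}\mid\theta_*,\mathcal M_b)+\log n
= \EE_n\!\left[\lambda_*(\theta_*)'g_b(w_i,\theta_*)\right]-\log \EE_n\!\left[e^{\lambda_*(\theta_*)'g_b(w_j,\theta_*)}\right]+R_n,
\]
where $R_n$ collects the remaining terms divided by $n$. The leading part should converge to the stated limit. By Assumptions \ref{Ass_absolute_continuity}--\ref{Ass_2_NS} and \ref{Ass_3}, $\EE|g_b(w,\theta_*)|<\infty$ and $\EE[e^{\lambda_*(\theta_*)'g_b(w,\theta_*)}]<\infty$, the latter being implied by the existence of the population tilt $Q_b^*(\theta_*)$. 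The strong (or weak) law of large numbers then gives $\EE_n[\lambda_*'g_b]\to\lambda_*'\EE[g_b]$ and $\EE_n[e^{\lambda_*'g_b}]\to\EE[e^{\lambda_*'g_b}]>0$. Continuity of $\log$ on $(0,\infty)$ yields convergence in probability to $\EE[\log(e^{\lambda_*'g_b}/\EE[e^{\lambda_*'g_b}])]$. By \eqref{eq:prelim_RN_Q_star} this equals $\EE[\log(dQ_b^*(\theta_*)/dP)]$, and since $\log(dQ^*/dP)=-\log(dP/dQ^*)$ it equals $-\mathrm{KL}(P\|Q_b^*(\theta_*))$ by \eqref{eq:prelim_KL_reverse_def}.

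Next, I show $R_n=o_p(1)$.
\begin{itemize}
\item The two quadratic terms in \eqref{lem_likelihood_eq:1} are products of $\mathbb G_n$ of square-integrable functions with a fixed invertible matrix $\Omega_*^\dagger(\theta_*)^{-1}$. Each $\mathbb G_n[\cdot]$ is $O_p(1)$ by the CLT, which Theorem \ref{lem_likelihood_ET_approx} states explicitly for $\tau^\dagger g_b$. The term $\mathbb G_n[\tau_i^\dagger g_b']$ is also $O_p(1)$ under the moment conditions in Assumption \ref{Ass_3}. Hence both quadratic terms are $O_p(1)$, and divided by $n$ they vanish.
\item The linear term $n(\wh\lambda(\theta_*)-\lambda_*(\theta_*))'\EE[g_b(w_i,\theta_*)]$, divided by $n$, equals $(\wh\lambda-\lambda_*)'\EE[g_b]$. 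The last display of Theorem \ref{lem_likelihood_ET_approx} gives $\sqrt n(\wh\lambda-\lambda_*)=O_p(1)$. So this term is $O_p(n^{-1/2})=o_p(1)$.
\item The $o_p(1)/n$ remainder is trivially negligible.
\end{itemize}

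The only delicate point is the integrability needed for the law of large numbers, namely finiteness of $\EE[e^{\lambda_*'g_b}]$ and $\EE|\lambda_*'g_b|$. The second of these also makes $\mathrm{KL}(P\|Q_b^*)$ finite. I expect both to follow from the dominance and moment conditions in Assumption \ref{Ass_3}(d)--(f) together with Assumption \ref{Ass_absolute_continuity}. If the assumptions deliver them only in a neighbourhood of $\lambda_*(\theta_*)$, that suffices, since the limit is evaluated at $\lambda_*(\theta_*)$ itself. Slutsky's lemma then combines the leading and remainder terms to give the claim.
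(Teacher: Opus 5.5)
Your proposal is correct and follows the paper's proof: divide \eqref{lem_likelihood_eq:1} by $n$, send the quadratic terms to zero as $\mathcal{O}_p(1/n)$, control the linear term through $\wh\lambda(\theta_*)-\lambda_*(\theta_*)$, and apply the law of large numbers to the leading term. The paper uses only consistency from Lemma \ref{lem:2} for the linear term, whereas you use the $\sqrt{n}$-rate. Your splitting of the leading term into $\EE_n[\lambda_*(\theta_*)'g_b(w_i,\theta_*)]-\log\EE_n[e^{\lambda_*(\theta_*)'g_b(w_j,\theta_*)}]$ before applying the LLN and continuous mapping is a cleaner justification than the paper's direct appeal to the LLN, because the summands there contain the empirical normalizer and so are not i.i.d.
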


\begin{cor}[Extended model.]\label{thm_limit_ETEL_extended_model}
Suppose Assumptions \ref{Ass_0_NS} -\ref{Ass_1_NS} with $\Theta$ replaced by $\Psi$, \ref{Ass_2_NS} and \ref{Ass_3} \textit{(d)}-\textit{(f)} with $\theta_*$ and $\lambda_*(\theta_*)$ replaced with $\theta_\circ$ and $0$, respectively, hold. Then, as $n\rightarrow \infty$,
  \begin{equation}
    \frac{1}{n}\log \wh{q}(w_{1:n}|\psi_\circ,\mathcal{M}_e) + \log(n) \xrightarrow{p} \mathbf{E}\left[\log(dQ_e^*(\psi_\circ)/dP)\right],
  \end{equation}
  \noindent where $\mathbf{E}\left[\log(dQ_e^*(\psi_\circ)/dP)\right] = \EE\left[\log\Bigl(\frac{e^{\lambda_{*}(\psi_\circ)'g(w,\psi_\circ)}}{\mathbf{E}[e^{\lambda_*(\psi_\circ)'g(w,\psi_\circ)}]}\Bigr)\right] = \mathrm{KL}(P||Q_e^*(\psi_\circ))$.
\end{cor}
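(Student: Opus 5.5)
The plan is to derive Corollary \ref{thm_limit_ETEL_extended_model} directly from the stochastic expansion in Theorem \ref{lem_likelihood_ET_approx_extended}. The key observation is that the extended model is correctly specified at $\psi_\circ$, by \eqref{eq:extended_correct_spec} and \eqref{eq:Qe_star_equals_P}. Hence $\lambda_*(\psi_\circ)=0$, the population tilt satisfies $q(w;\psi_\circ)\equiv 1$, and $dQ_e^*(\psi_\circ)/dP\equiv 1$. The limit $\EE[\log(dQ_e^*(\psi_\circ)/dP)]$ is therefore $0$, which coincides with $\mathrm{KL}(P\|Q_e^*(\psi_\circ))=0$. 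So the sign displayed in the statement is immaterial: both sides equal zero. It suffices to show $\frac1n\log\wh q(w_{1:n}|\psi_\circ,\mathcal M_e)+\log n\xrightarrow{p}0$.

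The first step is to invoke Theorem \ref{lem_likelihood_ET_approx_extended}, whose assumptions are exactly those of the corollary. Dividing \eqref{eq_lem_likelihood_ET_approx_extended} by $n$ gives
\begin{equation*}
\frac1n\log\wh q(w_{1:n}|\psi_\circ,\mathcal M_e)+\log n=-\frac{1}{2n}\,\mathbb G_n[g_e(w_i,\psi_\circ)']\,\Omega_{\psi_\circ}^{-1}\,\mathbb G_n[g_e(w_i,\psi_\circ)]+\frac{o_p(1)}{n}.
\end{equation*}
The second step uses the same theorem again: the quadratic form converges in distribution to $\chi^2_d$, so it is $\mathcal O_p(1)$. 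Dividing by $n$ makes it $o_p(1)$, and the remainder term is trivially $o_p(1)$, which yields the convergence in probability to $0$.

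For completeness I would also verify the identity for the limit as written in the statement. Plug $\lambda_*(\psi_\circ)=0$ into the tilt \eqref{eq:prelim_population_tilt_def}; the expectation in the denominator equals $1$, and the logarithm vanishes identically. Here $\lambda_*(\psi_\circ)=0$ is the unique minimizer of the strictly convex map $\lambda\mapsto\EE[e^{\lambda' g_e(w,\psi_\circ)}]$. Its gradient at $0$ is $\EE[g_e(w,\psi_\circ)]=0$, and strict convexity follows from $\Omega_{\psi_\circ}$ being nonsingular, which is implicit in the assumptions of Theorem \ref{lem_likelihood_ET_approx_extended}.

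I do not expect a real obstacle. All the probabilistic work is contained in Theorem \ref{lem_likelihood_ET_approx_extended}. The only point requiring care is identifying $\lambda_*(\psi_\circ)=0$, and hence the limit, under correct specification, which uses the nonsingularity of $\Omega_{\psi_\circ}$.
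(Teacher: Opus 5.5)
Your proposal is correct and follows essentially the same route as the paper. The paper divides the expansion of Theorem \ref{lem_likelihood_ET_approx_extended} by $n$ to get a remainder of order $\mathcal{O}_p(1/n)$, then uses $\lambda_*(\psi_\circ)=0$ to identify the limit as zero; it writes that zero as a sample average of log-tilts and invokes the LLN, which is vacuous here. Your remark that the sign in the displayed identity $\EE[\log(dQ_e^*(\psi_\circ)/dP)]=\mathrm{KL}(P\|Q_e^*(\psi_\circ))$ is immaterial, because both sides vanish under correct specification, is also right.
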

Notice that $\mathbf{E}\left[\log(dQ_e^*(\psi_\circ)/dP)\right] = 0$ since the extended model is correctly specified and so $dQ_e^*(\psi_\circ)/dP = 1$. \\

\indent From Theorems \ref{lem_likelihood_ET_approx} and \ref{lem_likelihood_ET_approx_extended}, and Theorems \ref{thm_BvM_misspecified} and \ref{thm_BvM:extended} in the Online Appendix and from \eqref{log_ML_base_initial}-\eqref{log_ML_extended_initial}, there exists an $N$ such that for every $n>N$:
\begin{eqnarray}
  \log m(w_{1:n}|\mathcal{M}_b) & = & - n\log n + \sum_{i=1}^n \log\left(\frac{e^{\lambda_*(\theta_*)'g_b(w_i,\theta_*)}}{\EE_n[e^{\lambda_*(\theta_*)' g_b(w_j,\theta_*)}]}\right)\nonumber\\
  && \hfill \qquad + n(\wh\lambda(\theta_*) - \lambda_*(\theta_*))'\EE[g_b(w_i,\theta_*)] - \frac{p}{2}\log(n) + \mathcal{O}_p(1),\label{log_ML_base}\\
  \log m(w_{1:n}|\mathcal{M}_e) & = & -n\log(n) - \frac{p+d_x}{2}\log(n) + o_p(1).\label{log_ML_extended}
\end{eqnarray}
These log-marginal likelihoods quantify the overall validity of the model. In fact, from these expressions one sees that when $x_i$ is exogenous, that is, $\EE[\varepsilon_i(\theta_\circ) x_i] = 0$, then $\lambda_*(\theta_*) = 0$ and $\sum_{i=1}^n \log\left(\frac{e^{\lambda_*(\theta_*)'g_b(w_i,\theta_*)}}{\EE_n[e^{\lambda_*(\theta_*)' g_b(w_j,\theta_*)}]}\right) = 0$ for every $n\in\mathbb{N}$. Therefore, it is clear that asymptotically $\log m(w_{1:n}|\mathcal{M}_b)$ is larger than $\log m(w_{1:n}|\mathcal{M}_e)$.\\
\indent On the other hand, when there is no $\theta\in\Theta$ such that $\EE[\varepsilon_i(\theta) x_i] = 0$, then $\lambda_*(\theta_*) \neq 0$ and $\sum_{i=1}^n \log\left(\frac{e^{\lambda_*(\theta_*)'g_b(w_i,\theta_*)}}{\EE_n[e^{\lambda_*(\theta_*)' g_b(w_j,\theta_*)}]}\right) - n\log n$ diverges to $-\infty$ faster than the last two terms in \eqref{log_ML_base}, so that asymptotically $\log m(w_{1:n}|\mathcal{M}_b)$ is smaller than $\log m(w_{1:n}|\mathcal{M}_e)$. This is the main intuition of the consistency results in Theorems \ref{thm_consistency_selection_misspecification_2} and \ref{thm_consistency_selection_correct_spectification} in the next section. The proof of these theorems, which is provided in the Appendix, is more involved than this argument because the theorems provide an `if and only if' statement, which is stronger than consistency.

\subsection{Consistency of the testing procedure}
% Beginning of \input{Model_Selection_Consistency.tex}
%\input{Model_Selection_Consistency}
We now use the preceding theory to establish consistency of our testing procedure based on the Bayes factor constructed from the marginal ETEL functions. The theorems below establish that, as the sample size increases, $BF_{eb}$ selects $\mathcal{M}_{b}$ if and only if $x$ is exogenous, and selects $\mathcal{M}_{e}$ if and only if $x$ is endogenous, with probability approaching one. %We add the following assumption

%\begin{ass}\label{Ass_prior}
%  (a) $\pi $ is a continuous probability measure that admits a density with respect to the Lebesgue measure; (b) $\pi $ is positive on a neighborhood of $\theta_*$.
%\end{ass}

\begin{thm}\label{thm_consistency_selection_misspecification_2}
Let Assumptions \ref{Ass_absolute_continuity} - \ref{ass:feasibility}, \ref{Ass_2_NS}, \ref{Ass_identification_base_model}, \ref{Ass_identification_extended_model} hold and let Assumptions \ref{Ass_1_NS} and \ref{Ass_3} hold for $\theta_*$ and $\lambda_*(\theta_*)$ and also for $\theta_*$ and $\lambda_*(\theta_*)$ replaced with $\theta_\circ$ and $0$, respectively. Let the priors on $\theta$ and $\psi$ be continuous probability measures that admit densities with respect to the Lebesgue measure and that are positive on a neighborhood of $\theta_*$ and $\psi_{\circ}$, respectively. Let us consider the comparison of models $\mathcal{M}_{b}$ and $\mathcal{M}_{e}$.
Then,
\begin{equation*}
\lim_{n\rightarrow \infty }P\left( \log m(w_{1:n}|\mathcal{M}_{e})>\log
m(w_{1:n}|\mathcal{M}_{b})\right) =1
\end{equation*}
if and only if there is no $\theta\in\Theta$ such that $\mathbf{E}[\varepsilon_{i}(\theta)x_i]=0 $ holds, and the limit is zero otherwise.
\end{thm}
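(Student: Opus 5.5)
We prove the two directions separately, working with the difference of the decompositions \eqref{log_ML_base_initial}--\eqref{log_ML_extended_initial}. Write
\[
\log \mathrm{BF}_{eb} = \big[\log \wh q(w_{1:n}|\psi_\circ,\mathcal M_e) - \log \wh q(w_{1:n}|\theta_*,\mathcal M_b)\big] - \frac{d_x}{2}\log n + R_n,
\]
where $R_n$ collects the prior ordinates $\log\pi(\psi_\circ|\mathcal M_e)-\log\pi(\theta_*|\mathcal M_b)$, which are finite constants because the priors are positive and continuous near $\psi_\circ$ and $\theta_*$, and the local posterior ordinates $\log\pi^n_{h_\theta}(0|w_{1:n},\mathcal M_b)-\log\pi^n_{h_\psi}(0|w_{1:n},\mathcal M_e)$. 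We first show $R_n=\mathcal O_p(1)$. By the Bernstein--von Mises results (Theorems \ref{thm_BvM_misspecified} and \ref{thm_BvM:extended}), each local posterior density is close in total variation to a Gaussian density $\mathcal N(\Delta_n,V)$ with $\Delta_n=\mathcal O_p(1)$ and $V$ nonsingular. To pass from total variation to the ordinate at $0$, we use the stochastic LAN expansions (Theorems \ref{thm:stochasticLAN:endogeneity}--\ref{thm:stochasticLAN:extended}). Writing the posterior density of $h$ as the ratio of ETEL times prior to its normalizer, the numerator at $h=0$ and the normalizer are both, up to $e^{o_p(1)}$, Gaussian-integral quantities. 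This gives the ordinate, and its reciprocal, bounded in probability. Identification (Assumptions \ref{Ass_identification_base_model}, \ref{Ass_identification_extended_model}) controls the mass outside shrinking neighborhoods.

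\emph{Endogeneity case.} By Theorem \ref{thm:equivalence}, (i) gives $\mathrm{KL}(P\|Q_b^*(\theta_*))=:\kappa>0$. By Corollaries \ref{thm_limit_ETEL} and \ref{thm_limit_ETEL_extended_model},
\[
\frac1n\big[\log \wh q(\cdot|\psi_\circ,\mathcal M_e)-\log\wh q(\cdot|\theta_*,\mathcal M_b)\big]\xrightarrow{p}\kappa .
\]
Hence $\frac1n\log\mathrm{BF}_{eb}\xrightarrow{p}\kappa>0$, since both $\frac{d_x}{2n}\log n$ and $R_n/n$ vanish. Therefore $P(\log m_e>\log m_b)\to1$.

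\emph{Exogeneity case.} By Theorem \ref{thm:equivalence}, (iii) gives $\theta_*=\theta_\circ$, $v_\circ=0$ and $\lambda_*=0$. Then $g_e(\cdot,\psi_\circ)=g_b(\cdot,\theta_\circ)$, and by \eqref{eq_lem_likelihood_ET_approx} together with Theorem \ref{lem_likelihood_ET_approx_extended} the two log-ETELs coincide up to $o_p(1)$. Note that $\Omega_{\psi_\circ}=\Omega_\circ$. Consequently $\log\mathrm{BF}_{eb}=-\frac{d_x}{2}\log n+\mathcal O_p(1)\to-\infty$ in probability, so the probability in the statement tends to $0$.

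The two cases are exhaustive and mutually exclusive. The limit therefore equals $1$ exactly when there is no $\theta$ with $\EE[\varepsilon_i(\theta)x_i]=0$, and equals $0$ otherwise, which gives the ``if and only if''. The main obstacle is the claim $R_n=\mathcal O_p(1)$, and in particular a lower bound on the posterior ordinate at $0$ in the misspecified base model. Total-variation BvM alone does not control pointwise densities. I would therefore prove it directly from the LAN expansion, uniform on compacts in $h$, combined with a tail bound showing the normalizing integral outside $\|h\|\le M$ is negligible. The tail bound comes from the quadratic-in-$\theta$ structure of the linear IV moments.
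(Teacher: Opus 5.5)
Your proposal is correct and follows the paper's skeleton. Both arguments evaluate Chib's identity at $\theta_*$ and $\psi_\circ$, use the local change of variables that produces the $\frac{p}{2}\log n$ and $\frac{p+d_x}{2}\log n$ penalties, get an order-$n$ gap $n\,\mathrm{KL}(P\|Q_b^*(\theta_*))$ under endogeneity, and let the $-\frac{d_x}{2}\log n$ penalty decide the exogenous case.

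The one real difference is the clause ``the limit is zero otherwise.'' The paper's proof of this theorem divides by $n$ and reduces to $P\bigl(0>-\mathrm{KL}(P\|Q_b^*(\theta_*))+\mathcal O_p(\log n/\sqrt n)\bigr)$. It then asserts this tends to zero when the KL term vanishes. That does not follow from the display, because after dividing by $n$ the sign of the remainder is uncontrolled. Your argument keeps the $\log n$ scale and uses the cancellation of the log-ETELs plus the $-\frac{d_x}{2}\log n$ penalty. This is the argument the paper gives only for Theorem \ref{thm_consistency_selection_correct_spectification}, and it is what actually proves the clause. The cancellation is in fact exact, not just up to $o_p(1)$: when $v_\circ=0$ one has $g_e(\cdot,\psi_\circ)=g_b(\cdot,\theta_\circ)$, so the two ETEL programs coincide. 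You are also right that the total-variation BvM theorems the paper cites do not by themselves give $\log\pi^n_h(0\mid w_{1:n})=\mathcal O_p(1)$.

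Two corrections to how you plan to supply that bound.

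\emph{Source of the tail bound.} It cannot come from a ``quadratic-in-$\theta$ structure.'' The moments are linear in $\theta$, but the log-ETEL depends on $\theta$ through $\widehat\lambda(\theta)$, is not quadratic, and is defined only on $H_n$. The tail control must come from Assumptions \ref{Ass_identification_base_model}--\ref{Ass_identification_extended_model}, applied with a sequence $M_n$ satisfying two rates. You need $M_n^2/\log n\to\infty$, so that $e^{-CM_n^2}$ beats the Jacobian factor $n^{p/2}$ (resp.\ $n^{(p+d_x)/2}$). You also need $M_n=o(\sqrt n)$, so that the uniform LAN of Theorems \ref{thm:stochasticLAN:endogeneity}--\ref{thm:stochasticLAN:extended} covers $\|h\|\le M_n$.

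\emph{Where the difficulty lies.} The misspecified base ordinate is not the main obstacle. Under endogeneity you only need $R_n\ge -o_p(n)$. This follows from $m(w_{1:n}|\mathcal M_b)\le\sup_{\theta\in H_n}\widehat q(w_{1:n}|\theta,\mathcal M_b)$, combined with Assumption \ref{Ass_identification_base_model} and the LAN on $\Theta_n$. It also needs a crude lower bound on $m(w_{1:n}|\mathcal M_e)$, obtained by integrating over a fixed ball in $h$. The sharp, $o_p(\log n)$, control of $R_n$ is needed only under exogeneity, where both models are correctly specified. There the delicate direction is an upper bound on $m(w_{1:n}|\mathcal M_e)$, which is exactly where the tail-versus-Jacobian rate condition above enters.
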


As we show in the proof, the failure of the necessary and sufficient condition $\mathbf{E}[\varepsilon _{i}(\theta )x_{i}]=0$ for any $\theta \in\Theta$, is equivalent to the inequality $\mathrm{KL}(P||Q_{e}^{\ast }(\psi ))<\mathrm{KL}(P||Q_{b}^{\ast }(\theta ))$, where $\mathrm{KL}(P||Q_{e}^{\ast }(\psi_\circ)) = 0$. In this case, the log-ETEL function dominates the other components of the log-marginal ETEL function so that the build-in penalty does not play any role. Thus, as in the general result in \cite[Theorem 3.2]{ChibShinSimoni2018} for moment condition models, comparing the
log-marginal likelihoods of the base and extended models, and selecting the one with the higher value, in the limit, selects the model that is closest in the KL divergence to the true model. In the framework of the present paper, this means that the comparison of marginal likelihoods allows to correctly conclude that $x_i$ is endogenous.\\
\indent Next, we show what happens when the variables $x_i$ are exogenous so that the moment restriction $\mathbf{E}[\varepsilon _{i}(\theta )x_{i}]=0$ holds for a particular value $\theta_\circ$ and the two models under comparison are correctly specified. The next theorem states that in this case the base model is selected. This is understandable through an argument of parsimony: the base model has the smaller number of parameters to estimate and so it is the preferred one when it is correctly specified.
\begin{thm}\label{thm_consistency_selection_correct_spectification}
Let Assumptions \ref{Ass_absolute_continuity} - \ref{ass:feasibility}, \ref{Ass_2_NS}, \ref{Ass_identification_base_model}, and \ref{Ass_identification_extended_model} hold and let Assumptions \ref{Ass_1_NS} and \ref{Ass_3} hold for both $\theta_*$ and $\lambda_*(\theta_*)$ and also for $\theta_*$ and $\lambda_*(\theta_*)$ replaced with $\theta_\circ$ and $0$, respectively. Let the priors on $\theta$ and $\psi$ be continuous probability measures that admit densities with respect to the Lebesgue measure and that are positive in a neighborhood of $\theta_*$ and $\psi_{\circ}$, respectively. Let us consider the comparison of models $\mathcal{M}_{b}$ and $\mathcal{M}_{e}$. Then,
\begin{equation*}
  \lim_{n\rightarrow \infty }P\left( \log m(w_{1:n}|\mathcal{M}_{b})>\log m(w_{1:n}|\mathcal{M}_{e})\right) = 1
\end{equation*}
if and only if there is a $\theta_\circ\in\Theta$ such that $\mathbf{E}[\varepsilon_{i}(\theta_\circ)x_i]=0 $ holds.
\end{thm}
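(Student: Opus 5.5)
The plan is to prove both directions by combining the Chib-identity decompositions \eqref{log_ML_base_initial}--\eqref{log_ML_extended_initial} with the log-ETEL expansions of Theorems \ref{lem_likelihood_ET_approx} and \ref{lem_likelihood_ET_approx_extended}.

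\emph{Sufficiency.} Assume there is a $\theta_\circ$ with $\mathbf{E}[\varepsilon_i(\theta_\circ)x_i]=0$. By Theorem \ref{thm:equivalence}, $\mathrm{KL}(P\|Q_b^*(\theta_*))=0$, so $\theta_*=\theta_\circ$, $\lambda_*(\theta_\circ)=0$, and $g_e(w,\psi_\circ)=g_b(w,\theta_\circ)$ because $v_\circ=0$. Apply the identities at $\theta_\circ$ and at $\psi_\circ=(\theta_\circ,0)$ and subtract. This gives
\begin{multline*}
\log m(w_{1:n}|\mathcal{M}_b)-\log m(w_{1:n}|\mathcal{M}_e) = \frac{d_x}{2}\log n + \log\frac{\pi(\theta_\circ|\mathcal{M}_b)}{\pi(\psi_\circ|\mathcal{M}_e)} \\
+ \bigl[\log \wh q(w_{1:n}|\theta_\circ,\mathcal{M}_b)-\log \wh q(w_{1:n}|\psi_\circ,\mathcal{M}_e)\bigr] - \log\frac{\pi^n_{h_\theta}(0|w_{1:n},\mathcal{M}_b)}{\pi^n_{h_\psi}(0|w_{1:n},\mathcal{M}_e)}.
\end{multline*}
I will control the terms one at a time.
\begin{itemize}
\item The prior ratio is a finite constant, since both priors are positive and continuous near $\theta_\circ$ and $\psi_\circ$.
\item The bracketed log-ETEL difference is $o_p(1)$. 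When $v_\circ=0$ the two moment functions evaluated at these points coincide sample-by-sample, so the ETEL weights coincide and the difference is in fact exactly zero. Alternatively it follows from \eqref{eq_lem_likelihood_ET_approx} and the exogenous form of Theorem \ref{lem_likelihood_ET_approx_extended}.
\item The last term is $\mathcal{O}_p(1)$, provided both local posterior densities at $0$ are bounded in probability and also bounded away from zero in probability.
\end{itemize}
The last bullet is where the Bernstein--von Mises results Theorems \ref{thm_BvM_misspecified} and \ref{thm_BvM:extended} enter. I will use their total-variation version together with the stochastic LAN. This makes the local posterior density at $0$ asymptotically equal to a Gaussian density evaluated at a tight centering $\Delta_n$, and such a value is bounded in probability and bounded away from zero. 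With these bounds, the difference is $\frac{d_x}{2}\log n+\mathcal{O}_p(1)\to+\infty$ in probability, so the probability in the statement tends to one.

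\emph{Necessity.} I argue by contraposition. Suppose no $\theta$ satisfies $\mathbf{E}[\varepsilon_i(\theta)x_i]=0$; then Theorem \ref{thm_consistency_selection_misspecification_2} gives $P(\log m(w_{1:n}|\mathcal{M}_e)>\log m(w_{1:n}|\mathcal{M}_b))\to1$, so the limit in the statement is $0\neq1$. For completeness, the key fact behind that theorem can also be seen directly. Divide \eqref{log_ML_base}--\eqref{log_ML_extended} by $n$ and use Corollaries \ref{thm_limit_ETEL} and \ref{thm_limit_ETEL_extended_model}. The $\log n$ and $\mathcal{O}_p(1)$ terms vanish after division by $n$. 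The cross term $n(\wh\lambda-\lambda_*)'\mathbf{E}[g_b]$ divided by $n$ is $\mathcal{O}_p(n^{-1/2})$ by the last display of Theorem \ref{lem_likelihood_ET_approx}. Hence $\frac1n[\log m(\cdot|\mathcal{M}_b)-\log m(\cdot|\mathcal{M}_e)]\to -\mathrm{KL}(P\|Q_b^*(\theta_*))<0$, where strict negativity comes from Theorem \ref{thm:equivalence}(ii).

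\emph{Main obstacle.} The hardest step is showing that $\log\pi^n_{h}(0|w_{1:n})$ is $\mathcal{O}_p(1)$ in both directions, that is, not just bounded above but bounded away from zero. Total-variation convergence does not give pointwise density control by itself. I would first try to obtain the density statement directly: write the posterior density of $h$ as the ratio of $\pi(\theta_\circ+h/\sqrt n)\,\wh q(w_{1:n}|\theta_\circ+h/\sqrt n)/\wh q(w_{1:n}|\theta_\circ)$ to its integral over $h$. At $h=0$ the numerator equals $\pi(\theta_\circ)$. The denominator converges, by stochastic LAN and the BvM tail control (Assumptions \ref{Ass_identification_base_model}--\ref{Ass_identification_extended_model}), to $\pi(\theta_\circ)$ times a Gaussian integral $\exp(\tfrac12\Delta_n'V\Delta_n)(2\pi)^{p/2}|V|^{-1/2}$, which is tight and strictly positive in probability.
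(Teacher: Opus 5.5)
Your proposal is correct and has the same skeleton as the paper's proof. It applies Chib's identity at $\theta_\circ$ and $\psi_\circ$, cancels the log-ETEL terms, shows the local-parameter posterior ordinates are $\mathcal{O}_p(1)$, and lets the $\frac{d_x}{2}\log n$ Jacobian penalty favour $\mathcal{M}_b$. For necessity, the paper re-derives inside the proof the comparison between $n\,\mathrm{KL}(P\|Q_b^*(\theta_*))$ and $\frac{d_x}{2}\log n$. That is exactly what you import by contraposition from Theorem~\ref{thm_consistency_selection_misspecification_2}.

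Two sub-steps are handled differently, and in both your version is tighter.

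\emph{Log-ETEL cancellation.} The paper gets $\log\wh q(w_{1:n}|\theta_\circ,\mathcal{M}_b)-\log\wh q(w_{1:n}|\psi_\circ,\mathcal{M}_e)=o_p(1)$ by subtracting two stochastic expansions. You observe instead that $g_e(\cdot,(\theta_\circ,0))\equiv g_b(\cdot,\theta_\circ)$. The two ETEL programs therefore coincide and the difference is identically zero.

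\emph{Posterior ordinates.} The paper writes $\log\pi^n_{h_\theta}(0|w_{1:n},\mathcal{M}_b)$ and its extended-model analogue as Gaussian log-densities at $0$ plus $o_p(1)$. It justifies this only by the total-variation Bernstein--von Mises results (Corollary~\ref{Cor_2_1}, Theorem~\ref{thm_BvM:extended}). As you point out, TV convergence does not control a density at a point. Your representation
$\pi^n_h(0)=\pi(\theta_\circ)\big/\int \pi(\theta_\circ+h/\sqrt n)\,[\wh q(w_{1:n}|\theta_\circ+h/\sqrt n)/\wh q(w_{1:n}|\theta_\circ)]\,dh$
supplies the missing step. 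Combined with the stochastic LAN, it reproduces exactly the paper's formula \eqref{eq:11:proof:main}, and likewise in the extended model.

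Two details to fix when you write it out. First, with the paper's LAN form $h'V^{-1}\Delta_n-\frac12 h'V^{-1}h$, the Gaussian integral is $(2\pi)^{p/2}|V|^{1/2}\exp(\frac12\Delta_n'V^{-1}\Delta_n)$; you have $V$ and $V^{-1}$ swapped. Second, on the tail $\{\|h\|>M_n\}$ the change of variables produces a factor $n^{p/2}$ (resp.\ $n^{(p+d_x)/2}$). The bound $e^{-CM_n^2}$ from Assumptions~\ref{Ass_identification_base_model}--\ref{Ass_identification_extended_model} kills this factor only if you take $M_n^2/\log n\to\infty$, e.g.\ $M_n=\log n$. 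That choice is compatible with the requirement $M_n=o(\sqrt n)$ in the stochastic LAN theorems.
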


When $x$ is exogenous, as in the previous theorem, both the log-ETEL function and the build-in penalty plays a role in selecting the correct model.
% Ending of \input{Model_Selection_Consistency.tex}

\paragraph{Discussion.} In this and the previous subsection, we demonstrate that our model selection criteria favor a model with a smaller Kullback-Leibler Information Criterion (KLIC). When two models share the same KLIC, our procedure opts for the model with a greater number of overidentifying restrictions, \textit{i.e.}, a more parsimonious or less flexible model. Interestingly, this aligns with the goal of \cite{sin1996information}'s penalized likelihood criteria for a parametric model. Consequently, our proposed model selection procedure in this paper and \cite{ChibShinSimoni2018} can be viewed as a fully Bayesian semi-parametric version of consistent model selection criteria, applied specifically to an endogeneity testing problem. Unlike other frequentist procedures, the `penalty' term required for consistency is inherently built into our Bayesian calculation. This point was not stressed in \cite{ChibShinSimoni2018} and it is a contribution of this paper.

\cite{Andrews1999}, \cite{AndrewsLu2001}, and \cite{HongPrestonShum2003} have proposed and studied model selection criteria for moment condition models, even though a formal likelihood function is not defined. These criteria involve a penalization term that is attached to the Generalized Method of Moments (GMM) and, more broadly, the Generalized Empirical Likelihood (GEL) objective function, rather than the likelihood function. Examples of such frequentist model selection approaches based on GMM estimation can be found in Online Appendix \ref{App:comparisonGMM}. However, the relationship between these model selection criteria and the KLIC minimization principle of \cite{sin1996information} for potentially misspecified parametric models is not immediately apparent.

It is noteworthy that our procedure exhibits the same asymptotic behavior as \cite{hong2012bayesian}'s generalized empirical likelihood Bayes factor. They impose a separate prior on the Lagrangian multiplier that is independent of $\theta$, which does not guarantee the imposition of moment restrictions. In contrast, we introduce an additional parameter $v$ to the `inactive' moment restriction, ensuring that our prior on $\theta$ and $v$ respects the moment restrictions.
%In order to show consistency of our Bayesian test of endogeneity, we first demonstrate that the marginal likelihood can be approximated by the projection (as defined by the Kullback-Leibler divergence) of the true distribution on the set of distributions that satisfy the exogeneity restrictions of the base model. This result is knew to the best of our knowledge.\\

\subsection{Testing among more than two models}
Our testing procedure can be extended to settings in which more than two models are compared.
Consider the case in which only a subset of the variables in $x$ is endogenous. To start, suppose that $d_x = 2$ and that only $x_1$ is endogenous, whereas $x_2$ is exogenous. That is, $\EE[\varepsilon(\theta_{\circ})(x_2,z_1')']=0$, while there exists no $\theta\in\mathbb{R}^p$ such that $\EE[\varepsilon(\theta)x_1]=0$. If we compare only the base and extended models, $\mathcal{M}_b$ and $\mathcal{M}_e$, we could erroneously conclude that $x_1$ and $x_2$ are both endogenous. Instead, it is more appropriate to consider the following models: $\mathcal{M}_b$, $\mathcal{M}_e$, $\mathcal{M}_{e_1}$ and $\mathcal{M}_{e_2}$, where, for $i=1,2$, $\mathcal{M}_{e_i}$ is the model defined by the moment condition
\begin{equation}
\mathbf{E}^{Q}[g_{e_i}(w,\psi )]=0,\qquad Q\in \mathcal{Q}_{e_i}(\psi), \label{eq:moment:restrictions:extended:1}
\end{equation}%
where
\begin{equation*}
g_{e_i}(w,\psi_i) := \varepsilon(\theta )
\begin{pmatrix}
x \\
z_{1} \\
z_{2}
\end{pmatrix}
- \begin{pmatrix}
v^{(i)} \\
0 \\
0%
\end{pmatrix} = g_{b}(w,\theta) - \begin{pmatrix}
v^{(i)} \\
0 \\
0
\end{pmatrix},
\end{equation*}
%$e_i$ denotes the $d$-dimensional vector with a $1$ in the $i$-th coordinate and zeros elsewhere, $v_i$ denotes the $i$-th component of $v\in\mathbb{R}^{d_x}$,
$\psi_i :=(\theta ,v^{(i)})\in \Psi $, $\Psi :=\Theta \times \mathcal{V}$, $\mathcal{V}\subset \mathbb{R}^{2}$ with $v^{(i)} = (v_1^{(i)}, v_2^{(i)})$ and $v_j^{(i)}\in\mathbb{R}$ for $j=1,2$, and $\mathcal{Q}_{e_i}(\psi_i) := \left\{Q\in \mathbb{M};\,\mathbf{E}^{Q}[g_{e_i}(w,\psi_i)] = 0\right\} $. We enforce the restriction that one component of $x$ is treated as exogenous by setting the corresponding
element of $v^{(i)}$ to zero. Specifically, define $v^{(1)} := (0,v_{2}^{(1)})$ and
$v^{(2)} := (v_{1}^{(2)},0)$. Model $\mathcal{M}_{e_{1}}$ treats $x_{1}$ as exogenous and allows $x_2$
to be endogenous, while model $\mathcal{M}_{e_{2}}$ treats $x_{2}$ as exogenous and allows $x_1$ to be
endogenous.

This construction allows a direct application of our baseline-versus-extended comparison. Treat
$\mathcal{M}_e$ as the common reference extended model and compare $\mathcal{M}_{e_i}$ against $\mathcal{M}_e$.
If the marginal likelihood of $\mathcal{M}_{e_2}$ exceeds that of $\mathcal{M}_e$, then the data support the
restriction $v_2=0$, suggesting that $x_2$ is exogenous while $x_1$ is treated as endogenous. Similarly, if the
marginal likelihood of $\mathcal{M}_{e_1}$ exceeds that of $\mathcal{M}_e$, then the data support $v_1=0$,
suggesting that $x_1$ is exogenous while $x_2$ is treated as endogenous.

More generally, when more than two models are under consideration, one can compare them via their marginal
likelihoods. In our context, these candidate
models are obtained from the extended model by setting a subset of the elements of $v$ to zero. In total, there
are $2^{d_x}$ models, including the base and extended models. Each model corresponds to a configuration of which
elements of $x$ are treated as endogenous: $x_i$ is treated as endogenous when the associated $v_i$ is unrestricted,
whereas $v_i=0$ corresponds to $x_i$ being treated as exogenous. Marginal-likelihood comparison over this finite
model set selects the configuration most supported by the data.

Moreover, our endogeneity testing can be enriched by comparing different model specifications. For example, suppose
$x$ is scalar and consider linear versus quadratic specifications,
\[
y_{i} = \gamma_{0} + \beta_{1} x_{i} + \beta_{2} x_{i}^{2} + \gamma_{1} z_{1i} + \varepsilon_{i}.
\]
Then endogeneity of $x_i$ can be assessed under each specification, leading to four candidate models, {linear-exogenous, linear-endogenous, quadratic-exogenous, quadratic-endogenous}. A marginal
likelihood comparison can be used to select the best model among these candidates. We apply this idea in our real data example (BLP model). In that setting, we consider four candidate models that jointly vary the functional form and the endogeneity status of price, namely {linear-exogenous, linear-endogenous, nonlinear-exogenous, nonlinear-endogenous}. Marginal likelihood comparison over these four candidates simultaneously assesses endogeneity within each specification and delivers a unified ranking across specifications.

Appendix~\ref{sec:additional-mc} reports Monte Carlo experiments for these two use cases; see Sections~\ref{sec:mc-illustration-1} and \ref{sec:mc-illustration-2}.

\subsection{Assumptions}\label{sec:assumptions}
\indent We provide the assumptions that we use to prove the results in the previous sections. The first assumption ensures that the set of distributions satisfying the moment conditions is non-empty, which is necessary for the ETEL to be well-defined. It guarantees that the dual representation of the optimization problem \eqref{eq:KLbasic} holds even when $P\notin\mathcal{Q}_{b,\theta}$ for every $\theta\in\Theta$. In fact, in the latter case it is possible that $Q_b^*(\theta )$ and $P$ do not have a common support for any $\theta $, in which case, the equality in \eqref{eq:ETEL:dual} does not hold; see \cite{Sueishi2013} for a discussion on this point.

\begin{ass}[Non-emptyness.]\label{Ass_absolute_continuity} When $\mathbf{E}[\varepsilon_{i}(\theta)x_{i}]\neq 0$ for every $\theta\in\Theta$, there exists $Q\in \bigcup_{\theta\in\Theta}\mathcal{Q}_{b,\theta }$ such that $Q$ is mutually absolutely continuous with respect to $P$, where $\mathcal{Q}_{b,\theta }$ is defined in \eqref{def:Q:b}.
\end{ass}

\noindent This assumption implies that there is a $\theta $ for which $\mathcal{Q}_{b,\theta }$ is non-empty, that $dQ_b^{*}(\theta )/dP = \Bigl(\frac{e^{\lambda_{*}(\theta )'g(w,\theta)}}{\mathbf{E}[e^{\lambda_*(\theta )'g(w,\theta )}]}\Bigr)$ and that $\theta_{\ast }$ is identified by \eqref{eq:thetastar}. In addition, by assuming mutual absolute continuity, it ensures that both $\mathrm{KL}(Q_b^{*}(\theta)||P)$ and $\mathrm{KL}(P||Q_b^{*}(\theta))$ are well defined. We then assume that $\theta_*$ is unique.

\begin{ass}[Identification.]\label{Ass_0_NS}
  The maximizer $\theta_*$ defined as the minimizer of $\mathrm{KL}(P||Q^*(\theta))$ with respect to $\theta\in\Theta$ is unique and is in the interior of $\Theta$, where the interior is defined with respect to the topology in $\mathbb{R}^p$.
\end{ass}

Since under Assumption \ref{Ass_absolute_continuity}, $\theta_*$ coincides with the minimizer in \eqref{eq:thetastar}, then the previous assumption implies uniqueness also of the latter.\\
\indent As we have pointed out in Section \ref{sec:Preliminaries}, the ETEL is not defined at the $\theta$s for which the optimization problem \eqref{eq:KLbasic} does not have a feasible solution, that is, at the $\theta$s that do not belong to the set $H_n$ in \eqref{eq:posterior}. Assumption \ref{Ass_absolute_continuity} guarantees that asymptotically the optimization problem \eqref{eq:KLbasic} is feasible at $\theta_*$. Similarly, if the model is correctly specified, then the optimization problem \eqref{eq:KLbasic} is feasible at $\theta_{\circ}$ (or $\psi_{\circ}$ depending on which model we consider). However, this is not enough for our asymptotic analysis. Instead, we need to assume that \eqref{eq:KLbasic} has a solution for every $\theta$ that is sufficiently close to $\theta_*$ with probability approaching $1$. The required notion of how close is specified in the next assumption, for which we introduce the following ball. For any sequence $M_n\rightarrow \infty$ as $n\rightarrow \infty$, and any $\wtl\theta\in\Theta$, define the $n^{-1/2}$-ball around $\wtl\theta$ as $B(\wtl\theta,M_n n^{-1/2}) := \{\theta\in\Theta:\,\|\theta - \wtl\theta\| \leq M_n n^{-1/2}\}.$ The ball $B(\wtl\theta,M_n n^{-1/2})$ shrinks to $0$ slightly slower than $n^{-1/2}$ depending on how fast $M_n$ goes to $\infty$. The $n^{-1/2}$-ball $B(\wtl\psi,M_n n^{-1/2})$ around some $\wtl\psi\in\Psi$ is defined similarly. In addition, denote by $int \Delta_n := \{q := (q_1,\ldots,q_n)'; \sum_{i=1}^n q_i = 1, \; q_1>0,\ldots,q_n>0\}$ the interior of the $n-1$ simplex. Finally, for any sequence $M_n\rightarrow \infty$: $B_{*,n} := \bigcap_{\{M_n; M_n\rightarrow \infty\} }B(\theta_*, M_n n^{-1/2})$ and $B_{\circ,n} := \bigcap_{\{M_n; M_n\rightarrow \infty\} }B(\psi_{\circ},M_n n^{-1/2})$.
%The next assumption guarantees that asymptotically the set $H$ contains at least the set $B(\theta_{\circ},C n^{-1/2})$ in the extended model and $B(\theta_*,C n^{-1/2})$ in the base model.

\begin{ass}\label{ass:feasibility}
  As $n\rightarrow \infty$, for every $\theta \in B_{*,n}$:
  \begin{equation}
    P\left(\sum_{i=1}^n g_b(w_i,\theta)q_i = 0 \textrm{ for at least one } q \in int \Delta_n \right) \rightarrow 1
  \end{equation}
  \noindent and $\EE^{Q_*}[g_b(w,\theta)]=0$. Similarly, as $n\rightarrow \infty$ and for every $\psi \in B_{\circ,n}$:
  \begin{equation}
    P\left(\sum_{i=1}^n g_e(w_i,\psi)q_i = 0  \textrm{ for at least one } q \in int \Delta_n \right) \rightarrow 1
  \end{equation}
  and $\EE[g_e(w,\psi)]=0$.
\end{ass}
This assumption is much weaker than requiring that \ref{eq:KLbasic} has a solution for every $\theta\in\Theta$ with probability approaching $1$. Even if not explicitly stated in the literature, a similar assumption is necessary for the frequentist ETEL, EL and ET estimators.

%The next assumption is standard when studying the frequentist properties of Bayesian procedures. First, it requires that the prior density of $\theta$ is continuous so that $\pi(\theta_* + h/\sqrt{n})$ behaves like the constant $\pi(\theta_*)$ for $n$ large, for very bounded $h\in\mathbb{R}^p$. Second, it requires that $\theta_*$ lies in the support of the prior. This is a necessary (though, of course, not sufficient) condition for the posterior distribution to concentrate on $\theta _{\ast }$ as $n$ becomes large.

%\begin{ass}\label{Ass_prior}
%(a) $\pi $ is a continuous probability measure that admits a density with respect to the Lebesgue measure; (b) $\pi $ is positive on a neighborhood of $\theta_*$.
%\end{ass}

The next fourth assumptions concern the model. Compared to the assumptions in \cite{ChibShinSimoni2018}, our assumptions are weaker due to the linearity in $\theta$ of the moment functions $g_b(w,\theta)$ and $g_e(w,\theta)$. Consequently, assumptions regarding the continuity of the moment functions and their derivatives are automatically satisfied. Recall the notation $w_i:= (y_i, x_i', z_i')'$, $z_i := (z_{1,i}',z_{2,i}')'$ and $\tilde w_{1,i} := (x_i', z_{1,i}')'$. Moreover, we denote $\tilde w_i := (x_i', z_i')'$, $\|\cdot\|_2$ the Euclidean norm and $\|\cdot\|_{op}$ the operator norm.

\begin{ass}\label{Ass_1_NS}
(a) $w_i$, $i=1,\ldots, n$ are i.i.d. observable random variables with each one taking values in a complete probability space $(\mathcal{W},\mathfrak{B}_{\mathcal{W}},P)$, where $\mathcal{W}\subseteq \mathbb{R}^{d + 1}$, $\mathfrak{B}_{\mathcal{W}}$ is the associated $\sigma$-field and $P$ is a probability distribution satisfying model \eqref{eq:model}; (b) $\Theta\subset \mathbb{R}^p$ is compact and
connected; (c) for every $\lambda$ in a neighborhood of $\lambda_*(\theta_*)$, the matrix $\mathbf{E}[e^{\lambda'\widetilde w_{i} \varepsilon_i(\theta_*)}\varepsilon_i(\theta_*)^2\widetilde w_i \widetilde w_i']$ has the smallest (resp. largest) eigenvalue bounded away from zero (resp. infinity).
\end{ass}

\begin{ass}\label{Ass_2_NS}
(a) %$\mathbf{E}[\|\tilde{w}_i \tilde w_{1,i}'\|_F]< \infty$, where $\|\cdot\|_F$ denotes the Frobenius norm; (b)
$\mathbf{E}[\tilde w_i \tilde w_{1,i}']<\infty$ with rank $p$.
\end{ass}

Assumptions \ref{Ass_1_NS} and \ref{Ass_2_NS} are standard in the literature, see \textit{e.g.} \cite{Schennach2007}. %Assumption \ref{Ass_1_NS} \textit{(c)} guarantees that the asymptotic covariance matrix is invertible.
The following assumption instead is new and it is used to prove the approximation for the marginal likelihood. We denote by $\widetilde{w}_{i,k}$ the $k$-th component of $\widetilde{w}_i$. Moreover, for any $\delta>0$ and for some constant $C>0$, we denote by $B_{\delta}(\lambda_*) := \{\lambda\in\mathbb{R}^d; \|\lambda - \lambda_*(\theta_*)\|_2 \leq C\delta\}$ (resp. $B_{\delta}(\theta_*) := \{\theta\in\mathbb{R}^p; \|\theta - \theta_*\|_2 \leq C\delta\}$) a closed ball in $\mathbb{R}^d$ (resp. in $\mathbb{R}^p$) centered around $\lambda_* := \lambda_*(\theta_*)$ (resp. $\theta_*$) with radius $\delta$, where $\|\cdot\|_2$ denotes the Euclidean norm. For any $\delta >0$, $q\in\mathbb{N}_+$, a set $B$, and a function $\gamma:\mathbb{R}^{d+1}\times \mathbb{R}^p\rightarrow \mathbb{R}$, we introduce the following set of functions: $\mathcal{F}_{q,\delta}(B;\gamma) := \{f:B\times \mathbb{R}^{d+1}\times B_{\delta}(\theta_*) \rightarrow \mathbb{R}^q; \|g(u,v,\theta)\|_2 \leq \gamma(v,\theta),\,  \forall u\in B\}$. For a random vector $w$, we denote with $w_k$ its $k$-th component. Finally, $K_n\subset\mathbb{R}^p$ denotes a compact set centred on $0$.%Finally, we denote by $h$ a $p$-vector in a compact subset centred on $0$.\\

%NEW
\begin{ass}\label{Ass_3}
For any $\delta > 0$, there exist real-valued functions $\gamma_j(w,\theta)$, $j=0,1,2,3,4,5$, defined on $\mathbb{R}^{d+1}\times B_{\delta}(\theta_*) $ and %with values in $\mathbb{R}$
such that $\EE[\gamma_j(w,\theta)] < \infty$, for $j=0,1,2,3,4,5$ and $\forall \theta\in B_{\delta}(\theta_*)$, and such that:\\
(a) the function $(\lambda,w,\theta)\mapsto e^{\lambda'\widetilde w \varepsilon(\theta)} \widetilde w \varepsilon(\theta)\in \mathcal{F}_{d,\delta}(B_{\delta}(\lambda_*);\gamma_0)$;\\
(b) the function $(\lambda,w,\theta)\mapsto e^{\lambda' \widetilde w \varepsilon_i(\theta)} \in \mathcal{F}_{1,\delta}(B_{\delta}(\lambda_*);\gamma_1)$;\\
(c) the function $((\lambda',h')',w,\theta)\mapsto e^{\lambda'\widetilde w \varepsilon(\theta_*)} \varepsilon(\theta)^{j - 1}\wtl{w}_{k}^{\ell}(h'\wtl{w}_{1})^{\ell'} \wtl{w}_{k}^{\ell''} \in \mathcal{F}_{1,\delta}(B_{\delta}(\lambda_*)\times K_n;\gamma_2)$ for either $(j,\ell,\ell',\ell'')=(1,1,1,0)$, or $(j,\ell,\ell',\ell'')=(1,2,2,0)$, or $(j,\ell,\ell',\ell'')=(2,1,1,1)$;\\
%(d) the following operator norm
%$$
%\mathbf{E}\left[\sup_{\lambda\in B_{\delta}(\lambda_*(\theta_*))}\left\|e^{\ell \lambda'\widetilde w_i \varepsilon_i(\theta_*)}\varepsilon_{i}(\theta_{*})^{j}\widetilde w_{i,k}^{\ell} \wtl w_{i}\tilde w_{i}'\right\|_{op}\right]
%$$
%is bounded away from infinity for either $j=\ell=1$, or $j=3$, $\ell = j-2$, or $j=4$, $\ell = j-2$;\\
(d) the function $(\lambda,w,\theta) \mapsto e^{2\lambda'\widetilde w \varepsilon(\theta_*)}\varepsilon(\theta_*)^2 \left\|\widetilde w \right\|_2^2 \in \mathcal{F}_{1,\delta}(B_{\delta}(\lambda_*);\gamma_3)$;\\
(e) the function $(\lambda,w,\theta)\mapsto e^{\ell \lambda'g(w,\theta)}\wtl w_{j} \wtl w_{k}\varepsilon(\theta)^{\ell'} \widetilde w_{k'}^{i} \in \mathcal{F}_{1,\delta}(B_{\delta}(\lambda_*);\gamma_4)$ for every $j,k,k'=1,\ldots,d$, and for either $(\ell,\ell',i)=(1,1,1)$, or $(\ell,\ell',i)=(1,2,0)$, or $(\ell,\ell',i)=(1,3,1)$, or $(\ell,\ell',i)=(2,4,2)$, and where $\gamma_4$ may depend on all the previous indices;\\
(f) the function $((\lambda',h')',w,\theta)\mapsto e^{\lambda'\widetilde w \varepsilon(\theta_*)} \varepsilon(\theta)\wtl{w}\,\lambda'\wtl{w}\,h'\wtl{w}_{1}\in \mathcal{F}_{d,\delta}(B_{\delta}(\lambda_*)\times K_n;\gamma_5)$;\\
(g) the function $((\lambda',h')',w,\theta)\mapsto e^{\lambda'\widetilde w \varepsilon(\theta_*)} \varepsilon(\theta)^j\wtl{w}\,\lambda'\wtl{w}\,(h'\wtl{w}_{1})^{\ell} \wtl{w}_k\in \mathcal{F}_{d,\delta}(B_{\delta}(\lambda_*)\times K_n;\gamma_6)$ for every $k=1,\ldots,d$, and for either $(j,\ell)=(2,1)$, or $(j,\ell)=(1,2)$.
\end{ass}

This regularity assumption is a weak moment condition, requiring the existence of an upper bound -- dependent on the data and on $\theta$ -- with finite expectation for every $\theta\in B_{\delta}(\theta_*)$. This condition ensures that a uniform Law of Large Number holds (see \textit{e.g.} \cite[Lemma 2.4]{NeweyMcFadden1994}). In particular, we use this assumption in our proofs to establish the uniform convergence in probability of several terms, including: $\EE_n\left[e^{\wh\lambda'g(w_i,\theta_*)}\varepsilon_i(\theta_*)\wtl{w}_{i}\right]$, $\EE_n\left[e^{\wh\lambda'g(w_i,\theta_*)}\wtl{w}_{1,i}\wtl{w}_{i}\right]$, $h'\EE_n\left[e^{\wh\lambda'g(w_i,\theta_*)} h'\wtl{w}_{1,i}\wtl{w}_{i}'\wh\lambda\varepsilon_i(\theta_*)\wtl{w}_i\right]$, and $\EE_n\left[e^{\wh\lambda'g(w_i,\theta_*)}\varepsilon_i(\theta_*)\wtl{w}_{1,i}\wtl{w}_{i}\right]$. The uniform convergence must hold uniformly over $\wh\lambda$ in a closed ball $B_{\delta}(\lambda_*(\theta_*))$. Additionally, this assumption also guarantees that the dominance condition required for the application of the Dominated Convergence Theorem is satisfied, which is another result used in the proofs.

Part \textit{(e)} of the previous assumption implies that
$$
\mathbf{E}\left[\sup_{\lambda\in B_{\delta}(\lambda_*(\theta_*))}\left\|e^{\ell \lambda'\widetilde w \varepsilon(\theta_*)}\varepsilon(\theta_{*})^{\ell '}\widetilde w_{k}^{i} \wtl w\tilde w'\right\|_{op}\right]
$$
is bounded away from infinity for every $k'=\{1\,\ldots,d\}$ which is what we need in the proof because this operator norm is upper bounded by $d\,\max_{\{j,k\in\1,\ldots,d\}}|e^{\ell \lambda'g(w,\theta_*)}\wtl w_{j} \wtl w_{k}\varepsilon(\theta_*)^{\ell} \widetilde w_{k'}^{i} |$ for every $k'\in\{1,\ldots, d\}$.\\
\indent An assumption similar to Assumption \ref{Ass_3} is necessary to establish uniform convergences of $\EE_n\left[\varepsilon_i(\theta )\wtl{w}_{i}\right]$ and $\EE_n\left[\varepsilon_i(\theta)\wtl{w}_{i} \wtl{w_i}\right]$ uniformly over $\theta\in B_{\delta}(\theta_*)$. For this, we introduce the class
\begin{ass}\label{Ass_3_extended}
  For any $\delta > 0$, there exist real-valued functions $\gamma_{\circ,j}(w)$, $j=1,2$, defined on $\mathbb{R}^{d+1}$ and such that $\EE[\gamma_j(w)] < \infty$, for $j=1,2$ and :\\
  (a) $\|\widetilde w \varepsilon(\theta)\| \leq \gamma_{\circ,1}$ for every $\theta\in B_{\delta}(\theta_*)$;\\
  (b) $|\varepsilon(\theta)\widetilde w_j \widetilde w_k | \leq \gamma_{\circ,2}$ for every $\theta\in B_{\delta}(\theta_*)$;\\
  (c) $\|h'\wtl{w}_{1} \wtl{w}'\| \leq \gamma_{\circ,3}$ for every $h\in K_n$.
\end{ass}
%%%%%%%%%%%%%%%%%%%%%%%%
For the next assumption we denote by $\Theta_{n}:=\{\Vert \theta -\theta _{\ast }\Vert \leq M_{n}/\sqrt{n}\}$
%\begin{equation*}
%  \Theta_{n}:=\{\Vert \theta -\theta _{\ast }\Vert \leq M_{n}/\sqrt{n}\},
%\end{equation*}
a ball around $\theta_{*}$ with the radius at most $M_{n}/\sqrt{n}$, where $M_{n}$ is any sequence of positive constants diverging to $+\infty $. We denote by $\ell_{n,\theta}(w_i)$ the log-likelihood function for one observation $w_i$: $\ell_{n,\theta}(w_i) := \log \wh{q}_i(\theta|\mathcal{M}_b)$, and by $\ell_{n,\theta}(w_{1:n}) := \sum_{i=1}^n \ell_{n,\theta}(w_i) = \log \wh{q}(w_{1:n}|\theta,\mathcal{M}_b)$ the log-ETEL function. We recall that both $\ell_{n,\theta}(w_i)$ and $\ell_{n,\theta}(w_{1:n})$ are defined only for $\theta\in H_n$ and that under Assumption \ref{ass:feasibility} they are at least defined on $B_{*,n}$. The next assumption controls the behaviour of the ETEL function $\theta\mapsto \ell_{n,\theta }(w_{i})$ at a distance from $\theta_*$ and it ensures that $\theta_*$ is well-separated from the $\theta$s that are at a certain distance from it.
\begin{ass}[Base model.]\label{Ass_identification_base_model}
  Assume that there exists a constant $C>0$ such that
    \begin{equation}
      P\left(\sup_{\theta \in H_n\cap\Theta_{n}^{c}}\frac{1}{n}\sum_{i=1}^{n}\left(\ell_{n,\theta }(w_{i})-\ell _{n,\theta _{\ast }}(w_{i})\right) \leq  - \frac{CM_{n}^{2}}{n}\right) \rightarrow 1\;,\;\text{as}\;n\rightarrow \infty, \label{Ass_identification_rate_contraction_appendix}
    \end{equation}
  \noindent where $M_n$ is the same sequence used to define $\Theta_n$.
\end{ass}
A condition similar to Assumption \ref{Ass_identification_base_model} is in \cite[Lemma 4.2]{kleijn2012} and it is also related to the classical condition in \textit{e.g.} \cite[Assumption 6.B.3]{LehmanCasella1998} and \cite[Assumption 3]{ChernozhukovHong2003}. However, in our case the supremum in the assumption is taken over a smaller set, which is $H_n\cap\Theta_n^c$, instead of over $\Theta_n^c$ as in the mentioned literature. To better understand the meaning of this assumption, note that asymptotically the log-ETEL function is maximized at the pseudo-true value $\theta_*$. Hence, Assumption \eqref{Ass_identification_rate_contraction_appendix} requires that if the parameter $\theta $ is far from the pseudo-true value $\theta_{*}$,
that is $\|\theta -\theta_* \| > M_{n}/\sqrt{n}$, then the sum $\sum_{i=1}^{n} \ell_{n,\theta }(w_{i})$ evaluated at such a $\theta $ must be small relative to the sum $\sum_{i=1}^{n}\ell _{n,\theta_*}(w_{i})$, which is the maximum value. Controlling this behavior is important because the posterior involves integration over the whole support of $\theta$. Subsets of $\Theta$ that can be distinguished from $\theta_*$ uniformly (with probability approaching $1$ as $n\rightarrow \infty$) based on the ETEL function will receive a posterior probability that is asymptotically negligible. An alternative to this condition would be to require the existence of asymptotically consistent tests $\phi_n$ that are able to distinguish from the true distribution $P$ in a uniform way, that is, for every $\epsilon>0$, there exists a sequence of tests $\{\phi_n\}$ such that as $n\rightarrow$ 0,
\begin{equation}
  \EE[\phi_n] \rightarrow 0, \qquad \textrm{and} \qquad \sup_{\{\theta;\|\theta - \theta_*\| \geq \epsilon\}}\EE\left[e^{\ell_{n,\theta }(w_{i})-\ell _{n,\theta_*}(w_{i})}(1 - \phi_n)\right]\rightarrow 0.
\end{equation}
Similarly, for the extended model we denote by $\ell_{n,\psi}(w_i)$ the log-likelihood function for one observation $w_i$: $\ell_{n,\psi}(w_i) := \log \wh{q}_i(\psi|\mathcal{M}_e)$ and by $\ell_{n,\psi}(w_{1:n}) := \sum_{i=1}^n \ell_{n,\psi}(w_i) = \log \wh{q}(w_{1:n}|\psi,\mathcal{M}_e)$ the log-ETEL function. The next assumption has the same interpretation of Assumption \ref{Ass_identification_base_model} but for the extended model.
\begin{ass}[Extended model.]\label{Ass_identification_extended_model}
  Assume that there exists a constant $C>0$ such that as $n\rightarrow \infty $,
  \begin{equation}\label{Ass_identification_rate_contraction_extended}
    P\left( \sup_{\scriptstyle \|\psi -\psi_{\circ}\| > M_n/\sqrt{n};\atop\scriptstyle \psi\in H_n \times \mathcal{V}}\frac{1}{n}\sum_{i=1}^{n}\left(\ell_{n,\psi}(w_{i}) - \ell_{n,\psi_{\circ}}(w_{i})\right) \leq - \frac{C M_n^2}{n}\right) \rightarrow 1,
  \end{equation}
\noindent where $M_n$ is any sequence of positive constants diverging to infinity.
\end{ass}

\subsection{Experiments}

Consider the same generating process as in Section \ref{sec:numerical_illustration}, and suppose that $(\varepsilon _{i},u_{i},v_{i})$ have a joint distribution induced by a
Gaussian copula with covariance matrix $R=%
\begin{psmallmatrix}
    1 & \rho & 0 \\
    \rho & 1  & 0 \\
    0 & 0 & 1
\end{psmallmatrix}$. The parameter $\rho $ controls the degree of
endogeneity. We let $\rho $ take values in the set from -.5 to .5, in
increments of 0.1. For each value of $\rho $ in this set, we generate 100
samples of size $n$. For each sample, we compute the base and extended
models, and calculate the log-marginal likelihoods. We then count the number
of times the log-marginal likelihood of $\mathcal{M}_{e}$ exceeds that of $%
\mathcal{M}_{b}$. The results are given Table \ref{tab:sim_MegreaterthanMb}.
We can see from this table that even for small values of $\rho $, our test
of endogeneity correctly concludes that the correct model is $\mathcal{M}%
_{e} $.
\begin{table}[h]
\centering
\begin{tabular}{lccccccccccc}
\toprule $\rho$ & -0.5 & -0.4 & -0.3 & -0.2 & -0.1 & 0.0 & 0.1 & 0.2 & 0.3 &
0.4 & 0.5 \\
\midrule $n=250$ & 99 & 96 & 82 & 48 & 12 & 2 & 18 & 54 & 93 & 100 & 100 \\
$n=500$ & 100 & 100 & 98 & 76 & 17 & 1 & 29 & 87 & 99 & 100 & 100 \\
$n=1000$ & 100 & 100 & 100 & 96 & 46 & 1 & 46 & 100 & 100 & 100 & 100 \\
$n=2000$ & 100 & 100 & 100 & 100 & 80 & 1 & 70 & 100 & 100 & 100 & 100 \\
\bottomrule &  &  &  &  &  &  &  &  &  &  &  \\
&  &  &  &  &  &  &  &  &  &  &
\end{tabular}%
\caption{Model selection frequencies from 100 replications of data simulated
from the design in Example 1. For each combination of $n$ and $\mathrm{Cov}(%
\protect\varepsilon ,u)=\protect\rho $, the entries give the number of times
in 100 replications of the data that the log-marginal likelihood of $%
\mathcal{M}_{e}$ exceeds the log-marginal likelihood of $\mathcal{M}_{b}$.}
\label{tab:sim_MegreaterthanMb}
\end{table}

% ***I removed the sentence below as we don't have the actual simulation % results for it.
%In contrast, an incorrect test of endogeneity that compares the base
%without the $z_{2}$ restriction with the extended model without the $x$
%condition produces completely erroneous results, with the latter model
%selected 100\% of the times, even when $\rho =0$.

\section{Real data examples} \label{sec:examples}

\subsection{Causal effect of price on automobile demand}

We consider the classic problem of automobile demand studied in \citet{berry1995econometrica}. This problem has recently been revisited by %
\citet{chernozhukov2015post}, henceforth BLP and CHS, respectively. Apart
from its intrinsic value, this problem is worth analyzing because it
involves a realistically large number of controls and instruments.

To set up the problem, let $y_{ijt}$ denote the log of the ratio of the
market share of product $i$ in market $j$ at time $t$, relative to an
external option, and let $x_{ijt}$ denote the potentially endogenous
automobile price variable. In the sample data, this variable is demeaned.
For controls, let $z_{ijt}$ denote the observed characteristics of the
product. In BLP these are taken to be a constant, an air conditioning dummy (%
$air$), horsepower divided by weight ($hpwt$), miles per dollar ($mpd$), and
vehicle size ($space$). In our notation, $y_{ijt} = x_{ijt} \, \beta + {z}%
_{1ijt}^{\prime }{\gamma }+ \varepsilon _{i}$, where ${z}_{1ijt}=(1,
mpd_{ijt}, space_{ijt}, hpwt_{ijt}, air_{ijt})$. BLP used ten instruments,
five formed by summing the value of these five characteristics over other
automobiles produced by the same firm and five formed by summing the above
characteristics over automobiles produced by other firms. These form $%
z_{2ijt}$. In revisiting this analysis, CHS augment the original controls
with quadratics and cubics in $trend$, $mpd$, $space$, $hpwt$, and all
first order interactions, and then used sums of these characteristics as
potential instruments.

In our analysis, we consider both formulations, but in the augmented variant
we introduce nonlinear controls by transforming each of $trend$, $hpwt$, $%
mpd $ and $space$ by natural cubic spline basis functions, each centered at
five equally spaced quantile knots (the cubic spline basis functions are
taken from \cite{Chib2010}). We opt for this approach to avoid widely
different covariate values from parametric quadratic and cubic terms of
these covariates. After the imposition of an identification restriction on
the basis expansions, which reduces the number of nonlinear terms to four
for each continuous covariate, the right hand side of the augmented outcome model is
defined by $x$ (price) and $z_1$ (consisting of an intercept, sixteen
nonlinear covariates denoted by $trend{_Bj}$, $mpd_{Bj}$, $space_{Bj}$ and $%
hpwt_{Bj}$ for $j = 1,\ldots,4$, and the air-conditioning dummy). The set
of augmented instruments that form $z_2$ in this augmented model is then
constructed as in BLP.

We fit four models to these data: the base and extended models under the
controls and instruments in BLP, and the base and extended models under the
augmented set of controls and instruments. In the BLP version, the base and
extended models contain six and seven parameters, respectively, and ten
instruments, while in the augmented variant, the base and extended models
have nineteen and twenty parameters, respectively, and $53$ moment restrictions. We assume
that the $n = 2217$ observations on $(y_{ijt},x_{ijt},z_{1ijt})$ are a
random sample from the population of automobile products across markets and
time. Because it is difficult to formulate priors on the parameters by a
priori considerations, we randomly select $15\%$ of the sample to make
training sample priors. In particular, we used the GMM estimate and its
standard error fitted on the training data (model by model) as the prior
mean and twice the GMM standard error as the prior standard deviation (sd). The
ETEL is constructed from the remaining data and the posterior distribution
of each model is sampled by the single block M-H algorithm of %
\citet{ChibGreenberg1995}. This algorithm is fast and efficient despite the
relatively large numbers of parameters and instruments. The results show
that the posterior mean of the coefficient on $price$ is -0.14, and the 95\%
posterior credibility interval is (-.16,-.13). The posterior mean is larger
in magnitude than the OLS estimate originally reported by BLP. Note that the
posterior distribution of the covariance parameter, $v$, is concentrated to
the right of zero, indicating that the $price$ is likely endogenous.

For confirmation, we turn to our formal test of endogeneity. The results are
reported in Table \ref{table:blp2}. We can see that the marginal likelihood
is larger for the extended models in both the original BLP and the augmented
BLP specifications, supporting the conclusion that price is endogenous.
\begin{table}[t]
\centering
\begin{tabular}{lll}
\hline
& Original BLP (Linear) & Augmented BLP (Nonlinear) \\ \hline
Base model (price is exogenous) & -14386.81 & -14431.86 \\
Extended model (price is endogenous) & -14364.59 & -14397.67 \\ \hline
\end{tabular}%
\caption{Results from the proposed Bayesian test of endogeneity. The log-marginal likelihoods for the base and extended models under the original BLP
model and its augmented variant. Results based on a training sample prior,
(using randomly selected 15\% of the data) and 10,000 MCMC iterations
(beyond a burn-in of 1,000) of a tailored single block M-H algorithm.
Logarithm of marginal likelihoods are computed by the method of \protect\cite%
{Chib1995} and \protect\cite{ChibJeliazkov2001}.}
\label{table:blp2}
\end{table}

We conclude this analysis by plotting the posterior distributions of the
price coefficient from each model. The estimated effect of price on
automobile demand is larger (in absolute value) when endogeneity of price is
taken into account. Interestingly, the price effect is smaller and more
concentrated in the augmented models, suggesting that some of the excess
sensitivity to price observed in the original BLP model is due to the
omission of the nonlinear controls. In addition, it is worth noting that if
we were to only fit the base model (which the marginal likelihood confirms
is misspecified in this case), we would miss the fact that incorporating
nonlinearities impacts the posterior distribution.
\begin{figure}[h!]
\begin{center}
\includegraphics[width=5in]{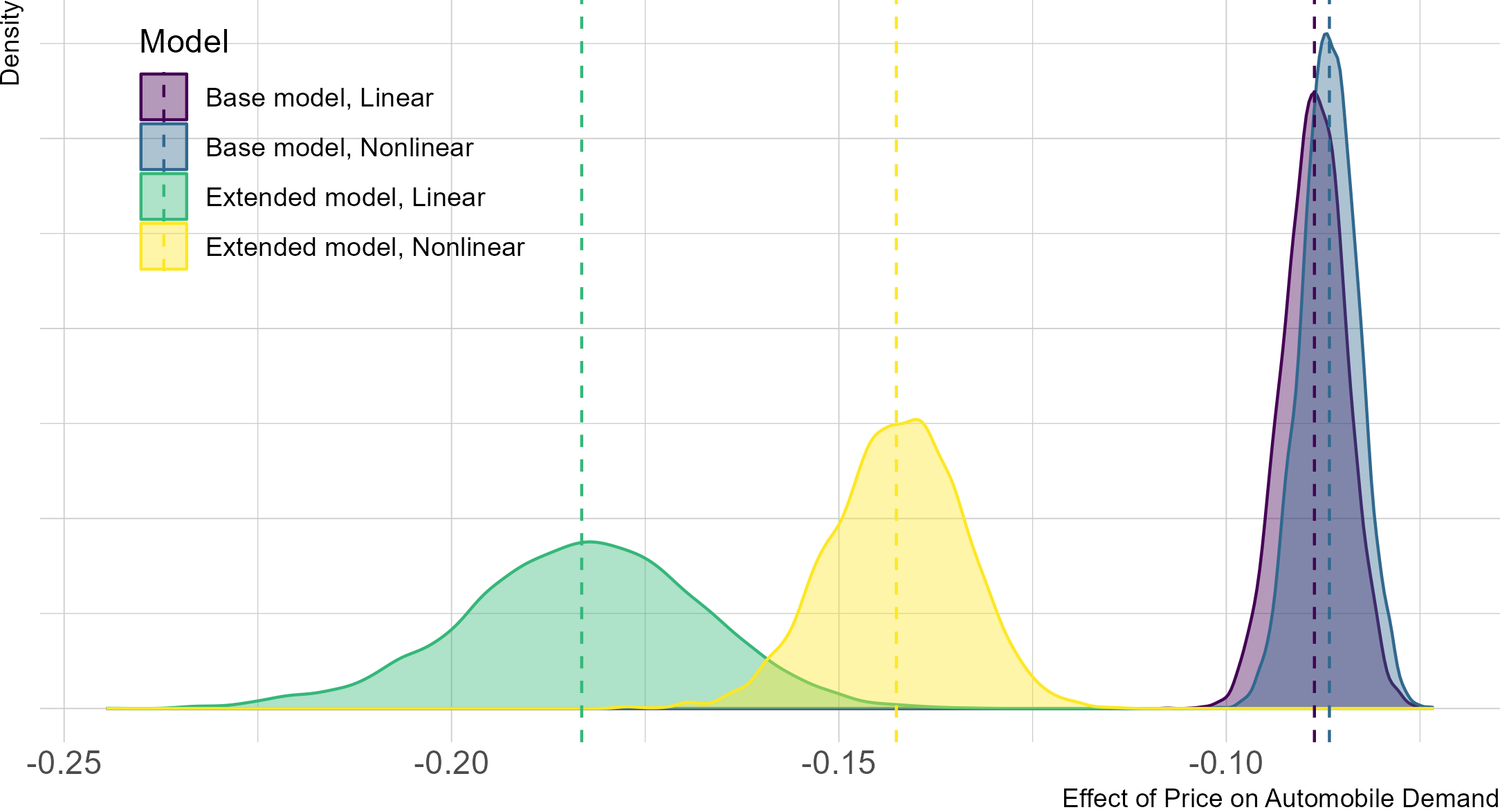}
\end{center}
\caption{BLP models: Marginal posterior distributions of the coefficient on
the price variable, $\protect\beta$. Posterior mean and standard deviation
of $\protect\beta$ are -0.089 and 0.004, respectively, for the base model with the original
BLP (linear) specification, while they are -0.087 and 0.004, respectively, with the
augmented BLP (nonlinear) specification. For the extended model, posterior
mean and standard deviation of $\protect\beta$ are -0.183 and 0.015, respectively, for the
linear specification and -0.143 and 0.009, respectively,  for the nonlinear specification.}
\label{figure:blp}
\end{figure}

% \begin{table}[ht]
% \centering
% \begin{tabular}{llll}
%   \hline
% Model & postmean & postsd \\
%   \hline
% Base, L & -0.089 & 0.004 \\
% Base, NL & -0.087 & 0.004 \\
% Extended, L & -0.183 & 0.015 \\
% Extended, NL & -0.143 & 0.009 \\
%    \hline
% \end{tabular}
% \caption{}
% \end{table}

\subsection{Effect of airfares on passenger traffic}

The emphasis of the theory and applications in this paper is on situations with a single outcome variable; however, our framework can be applied more broadly. An important example is clustered longitudinal data. Let $y_{i} = (y_{i1},\ldots,y_{iT})$ denote $T$ potentially correlated and heteroskedastic measurements on subject $i$. The outcome is thus a $T\times 1 $ vector, rather than a scalar. Adjusting the dimensions of the controls and instruments, respectively, suppose that independently across $i$ the clustered outcomes follow the linear model $y_{i}= X_{i} \beta +Z_{1,i} \gamma + \varepsilon_{i} $, where $X_{i}$ is $T \times d_{x}$, $Z_{1,i}$ is $T \times d_{z_{1}}$, $Z_{2,i}$ is $T \times d_{z_{2}}$, and $\varepsilon_{i}$ is $T \times 1$. Now assume that $Z_{1,i}$ and $Z_{2,i}$ satisfy the clustered data exogeneity restrictions $\EE[Z_{j,i}'\varepsilon_{i}(\theta)] = 0$, $j = 1,2$, but that the clustered data exogeneity restrictions $\EE[X_{i}^{\prime }\varepsilon_{i}(\theta)] = 0$ related to $X_i$ are in doubt. We can apply our framework to this problem by defining a base model in which the latter restrictions are imposed and an extended model that contains the inactive restrictions $\EE[X_{i}'\varepsilon_{i}(\theta)]=v$, where $v$ is now a $d_x \times 1$ vector of unknown parameters. In parallel to the approach developed above, the marginal likelihood comparison of these models is a test for the exogeneity of $X$.

As an illustration of this extended set-up, we consider a $T=4$ balanced
longitudinal data set on airfares and passenger traffic for the years 1997,
1998, 1999, and 2000 from \cite{wooldridge2010econometric}. For each year $t$%
, $t\leq 4$, the data is clustered by route $i$, $i\leq n=1149$. For each
flight route defined by the origin and destination cities, one has the log
of the average number of passengers per day ($lpassen$), the log of the
average one-way fare in dollars ($lfare$), the log of the distance in miles (%
$ldist$), and the fraction of the market corralled by the biggest carrier ($%
concen$). The model of interest is $lpassen_{it}=\beta\, lfare_{it} +\gamma
_{1}trend_{t}+\gamma _{2}ldist_{it}+\varepsilon _{it}$, where $trend$ is a
trend variable taking values $1,2,3,$ and $4$, and each of the variables in this
regression is mean centered. The goal is to estimate the price elasticity
parameter $\beta$, but one is concerned that $lfare$ is possibly endogenous.
In the estimation we assume that $concen$ is a valid instrument (it does not
directly appear in the outcome model and it affects $lfare$, both reasonable
assumptions).

Clustered by route $i$, we have
\begin{equation*}
\begin{pmatrix}
lpassen_{i1} \\
lpassen_{i2} \\
lpassen_{i3} \\
lpassen_{i4}%
\end{pmatrix}%
=\left(
\begin{array}{ccc}
lfare_{i1} & 1 & ldist_{i1} \\
lfare_{i2} & 2 & ldist_{i2} \\
lfare_{i3} & 3 & ldist_{i3} \\
lfare_{i4} & 4 & ldist_{i4}%
\end{array}%
\right) \left(
\begin{array}{c}
\beta \\
\gamma _{1} \\
\gamma _{2}%
\end{array}%
\right) +\quad \left(
\begin{array}{c}
\varepsilon _{i1} \\
\varepsilon _{i2} \\
\varepsilon _{i3} \\
\varepsilon _{i4}%
\end{array}%
\right) ,
\end{equation*}%
or compactly as $y_{i}=\widetilde{W}_{1,i}\theta +\varepsilon _{i}$, $%
i=1,2,\ldots,1149$, where $\theta:7\times 1$ is the unknown parameter of
interest. In this model, the distribution of $\varepsilon _{i}$ is not
specified. Moreover, the elements of $\varepsilon _{i}$ can be serially
correlated and heteroskedastic in an arbitrary, unknown way.

Now let $Z_{i} := \left(\widetilde{W}_{1,i},1,concen_{i}\right)$, $i \leq n$%
, be a $4\times 5$ matrix, where $1$ is a vector of ones, and $%
concen_{i}=\left(concen_{i1},\ldots,concen_{i4}\right) ^{\prime }:4\times 1$
is the vector of $concen$ values for route $i$. In the base model, $lfare$
is exogenous. The model is defined by the five moments
\begin{equation*}
\mathcal{M}_{b}:\;\;\;\mathbf{E}[Z_{i}^{\prime }(y_{i}-X_{i}\theta
)]=0_{5\times 1}.
\end{equation*}%
In the extended model, the $lfare$ moment condition is inactive.
Specifically,
\begin{equation*}
\mathcal{M}_{e}:\;\;\;\mathbf{E}[Z_{i}^{\prime }(y_{i}-X_{i}\theta )]
=\left(
\begin{array}{c}
v \\
0_{4\times 1}%
\end{array}%
\right).
\end{equation*}%
The ETEL-based estimation of these two models makes no assumption about the
joint distribution of the cluster-level errors.

We specify the prior from a training sample. We randomly split the sample
into a training sample (of say 115 clusters, equal to 10\% of the total
clusters) and an estimation sample (consisting of the remaining 1,034
clusters). We then estimate the base mode on the training sample with a
student-t prior centered on the system wide 2SLS estimate from the training
data, sd of 10 and 2.5 degrees of freedom. The posterior mean and sd is
calculated from these training data under this prior. We then take the
posterior mean and twice the sd from the training sample fit as the mean and
sd of the prior. This determination of the prior from the training sample is
helpful in the fitting, but, due to the thick tails of the prior, the
information brought in by the prior pales in comparison with the information
from the estimation sample.

We sample the posterior in each model by the one-block tailored MCMC
algorithm. In the base model, from 10,000 MCMC draws beyond a burn-in of
1,000, we find that the posterior mean of $\beta$ is -0.551 and its 95\%
posterior credibility interval is (-0.683, -0.419). Moreover, computation
shows that $\log (m(w_{1:n}|\mathcal{M}_{b})=-7190.222$ and $\log (m(w_{1:n}|%
\mathcal{M}_{e})=-7191.06$, signaling that $lprice$ in this problem can be
viewed as exogenous. %In both cases, these log
%marginal likelihood values strongly signal that $lpriice$ is endogenous,
%consistent with expectations based on a priori economic reasoning. Figure %
%\ref{fig:priceelasticity} shows the rather considerable difference in the
%marginal posterior density of $\beta $ from the two models (under the
%default prior), and the importance of selecting the model (and the posterior
%density) that is better supported by the data to avoid misleading
%conclusions.
%\begin{figure}[h]
%\centering
%\includegraphics[width = .95\textwidth]{priceelasticity.pdf}
%\caption{Panel model: price elasticity parameter $\protect\beta $. Marginal
%posterior densities from the extended (higher marginal likelihood) and base
%(lower marginal likelihood) models. Posterior mean in each case are
%indicated by dashed vertical lines.}
%\label{fig:priceelasticity}
%\end{figure}

\section{Concluding remarks} \label{sec:conclusion}

This paper develops a Bayesian test for exogeneity/endogeneity of the
treatment vector of interest in a linear mean regression model. This
endogeneity problem is generally assumed away in the Bayesian literature,
but this leads to a serious misspecification problem since endogeneity, in
practice, is the rule rather than the exception. In order to avoid the risk
of distributional misspecification, the framework we have developed relies
only on moment restrictions. The analysis in the paper revolves around the
study of two models: the base model, where the exogeneity assumption is
enforced, and an extended model, where the exogeneity moment is included but
is made inactive.\newline
\indent The testing procedure for exogeneity/endogeneity is based on Bayes
factor where the marginal ETEL of the base and the extended models are
compared. The procedure is validated from a frequentist point of view
because we establish the large sample consistency of the Bayes factor test.
In addition, we provide a comprehensive study of the log-marginal ETEL
function and determine which parts of it plays a role in the testing
procedure depending on whether the covariates $x$ are exogenous or
endogenous.\newline
\indent The real data examples discussed in the paper showcase the practical
relevance of the methods.

%The paper makes two key contributions. The first is in the study of the
%large sample behavior of the posterior distributions in the base and
%extended models in cases where the exogeneity assumption in the population
%is true or false. The second is in the development of a Bayesian test of
%endogeneity that is based on the marginal likelihoods of the base and
%extended models. In the former case, BvM theorems are established and, in
%the latter, the large sample consistency of the Bayes factor test is
%established.

It is important to mention that the approach proposed here can be extended
to situations where the controls are assumed to enter the model
nonparametrically. While the finite sample analysis of such models, after
approximating the unknown functions by (say) spline basis expansion methods,
would proceed in much the same way as discussed in this paper, the
specification of the prior and the large sample analysis would require new
developments to account for a growing number of basis function parameters
with sample size. We intend to describe the theory in a future paper.

Another interesting extension would involve relaxing our current strong identification assumption, which is based on a fixed number of relevant instruments, to allow for weak and many-instrumental variables. This would require substantial changes to our asymptotic results, as it would necessitate developing local-asymptotic Bayes factors for ETEL and incorporating instrument-growth penalties or shrinkage to prevent overfitting.

%\begin{flushright}
%$\square $
%\end{flushright}

\appendix

\section{Proofs}

\subsection{Proof of Theorem \ref{thm:equivalence}}
We first show that \textit{(i)} is equivalent to \textit{(ii)}. Let $H_{\infty}:=\{\theta\in\Theta; \mathcal{Q}_{b}(\theta) \neq \emptyset\}$. Suppose \textit{(i)} is true. Then, $P\notin \mathcal{Q}_{b}(\theta)$ for every $\theta\in\Theta$ and the $I$-projection of $P$ on the set $\mathcal{Q}_{b}(\theta)$ is different from $P$, that is: for every $\theta\in H_{\infty}$, $Q_b^*(\theta)\neq P$ and $KL(Q_b^*(\theta)||P)>0$, and it is not defined for every $\theta\in\Theta\setminus H_{\infty}$ because for these values of $\theta$, the set $\mathcal{Q}_b^*(\theta)$ is empty. It follows that also the reverse Kullback-Leibler divergence (where we have inverted the role played by the two probabilities) is strictly positive under Assumption \ref{Ass_absolute_continuity}: $\mathrm{KL}(P||Q_b^*(\theta)) >0$ for every $\theta\in H_{\infty}$. Since this is true for every $\theta\in H_{\infty}$, it is also true for $\theta_*$. Hence \textit{(ii)} holds.\\
\indent Now, suppose that \textit{(ii)} is true. Because $\mathrm{KL}(P||Q_b^*(\theta_*))>0$, then $P\neq Q_b^*(\theta_*)$ and so $P\notin \mathcal{Q}_{b}(\theta_*)$. Since $\theta_*$ minimizes $\mathrm{KL}(P||Q_b^*(\theta))$ over all the $\theta$s for which $\mathcal{Q}_{b}(\theta) \neq \emptyset$, then we also have that $P\notin \mathcal{Q}_{b}(\theta)$ for every $\theta\in\Theta$ such that $\mathcal{Q}_{b}(\theta)\neq \emptyset$. Since for the $\theta$s for which $\mathcal{Q}_{b}(\theta)= \emptyset$ the condition \textit{(i)} trivially holds then we have proved that \textit{(i)} holds for every $\theta\in\Theta$.\\
\indent Next, we show that \textit{(iii)} is equivalent to \textit{(iv)}. Suppose \textit{(iii)} holds. Then, there is a $\theta\in\Theta$, say $\theta_*$, for which $P\in \mathcal{Q}_{b}(\theta_*)$. Hence, $Q_b^*(\theta_*) = P$ and $\mathrm{KL}(P||Q_b^*(\theta)) = 0$. Hence \textit{(iv)} holds.\\
\indent Now, suppose that \textit{(iv)} holds. By the properties of the Kullback-Leibler divergence, $\mathrm{KL}(P||Q_b^*(\theta_*))$ $=$ $0$ if and only if $P=Q_b^*(\theta_*)$. It follows that $P\in\mathcal{Q}_{b}(\theta_*)$ because $Q_b^*(\theta_*)\in\mathcal{Q}_{b}(\theta_*)$ and therefore $P$ satisfy the moment restriction $\mathbf{E}^P[\varepsilon_{i}(\theta_*)x_{i}] = 0$. Hence \textit{(iii)} holds.
\begin{flushright}
$\square$
\end{flushright}

% ==============================================================================================
\subsection{Proof of Theorem \ref{lem_likelihood_ET_approx}}
%Here, we consider the case where there exists a $\theta$ such that $\EE[\varepsilon_i(\theta)\wtl w_i] = 0$.
Let us consider the expression for the likelihood evaluated at $\theta_*$:
\begin{multline}\label{eq_likelihood_ET}
  \log \wh{q}(w_{1:n}|\theta_*,\mathcal{M}_{b}) = - n\log n +\sum_{i=1}^n\wh{\lambda}(\theta_*)'g_b(w_i,\theta_*) - n\log \frac{1}{n}\sum_{j=1}^n e^{\wh{\lambda}(\theta_*)'g_b(w_j,\theta_*)}\\
    = - n\log n +\sum_{i=1}^n\wh{\lambda}(\theta_*)'\wtl{g}_b(w_i,\theta_*) - n\log \frac{1}{n}\sum_{j=1}^n e^{\wh{\lambda}(\theta_*)'\wtl{g}_b(w_j,\theta_*)},\\
    = - n\log n + \sum_{i=1}^n\wh{\lambda}(\theta_*)'\wtl{g}_b(w_i,\theta_*) + n \wh{\lambda}(\theta_*)'\EE[g_b(w_i,\theta_*)] - n\log \frac{1}{n}\sum_{j=1}^n e^{\wh{\lambda}(\theta_*)'g_b(w_j,\theta_*)},
\end{multline}
\noindent where $\wtl{g}_b(w_i,\theta_*) := g_b(w_i,\theta_*) - \EE[g_b(w_i,\theta_*)]$. We first deal with the second term on the right hand side of \eqref{eq_likelihood_ET}. By using the result of Lemma \ref{lem:1}: %with $\lambda_*(\theta_*) = \lambda_*(\theta_*) = 0$:
\begin{multline}
  \sum_{i=1}^n\wh{\lambda}(\theta_*)'\wtl g_b(w_i,\theta_*) = \sqrt{n}(\wh{\lambda}(\theta_*) - \lambda_*(\theta_*))'\frac{1}{\sqrt{n}}\sum_{i=1}^n \wtl g_b(w_i,\theta_*) + \lambda_*(\theta_*)'\sum_{i=1}^n \wtl g_b(w_i,\theta_*)\\
  = - \mathbb{G}_n\left[\tau_i^{\dagger}(\lambda_*,\theta_*) g_b(w_i,\theta_*)'\right]\Omega_*^{\dagger}(\theta_*)^{-1}\mathbb{G}_n\left[g_b(w_i,\theta_*)\right] + \lambda_*(\theta_*)'\sum_{i=1}^n \wtl g_b(w_i,\theta_*).\label{eq:1:lem_likelihood_ET_approx}
\end{multline}
\indent Next, we analyse the fourth term on the right hand side of \eqref{eq_likelihood_ET}. Let $\tilde\lambda$ be on the line joining $\lambda_*(\theta_*)$ and $\wh\lambda(\theta_*)$, then a second order Taylor expansion of $\lambda\mapsto \frac{1}{n}\sum_{j=1}^n e^{\lambda' g_b(w_j,\theta_*)}$ around $\lambda_*(\theta_*)$ gives
\begin{multline}\label{eq_Taylor_exp_1}
  \frac{1}{n}\sum_{j=1}^n e^{\wh\lambda(\theta_*)'g_b(w_j,\theta_*)} = \frac{1}{n}\sum_{j=1}^n e^{\lambda_*(\theta_*)' g_b(w_j,\theta_*)} + (\wh\lambda(\theta_*)' - \lambda_*(\theta_*)')\frac{1}{n}\sum_{i=1}^n e^{\lambda_*(\theta_*)' g_b(w_j,\theta_*)} g_b(w_i,\theta_*)\\\hfill
   + \frac{1}{2}(\wh\lambda(\theta_*)' - \lambda_*(\theta_*)')\frac{1}{n}\sum_{j=1}^n e^{\tilde\lambda' g_b(w_j,\theta_*)} g_b(w_j,\theta_*)g_b(w_j,\theta_*)'(\wh\lambda(\theta_*) - \lambda_*(\theta_*)).
\end{multline}
Under Assumption \ref{Ass_1_NS} and because $\|\tilde\lambda - \lambda_*(\theta_*)\|_2 = \mathcal{O}_p(n^{-1/2})$ (since by Lemma \ref{lem:2} $\|\wh\lambda(\theta_*) - \lambda_*(\theta_*)\|_2 = \mathcal{O}_p(n^{-1/2})$ and $\tilde\lambda = \tau(\wh\lambda(\theta_*) - \lambda_*(\theta_*)) + \lambda_*(\theta_*)$ for some $\tau\in[0,1]$) we can apply the same argument of the proof of Lemma \ref{lem:Omega} to get:
\begin{equation}
  \frac{1}{n}\sum_{j=1}^n e^{\tilde\lambda' g_b(w_j,\theta_*)} g_b(w_j,\theta_*)g_b(w_j,\theta_*)' \overset{p}{\to} \Omega_*^{\diamond}(\theta_*) := \EE[e^{\lambda_*(\theta_*)' g_b(w_j,\theta_*)} \varepsilon_i(\theta_*)^2\wtl w_i\wtl w_i'].
\end{equation}
By replacing this in \eqref{eq_Taylor_exp_1} and by using Lemma \ref{lem:2} to get the rate of the term $o_p(1/n)$ term, we obtain:
\begin{multline}\label{eq_Taylor_exp_2}
  \frac{1}{n}\sum_{j=1}^n e^{\wh\lambda(\theta_*)'g_b(w_j,\theta_*)} = \frac{1}{n}\sum_{j=1}^n e^{\lambda_*(\theta_*)'g_b(w_j,\theta_*)} + (\wh\lambda(\theta_*)' - \lambda_*(\theta_*)')\frac{1}{n}\sum_{i=1}^n e^{\lambda_*(\theta_*)' g_b(w_j,\theta_*)} g_b(w_i,\theta_*)\\\hfill
   + \frac{1}{2}(\wh\lambda(\theta_*)' - \lambda_*(\theta_*)')\Omega_*^{\diamond}(\theta_*)(\wh\lambda(\theta_*) - \lambda_*(\theta_*)) + o_p\left(\frac{1}{n}\right)%\\
  %= \EE_n\left[ e^{\lambda_*(\theta_*)' g_b(w_j,\theta_*)}\right] + (\wh\lambda(\theta_*)' - \lambda_*(\theta_*)')\frac{1}{n}\sum_{i=1}^n e^{\lambda_*(\theta_*)'g_b(w_j,\theta_*)} g_b(w_i,\theta_*)\\\hfill
  %+ \frac{1}{2}(\wh\lambda(\theta_*)' - \lambda_*(\theta_*)')\Omega_*^{\diamond}(\theta_*)(\wh\lambda(\theta_*) - \lambda_*(\theta_*)) + o_p\left(\frac{1}{n}\right).
\end{multline}
\noindent %where $\mathcal{A} := (\frac{1}{n}\sum_{j=1}^n e^{\lambda_*(\theta_*)'g_b(w_j,\theta_*)} - \EE\left[ e^{\lambda_*(\theta_*)' g_b(w_j,\theta_*)}\right]) = \mathcal{O}_p(1/\sqrt{n})$ by the Markov's inequality and under Assumption \ref{Ass_3} \textit{(b)}.
We now use the first order Taylor expansion of the function $u\mapsto \log(u)$ around $v$: $\log(u) = \log(v) +\frac{u-v}{v} + o(|u-v|)$, and plug \eqref{eq_Taylor_exp_2} in it to obtain:
\begin{multline}\label{eq_Taylor_exp_3}
  \log\lt(\frac{1}{n}\sum_{i=1}^n e^{\wh\lambda(\theta_*)' g_b(w_i,\theta_*)}\rt) = \log\lt(\EE_n\left[ e^{\lambda_*(\theta_*)' g_b(w_i,\theta_*)}\right]\rt) + (\wh\lambda(\theta_*)' - \lambda_*(\theta_*)')\frac{1}{n}\sum_{i=1}^n \tau_i(\lambda_*,\theta_*)g_b(w_i,\theta_*)\\\hfill
  + \frac{1}{2}(\wh\lambda(\theta_*)' - \lambda_*(\theta_*)')\Omega_*^{\dagger}(\theta_*)(\wh\lambda(\theta_*) - \lambda_*(\theta_*)) + o_p\lt(\frac{1}{n}\rt)\\
  + o\lt(\lt|\frac{1}{n}\sum_{i=1}^n e^{\wh\lambda(\theta_*
  )' g_b(w_i,\theta_*)} - \EE_n\left[ e^{\lambda_*(\theta_*)' g_b(w_i,\theta_*)}\right]\rt|\rt),
\end{multline}
\noindent where to get the term $o_p\lt(\frac{1}{n}\rt)$ we have used Lemma \ref{lem:2} and the fact that $|\EE_n\left[ e^{\lambda_*(\theta_*)' g_b(w_i,\theta_*)}\right] - \EE\left[ e^{\lambda_*(\theta_*)' g_b(w_i,\theta_*)}\right]| = \mathcal{O}_p(1/\sqrt{n})$.
%By using the result of Lemma \ref{lem:1} and by noting that $\Omega_\circ^{\diamond}(\theta_*) = \Omega_* := \EE[\varepsilon_i(\theta_*)^2\wtl w_i\wtl w_i']$ since we are in the correctly specified case, then
By using again the latter convergence, Lemma \ref{lem:1}, and the facts that $\EE[\tau_i^{\dagger}(\lambda_*,\theta_*) g_b(w_i,\theta_*)] = 0$ and
\begin{displaymath}
  \frac{1}{\sqrt{n}}\sum_{i=1}^n \tau_i^{\dagger}(\lambda_*,\theta_*) g_b(w_i,\theta_*) = \mathbb{G}_n\left[\tau_i^{\dagger}(\lambda_*,\theta_*) g_b(w_i,\theta_*)\right],%\\
%  = \mathbb{G}_n\left[\frac{1}{n}\sum_{i=1}^n e^{\lambda_*(\theta_*)'\wtl g_b(w_j,\theta_*)} g_b(w_i,\theta_*)\right],
\end{displaymath}
then
\begin{multline}
  n\left((\wh\lambda(\theta_*)' - \lambda_*(\theta_*)')\frac{1}{n}\sum_{i=1}^n \tau_i(\lambda_*,\theta_*)g_b(w_i,\theta_*) +  \frac{1}{2}(\wh\lambda(\theta_*)' - \lambda_*(\theta_*)')\Omega_*^{\dagger}(\theta_*)(\wh\lambda(\theta_*) - \lambda_*(\theta_*))\right)\\
  %= \sqrt{n}\wh\lambda(\theta_*)'\mathbb{G}_n\left[\tau_i^{\dagger}(\lambda_*,\theta_*)\wtl g_b(w_i,\theta_*)\right] + \frac{1}{2}\sqrt{n}\wh\lambda(\theta_*)' \Omega_\circ\sqrt{n}\wh\lambda(\theta_*)\\
  = - \mathbb{G}_n\left[\tau_i^{\dagger}(\lambda_*,\theta_*) g_b(w_i,\theta_*)'\right]\Omega_*^{\dagger}(\theta_*)^{-1}\mathbb{G}_n\left[\tau_i^{\dagger}(\lambda_*,\theta_*) g_b(w_i,\theta_*)\right]\\
  %+ n(\wh\lambda(\theta_*)' - \lambda_*(\theta_*)')\frac{1}{n}\sum_{i=1}^n \tau_i^{\dagger}(\lambda_*,\theta_*)\EE[g_b(w_i,\theta_*)]\\
  + \frac{1}{2}\mathbb{G}_n\left[\tau_i^{\dagger}(\lambda_*,\theta_*)g_b(w_i,\theta_*
  )'\right] \Omega_*^{\dagger}(\theta_*)^{-1} \mathbb{G}_n\left[\tau_i^{\dagger}(\lambda_*,\theta_*)g_b(w_i,\theta_*)\right] + o_p(1)\\
  =  -\frac{1}{2}\mathbb{G}_n\left[\tau_i^{\dagger}(\lambda_*,\theta_*) g_b(w_i,\theta_*)'\right] \Omega_*^{\dagger}(\theta_*)^{-1} \mathbb{G}_n\left[\tau_i^{\dagger}(\lambda_*,\theta_*) g_b(w_i,\theta_*)\right]
  + o_p(1).\label{eq:2:lem_likelihood_ET_approx}
\end{multline}
Finally, we have to deal with $\lt|\frac{1}{n}\sum_{i=1}^n e^{\wh\lambda(\theta_*)' g_b(w_i,\theta_*)} - \EE_n\left[ e^{\lambda_*(\theta_*)' g_b(w_i,\theta_*)}\right]\rt|$. By using \eqref{eq_Taylor_exp_2} and \eqref{eq:2:lem_likelihood_ET_approx}
%\begin{displaymath}
%  \frac{1}{\sqrt{n}}\sum_{i=1}^n e^{\lambda_*(\theta_*)' g_b(w_j,\theta_*)} g_b(w_i,\theta_*) = \mathbb{G}_n\left[e^{\lambda_*(\theta_*)'g_b(w_j,\theta_*)} g_b(w_i,\theta_*)\right],%\\
%  = \mathbb{G}_n\left[\frac{1}{n}\sum_{i=1}^n e^{\lambda_*(\theta_*)'\wtl g_b(w_j,\theta_*)} g_b(w_i,\theta_*)\right],
%\end{displaymath}
 we get that $\lt|\frac{1}{n}\sum_{i=1}^n e^{\wh\lambda(\theta_*)' g_b(w_i,\theta_*)} - \EE_n\left[ e^{\lambda_*(\theta_*)' g_b(w_i,\theta_*)}\right]\rt| = \mathcal{O}_p(1/n)$.

%By using exactly the same argument we have used in the proof of Lemma \ref{lem:A_2:A_3} with $\lambda_\circ(\theta_*) = 1$ we get that $\lt|\frac{1}{n}\sum_{i=1}^n e^{\wh\lambda(\theta_*)' \wtl g_b(w_i,\theta_*
%)} - \EE\left[ e^{\lambda_*(\theta_*)' \wtl g_b(w_i,\theta_*)}\right]\rt| = \mathcal{O}_p(1/\sqrt{n})$.\\
\indent By replacing this result, \eqref{eq:1:lem_likelihood_ET_approx}, \eqref{eq_Taylor_exp_3}, and \eqref{eq:2:lem_likelihood_ET_approx} in \eqref{eq_likelihood_ET} we get:
\begin{multline}
  \log \wh{q}(w_{1:n}|\theta_*,M_{b}) = - n\log n - \mathbb{G}_n\left[\tau_i^{\dagger}(\lambda_*,\theta_*) g_b(w_i,\theta_*)'\right]\Omega_*^{\dagger}(\theta_*)^{-1}\mathbb{G}_n\left[g_b(w_i,\theta_*)\right]\\
  + \lambda_*(\theta_*)'\sum_{i=1}^n \wtl g_b(w_i,\theta_*) +  n\wh\lambda(\theta_*)'\EE[g_b(w_i,\theta_*)] - n\log\lt(\EE_n\left[ e^{\lambda_*(\theta_*)' g_b(w_i,\theta_*)}\right]\rt)\\
  + \frac{1}{2}\mathbb{G}_n\left[\tau_i^{\dagger}(\lambda_*,\theta_*) g_b(w_i,\theta_*)'\right] \Omega_*^{\dagger}(\theta_*)^{-1} \mathbb{G}_n\left[\tau_i^{\dagger}(\lambda_*,\theta_*) g_b(w_i,\theta_*)\right]%\\
  %+ \mathbb{G}_n\left[\tau_i^{\dagger}(\lambda_*,\theta_*)g_b(w_i,\theta_*)'\right] \Omega_*^{\dagger}(\theta_*)^{-1} \frac{1}{\sqrt{n}}\sum_{i=1}^n \tau_i^{\dagger}(\lambda_*,\theta_*)\EE[g_b(w_i,\theta_*)
  + o_p(1),
\end{multline}
\noindent where $\mathbb{G}_n\left[\tau_i^{\dagger}(\lambda_*,\theta_*) g_b(w_i,\theta_*)\right] \xrightarrow{d} \mathcal{N}(0,\Omega_*^{\dagger}(\theta_*))$. By noticing that $\lambda_*(\theta_*)'\sum_{i=1}^n \wtl g_b(w_i,\theta_*) - n\log\lt(\EE_n\left[ e^{\lambda_*(\theta_*)' g_b(w_i,\theta_*)}\right]\rt) = \sum_{i=1}^n \log\left(\tau_i(\lambda_*,\theta_*)\right) - n\lambda_*(\theta_*)'\EE[g_b(w_i,\theta_*)]$, we prove \eqref{lem_likelihood_eq:1}. \\
%\indent Now, suppose that $\EE[\varepsilon_i(\theta_\circ) x_i] = 0$. Then, $\theta_*$ is replaced by $\theta_\circ$, $\lambda_*(\theta_*)=\lambda_*(\theta_\circ) = 0$, $\EE[g_b(w_i,\theta_\circ)] = 0$, $\tau_i^{\dagger}(\lambda_*,\theta_*) = 1$,  $\Omega_*^{\dagger}(\theta_*) = \Omega_\circ$, and so:
%\begin{multline}
%  \log \wh{q}(w_{1:n}|\theta_*,M_{b}) = - n\log n - \mathbb{G}_n\left[\tau_i^{\dagger}(\lambda_*,\theta_*) g_b(w_i,\theta_*)'\right]\Omega_*^{\dagger}(\theta_*)^{-1}\mathbb{G}_n\left[g_b(w_i,\theta_*)\right]\\
%  + \frac{1}{2}\mathbb{G}_n\left[\tau_i^{\dagger}(\lambda_*,\theta_*) g_b(w_i,\theta_*)'\right] \Omega_*^{\dagger}(\theta_*)^{-1} \mathbb{G}_n\left[\tau_i^{\dagger}(\lambda_*,\theta_*) g_b(w_i,\theta_*)\right] + o_p(1)\\
%   = - n\log n - \frac{1}{2}\mathbb{G}_n\left[g_b(w_i,\theta_*)'\right]\Omega_*^{\dagger}(\theta_*)^{-1}\mathbb{G}_n\left[g_b(w_i,\theta_*)\right] + o_p(1),
%\end{multline}
%\noindent where $\mathbb{G}_n\left[g(w_i,\theta_*)\right] \xrightarrow{d} \mathcal{N}(0,\Omega_\circ)$ and $\mathbb{G}_n\left[g(w_i,\theta_*)'\right] \Omega_\circ^{-1} \mathbb{G}_n\left[g(w_i,\theta_*)\right]\xrightarrow{d}\chi_d^2$. This proves \eqref{eq_lem_likelihood_ET_approx}.
The last part of the theorem follows from Lemma \ref{lem:1}
\begin{flushright}
$\square$
\end{flushright}

% ==============================================================================================
\subsection{Proof of Theorem \ref{lem_likelihood_ET_approx_extended}}
Since we are in the extended model, then there exists a $\psi_\circ := (\theta_\circ', v_\circ')'$ such that $\EE[\varepsilon_i(\theta_\circ)\wtl w_i] = (v_\circ',0')'$ and $\lambda_*(\psi_\circ) = 0$. Let us consider the expression for the likelihood evaluated at $\psi_\circ$:
\begin{multline}\label{eq_likelihood_ET_extended}
  \log \wh{q}(w_{1:n}|\psi_\circ,\mathcal{M}_{e}) = - n\log n +\sum_{i=1}^n\wh{\lambda}(\psi_\circ)'g_b(w_i,\theta_\circ)
   - n\log \frac{1}{n}\sum_{j=1}^n e^{\wh{\lambda}(\psi_\circ)'g_b(w_j,\theta_\circ)}\\
   = - n\log n +\sum_{i=1}^n\wh{\lambda}(\psi_\circ)'g_e(w_i,\psi_\circ)
   - n\log \frac{1}{n}\sum_{j=1}^n e^{\wh{\lambda}(\psi_\circ)'g_e(w_j,\psi_\circ)}.
\end{multline}
We start with dealing with the second term on the right hand side of \eqref{eq_likelihood_ET_extended}. By using the result of Lemma \ref{lem:1:extended}:
\begin{multline}
  \sum_{i=1}^n\wh{\lambda}(\psi_\circ)'g_e(w_i,\psi_\circ) = \sqrt{n}\wh{\lambda}(\psi_\circ)'\frac{1}{\sqrt{n}}\sum_{i=1}^n g_e(w_i,\psi_\circ)\\
  = - \mathbb{G}_n\left[g_e(w_i,\psi_\circ)'\right]\Omega_{\psi_{\circ}}^{-1}\mathbb{G}_n\left[g_e(w_i,\psi_\circ)\right] + o_p(1).\label{eq:1:lem_likelihood_ET_approx_extended}
\end{multline}
\indent Let $\tilde\lambda$ be on the line joining $0$ and $\wh\lambda(\psi_\circ)$, then a second order Taylor expansion of the function $\lambda\mapsto \frac{1}{n}\sum_{j=1}^n e^{\lambda'g_e(w_j,\theta_\circ)}$ around $0$ gives
\begin{multline}\label{eq_Taylor_exp_1_extended}
  \frac{1}{n}\sum_{j=1}^n e^{\wh\lambda(\psi_\circ)'g_e(w_j,\psi_\circ)} = 1 + \frac{1}{n}\sum_{i=1}^n\wh\lambda(\psi_\circ)' g_e(w_i,\psi_\circ)\\\hfill
   + \frac{1}{2}\wh\lambda(\psi_\circ)' \frac{1}{n}\sum_{j=1}^n e^{\tilde\lambda' g_e(w_j,\psi_\circ)} g_e(w_j,\psi_\circ)g_e(w_j,\psi_\circ)'\wh\lambda(\psi_\circ).
\end{multline}
Under Assumption \ref{Ass_1_NS} and because $\tilde\lambda = \mathcal{O}_p(n^{-1/2})$ (since by Lemma \ref{lem:2:extended} $\wh\lambda(\psi_\circ) = \mathcal{O}_p(n^{-1/2})$ and $\tilde\lambda$ is between $0$ and $\wh\lambda(\psi_\circ)$), we can apply the same argument of the proof of Lemma \ref{lem:Omega} to get:
\begin{equation}
  \frac{1}{n}\sum_{j=1}^n e^{\tilde\lambda' g_e(w_j,\theta_\circ)} g_e(w_j,\theta_\circ)g_e(w_j,\theta_\circ)' \overset{p}{\to} \Omega_{\psi_{\circ}} := \EE[g_e(w_i,\psi_\circ)g_e(w_i,\psi_\circ)'].
\end{equation}
By replacing this in \eqref{eq_Taylor_exp_1_extended} and by using Lemma \ref{lem:2:extended} to get the rate of the $o_p(1/n)$ term, we obtain:
\begin{equation}\label{eq_Taylor_exp_2_extended}
  \frac{1}{n}\sum_{j=1}^n e^{\wh\lambda(\psi_\circ)'g_e(w_j,\psi_\circ)} = 1 + \frac{1}{n}\sum_{i=1}^n\wh\lambda(\psi_\circ)' g_e(w_i,\psi_\circ) + \frac{1}{2}\wh\lambda(\psi_\circ)' \Omega_{\psi_{\circ}}\wh\lambda(\psi_\circ) + o_p(1/n).
\end{equation}
We now use the first order Taylor expansion of the function $u\mapsto\log(u)$ around $1$: $\log(u) = u-1 + o(|u-1|)$, and apply it to $u$ given by the left hand side of \eqref{eq_Taylor_exp_2_extended} to obtain:
\begin{multline}\label{eq_Taylor_exp_3_extended}
  \log\lt(\frac{1}{n}\sum_{i=1}^n e^{\wh\lambda(\psi_\circ)' g_e(w_i,\psi_\circ)}\rt) = \frac{1}{n}\sum_{i=1}^n \wh\lambda(\psi_\circ)' g_e(w_i,\psi_\circ)\\\hfill
   + \frac{1}{2}\wh\lambda(\psi_\circ)' \Omega_{\psi_\circ}\wh\lambda(\psi_\circ) + o_p\lt(1/n\rt) + o\lt(\lt|\frac{1}{n}\sum_{i=1}^n e^{\wh\lambda(\psi_\circ)' g_e(w_i,\psi_\circ)} - 1\rt|\rt).
\end{multline}
By using the result of Lemma \ref{lem:1:extended} and the fact that $\EE[g_e(w_i,\psi_\circ)] = 0$, then
\begin{multline}
  n\left(\frac{1}{n}\sum_{i=1}^n \wh\lambda(\psi_\circ)' g_e(w_i,\psi_\circ) + \frac{1}{2}\wh\lambda(\psi_\circ)' \Omega_{\psi_\circ}\wh\lambda(\psi_\circ)\right)\\
  = \sqrt{n}\wh\lambda(\psi_\circ)'\mathbb{G}_n\left[g_e(w_i,\psi_\circ)\right] + \frac{1}{2}\sqrt{n}\wh\lambda(\psi_\circ)' \Omega_{\psi_\circ}\sqrt{n}\wh\lambda(\psi_\circ)\\
  = - \mathbb{G}_n\left[g_e(w_i,\psi_\circ)'\right]\Omega_{\psi_\circ}^{-1}\mathbb{G}_n\left[g_e(w_i,\psi_\circ)\right] %+ 2\sqrt{n}\mathbb{G}_n\left[g_b(w_i,\theta_\circ)'\right]\Omega_{\psi_\circ}^{-1}\wtl{v}_{\circ}\\
%  - n\wtl{v}_\circ' \Omega_{\psi_\circ}^{-1}\wtl{v}_\circ
  + \frac{1}{2}\mathbb{G}_n\left[g_e(w_i,\psi_\circ)'\right] \Omega_{\psi_\circ}^{-1} \mathbb{G}_n\left[g_e(w_i,\psi_\circ)\right]  + o_p(1)\\
  =  -\frac{1}{2}\mathbb{G}_n\left[g_e(w_i,\psi_\circ)'\right] \Omega_*^{-1} \mathbb{G}_n\left[g_e(w_i,\psi_\circ)\right] + o_p(1).\label{eq:2:lem_likelihood_ET_approx_extended}
\end{multline}
Finally, we have to deal with $\lt|\frac{1}{n}\sum_{i=1}^n e^{\wh\lambda(\psi_\circ)' g_e(w_i,\psi_\circ)} - 1\rt|$. %By using exactly the same argument we have used in the proof of Lemma \ref{lem:A_2:A_3} with $\lambda_*(\theta_*)$ replaced by $\lambda_*(\psi_\circ) = 0$,
By \eqref{eq_Taylor_exp_3_extended} and \eqref{eq:2:lem_likelihood_ET_approx_extended} we get that $\lt|\frac{1}{n}\sum_{i=1}^n e^{\wh\lambda(\psi_\circ)' g_e(w_i,\psi_\circ)} - 1\rt| = \mathcal{O}_p(1/n)$.\\
\indent By replacing this result and \eqref{eq:2:lem_likelihood_ET_approx_extended} in \eqref{eq_Taylor_exp_3_extended}, and then by plugging \eqref{eq:1:lem_likelihood_ET_approx_extended} and \eqref{eq_Taylor_exp_3_extended} in \eqref{eq_likelihood_ET_extended} we get:
\begin{multline}
  \log \wh{q}(w_{1:n}|\psi_\circ,M_{b}) = - n\log n - \frac{1}{2}\mathbb{G}_n\left[g_e(w_i,\psi_\circ)'\right] \Omega_{\psi_\circ}^{-1} \mathbb{G}_n\left[g_e(w_i,\psi_\circ)\right] + o_p(1)\\
  = - n\log n - \frac{1}{2}\mathbb{G}_n\left[g_b(w_i,\theta_\circ)'\right] \Omega_{\psi_{\circ}}^{-1} \mathbb{G}_n\left[g_b(w_i,\theta_\circ)\right] - 2\sqrt{n}\mathbb{G}_n\left[g_b(w_i,\theta_\circ)'\right]\Omega_{\psi_\circ}^{-1}\wtl{v}_{\circ} + n\wtl{v}_\circ' \Omega_{\psi_\circ}^{-1}\wtl{v}_\circ + o_p(1).
\end{multline}
\noindent Moreover, by the central limit theorem,
\[\mathbb{G}_n\left[g_e(w_i,\psi_\circ)\right] \xrightarrow{d} \mathcal{N}(0,\Omega_{\psi_\circ})
\]
and
\[
\mathbb{G}_n\left[g_e(w_i,\psi_\circ)'\right] \Omega_{\psi_{\circ}}^{-1} \mathbb{G}_n\left[g_e(w_i,\psi_\circ)\right]\xrightarrow{d}\chi_d^2.
\]\\
%\indent If $\EE[\varepsilon_i(\theta_\circ) x_i] = 0$, then  $v_{\circ} = 0$, $\Omega_{\psi_\circ} = \Omega_\circ:= \EE[\varepsilon_i(\theta_\circ)^2\wtl w_i\wtl w_i']$, and the previous expression simplifies to
%\begin{equation}
%  \log \wh{q}(w_{1:n}|\psi_\circ,M_{b}) = - n\log n - \frac{1}{2}\mathbb{G}_n\left[g_b(w_i,\theta_\circ)'\right] \Omega_{\circ}^{-1} \mathbb{G}_n\left[g_b(w_i,\theta_\circ)\right].
%\end{equation}

\begin{flushright}
$\square$
\end{flushright}

% ==============================================================================================
\subsection{Proof of Corollary \ref{thm_limit_ETEL}}
By result \eqref{lem_likelihood_eq:1} in Theorem \ref{lem_likelihood_ET_approx} we have that
\begin{multline}
  \frac{1}{n}\log \wh{q}(w_{1:n}|\theta_*,\mathcal{M}_b) + \log(n) = \frac{1}{n}\sum_{i=1}^n \log\left(\frac{e^{\lambda_*(\theta_*)' g_b(w_i,\theta_*)}}{\EE_n[e^{\lambda_*(\theta_*)' g_b(w_j,\theta_*)}]}\right)\\
  + (\wh\lambda(\theta_*) - \lambda_*(\theta_*))'\EE[g_b(w_i,\theta_*)] + \mathcal{O}_p(1/n).
\end{multline}
 By Lemma \ref{lem:2} in the Online Appendix, $\|\wh\lambda(\theta_*) - \lambda_*(\theta_*)\|_2\xrightarrow{p} 0$. By the Law of Large Numbers
 $$\frac{1}{n}\sum_{i=1}^n \log\left(\frac{e^{\lambda_*(\theta_*)' g_b(w_i,\theta_*)}}{\EE_n[e^{\lambda_*(\theta_*)' g_b(w_j,\theta_*)}]}\right)\xrightarrow{p} \EE\left[\log\left(\frac{e^{\lambda_*(\theta_*)'g_b(w_i,\theta_*)}}{\EE[e^{\lambda_*(\theta_*)' g_b(w_j,\theta_*)}]}\right)\right].$$
 This concludes the proof.

\begin{flushright}
$\square$
\end{flushright}

\subsection{Proof of Corollary \ref{thm_limit_ETEL_extended_model}}
By result \eqref{eq_lem_likelihood_ET_approx_extended} in Theorem \ref{lem_likelihood_ET_approx_extended} we have that
$$\frac{1}{n}\log \wh{q}(w_{1:n}|\psi_\circ,\mathcal{M}_e) + \log(n) = \mathcal{O}_p(1/n).$$
Since $\lambda_*(\psi_\circ) = 0$ then, $\frac{e^{\lambda_*(\psi_\circ)'\sum_{i=1}^n g_b(w_i,\psi_\circ)}}{\EE[e^{\lambda_*(\psi_\circ)'\sum_{i=1}^n g_b(w_j,\psi_\circ)}]} = 1$ and so we can equivalently write:
\begin{displaymath}
  \frac{1}{n}\log \wh{q}(w_{1:n}|\psi_\circ,\mathcal{M}_e) + \log(n) = \frac{1}{n}\sum_{i=1}^n \log\left(\frac{e^{\lambda_*(\psi_\circ)'g_b(w_i,\psi_\circ)}}{\EE[e^{\lambda_*(\psi_\circ)' g_b(w_j,\psi_\circ)}]}\right) + \mathcal{O}_p(1/n).
\end{displaymath}
\noindent By the Law of Large Numbers
 $$\frac{1}{n}\sum_{i=1}^n \log\left(\frac{e^{\lambda_*(\psi_\circ)'g_b(w_i,\psi_\circ)}}{\EE[e^{\lambda_*(\psi_\circ)' g_b(w_j,\psi_\circ)}]}\right)\xrightarrow{p} \EE\left[\log\left(\frac{e^{\lambda_*(\psi_\circ)'g_b(w_i,\psi_\circ)}}{\EE[e^{\lambda_*(\psi_\circ)' g_b(w_j,\psi_\circ)}]}\right)\right].$$
 This concludes the proof.

\begin{flushright}
$\square$
\end{flushright}
% ==============================================================================================
\subsection{Proof of Theorem \ref{thm_consistency_selection_misspecification_2}}
\indent The proof is organized in two parts. In the first part we show that $\mathrm{KL}(P||Q_{e}^{*}(\psi_\circ)) < \mathrm{KL}(P||Q_b^{*}(\theta_{*}))$ if and only if there is no $\theta$ such that $\mathbf{E}[\varepsilon_{i}(\theta)x_i]=0$. In the second part we show that
$$P\left(\log m(w_{1:n}|\mathcal{M}_{e})> \log m(w_{1:n}|\mathcal{M}_{b})\right) \rightarrow 1$$ if and only if $\mathrm{KL}(P||Q_{e}^{*}(\psi_\circ)) < \mathrm{KL}(P||Q_b^{*}(\theta_{*}))$.

\paragraph{First part.} We start by proving that $\mathrm{KL}(P||Q_{e}^{*}(\psi_\circ)) < \mathrm{KL}(P||Q_b^{*}(\theta_{*}))$ if and only if there is no $\theta\in\Theta$ such that $\mathbf{E}[\varepsilon_{i}(\theta)x_i]=0$. Notice that $\mathrm{KL}(P||Q_{e}^{*}(\psi_\circ))=0$. Suppose that $\mathrm{KL}(P||Q_{e}^{\ast}(\psi_\circ))<\mathrm{KL}(P||Q_b^{\ast }(\theta_{*}))$ and suppose that there exists a $\theta\in\Theta$ such that $\mathbf{E}[\varepsilon_{i}(\theta)x_i] = 0$ so that $P\in \mathcal{Q}_{b}(\theta)$. By Assumption \ref{Ass_0_NS} with $\theta_*$ replaced by $\theta_\circ$, then this $\theta$ must be equal to $\theta_\circ$, which in turn equals $\theta_*$. It follows that $P\in \mathcal{Q}_{b}(\theta_*)$ and by definition of $Q_b^{\ast }(\theta_{*})$: $Q_b^{\ast }(\theta_{*})=P$ since $Q_b^{\ast }(\theta_{*})$ is the closest to $P$, in the KL sense, among all the distributions in $\mathcal{Q}_{b}(\theta_*)$. Hence, $\mathrm{KL}(P||Q_b^{\ast }(\theta_{*}))=0$. But this contradicts the assumption that $\mathrm{KL}(P||Q_{e}^{\ast }(\psi_\circ))<\mathrm{KL}(P||Q_b^{\ast }(\theta _{*}))$. Hence, there is no $\theta\in\Theta$ such that $\mathbf{E}[\varepsilon_{i}(\theta)x_i]=0$.\\
\indent We now prove the reverse implication. Suppose that there is no value $\theta\in\Theta$ such that $\mathbf{E}[\varepsilon_{i}(\theta)x_i]=0$. Hence, $P\notin \mathcal{Q}_{b}(\theta)$ for every $\theta\in\Theta$, which implies $P\notin \mathcal{Q}_{b}(\theta_*)$ and $\mathrm{KL}(P||Q_b^{*}(\theta_*)) > 0$. On the other hand, there exists a unique $\psi_\circ\in\mathbb{R}^{d_x}$ such that $P\in \mathcal{Q}_{e}(\psi_\circ)$ since $\mathcal{M}_e$ is always correctly specified. This implies that $\mathrm{KL}(P||Q_{e}^{*}(\psi_\circ)) = 0$ and so $\mathrm{KL}(P||Q_{e}^{*}(\psi_\circ)) < \mathrm{KL}(P||Q_b^{*}(\theta_*))$.\\

\paragraph{Second part.} We show that $P\left(\log m(w_{1:n}|\mathcal{M}_{e})> \log m(w_{1:n}|\mathcal{M}_{b})\right) \rightarrow 1$ if and only if $\mathrm{KL}(P||Q_{e}^{*}(\psi_\circ))$ $<$ $\mathrm{KL}(P||Q_b^{*}(\theta_{*}))$. By Theorems \ref{lem_likelihood_ET_approx} and \ref{lem_likelihood_ET_approx_extended}, and Theorems \ref{thm_BvM_misspecified} and \ref{thm_BvM:extended} in the Online Appendix and by \eqref{log_ML_base_initial}-\eqref{log_ML_extended_initial}, then \eqref{log_ML_base}-\eqref{log_ML_extended} hold. By \eqref{log_ML_base}, the $\log m(w_{1:n}|\mathcal{M}_b)$ is equal to
$$- n\log n + \sum_{i=1}^n \log\left(\frac{e^{\lambda_*(\theta_*)'g_b(w_i,\theta_*)}}{\EE_n[e^{\lambda_*(\theta_*)' g_b(w_j,\theta_*)}]}\right) + n(\wh\lambda(\theta_*) - \lambda_*(\theta_*))'\EE[g_b(w_i,\theta_*)] - \frac{p}{2}\log(n) + \mathcal{O}_p(1)$$
\noindent and by \eqref{log_ML_extended}, $\log m(w_{1:n}|\mathcal{M}_e) = -n\log(n) - \frac{p+d_x}{2}\log(n) + \mathcal{O}_p(1)$. Hence, since from the Law of Large Numbers
$$\frac{1}{n}\sum_{i=1}^n \log\left(\frac{e^{\lambda_*(\theta_*)'\sum_{i=1}^n g_b(w_i,\theta_*)}}{\EE_n[e^{\lambda_*(\theta_*)'\sum_{i=1}^n g_b(w_j,\theta_*)}]}\right)\xrightarrow{p} \EE\left[\log\left(\frac{e^{\lambda_*(\theta_*)'\sum_{i=1}^n g_b(w_i,\theta_*)}}{\EE[e^{\lambda_*(\theta_*)'\sum_{i=1}^n g_b(w_j,\theta_*)}]}\right)\right] \equiv - \mathrm{KL}(P||Q_b^*(\theta_*)),$$
it follows that
%, we have that for every $M_n$ such that $M_n^2/\sqrt{n} \rightarrow 0$ as $n\rightarrow \infty$:
%\begin{equation}
%  \frac{1}{n}\log m(w_{1:n}|\mathcal{M}_b) + \log(n) = \mathbf{E}^P\left[\log(dQ_b^*(\theta_*)/dP)\right] + o_p(1) = - KL(P||Q_b^*(\theta_*)) + o_p(1)
%\end{equation}
%\noindent since $-\mathbf{E}^P\left[\log(dQ_b^*(\theta_*)/dP)\right] = KL(P||Q_b^*(\theta_*))$. Similarly for the extended model, by the results of Theorems \ref{thm_approx_marginal_likelihood_extended_model} and \ref{thm_limit_ETEL_extended_model}, then we have that for every $M_n$ such that $M_n^2/\sqrt{n} \rightarrow 0$ as $n\rightarrow \infty$:
%\begin{equation}
%  \frac{1}{n}\log m(w_{1:n}|\mathcal{M}_e) + \log(n) = \mathbf{E}^P\left[\log(dQ_e^*(\psi_\circ)/dP)\right] + o_p(1) = o_p(1)
%\end{equation}
%\noindent since $-\mathbf{E}^P\left[\log(dQ_e^*(\psi_\circ)/dP)\right] = KL(P||Q_e^*(\psi_\circ))$. Hence, since $KL(P||Q_e^*(\psi_\circ)) = 0$:
\begin{multline*}
  P\left(\log m(w_{1:n}|\mathcal{M}_{e})> \log m(w_{1:n}|\mathcal{M}_{b})\right)
  = P\left(\frac{1}{n}\log m(w_{1:n}|\mathcal{M}_{e}) > \frac{1}{n}\log m(w_{1:n}|\mathcal{M}_{b}) \right)\\
  = P\left(0 > - \mathrm{KL}(P||Q_b^*(\theta_*))+ \mathcal{O}_p(\log(n)/\sqrt{n})\right),
\end{multline*}
\noindent where we have used Lemma \ref{lem:2} in the Online Appendix to control $\sqrt{n}(\wh\lambda(\theta_*) - \lambda_*(\theta_*))$.
Suppose that %there is no $\theta$ such that $\mathbf{E}[\varepsilon_{i}(\theta)x_i]=0$, then
$\mathrm{KL}(P||Q_b^*(\theta_*))>0$, then the previous probability converges to $1$. On the other hand, suppose that $P\left(0 > - \mathrm{KL}(P||Q_b^*(\theta_*)) + \mathcal{O}_p(\log(n)/\sqrt{n})\right) \rightarrow 1$ as $n\rightarrow \infty$. This is possible only if $\mathrm{KL}(P||Q_b^*(\theta_*))>0$. By the first part of the proof $\mathrm{KL}(P||Q_b^*(\theta_*))>0$ if and only if there is no $\theta$ such that $\mathbf{E}[\varepsilon_{i}(\theta)x_i]=0$.\\
\indent We now prove the last assertion of the theorem. In the case where there is a $\theta_{\circ}$ such that $\mathbf{E}[\varepsilon_{i}(\theta_\circ)x_i]=0$, then $\mathrm{KL}(P||Q_b^*(\theta_\circ))=0$ and the probability $P\left(0 > - \mathrm{KL}(P||Q_b^*(\theta_*)) + \mathcal{O}_p(\log(n)/\sqrt{n})\right)$ is equal to zero as $n\rightarrow\infty$. This concludes the proof.

\begin{flushright}
$\square$
\end{flushright}

% ==============================================================================================
\subsection{Proof of Theorem \ref{thm_consistency_selection_correct_spectification}}
We start by supposing that $\EE[\varepsilon_i(\theta_\circ) x_i] = 0$. In this case, $\theta_* = \theta_\circ$, $\lambda_*(\theta_*) = \lambda_*(\theta_\circ) = 0$ and by Theorems \ref{lem_likelihood_ET_approx} and \ref{lem_likelihood_ET_approx_extended}:
\begin{multline}
  \log \wh{q}(w_{1:n}|\theta_\circ,\mathcal{M}_b) - \log \wh{q}(w_{1:n}|\psi_\circ,\mathcal{M}_e)\\
  = -n\log n - \frac{1}{2}\mathbb{G}_n\left[g_b(w_i,\theta_\circ)'\right] \Omega_\circ^{-1} \mathbb{G}_n\left[g_b(w_i,\theta_\circ)\right] + n\log n\\
  + \frac{1}{2}\mathbb{G}_n\left[g_b(w_i,\theta_\circ)'\right] \Omega_\circ^{-1} \mathbb{G}_n\left[g_b(w_i,\theta_\circ)\right] + o_p(1) = o_p(1).\label{eq:12:proof:main}
\end{multline}
Let $\pi_{h_{\theta}}^n(\cdot|w_{1:n},\mathcal{M}_b)$ and $\pi_{h_{\psi}}^n(\cdot|w_{1:n},\mathcal{M}_e)$ denote the posterior density of $h_{\theta}$ and $h_{\psi}$, respectively. By Corollary \ref{Cor_2_1} below (which is valid if $\EE[\varepsilon_i(\theta_\circ) x_i] = 0$ holds)
\begin{multline}
  \left.\log \pi_{h_{\theta}}^n(\sqrt{n}(\theta - \theta_{\circ})|w_{1:n},\mathcal{M}_b)\right|_{\theta = \theta_\circ} = -\frac{p}{2} \log(2\pi) + \frac{1}{2}\log |V_{\theta_\circ}^{-1}|\\
  - \frac{1}{2}\mathbb{G}_n\left[\varepsilon_i(\theta_\circ)\wtl{w}_i'\right] \Omega_\circ^{-1} \EE[\wtl{w}_i\wtl{w}_{1,i}']V_{\theta_\circ}\EE[\wtl w_{1,i}\wtl{w}_i'] \Omega_\circ^{-1} \mathbb{G}_n\left[\varepsilon_i(\theta_\circ)\wtl{w}_i\right] + o_p(1), \label{eq:11:proof:main}
\end{multline}
and by Theorem \ref{thm_BvM:extended} below
\begin{multline}
  \left.\log \pi_{h_{\psi}}^n(\sqrt{n}(\psi - \psi_{\circ})|w_{1:n},\mathcal{M}_e)\right|_{\psi = \psi_\circ} = -\frac{(p + d_x)}{2} \log(2\pi) + \frac{1}{2}\log |V_{\psi_\circ}^{-1}|\\
  - \frac{1}{2}\mathbb{G}_n\left[\varepsilon_i(\theta_\circ)\wtl{w}_i'\right] \Omega_\circ^{-1} \left[\frac{dg_e(w_{i},\psi_{\circ})'}{d\psi }\right]V_{\psi_\circ}\left[\frac{dg_e(w_{i},\psi_{\circ})}{d\psi'}\right] \Omega_\circ^{-1} \mathbb{G}_n\left[\varepsilon_i(\theta_\circ)\wtl{w}_i\right] + o_p(1),\label{eq:10:proof:main}
\end{multline}
\noindent where $V_{\theta_\circ}$ and $V_{\psi_\circ}$ are defined in Corollary \ref{Cor_2_1} and Theorem \ref{thm_BvM:extended} below. Hence, by replacing \eqref{eq:10:proof:main}, \eqref{eq:11:proof:main} and \eqref{eq:12:proof:main} in $\log m(w_{1:n}|\mathcal{M}_b) - \log m(w_{1:n}|\mathcal{M}_e)$ by using the expressions for the log-marginal likelihoods given in \eqref{log_ML_base_initial}-\eqref{log_ML_extended_initial} with $\theta_*$ replaced by $\theta_\circ$, we obtain:
\begin{multline}
  P\left(\log m(w_{1:n}|\mathcal{M}_b) > \log m(w_{1:n}|\mathcal{M}_e)\right) = P\Big(\log \pi(\theta_\circ|\mathcal{M}_b) + \frac{p}{2} \log(2\pi) - \frac{1}{2}\log |V_{\theta_\circ}^{-1}|\\
  + \frac{1}{2}\mathbb{G}_n\left[\varepsilon_i(\theta_\circ)\wtl{w}_i'\right] \Omega_\circ^{-1} \EE[\wtl{w}_i\wtl{w}_{1,i}']V_{\theta_\circ}\EE[\wtl w_{1,i}\wtl{w}_i'] \Omega_\circ^{-1} \mathbb{G}_n\left[\varepsilon_i(\theta_\circ)\wtl{w}_i\right] - \frac{p}{2}\log(n)\\
  > \log \pi(\psi_\circ|\mathcal{M}_e) + \frac{(p + d_x)}{2} \log(2\pi) - \frac{1}{2}\log |V_{\psi_\circ}^{-1}|\\
  + \frac{1}{2}\mathbb{G}_n\left[\varepsilon_i(\theta_\circ)\wtl{w}_i'\right] \Omega_\circ^{-1} \left[\frac{dg(w_{i},\psi_{\circ})'}{d\psi }\right]V_{\psi_\circ}\left[\frac{dg(w_{i},\psi_{\circ})}{d\psi'}\right] \Omega_\circ^{-1} \mathbb{G}_n\left[\varepsilon_i(\theta_\circ)\wtl{w}_i\right] - \frac{p+d_x}{2}\log(n) + o_p(1)\Big).\label{eq:3_correct_specification_comparison}
\end{multline}
Because $\mathbb{G}_n\left[\varepsilon_i(\theta_\circ)\wtl{w}_i\right] = \mathcal{O}_p(1)$, $|V_{\theta_\circ}^{-1}| = \mathcal{O}(1)$ and $|V_{\psi_\circ}^{-1}| = \mathcal{O}(1)$ (since $V_{\theta_\circ}$ and $V_{\psi_\circ}$ are positive definite under Assumption \ref{Ass_2_NS}), then we can factorize $\log(n)$ in \eqref{eq:3_correct_specification_comparison} and get:
\begin{multline}
  P\left(\log m(w_{1:n}|\mathcal{M}_b) > \log m(w_{1:n}|\mathcal{M}_e)\right) = \\%P\Big(\log \pi(\theta_\circ|\mathcal{M}_b) - \frac{1}{2}\log |V_{\theta_\circ}^{-1}|\\
  %+ \frac{1}{2}\mathbb{G}_n\left[\varepsilon_i(\theta_\circ)\wtl{w}_i'\right] \Omega_\circ^{-1} \EE[\wtl{w}_i\wtl{w}_{1,i}']V_{\theta_\circ}\EE[\wtl w_{1,i}\wtl{w}_i'] \Omega_\circ^{-1} \mathbb{G}_n\left[\varepsilon_i(\theta_\circ)\wtl{w}_i\right]\\
  %> \log \pi(\psi_\circ|\mathcal{M}_e) + \frac{d_x}{2} \log(2\pi) - \frac{1}{2}\log |V_{\psi_\circ}^{-1}|\\
  %+ \frac{1}{2}\mathbb{G}_n\left[\varepsilon_i(\theta_\circ)\wtl{w}_i'\right] \Omega_\circ^{-1} \left[\frac{dg(w_{i},\psi_{\circ})'}{d\psi }\right]V_{\psi_\circ}\left[\frac{dg(w_{i},\psi_{\circ})}{d\psi'}\right] \Omega_\circ^{-1} \mathbb{G}_n\left[\varepsilon_i(\theta_\circ)\wtl{w}_i\right] - \frac{d_x}{2}\log(n) + o_p(1)\Big)\\
  = P\Big(0 > \log(n) \Big[\frac{1}{\log(n)}\log \frac{\pi(\psi_\circ|\mathcal{M}_e)}{\pi(\theta_\circ|\mathcal{M}_b)} + \frac{d_x\log(2\pi)}{2\log(n)}  - \frac{1}{2\log(n)}\log \frac{|V_{\theta_\circ}|}{|V_{\psi_\circ}|} - \frac{d_x}{2} \\
  + \frac{1}{2\log(n)}\mathbb{G}_n\left[\varepsilon_i(\theta_\circ)\wtl{w}_i'\right] \Omega_\circ^{-1} \left(\left[\frac{dg(w_{i},\psi_{\circ})'}{d\psi }\right]V_{\psi_\circ}\left[\frac{dg(w_{i},\psi_{\circ})}{d\psi'}\right] - \EE[\wtl{w}_i\wtl{w}_{1,i}']V_{\theta_\circ}\EE[\wtl w_{1,i}\wtl{w}_i'] \right) \\
  \times \Omega_\circ^{-1} \mathbb{G}_n\left[\varepsilon_i(\theta_\circ)\wtl{w}_i\right]\Big] + o_p(1)\Big) = P\left(0> \log(n)\left[o_p(1) - \frac{d_x}{2}\right] + o_p(1)\right)\rightarrow 1\label{eq:4_correct_specification_comparison}
\end{multline}
\noindent as $n\rightarrow \infty$. This proves the first implication.\\
%%%%
\indent We now prove the reverse implication. Suppose that $P\left(\log m(w_{1:n}|\mathcal{M}_b) > \log m(w_{1:n}|\mathcal{M}_e)\right) \rightarrow 1$. By \eqref{log_ML_base_initial}-\eqref{log_ML_extended_initial}:
\begin{multline}
  P\left(\log m(w_{1:n}|\mathcal{M}_b) > \log m(w_{1:n}|\mathcal{M}_e)\right) = P\Big( \log \pi(\theta_*|\mathcal{M}_b) + \log \wh{q}(w_{1:n}|\theta_*,\mathcal{M}_b)\\
   - \log \pi_{h_{\theta}}^n(0|w_{1:n},\mathcal{M}_b) - \frac{p}{2}\log(n) > \log \pi(\psi_\circ|\mathcal{M}_e) + \log \wh{q}(w_{1:n}|\psi_\circ,\mathcal{M}_e)\\
   - \log \pi_{h_{\psi}}^n(0|w_{1:n},\mathcal{M}_e) - \frac{p+d_x}{2}\log(n)\Big). \label{eq:13:proof:main}
\end{multline}
By using Theorems \ref{lem_likelihood_ET_approx} and \ref{lem_likelihood_ET_approx_extended}, we get:
\begin{multline}
  \log \wh{q}(w_{1:n}|\theta_*,\mathcal{M}_b) - \log \wh{q}(w_{1:n}|\psi_\circ,\mathcal{M}_e) = - \mathcal{A}_n'\Omega_*^{\dagger}(\theta_*)^{-1}\mathcal{B}_n\\
  + \sum_{i=1}^n \log\left(\frac{e^{\lambda_*(\theta_*)' g_b(w_i,\theta_*)}}{\EE_n[e^{\lambda_*(\theta_*)' g_b(w_j,\theta_*)}]}\right) + n(\wh\lambda(\theta_*) - \lambda_*(\theta_*))'\EE[g_b(w_i,\theta_*)]\\
  + \frac{1}{2}\mathbb{G}_n\left[\tau_i^{\dagger}(\lambda_*,\theta_*) g_b(w_i,\theta_*)'\right] \Omega_*^{\dagger}(\theta_*)^{-1} \mathbb{G}_n\left[\tau_i^{\dagger}(\lambda_*,\theta_*) g_b(w_i,\theta_*)\right]\\
  + \frac{1}{2}\mathbb{G}_n\left[g_e(w_i,\psi_\circ)'\right] \Omega_{\psi_\circ}^{-1} \mathbb{G}_n\left[g_e(w_i,\psi_\circ)\right] + o_p(1),\label{eq:12:proof:main:2}
\end{multline}
\noindent where $\mathcal{A}_n := \mathbb{G}_n\left[\tau_i^{\dagger}(\lambda_*,\theta_*) g_b(w_i,\theta_*)'\right] \xrightarrow{d} \mathcal{N}(0,\Omega_*^{\dagger}(\theta_*))$, $\mathcal{B}_n := \mathbb{G}_n\left[g_b(w_i,\theta_*)'\right] \xrightarrow{d} \mathcal{N}(0,\EE[\varepsilon_i(\theta_\circ)]\wtl{w}_i\wtl{w}_i')$ and $\mathbb{G}_n\left[g_e(w_i,\psi_\circ)'\right] \Omega_{\psi_\circ}^{-1} \mathbb{G}_n\left[g_e(w_i,\psi_\circ)\right]\xrightarrow{d} \chi_{d}^2$ and so they are bounded in probability. Moreover, by the Law of Large Numbers
$$\left|\frac{1}{n}\sum_{i=1}^n \log\left(\frac{e^{\lambda_*(\theta_*)'g_b(w_i,\theta_*)}}{\EE_n[e^{\lambda_*(\theta_*)' g_b(w_j,\theta_*)}]}\right) - \mathbf{E}\left[\log(dQ_b^*(\theta_*)/dP)\right]\right| = \mathcal{O}_p(1/\sqrt{n}),$$
\noindent where $\mathbf{E}\left[\log(dQ_b^*(\theta_*)/dP)\right] = - \mathrm{KL}(P||Q_b^*(\theta_*))$, and by Lemma \ref{lem:1} below,
$$\sqrt{n}(\wh\lambda(\theta_*) - \lambda(\theta_*))'\EE[g_b(w_i,\theta_*)] = -\mathbb{G}_n[\tau_i^{\dagger}(\lambda_*,\theta_*)\varepsilon_i(\theta_*)\wtl{w}_i']\Omega_*^{\dagger}(\theta_*)^{-1}\EE[g_b(w_i,\theta_*)] + o_p(1).$$
Therefore,
\begin{multline}
  \log \wh{q}(w_{1:n}|\theta_*,\mathcal{M}_b) - \log \wh{q}(w_{1:n}|\psi_\circ,\mathcal{M}_e) \\
  = \mathcal{O}_p(1) + n\left(\mathcal{O}_p(1/\sqrt{n}) - \mathrm{KL}(P||Q_b^*(\theta_*))\right)  + \sqrt{n}\mathcal{O}_p(1). \label{eq:12:proof:main:3}
\end{multline}

By replacing \eqref{eq:12:proof:main:3} in \eqref{eq:13:proof:main}, and by using Theorems \ref{thm_BvM_misspecified} and \ref{thm_BvM:extended} below to show that $\log \pi_{h_{\theta}}^n(0|w_{1:n},\mathcal{M}_b) = \mathcal{O}_p(1)$ and $\log \pi_{h_{\psi}}^n(0|w_{1:n},\mathcal{M}_e) = \mathcal{O}_p(1)$, the expression in \eqref{eq:13:proof:main} is equal to:
\begin{multline}
  P\left(\log m(w_{1:n}|\mathcal{M}_b) > \log m(w_{1:n}|\mathcal{M}_e)\right)\\
   = P\Big( \mathcal{O}_p(1)\sqrt{n} - n\mathrm{KL}(P||Q_b^*(\theta_*)) > \mathcal{O}_p(1) - \frac{d_x}{2}\log(n)\Big), \label{eq:14:proof:main}
\end{multline}
\noindent where $\mathcal{O}_p(1)\sqrt{n} - n\mathrm{KL}(P||Q_b^*(\theta_*))$ in the left hand side converges to $-\infty$ if $\mathrm{KL}(P||Q_b^*(\theta_*)) >0$ (since the term in $n$ is diverging faster than the term in $\sqrt{n}$) while the term on the right hand side also converges towards $-\infty$. The inequality is then satisfied with probability approaching $1$ only if $\mathrm{KL}(P||Q_b^*(\theta_*)) = 0$. This is equivalent to have $\EE[\varepsilon_i(\theta_\circ) x_i] = 0$ (by the first part of the proof of Theorem \ref{thm_consistency_selection_misspecification_2}) and we have proved the second part of the `if and only if' statement.

\begin{flushright}
$\square$
\end{flushright}

\bibliography{AnnaBib}

%%%%%%%%%%%%%%%%%%%%%%%%%%%%%%%%%%%%%%%%%%%%%%%%%%%%%%%%%%%%
%% Online Appendix
%%%%%%%%%%%%%%%%%%%%%%%%%%%%%%%%%%%%%%%%%%%%%%%%%%%%%%%%%%%%
\newpage

\begin{center}
{\Large \textbf{Online Appendix for}}\\[6pt]
{\Large \textbf{Testing for Endogeneity: A Moment-Based Bayesian Approach}}
\end{center}
\bigskip

\section{Comparison with GMM-based criteria}

\label{App:comparisonGMM}

There are frequentist approaches to the model (or moment) selection that
can be applied in our context. \cite{Andrews1999} develops a class of moment
selection criteria (MSC). Below are some popular criteria that fall into the
class:
\begin{equation}
\begin{split}
\text{GMM-BIC} &= J_{n}(c) -(|c|-p) \ln n \\
\text{GMM-AIC} &= J_{n}(c) - 2(|c|-p) \\
\text{GMM-HQIC} &= J_{n}(c) -2.01 (|c|-p) \ln \ln n,
\end{split}%
\end{equation}
where $c$ is a moment selection vector, $|c|$ is the number of moment
conditions selected by $c$, $p$ is the number of parameters to be estimated, and
$J_{n}(c)$ is the $J$ test statistic for overidentifying restrictions
constructed using $c$ with the optimal weighting matrix. Similar to the
traditional BIC, these criteria penalize model complexity based on the
number of parameters and the number of restrictions imposed. The model
complexity increases when the number of parameters increases or the number
of restrictions decreases. This idea was extended by \cite%
{HongPrestonShum2003} to GEL estimation.

We revisit the simulation exercise, originally presented in the main
text (Table \ref{tab:tab1_app}), and now report results for other
frequentist methods: GMM-BIC, GMM-AIC, and GMM-HQIC. From the table, we make the following observations. First, all methods exhibit model selection consistency,
meaning the probability of selecting the true model approaches one as the
number of observations increases. Second, our approach has stronger
discriminatory power when $\rho$ is close to zero compared to GMM-BIC.
Third, GMM-AIC and GMM-HQIC select the right model more often when $\rho$ is
not zero (no endogeneity). However, they seem to over-select the model with
endogeneity when there is no presence of endogeneity. In summary, under the
data-generating process considered in this example, our BETEL-based model
selection performs better than other alternatives, especially in a finite
sample.

% Table generated by Excel2LaTeX from sheet 'Sheet1'
\begin{table}[htbp]
\caption{Table \ref{tab:sim_MegreaterthanMb} with other frequentist approaches. Model selection
frequencies from 100 replications of data simulated from the design in
Section \ref{sec:numerical_illustration}. For each combination of $n$ and $\mathrm{Cov}(\protect\varepsilon %
,u)=\protect\rho $, the entries give the number of times in 100 replications
of the data that the log-marginal likelihood of $\mathcal{M}_{e}$ exceeds
the log-marginal likelihood of $\mathcal{M}_{b}$. The numbers for BETEL are
slightly different from those reported in the main text because they are
based on different sets of simulated data, i.e., the random number seed is
different. }
\label{tab:tab1_app}\centering
\begin{tabular}{rccccccccccc}
    \toprule
    \rowcolor[rgb]{ .949,  .949,  .949} \multicolumn{1}{l}{BETEL} & -0.5  & -0.4  & -0.3  & -0.2  & -0.1  & 0     & 0.1   & 0.2   & 0.3   & 0.4   & 0.5 \\
    \midrule
    250   & 99    & 96    & 82    & 48    & 12    & 2     & 18    & 54    & 93    & 100   & 100 \\
    500   & 100   & 100   & 98    & 76    & 17    & 1     & 29    & 87    & 99    & 100   & 100 \\
    1000  & 100   & 100   & 100   & 96    & 46    & 1     & 46    & 100   & 100   & 100   & 100 \\
    2000  & 100   & 100   & 100   & 100   & 80    & 1     & 70    & 100   & 100   & 100   & 100 \\
    \midrule
    \rowcolor[rgb]{ .949,  .949,  .949} \multicolumn{1}{l}{GMM-BIC} & -0.5  & -0.4  & -0.3  & -0.2  & -0.1  & 0     & 0.1   & 0.2   & 0.3   & 0.4   & 0.5 \\
    \midrule
    250   & 100   & 97    & 77    & 35    & 7     & 3     & 11    & 40    & 84    & 99    & 100 \\
    500   & 100   & 100   & 96    & 72    & 8     & 1     & 16    & 74    & 99    & 100   & 100 \\
    1000  & 100   & 100   & 100   & 92    & 29    & 1     & 25    & 99    & 100   & 100   & 100 \\
    2000  & 100   & 100   & 100   & 99    & 63    & 1     & 47    & 100   & 100   & 100   & 100 \\
    \midrule
    \rowcolor[rgb]{ .949,  .949,  .949} \multicolumn{1}{l}{GMM-AIC} & -0.5  & -0.4  & -0.3  & -0.2  & -0.1  & 0     & 0.1   & 0.2   & 0.3   & 0.4   & 0.5 \\
    \midrule
    250   & 100   & 100   & 96    & 74    & 28    & 15    & 37    & 79    & 98    & 100   & 100 \\
    500   & 100   & 100   & 100   & 94    & 46    & 11    & 60    & 98    & 100   & 100   & 100 \\
    1000  & 100   & 100   & 100   & 99    & 71    & 11    & 76    & 100   & 100   & 100   & 100 \\
    2000  & 100   & 100   & 100   & 100   & 95    & 12    & 94    & 100   & 100   & 100   & 100 \\
    \midrule
    \rowcolor[rgb]{ .949,  .949,  .949} \multicolumn{1}{l}{GMM-HQIC} & -0.5  & -0.4  & -0.3  & -0.2  & -0.1  & 0     & 0.1   & 0.2   & 0.3   & 0.4   & 0.5 \\
    \midrule
    250   & 100   & 99    & 85    & 54    & 17    & 5     & 20    & 62    & 94    & 99    & 100 \\
    500   & 100   & 100   & 100   & 82    & 23    & 3     & 31    & 89    & 100   & 100   & 100 \\
    1000  & 100   & 100   & 100   & 98    & 54    & 2     & 56    & 100   & 100   & 100   & 100 \\
    2000  & 100   & 100   & 100   & 100   & 79    & 1     & 73    & 100   & 100   & 100   & 100 \\
    \bottomrule
    \end{tabular}
\end{table}

\newpage

\section{Additional Monte Carlo illustrations}\label{sec:additional-mc}

\subsection{Monte Carlo illustration 1: two potentially endogenous regressors}\label{sec:mc-illustration-1}
We now illustrate the above logic using a Monte Carlo design with two potentially endogenous regressors, where only \(x_1\) is endogenous in the DGP. Data are generated as follows:
\[
y_i = \beta_1 x_{1i} + \beta_2 x_{2i} + \gamma_0 + \gamma_1 z_{1i} + \varepsilon_i,
\]
with \((\beta_1,\beta_2,\gamma_0,\gamma_1)=(1.0,0.8,1.0,0.6)\), sample size \(n=500\), and true endogeneity parameters
\((v_1,v_2)=\left(\operatorname{Cov}(x_{1i},\varepsilon_i),\operatorname{Cov}(x_{2i},\varepsilon_i)\right)=(0.5,0)\).

We consider four models that differ by which of \(x_1\) and \(x_2\) are treated as exogenous or endogenous; Table~\ref{tab:mc1} lists them and their log marginal likelihoods. The entry \(v_i\) corresponds to the moment condition \(\mathbf{E}[\varepsilon(\theta)x_i]=0\): setting \(v_i=0\) imposes that moment (so \(x_i\) is exogenous), while leaving \(v_i\) free relaxes it (so \(x_i\) is endogenous). All four models use the same moment vector based on
\(Z_i=(x_{1i},x_{2i},1,z_{1i},z_{2ai},z_{2bi})'\), with restrictions imposed only through which components of \(v\) are set to zero.

More explicitly, let \(\theta=(\beta_1,\beta_2,\gamma_0,\gamma_1)\) and define the regression residual
\[
\varepsilon_i(\theta) = y_i - \beta_1 x_{1i} - \beta_2 x_{2i} - \gamma_0 - \gamma_1 z_{1i}.
\]
The moment vector is
\[
\mathbf{E}\!\left[\varepsilon_i(\theta) Z_i\right]
=
\begin{pmatrix}
\mathbf{E}[\varepsilon_i(\theta)x_{1i}]\\
\mathbf{E}[\varepsilon_i(\theta)x_{2i}]\\
\mathbf{E}[\varepsilon_i(\theta)]\\
\mathbf{E}[\varepsilon_i(\theta)z_{1i}]\\
\mathbf{E}[\varepsilon_i(\theta)z_{2ai}]\\
\mathbf{E}[\varepsilon_i(\theta)z_{2bi}]
\end{pmatrix}
=
\begin{pmatrix}
v_1\\
v_2\\
0\\
0\\
0\\
0
\end{pmatrix}.
\]
Model \(\mathcal{M}_b\) imposes \(v_1=v_2=0\) (both orthogonality conditions \(\mathbf{E}[\varepsilon_i(\theta)x_{1i}]=0\) and \(\mathbf{E}[\varepsilon_i(\theta)x_{2i}]=0\)), model \(\mathcal{M}_{e_2}\) imposes only \(v_2=0\) (relaxing the \(x_1\) moment), model \(\mathcal{M}_{e_1}\) imposes only \(v_1=0\) (relaxing the \(x_2\) moment), and model \(\mathcal{M}_e\) leaves \((v_1,v_2)\) unrestricted (relaxing both moments).

\begin{table}[H]
\centering
\caption{Monte Carlo 1: log marginal likelihoods}
\label{tab:mc1}
\begin{tabular}{llcc}
\hline
Model & \(v\) set to zero & Endogeneity status & \(\log p(y\mid\mathcal{M})\) \\
\hline
\(\mathcal{M}_b\) & \(v_1=0\), \(v_2=0\) & \(x_1\) exo, \(x_2\) exo & \(-3147.5009\) \\
\(\mathcal{M}_{e_2}\) & \(v_2=0\) & \(x_1\) endo, \(x_2\) exo & \(-3121.9751\) \\
\(\mathcal{M}_{e_1}\) & \(v_1=0\) & \(x_1\) exo, \(x_2\) endo & \(-3147.6291\) \\
\(\mathcal{M}_e\) & none & \(x_1\) endo, \(x_2\) endo & \(-3122.7344\) \\
\hline
\end{tabular}
\end{table}

The largest marginal likelihood is attained by \(\mathcal{M}_{e_2}\), which matches the true DGP (\(x_1\) endogenous, \(x_2\) exogenous). The model \(\mathcal{M}_e\) is second-best but slightly lower, while \(\mathcal{M}_b\) and \(\mathcal{M}_{e_1}\) are much less supported. The ranking \(\mathcal{M}_{e_2} \succ \mathcal{M}_e \gg \{\mathcal{M}_b,\mathcal{M}_{e_1}\}\) indicates strong evidence that \(x_1\) should be treated as endogenous, and weaker support for treating \(x_2\) as endogenous.

\subsection{Monte Carlo illustration 2: endogeneity and model specification}\label{sec:mc-illustration-2}
Next, we consider a Monte Carlo design to jointly test for endogeneity and functional form.
Data are generated as follows
\[
y_i = \gamma_0 + \beta_1 x_i + \beta_2 x_i^2 + \gamma_1 z_{1i} + \varepsilon_i,
\]
with \((\gamma_0,\beta_1,\beta_2,\gamma_1)=(0.5,1.0,1.0,0.8)\), \(n=500\), and \(\operatorname{Cov}(x_i,\varepsilon_i)=0.5\).
We consider four models that vary in functional form (linear vs.\ quadratic) and in whether the orthogonality conditions involving \(x\) and \(x^2\) are imposed or relaxed; Table~\ref{tab:mc2} lists them and their log marginal likelihoods. The linear specification imposes \(\beta_2=0\); the quadratic leaves \(\beta_2\) free. All models use the common instrument vector
\(Z_i=(x_i,x_i^2,1,z_{1i},z_{2i},z_{2i}^2)'\), so differences in marginal likelihoods reflect only
(i) linear versus quadratic (i.e., whether \(x^2\) enters the regression), and (ii) which orthogonality restrictions are imposed.

Let \(\varepsilon_i^L(\theta_L)=y_i-\gamma_0-\beta_1 x_i-\gamma_1 z_{1i}\) denote the linear-specification residual and \(\varepsilon_i^Q(\theta_Q)=y_i-\gamma_0-\beta_1 x_i-\beta_2 x_i^2-\gamma_1 z_{1i}\) the quadratic-specification residual. The corresponding moment vector is always built from \(Z_i\). In exogenous models, all moments are set to zero:
\(\mathbf{E}[\varepsilon_i(\theta)Z_i]=0\). In endogenous models, the moments involving \(x\) and \(x^2\) are relaxed (regardless of whether the regression includes \(x^2\)):
\[
\mathbf{E}\!\left[\varepsilon_i(\theta) Z_i\right]
=
\begin{pmatrix}
v_x\\
v_{x^2}\\
0\\
0\\
0\\
0
\end{pmatrix},
\]
where \(v_x=0\) (resp.\ \(v_{x^2}=0\)) imposes the moment \(\mathbf{E}[\varepsilon_i(\theta)x_i]=0\) (resp.\ \(\mathbf{E}[\varepsilon_i(\theta)x_i^2]=0\)).
Thus, both \(\mathcal{M}_{L,e}\) and \(\mathcal{M}_{Q,e}\) relax the \(x\) and \(x^2\) orthogonality conditions (allowing \(v_x\neq 0\) and \(v_{x^2}\neq 0\)).

\begin{table}[H]
\centering
\caption{Monte Carlo 2: log marginal likelihoods}
\label{tab:mc2}
\begin{tabular}{lllcc}
\hline
Model & Relaxed moments & \(\beta_2\) set to zero & Specification & \(\log p(y\mid\mathcal{M})\) \\
\hline
\(\mathcal{M}_{L,b}\) & none & yes & linear, exogenous & \(-7426.6056\) \\
\(\mathcal{M}_{Q,b}\) & none & no & quadratic, exogenous & \(-3146.1535\) \\
\(\mathcal{M}_{L,e}\) & \(x,x^2\) & yes & linear, endogenous & \(-3138.3612\) \\
\(\mathcal{M}_{Q,e}\) & \(x,x^2\) & no & quadratic, endogenous & \(-3123.3361\) \\
\hline
\end{tabular}
\end{table}

Again, the highest marginal likelihood is assigned to the true model \(\mathcal{M}_{Q,e}\). The next-best model is \(\mathcal{M}_{L,e}\), followed by \(\mathcal{M}_{Q,b}\), with \(\mathcal{M}_{L,b}\) far behind. This ranking shows that accounting for endogeneity is crucial in both specifications, and that allowing for the quadratic term further improves fit.

\section{Computation of the ETEL Posterior and Marginal Likelihood}\label{sec:mcmc-ml-computation}

For the base model $\mathcal{M}_b$, the parameter is $\theta\in\Theta$. For the extended model $\mathcal{M}_e$, the parameter is $\psi:=(\theta,v)\in\Psi$. Let $H_n$ denote the ETEL feasibility set for the model at hand. The ETEL posterior for the base model is
\begin{equation}
\pi^n(\theta\mid w_{1:n},\mathcal{M}_b)
\propto
\pi_b(\theta)\,\wh q(w_{1:n}\mid\theta,\mathcal{M}_b)\,I[\theta\in H_n],
\label{eq:posterior-truncated-base}
\end{equation}
where $\int \pi_{b}(\theta) d \theta = 1$. The extended-model posterior $\pi^n(\psi\mid w_{1:n},\mathcal{M}_e)$ is defined analogously with $\psi=(\theta,v)$.

\subsection{Tailored Single-Block Metropolis--Hastings}\label{sec:tailored-mh}

We sample from \eqref{eq:posterior-truncated-base} using a tailored one-block Metropolis--Hastings (M-H) algorithm:
\begin{enumerate}
\item Compute the posterior mode $\wh\theta$ by numerical optimization of
\[
\log \pi_b(\theta)+\log \wh q(w_{1:n}\mid\theta,\mathcal{M}_b),
\]
subject to $\theta\in H_n$.
\item Compute the inverse Hessian at the mode,
\[
\wh V
:=
\left[-\nabla_{\theta}^{2}\log \pi^n(\theta\mid w_{1:n},\mathcal{M}_b)\right]^{-1}_{\theta=\wh\theta},
\]
and use it as the scale matrix of the proposal.
\item Use an independence proposal with multivariate Student-$t$ density
\[
q(\theta')
=
 t_{\kappa}\!\left(\theta';\,\wh\theta,\wh V\right).
\]
\item If $\theta'\notin H_n$, reject immediately.
\item If $\theta'\in H_n$, accept with probability
\[
\alpha(\theta,\theta')
=
\min\!\left\{
1,\,
\frac{\wh q(w_{1:n}\mid\theta',\mathcal{M}_b)\,\pi_b(\theta')\,q(\theta)}
{\wh q(w_{1:n}\mid\theta,\mathcal{M}_b)\,\pi_b(\theta)\,q(\theta')}
\right\}.
\]
\end{enumerate}

The feasibility indicator in \eqref{eq:posterior-truncated-base} is enforced by direct rejection of proposals outside $H_n$. Equivalently, Step~4 can be implemented implicitly by defining the ETEL likelihood to be zero outside $H_n$, so that the acceptance probability vanishes for $\theta'\notin H_n$. The $\psi=(\theta,v)$ case is defined in the same way, with $\theta$ replaced by $\psi$, $\pi_b$ by $\pi_e$, and $\mathcal{M}_b$ by $\mathcal{M}_e$.

\subsection{Log Marginal Likelihood Calculation}\label{sec:ml-computation}

The ETEL marginal likelihood for the base model is
\begin{equation}
 m(w_{1:n}\mid\mathcal{M}_b)
=
\int_{H_n}
\wh q(w_{1:n}\mid\theta,\mathcal{M}_b)\,
\pi_b(\theta)\,
 d\theta
=
\int_{\Theta}
\wh q(w_{1:n}\mid\theta,\mathcal{M}_b)\,
\pi_b(\theta)\,
I[\theta\in H_n]\, d\theta.
\label{eq:ml-with-indicator-base}
\end{equation}
For any fixed $\theta^*\in H_n$, Chib's identity gives
\begin{equation}
\log m(w_{1:n}\mid\mathcal{M}_b)
=
\log \wh q(w_{1:n}\mid\theta^*,\mathcal{M}_b)
+\log \pi_b(\theta^*)
-\log \pi^n(\theta^*\mid w_{1:n},\mathcal{M}_b).
\label{eq:log-ml-chib-base}
\end{equation}
We calculate the marginal likelihood based on this identity using the output from the M-H sampler and the posterior mode as $\theta^{*}$ \citep{ChibJeliazkov2001}.

\subsection{Alternative Prior Specification}

Unlike in the main text, some researchers may prefer to work directly with the prior restricted to the feasible set. This specification can be achieved by attaching the feasibility indicator function to the prior. Although this approach makes the prior explicit in the sense that we place prior mass on $\theta$ only where the likelihood is well-defined, one has to accept that the support of the prior for $\theta$ is sample-dependent through $H_n$.

For this prior specification, the MCMC algorithm is largely unchanged, except for the marginal likelihood calculation, which is modified as follows. Define the prior mass on the feasibility set and the normalized prior by
\begin{equation}
\begin{split}
 p_{H_n}(\mathcal{M}_b)
& :=
\int_{\Theta}\pi_b(\theta)\,I[\theta\in H_n] \, d\theta
\label{eq:ph-def-base}
\\
\pi_{b,H_n}(\theta)
&:=
\frac{\pi_b(\theta)\,I[\theta\in H_n]}{p_{H_n}(\mathcal{M}_b)}. \\
\end{split}
\end{equation}
Under this prior specification, the marginal likelihood is
\[
m_{H_n}(w_{1:n}\mid\mathcal{M}_b)
:=
\int_{H_n} \wh q(w_{1:n}\mid\theta,\mathcal{M}_b)\,\pi_{b,H_n}(\theta)\,d\theta,
\]
so that $m(w_{1:n}\mid\mathcal{M}_b) = p_{H_n}(\mathcal{M}_b)\,m_{H_n}(w_{1:n}\mid\mathcal{M}_b)$ and
\begin{equation}
\log m_{H_n}(w_{1:n}\mid\mathcal{M}_b)
=
\log m(w_{1:n}\mid\mathcal{M}_b)-\log p_H(\mathcal{M}_b).
\label{eq:log-ml-decomp-base}
\end{equation}
Thus $p_{H_n}(\mathcal{M}_b)$ appears only when the prior is normalized on $H_n$; it is a constant shift relative to the unrestricted-prior definition in \eqref{eq:log-ml-chib-base}.

We can estimate $p_{H_n}(\mathcal{M}_b)$ by prior simulation: draw $\{\tilde\theta^{(b)}\}_{b=1}^B$ i.i.d.\ from $\pi_b(\theta)$ and compute
\[
\wh p_{H_n}(\mathcal{M}_b)
:=
\frac{1}{B}\sum_{b=1}^B I[\tilde\theta^{(b)}\in H_n].
\]

\section{Proofs of the main results}\label{sec:proofs-main-results}

The following notation will be used in the proofs of this section. Further notation will be introduced in section \ref{ss:notation:Appendix_C} and will be used in the proofs there. When we omit $y_i$ from the vector of the $i$-th observation we use the notation $\wtl w_{i}:=(x_i',z_i')'$, and when in addition we omit $z_{2,i}$ we use the notation $\wtl w_{1,i}:=(x_i',z_{1,i}')'$. We use the notation $\EE_n[\cdot]:=\frac{1}{n}\sum_{i=1}^n[\cdot]$ for the empirical mean. For a probability $Q$ we use the notation $\EE^{Q}[\cdot]$ to denote the expectation with respect to $Q$ and $\mathbb{V}ar_Q$ the variance with respect to $Q$. For the true distribution $P$: $\EE[\cdot] := \EE^P[\cdot]$. We use standard notation in empirical process theory: $\mathbb{P}_n: = \EE_n[\delta_{w_i}]$ where $\delta_{x}$ is the Dirac measure at $x$, and $\mathbb{G}_n g:= \sqrt{n}(\mathbb{P}_n f - \EE f)$ for every function $f$.\\
\indent For a function $\lambda(\theta)$ of $\theta$, define $\tau_i(\lambda,\theta):= \frac{e^{\lambda(\theta)'g_i(\theta)}}{\EE_n[ e^{\lambda(\theta)'g_j(\theta)}]}$, $\tau_i^{\dagger}(\lambda,\theta):= \frac{e^{\lambda(\theta)'g_i(\theta)}}{\EE[ e^{\lambda(\theta)'g_j(\theta)}]}$, $\tau_i^{\diamond}(\lambda,\theta) := e^{\lambda(\theta)'g_i(\theta)}$, so that $\tau_i(\wh\lambda,\theta) = n \wh{p}_i(\theta)$ and $\tau_i^{\dagger}(\lambda_*,\theta) = dQ^*(\theta)/dP$. We also use the notation: $\Omega_*^{\diamond}(\lambda, \theta) := \EE[\tau_i^{\diamond}(\lambda,\theta)\varepsilon_i(\theta)\wtl{w}_i \wtl{w}_i']$, $\Omega_*^{\diamond}(\theta) := \Omega_*^{\diamond}(\lambda_*, \theta)$, $\Omega_*^{\dagger}(\theta) := \EE[\tau_i^{\dagger}(\lambda_*,\theta)\varepsilon_i(\theta)\wtl{w}_i \wtl{w}_i'] = \EE^{Q^*(\theta)}[\varepsilon_i(\theta)\wtl{w}_i \wtl{w}_i']$, and $\Omega_\circ := \EE[\varepsilon_i(\theta_\circ)^2\wtl{w}_i \wtl{w}_i']$. Moreover, $\Omega_* \equiv \Omega_*^{\diamond}(\theta_\circ)$.

% Proofs_manuscript.tex is already included in the main paper's Appendix above
%\input{Proofs_manuscript}

%\section{Appendix}
\section{Proofs of useful auxiliary results}
\subsection{Notation}\label{ss:notation:Appendix_C}
In the following we suppress the subindexes ``b'' and ``e'' in the moment function and simply write $g(w,\theta)$ for both the base and the extended model. We denote by $g_i(\theta) := g(w_i,\theta)$ the moment function evaluated at $w_i$ and by $\wh\lambda(\theta):=\wh\lambda(w_{1:n},\theta)$ the tilting parameter. Moreover, $\varepsilon_i := \varepsilon_i(\theta_\circ)$ denotes the model error term evaluated at the true parameter. When we omit $y_i$ from the vector of the $i$-th observation we use the notation $\wtl w_{i}:=(x_i',z_i')'$, and when in addition we omit $z_{2,i}$ we use the notation $\wtl w_{1,i}:=(x_i',z_{1,i}')'$. We denote $\wtl{v}:=(v',0)$ the vector that contains the $d_x$-auxiliary parameter and a $d_z$-vector of zeros. For every $\theta\in\mathbb{R}^p$ let $h := \sqrt{n}(\theta - \theta_*)$ and denote by $\pi_h^n(\cdot|w_{1:n})$ its posterior distribution.\\
\indent We use the notation $\EE_n[\cdot]:=\frac{1}{n}\sum_{i=1}^n[\cdot]$ for the empirical mean. Moreover, $\wh g(\theta) := \EE_n[g_i(\theta)]$, $dg_i(\theta)/d\theta' = - \wtl w_{i} \wtl w_{1,i}'$. For a probability $Q$ we use the notation $\EE^{Q}[\cdot]$ to denote the expectation with respect to $Q$ and $\mathbb{V}ar_Q$ the variance with respect to $Q$. For the true distribution $P$: $\EE[\cdot] := \EE^P[\cdot]$. The log-likelihood function for one observation $w_i$ is denoted by $\ell_{n,\theta}(w_i)$:
$$\ell_{n,\theta}(w_i) := \log \wh{q}_i(\theta) = \log \frac{e^{\wh{\lambda}(\theta)'g(w_i,\theta)}}{\sum_{k=1}^n e^{\wh{\lambda}(\theta)'g_k(\theta)}}
= -\log n + \log \frac{e^{\wh \lambda(\theta)'g(w_i,\theta)}}{\frac{1}{n}\sum_{k=1}^n e^{\wh\lambda(\theta)'g_k(\theta)}}$$
so that the log-ETEL function is $\ell_{n,\theta}(w_{1:n}) = \sum_{i=1}^n \ell_{n,\theta}(w_i)  = \log \prod_{i=1}^n \wh{q}_i(\theta) = \log \wh q(w_{1:n}|\theta)$.

By replacing $\wh \lambda(\theta)$ with its true value $\lambda_*(\theta)$ we define:
$$\ell_{*,\theta}(w) := \log \frac{e^{\lambda_*(\theta)'g(w,\theta)}}{\sum_{k=1}^n e^{\lambda_*(\theta)'g_k(\theta)}} =: \log p_w^*(\theta) \qquad \textrm{ and } \qquad \ell_{*,\theta}(w_{1:n}) := \sum_{i=1}^n\ell_{*,\theta}(w_i).$$
%%%
The first (resp. second) derivative of $\theta\mapsto \ell_{n,\theta}(w_{1:n})$ evaluated at a point $\theta_1$ is denoted by $\dot{\ell}_{n,\theta_1}(w_{1:n})$ (resp. $\ddot{\ell}_{n,\theta_1}(w_{1:n})$). Moreover, for a function $\lambda(\theta)$ of $\theta$, define $\tau_i(\lambda,\theta):= \frac{e^{\lambda(\theta)'g_i(\theta)}}{\EE_n[ e^{\lambda(\theta)'g_j(\theta)}]}$, $\tau_i^{\dagger}(\lambda,\theta):= \frac{e^{\lambda(\theta)'g_i(\theta)}}{\EE[ e^{\lambda(\theta)'g_j(\theta)}]}$, $\tau_i^{\diamond}(\lambda,\theta) := e^{\lambda(\theta)'g_i(\theta)}$. So, $\tau_i(\wh\lambda,\theta) = n \wh{p}_i(\theta)$ and $\tau_i^{\dagger}(\lambda_*,\theta) = dQ^*(\theta)/dP$.

%%%
\indent We also use the notation: $\check{\Omega}(\lambda,\theta) := \EE_n[\tau_i(\lambda,\theta)\varepsilon_i(\theta)^2\wtl{w}_i\wtl{w}_i']$, $\check{\Omega}^{\diamond}(\lambda,\theta) := \EE_n[\tau_i^{\diamond}(\lambda,\theta) \varepsilon_i(\theta)^2 \wtl{w}_i \wtl{w}_i']$, $\check{\Omega}^{\dagger}(\lambda,\theta) := \EE_n[\tau_i^{\dagger}(\lambda,\theta) \varepsilon_i(\theta)^2 \wtl{w}_i \wtl{w}_i']$, $\Omega_*^{\diamond}(\lambda, \theta) := \EE[\tau_i^{\diamond}(\lambda,\theta)\varepsilon_i(\theta)^2\wtl{w}_i \wtl{w}_i']$, $\Omega_*^{\diamond}(\theta) := \Omega_*^{\diamond}(\lambda_*, \theta)$ and $\Omega_*^{\dagger}(\theta) := \EE[\tau_i^{\dagger}(\lambda_*,\theta)\varepsilon_i(\theta)^2\wtl{w}_i \wtl{w}_i'] = \EE^{Q^*(\theta)}[\varepsilon_i(\theta)^2\wtl{w}_i \wtl{w}_i']$. Moreover, $\Omega_* \equiv \Omega_*^{\diamond}(\theta_\circ)$ and $\Omega_\circ := \EE[\varepsilon_i(\theta_\circ)^2\wtl{w}_i \wtl{w}_i']$. For the extended model we use the notation $\Omega_{\psi_0} := \EE[g_e(w_i,\psi_\circ)g_e(w_i,\psi_\circ)'] = \EE[\varepsilon_i^2 \wtl w_i \wtl w_i'] - \wtl{v}_\circ\wtl{v}_\circ'$.\\
%%%
\indent Let $\|\cdot\|_2$ denotes the Euclidean norm and $\|\cdot\|_{op}$ the operator norm. We use the notation $C$ for a generic positive constant. We denote by $B_{r_n}(a)$ a closed ball centered on a vector $a$ with radius $r_n$: $B_{r_n}(a) := \{b; \|b-a\|_2\leq C r_n\}$. We denote $\Theta_{n}:=\{\Vert \theta -\theta _{\ast }\Vert \leq M_{n}/\sqrt{n}\}$. We denote by $L_2(P)$ the space of square integrable functions with respect to $P$ and by $\|\cdot\|_{P,2}$ the norm in the $L_2(P)$ space. The Total Variation norm is denoted by $\|\cdot\|_{TV}$. Finally, let CS, T, J, MVT, CMT and DCT refer to the Cauchy-Schwartz, triangular, Jensen's inequalities, Mean Value Theorem, continuous mapping theorem and dominated convergence theorem, respectively.\\
%%%
\indent For a set $\mathcal{A}\subset\mathbb{R}^m$, we denote by $int(\mathcal{A})$ its interior relative to $\mathbb{R}^m$. We use standard notation in empirical process theory: $\mathbb{P}_n: = \EE_n[\delta_{w_i}]$ where $\delta_{x}$ is the Dirac measure at $x$, and $\mathbb{G}_n g:= \sqrt{n}(\mathbb{P}_n f - \EE f)$ for every function $f$. Further notations are introduced as required. \\
\indent Moreover, we make use of the following identities that are derived in Lemmas \ref{lem:continuity:sample} and \ref{lem:continuity} below under some assumptions: $\forall\theta\in B_{*,n}$,
\begin{multline}\label{App_derivative_lambda_sample}
  \frac{d\wh\lambda(\theta)'}{d\theta} = - \EE_n \left[\tau_i(\wh\lambda,\theta)\frac{d g_i(\theta)'}{d\theta}(I + \wh\lambda(\theta) g_i(\theta)')\right]\check\Omega(\wh\lambda,\theta)^{-1}\\
  = \EE_n \left[\tau_i^{\diamond}(\wh\lambda,\theta) \wtl w_{1,i}\wtl w_i' (I + \wh\lambda(\theta) g_i(\theta)')\right]\check\Omega^{\diamond}(\wh\lambda,\theta)^{-1},
\end{multline}
\noindent and
\begin{multline}\label{App_BvM_Lem_eq_7_miss:paper}
  \frac{d\lambda_*(\theta)'}{d\theta} = \EE^{Q^*(\theta)} \left[\wtl w_{1,i}\wtl w_i'(I + \lambda_*(\theta) g_i(\theta)')\right]\Omega_*^{\dagger}(\theta)^{-1}\\
  =  \EE \left[e^{\lambda_{*}(\theta)'g_i(\theta)}\wtl w_{1,i}\wtl w_i'(I + \lambda_*(\theta) g_i(\theta)')\right]\left(\EE[e^{\lambda_{*}(\theta)'g_i(\theta)}\varepsilon_i(\theta)^2\wtl w_{i}\wtl w_{i}']\right)^{-1},
\end{multline}
\noindent respectively, where $\frac{d g_i(\theta)'}{d\theta} = w_{1,i}\wtl w_i'$.

Finally, under Assumption \ref{Ass_absolute_continuity} the first order condition for $\theta_{*}$ is
\begin{equation}\label{FOC_theta_circ}
  \frac{d \lambda_{*}(\theta_{*})'}{d\theta} \EE[g_i(\theta_{*})]  - \EE[\wtl{w}_{1i}\wtl{w}_i']\lambda_{*}(\theta_{*}) + \EE[\tau_i^{\dagger}(\lambda_*,\theta_{*})\wtl{w}_{1i}\wtl{w}_i']\lambda_*(\theta_{*}) = 0,
\end{equation}
\noindent and $\EE[\tau_i(\lambda_*,\theta_{*}) \varepsilon_i(\theta_*) \wtl{w}_i] = 0$ since it is the first order condition for $\lambda_{*}$.\\

\indent Under Assumptions \ref{Ass_absolute_continuity} and \ref{ass:feasibility}, the first derivative of $\ell_{*,\theta}(w_{1:n})$ with respect to $\theta$, denoted by $\dot\ell_{*,\theta_{*}}(w_{1:n})$, is given by: $\forall \theta \in B_{*,n}$:
\begin{displaymath}
  \frac{1}{n}\dot\ell_{*,\theta}(w_{1:n}) = \frac{d\lambda_*(\theta)'}{d\theta}\EE_n[(1 - \tau_i(\lambda_*,\theta))g_i(\theta)]
  + \EE_n\left[(1 - \tau_i(\lambda_*,\theta)) \wtl w_{1,i} \wtl w_{i}'\right]'\lambda_*(\theta)
\end{displaymath}
\noindent with probability approaching $1$ as $n\rightarrow \infty$.

\subsection{Stochastic Local Asymptotic Normality (LAN) for the base and the extended models}
%=============================================================================================================================================================================
We now provide three theorems that establish stochastic LAN for the base and the extended model. We provide below each theorem the corresponding proof. Proofs are novel and substantially differ from proofs of similar results in \cite{ChibShinSimoni2018}.
\begin{thm}[Stochastic LAN in the base model.]\label{thm:stochasticLAN:endogeneity}
    Let $V_{\theta_*}$ be a positive definite matrix whose expression is given in the statement of Theorem \ref{thm_BvM_misspecified}. Let Assumptions \ref{Ass_absolute_continuity} - \ref{Ass_3} hold. For every $\theta\in\mathbb{R}^p$ let $h := \sqrt{n}(\theta - \theta_*)$. Then, for every closed ball $K_n\subset \mathbb{R}^p$ centred on zero with radius $M_n\rightarrow\infty$ such that $M_n=o(\sqrt{n})$,
\begin{equation}\label{eq_stochastic_LAN_neglected_endogeneity}
  \sup_{h\in K_n}\left|\log \frac{\wh q(w_{1:n}|\theta_* + h/\sqrt{n})}{\wh q(w_{1:n}|\theta_*)} - h'V_{\theta_*}^{-1}\Delta_{n,\theta_*} + \frac{1}{2}h'V_{\theta_*}^{-1}h\right|\overset{p}{\to} 0\qquad \textrm{as }n\rightarrow \infty,
\end{equation}
\noindent where $\theta_*$ is as defined in \eqref{eq:thetastar}, $h'V_{\theta_*}^{-1}\Delta_{n,\theta_*} := \frac{h'}{\sqrt{n}}\dot\ell_{n,\theta_{*}}(w_{1:n}) \xrightarrow{d}\mathcal{N}(0,h'H_*h)$ is bounded in probability and $H_*$ is a positive definite matrix defined in Lemma \ref{lem:asymp:normality:FOC} below.
\end{thm}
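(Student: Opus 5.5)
The plan is a second-order mean value expansion of $\theta\mapsto \ell_{n,\theta}(w_{1:n})=\log\wh q(w_{1:n}|\theta)$ around $\theta_*$, along the segment $\theta_*+t h/\sqrt n$, $t\in[0,1]$. By Assumption \ref{ass:feasibility} this segment lies in $B_{*,n}$, so it is contained in $H_n$ with probability approaching one. On that event $\wh\lambda(\theta)$ is differentiable, with derivative given by \eqref{App_derivative_lambda_sample}. We then write
\[
\log\frac{\wh q(w_{1:n}|\theta_*+h/\sqrt n)}{\wh q(w_{1:n}|\theta_*)}=\frac{h'}{\sqrt n}\dot\ell_{n,\theta_*}(w_{1:n})+\frac{1}{2}\frac{h'}{ n}\ddot\ell_{n,\bar\theta}(w_{1:n})h ,
\]
where $\bar\theta$ lies on the segment. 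The first term is $h'V_{\theta_*}^{-1}\Delta_{n,\theta_*}$ by definition. It therefore suffices to prove two things. First, $n^{-1/2}\dot\ell_{n,\theta_*}(w_{1:n})$ is asymptotically $\mathcal N(0,H_*)$, which we take from Lemma \ref{lem:asymp:normality:FOC}. Second, $\sup_{h\in K_n}|h'(n^{-1}\ddot\ell_{n,\bar\theta}+V_{\theta_*}^{-1})h|\to_p0$.

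For the score, I would use the dual form. By the envelope property, the $\lambda$-derivative of $\sum_i\lambda'g_i-n\log\EE_n e^{\lambda'g_i}$ vanishes at $\wh\lambda$ because of \eqref{eq:etel_sample_foc}. Hence
\[
\dot\ell_{n,\theta}=\frac{d\wh\lambda'}{d\theta}\sum_i g_i+\sum_i\frac{dg_i'}{d\theta}\wh\lambda-n\EE_n[\tau_i(\wh\lambda,\theta)\,dg_i'/d\theta]\wh\lambda .
\]
Here $dg_i/d\theta'=-\wtl w_i\wtl w_{1,i}'$ does not depend on $\theta$, which is where linearity helps. At $\theta_*$ we then substitute the linearisation of $\wh\lambda(\theta_*)-\lambda_*(\theta_*)$ from Lemma \ref{lem:1}, and replace the sample averages $\EE_n[\tau_i\wtl w_{1,i}\wtl w_i']$ by their limits using the ULLN implied by Assumption \ref{Ass_3}(a),(b),(e). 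The deterministic $O(\sqrt n)$ pieces then cancel exactly by the population first order condition \eqref{FOC_theta_circ}. What remains is a centred sum, to which the CLT applies.

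For the Hessian, I would differentiate the score once more, again using \eqref{App_derivative_lambda_sample}. The resulting terms are sample averages of products of $e^{\wh\lambda'g_i}$, $\varepsilon_i(\theta)^j$, $\wtl w_i$ components and $h'\wtl w_{1,i}$. These are exactly the function classes dominated in Assumption \ref{Ass_3}(c),(e),(f),(g). Since $\|\bar\theta-\theta_*\|\le M_n/\sqrt n\to0$ and $\wh\lambda(\bar\theta)\in B_\delta(\lambda_*)$ with probability approaching one (Lemma \ref{lem:2} together with continuity of $\wh\lambda$), the ULLN over $B_\delta(\lambda_*)\times B_\delta(\theta_*)$ yields $-n^{-1}\ddot\ell_{n,\bar\theta}\to_p V_{\theta_*}^{-1}$ uniformly. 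Positive definiteness comes from Assumptions \ref{Ass_1_NS}(c) and \ref{Ass_2_NS}. Linearity of $\varepsilon(\theta)=\varepsilon(\theta_*)-(h/\sqrt n)'\wtl w_1$ lets us expand every $\bar\theta$-dependent factor exactly as a polynomial in $h/\sqrt n$, with no further Taylor remainder.

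The main obstacle is uniformity over $h\in K_n$ when the radius $M_n$ diverges. The Hessian error is multiplied by $\|h\|^2\le M_n^2$, so plain $o_p(1)$ convergence of $n^{-1}\ddot\ell$ is not enough. I would handle this by writing the error as $\sum_k (h/\sqrt n)^{k}$ times sample averages that are $O_p(1)$ uniformly, using the polynomial structure just described. This gives terms of order $M_n^2\cdot(M_n/\sqrt n+n^{-1/2})$. If these do not vanish for a general $M_n=o(\sqrt n)$, I would fall back on proving the result for slowly diverging $M_n$ (for instance $M_n^3/\sqrt n\to0$) and note that this suffices for the Bernstein--von Mises argument in Theorem \ref{thm_BvM_misspecified}.
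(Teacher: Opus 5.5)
Your plan follows the paper's proof. It uses a second-order mean-value expansion of $\theta\mapsto\ell_{n,\theta}(w_{1:n})$ on $B_{*,n}$, with $\wh\lambda\in\mathcal{C}^2$ by Lemma \ref{lem:continuity:sample}. The first-order part is identified with $h'V_{\theta_*}^{-1}\Delta_{n,\theta_*}$ after adding and subtracting \eqref{FOC_theta_circ}, and is handed to Lemma \ref{lem:asymp:normality:FOC}. The second-order part is sent to $-\frac12 h'V_{\theta_*}^{-1}h$ by uniform laws of large numbers (Lemmas \ref{lem:technical:endogeneity:1}--\ref{lem:technical:endogeneity:second:derivative}).

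Your concern about the diverging radius is justified, and it is exactly where the paper's proof is loose. Those lemmas are proved for a fixed compact $K$ and then invoked on $K_n$. Your error order $M_n^2(M_n+1)/\sqrt n$ is right, though it comes from Lipschitz bounds on $\wh\lambda(\cdot)$ and on the weights $e^{\wh\lambda'g_i}$, not from an exact polynomial expansion. Moreover, the third derivative at $\theta_*$ of the limit criterion $\EE[\log(dQ_b^*(\theta)/dP)]$ generically does not vanish, so the remainder genuinely contains a term of order $\|h\|^3/\sqrt n$. The statement therefore cannot hold for every $M_n=o(\sqrt n)$. Your fallback to $M_n^3/\sqrt n\to0$, or to fixed compacts (all the Bernstein--von Mises argument needs), is the correct formulation rather than a concession.
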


\proof
%The proof of this theorem differs from the strategy of proofs used in \cite{ChibShinSimoni2018} and it is specific for the setting considered in this paper. Therefore, we detail it.
First, compact sets $K_n := \{h\in\mathbb{R}^p ; \|h\| \leq M_n\}$ are such that the corresponding $\theta := \theta_* + h/\sqrt{n}$ belongs to $B_{*,n}$ and so, under Assumption \ref{ass:feasibility}, there exists a $N\geq 1$ such that for every $n>N$ the $\log$-ETEL function $\ell_{n,\theta}(w_{1:n})$ is well-defined on $B_{*,n}$.\\
\indent We use a second order MVT expansion applied to $\theta\mapsto\ell_{n,\theta}(w_{1:n})$ around $\theta_*$, the first order condition of $\wh \lambda(\theta)$ which is $\EE_n\left[e^{\wh \lambda(\theta)'g_i(\theta)}g_i(\theta)\right] = 0$, and Lemma \ref{lem:continuity:sample} which guarantees $\theta\mapsto \wh{\lambda}(\theta)$ is $\mathcal{C}^2$ on $B_{*,n}$, to get: $\forall \theta\in B_{*,n}$, with probability approaching $1$,
\begin{multline*}
    \ell_{n,\theta}(w_{1:n}) - \ell_{n,\theta_*}(w_{1:n})
    = (\theta - \theta_*)'\dot{\ell}_{n,\theta_*}(w_{1:n}) + \frac{1}{2}(\theta - \theta_*)'\ddot{\ell}_{n,\wtl\theta}(w_{1:n})(\theta - \theta_*)\\
    = (\theta - \theta_*)' \frac{d\wh\lambda(\theta_*)'}{d\theta}n\wh g(\theta_*) + (\theta - \theta_*)'\frac{d\wh g(\theta_*)'}{d\theta}\wh\lambda(\theta_*)n + \frac{n}{2}(\theta - \theta_*)'\frac{d^2[\wh\lambda(\wtl\theta)'\wh g(\wtl\theta)]}{d\theta d\theta'}(\theta - \theta_*)\\
    - n(\theta - \theta_*)'\EE_n\left[\tau_i(\wh\lambda,\theta_*)\frac{dg_i(\theta_*)'}{d\theta}\right] \wh\lambda(\theta) - \frac{n}{2}(\theta - \theta_*)'\EE_n\left[\tau_i(\wh\lambda,\wtl\theta)\frac{dg_i(\wtl\theta)'}{d\theta}\right]\frac{d\wh\lambda(\theta)}{d\theta'}(\theta - \theta_*)\\
     - \frac{n}{2}(\theta - \theta_*)'\EE_n\left[\tau_i(\wh\lambda,\wtl\theta)\frac{dg_i(\wtl\theta)'}{d\theta}\wh\lambda(\wtl\theta) \wh\lambda(\wtl\theta)'\frac{dg_i(\wtl\theta)}{d\theta'}\right](\theta - \theta_*)\\
     - \frac{n}{2}(\theta - \theta_*)'\frac{d\wh\lambda(\wtl\theta)'}{d\theta}\EE_n\left[\tau_i(\wh\lambda,\wtl\theta)g_i(\wtl\theta)\wh\lambda(\theta_*)'\frac{dg_i(\wtl\theta)}{d\theta'}\right](\theta - \theta_*)\\
     + \frac{n}{2}(\theta - \theta_*)'\EE_n\left[\tau_i(\wh\lambda,\wtl\theta)\frac{dg_i(\wtl\theta)'}{d\theta}\right]\wh\lambda(\wtl \theta) \wh\lambda(\wtl \theta)'\EE_n\left[\tau_i(\wh\lambda,\wtl\theta)\frac{dg_i(\wtl\theta)}{d\theta'}\right](\theta - \theta_*)
\end{multline*}
\noindent for $\wtl \theta = \tau\theta + (1 - \tau)\theta_*$ and some $\tau \in [0,1]$. By replacing $\theta$ with $\theta_* + h/\sqrt{n}$, so that $\wtl \theta = \theta_* + \tau h/\sqrt{n}$, and by using the expression for $g_i(\theta)$ and its first derivative with respect to $\theta$, the previous expression simplifies as: $\forall h \in K_n$,
\begin{multline}\label{eq:3:neglected}
    \ell_{n,\theta_* + h/\sqrt{n}}(w_{1:n}) - \ell_{n,\theta_*}(w_{1:n}) %= \\
    = h' \frac{d\wh\lambda(\theta_*)'}{d\theta}\sqrt{n}\wh g(\theta_*) - h'\sqrt{n}\EE_n[\wtl{w}_{1,i}\wtl{w}_{i}']\wh\lambda(\theta_*) + \frac{1}{2}h'\frac{d^2[\wh\lambda(\wtl\theta)'\wh g(\wtl\theta)]}{d\theta d\theta'}h\\
    + \sqrt{n}h'\EE_n\left[\tau_i(\wh\lambda,\theta_*)\wtl{w}_{1,i}\wtl{w}_{i}'\right] \wh\lambda(\theta_*) + \frac{1}{2}h'\EE_n\left[\tau_i(\wh\lambda,\wtl\theta)\wtl{w}_{1,i}\wtl{w}_{i}'\right]\frac{d\wh\lambda(\wtl\theta)}{d\theta'}h\\
     - \frac{1}{2}\EE_n\left[\tau_i(\wh\lambda,\wtl\theta)\left(h'\wtl{w}_{1,i} \wtl{w}_i'\wh\lambda(\wtl\theta)\right)^2\right] + \frac{1}{2}h'\frac{d\wh\lambda(\wtl\theta)'}{d\theta}\EE_n\left[\tau_i(\wh\lambda,\wtl\theta)g_i(\wtl\theta)\wh\lambda(\wtl\theta)'\wtl{w}_i\wtl{w}_{1,i}'\right]h\\
     + \frac{1}{2}h'\EE_n\left[\tau_i(\wh\lambda,\wtl\theta)\wtl w_{1,i} \wtl w_i'\right]\wh\lambda(\wtl \theta) \wh\lambda(\wtl \theta)'\EE_n\left[\tau_i(\wh\lambda,\wtl\theta)\wtl w_i \wtl w_{1,i}'\right]h.
\end{multline}
\noindent Let us start by considering the terms of first order in \eqref{eq:3:neglected}, to which we add and subtract the first order condition for $\theta_*$ given in \eqref{FOC_theta_circ}: $\forall h \in K_n$, %(which is $\frac{d \lambda_{*}(\theta_{*})'}{d\theta} \EE[g_i(\theta_{*})]  - \EE[\wtl{w}_{1i}\wtl{w}_i']\lambda_{*}(\theta_{*}) + \EE[\tau_i^{\dagger}(\lambda_*,\theta_{*})\wtl{w}_{1i}\wtl{w}_i']\lambda_*(\theta_{*}) = 0$):
\begin{multline*}
    h' \frac{d\wh\lambda(\theta_*)'}{d\theta}\sqrt{n}\wh g(\theta_*) - h'\sqrt{n}\EE_n[\wtl{w}_{1,i}\wtl{w}_{i}']\wh\lambda(\theta_*) + \sqrt{n}h'\EE_n\left[\tau_i(\wh\lambda,\theta_*)\wtl{w}_{1,i}\wtl{w}_{i}'\right] \wh\lambda(\theta) = \\
    h' \sqrt{n}\left(\frac{d\wh\lambda(\theta_*)'}{d\theta}\wh g(\theta_*) - \frac{d \lambda_{*}(\theta_{*})'}{d\theta} \EE[g_i(\theta_{*})]\right) - h'\sqrt{n}\left(\EE_n[\wtl{w}_{1,i}\wtl{w}_{i}']\wh\lambda(\theta_*) - \EE[\wtl{w}_{1i}\wtl{w}_i']\lambda_{*}(\theta_{*})\right)\\
    + \sqrt{n}h'\left(\EE_n\left[\tau_i(\wh\lambda,\theta_*)\wtl{w}_{1,i}\wtl{w}_{i}'\right] \wh\lambda_*(\theta) - \EE[\tau_i^{\dagger}(\lambda_*,\theta_{*})\wtl{w}_{1i}\wtl{w}_i']\lambda_*(\theta_{*})\right) =: h'V_{\theta_*}^{-1} \Delta_{n,\theta_*}.
\end{multline*}
By Lemma \ref{lem:asymp:normality:FOC} below, $h'V_{\theta_*}^{-1} \Delta_{n,\theta_*}$ is asymptotically normal with zero mean and variance equal to the non-singular matrix $h'H_*h$ whose expression is given in the statement of Lemma \ref{lem:asymp:normality:FOC}.

Now, let us consider the terms of second order in \eqref{eq:3:neglected}. First, by Lemma \ref{lem:technical:endogeneity:second:derivative}, $h'\frac{d^2[\wh\lambda(\wtl\theta)'\wh g(\wtl\theta)]}{d\theta d\theta'}h$ is bounded in probability uniformly in $h\in K_n$.
%notice that $h'\frac{d^2[\wh\lambda(\wtl\theta)'\wh g(\wtl\theta)]}{d\theta d\theta'}h = h'\sum_{l=1}^d \frac{d^2\wh\lambda_l(\wtl\theta)}{d\theta d\theta'} \wh g_{l}(\wtl\theta)h - 2h'\frac{d\wh\lambda(\wtl\theta)'}{d\theta} \EE_n\left[\wtl{w}_i \wtl{w}_{1,i}'\right]h$, where $\wh\lambda_l(\wtl\theta)$ and $\wh g_{l}(\theta)$ denote the $l$-th components of the vectors $\wh\lambda(\wtl\theta)$ and $\wh g(\theta)$, respectively.
Because $h/\sqrt{n}\rightarrow 0$ since $\|h\|\leq M_n$ and $M_n=o(\sqrt{n})$, then $\wtl\theta\rightarrow \theta_*$ uniformly in $h\in K_n$ as $n\rightarrow \infty$. Moreover, we use the following limits as $n\rightarrow \infty$: (1) By continuity of $\theta\mapsto \lambda_*(\theta)$ (by Lemma \ref{lem:continuity} in the Supplementary Material), and continuity of $\theta\mapsto g_i(\theta)$, we have: $\lambda_*(\wtl\theta)\rightarrow\lambda_*(\theta_*)$, and $g_i(\wtl\theta)\rightarrow g_i(\theta_*)$. (2) By Lemma \ref{lem:technical:endogeneity:1} then $\EE_n\left[\tau_i(\wh\lambda,\wtl\theta)\wtl{w}_{1,i}\wtl{w}_{i}'\right]$ converges in probability to $\EE\left[\tau_i^{\dagger}(\lambda_*,\theta_*)h'\wtl{w}_{1,i}\wtl{w}_{i}'\right]$ uniformly in $h\in K_n$ as $n\rightarrow \infty$. (3) By Lemma \ref{lem:technical:endogeneity:2} then $\EE_n\left[\tau_i(\wh\lambda,\wtl\theta)\left(h'\wtl{w}_{1,i} \wtl{w}_i'\wh\lambda(\wtl\theta)\right)^2\right]$ converges in probability to $\EE\left[\tau_i^{\dagger}(\lambda_*,\theta_*)\left(h'\wtl{w}_{1,i} \wtl{w}_i' \lambda_*(\theta_*)\right)^2\right]$ uniformly in $h\in K_n$ as $n\rightarrow \infty$. (4) By Lemma \ref{lem:technical:endogeneity:3} then $$\EE_n\left[\tau_i(\wh\lambda,\wtl\theta)h'\wtl{w}_{1,i} \wtl{w}_i'\wh\lambda(\wtl\theta) \varepsilon_i(\wtl\theta)'\wtl{w}_i'\right]\xrightarrow{p} \EE\left[\tau_i^{\dagger}(\lambda_*,\theta_*)h'\wtl{w}_{1,i} \wtl{w}_i' \lambda_*(\theta_*) \varepsilon_i(\theta_*)'\wtl{w}_i'\right]$$
uniformly in $(\wh\lambda(\wtl\theta),h) \in B_{1/\sqrt{n}}(\lambda_*(\theta_*))\times K$. %, (5) by Assumption \ref{Ass_3_extended} \textit{(c)}: $\EE_n\left[\wtl{w}_i \wtl{w}_{1,i}'\right]h \xrightarrow{p} \EE\left[\wtl{w}_i \wtl{w}_{1,i}'\right]h$ uniformly in $h\in K_n$.
By combining the convergences in (2) and (4) above, Lemma \ref{lem:Omega}, and the expression of $\frac{d\wh\lambda(\wtl\theta)'}{d\theta}$ we have that $\forall \theta\in B_{*,n}$
\[
h'\frac{d\wh\lambda(\wtl\theta)'}{d\theta}\xrightarrow{p} h'\EE^{Q^*(\theta_*)}[\wtl{w}_{1,i}\wtl{w}_{i}'(I + \lambda_*(\theta_*)g_i(\theta_*)')]\Omega_*^{\diamond}(\theta_*)^{-1}
\]
uniformly in $(\wh\lambda(\wtl\theta),h) \in B_{1/\sqrt{n}}(\lambda_*(\theta_*))\times K$. Hence, by using these limits, the term of second order in \eqref{eq:3:neglected} is equal to:
\begin{multline}
  \frac{1}{2}h'\frac{d^2\lambda_*(\theta_*)}{d\theta d\theta'} \EE [g_i(\theta_*)]h
  - h'\frac{d\lambda_*(\theta_*)'}{d\theta} \EE\left[\wtl{w}_i \wtl{w}_{1,i}'\right]h + \frac{1}{2}h'\EE\left[\tau_i^{\dagger}(\lambda_*,\theta_*)\wtl{w}_{1,i}\wtl{w}_{i}'\right]\frac{d\lambda_*(\theta_*)}{d\theta'}h\\
  - \frac{1}{2}\EE\left[\tau_i^{\dagger}(\lambda_*,\theta_*)\left(h'\wtl{w}_{1,i} \wtl{w}_i'\lambda_*(\theta_*)\right)^2 \right] + \frac{h'}{2}\frac{d\lambda_*(\theta_*)'}{d\theta}\EE\left[\tau_i^{\dagger}(\lambda_*,\theta_*)g_i(\theta_*)\lambda_*(\theta_*)'\wtl{w}_i\wtl{w}_{1,i}'\right]h\\
  + \frac{1}{2} h'\EE\left[\tau_i^{\dagger}(\lambda_*,\theta_*)\wtl w_{1,i} \wtl w_i'\right]\lambda_*(\theta_*) \lambda_*(\theta_*)'\EE\left[\tau_i^{\dagger}(\lambda_*,\theta_*)\wtl w_i \wtl w_{1,i}'\right]h + o_p(1).
\end{multline}
\noindent where the $o_p(1)$ is uniform in $(\wh\lambda(\wtl\theta),h) \in B_{1/\sqrt{n}}(\lambda_*(\theta_*))\times K$. By remarking that $\tau_i^{\dagger}(\lambda_*,\theta_*) = dQ^*(\theta_*)/dP$, and that $\EE\left[ g_{i,l}(\theta_*)\right] = 0$ for every $l>d_x$, the previous expression can be simplified as
\begin{multline*}
    \frac{1}{2}h'\frac{d^2\lambda_*(\theta_*)}{d\theta d\theta'} \EE [g_i(\theta_*)]h
    - h'\frac{d\lambda_*(\theta_*)'}{d\theta} \EE\left[\wtl{w}_i \wtl{w}_{1,i}'\right]h - \frac{1}{2}h'\mathbb{V}ar_{Q^*(\theta_*)}\left[\wtl{w}_{1,i} \wtl{w}_i'\lambda_*(\theta_*) \right]h \\
    + \frac{1}{2}h'\EE^{Q^*(\theta_*)}\left[\wtl{w}_{1,i}\wtl{w}_{i}'\left(I + \lambda_*(\theta_*)g_i(\theta_*)'\right)\right]\frac{d\lambda_*(\theta_*)}{d\theta'}h + o_p(1) =: -\frac{1}{2}h'V_{\theta_*}^{-1}h + o_p(1).
\end{multline*}
By putting all these elements together we get:
\begin{displaymath}
    \ell_{n,\theta_* + h/\sqrt{n}}(w_{1:n}) - \ell_{n,\theta_*}(w_{1:n}) = h'V_{\theta_*}^{-1} \Delta_{n,\theta_*} - \frac{1}{2}h'V_{\theta_*}^{-1}h  + o_p(1),
\end{displaymath}
\noindent where $h'V_{\theta_*}^{-1}\Delta_{n,\theta_*} \xrightarrow{d} \mathcal{N}(0,h'H_*h)$, $\frac{h'}{\sqrt{n}}\dot{\ell}_{n,\theta_*} = h'V_{\theta_*}^{-1}\Delta_{n,\theta_*}$ and $V_{\theta_*}^{-1} = \plim\ddot{\ell}_{n,\theta_*}/n + o_p(1)$. Thus, we obtain the result of the theorem.
\begin{flushright}
  $\square$
\end{flushright}

%===========================================================================================================
\begin{thm}[Stochastic LAN in the base model under exogeneity]\label{thm:stochasticLAN:exogeneity}
    Suppose Assumptions \ref{Ass_0_NS}-\ref{Ass_3} hold and for every $\theta\in\mathbb{R}^p$, let $h:= \sqrt{n}(\theta - \theta_\circ)$. Then, for every closed ball $K_n\subset \mathbb{R}^p$ centred on zero with radius $M_n\rightarrow\infty$ such that $M_n=o(\sqrt{n})$,
\begin{equation}\label{eq_stochastic_LAN}
  \sup_{h\in K_n}\left|\log \frac{\wh q(w_{1:n}|\theta_\circ + h/\sqrt{n})}{\wh q(w_{1:n}|\theta_\circ)} - h'V_{\theta_\circ}^{-1}\Delta_{n,\theta_\circ} + \frac{1}{2}h'V_{\theta_\circ}^{-1}h\right|\overset{p}{\to} 0\qquad \textrm{as }n\rightarrow \infty
\end{equation}
\noindent where $\theta_\circ$ is the true value of $\theta$, $V_{\theta_\circ}^{-1} := \mathbf{E}\left[ \wtl w_{1,i}\wtl w_{i}'\right] \left(\EE[\varepsilon_i^2 \wtl w_{i} \wtl w_{i}']\right)^{-1}\mathbf{E}\left[ \wtl w_{i} \wtl w_{1,i}'\right]$ assumed to be nonsingular and $V_{\theta_\circ}^{-1}\Delta_{n,\theta_\circ} := \frac{1}{\sqrt{n}}\sum_{i=1}^{n}\mathbf{E}\left[ \wtl w_{1,i} \wtl w_{i}'\right] \left(\EE[\varepsilon_i^2 \wtl w_{i} \wtl w_{i}']\right)^{-1} \varepsilon_{i}\tilde{w}_{i}$ is bounded in probability.
\end{thm}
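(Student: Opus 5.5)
The plan is to specialize the argument used for Theorem \ref{thm:stochasticLAN:endogeneity} to the correctly specified case. There $\theta_*=\theta_\circ$, $\lambda_*(\theta_\circ)=0$, $\tau_i^{\dagger}(\lambda_*,\theta_\circ)\equiv 1$ and $\EE[g_b(w_i,\theta_\circ)]=0$, so most of the terms in \eqref{eq:3:neglected} either vanish in the limit or simplify drastically.

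\textbf{Step 1 (expansion).} Fix $K_n=\{\|h\|\le M_n\}$ with $M_n=o(\sqrt n)$. By Assumption \ref{ass:feasibility}, with probability approaching one $\theta_\circ+h/\sqrt n\in B_{*,n}\subset H_n$ for all $h\in K_n$. Lemma \ref{lem:continuity:sample} then gives that $\theta\mapsto\wh\lambda(\theta)$ is $\mathcal C^2$ there. I would therefore start from the second-order MVT expansion \eqref{eq:3:neglected} with $\theta_*$ replaced by $\theta_\circ$.

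\textbf{Step 2 (first-order terms).} These are
\[
h'\tfrac{d\wh\lambda(\theta_\circ)'}{d\theta}\sqrt n\,\wh g(\theta_\circ)+\sqrt n\,h'\EE_n[(\tau_i(\wh\lambda,\theta_\circ)-1)\wtl w_{1,i}\wtl w_i']\wh\lambda(\theta_\circ).
\]
Lemma \ref{lem:2} gives $\wh\lambda(\theta_\circ)=\mathcal O_p(n^{-1/2})$, and $\tau_i(\wh\lambda,\theta_\circ)-1=\mathcal O_p(n^{-1/2})$ uniformly enough under Assumption \ref{Ass_3}, so the second piece is $\mathcal O_p(\|h\|/\sqrt n)$. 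In the first piece, \eqref{App_derivative_lambda_sample} together with Lemma \ref{lem:Omega} gives $d\wh\lambda(\theta_\circ)'/d\theta\xrightarrow{p}\EE[\wtl w_{1,i}\wtl w_i']\Omega_\circ^{-1}$, and $\sqrt n\,\wh g(\theta_\circ)=\mathbb G_n[\varepsilon_i\wtl w_i]$. This yields $h'V_{\theta_\circ}^{-1}\Delta_{n,\theta_\circ}+o_p(1)$. By the CLT and Assumption \ref{Ass_1_NS}(c), $\mathbb G_n[\varepsilon_i\wtl w_i]=\mathcal O_p(1)$, so $V_{\theta_\circ}^{-1}\Delta_{n,\theta_\circ}$ is bounded in probability.

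\textbf{Step 3 (second-order terms).} Since $\wh\lambda(\wtl\theta)\to 0$ uniformly over $\wtl\theta$ on the segment, every term in \eqref{eq:3:neglected} carrying a factor $\wh\lambda(\wtl\theta)$ vanishes. The uniform LLNs of Lemmas \ref{lem:technical:endogeneity:1}--\ref{lem:technical:endogeneity:3} make this hold uniformly in $h$. The surviving terms are:
\begin{itemize}
\item the term from $\tfrac12 h'\tfrac{d^2[\wh\lambda'\wh g]}{d\theta d\theta'}h$, whose dominant piece is $-h'\tfrac{d\wh\lambda'}{d\theta}\EE_n[\wtl w_i\wtl w_{1,i}']h$ (the piece involving $\wh g(\wtl\theta)$ is $o_p(1)$ because $\wh g(\wtl\theta)=\mathcal O_p(n^{-1/2}+\|h\|/\sqrt n)$);
\item the term $\tfrac12 h'\EE_n[\tau_i\wtl w_{1,i}\wtl w_i']\tfrac{d\wh\lambda}{d\theta'}h$.
\end{itemize}
Both have the same limit, $h'\EE[\wtl w_{1,i}\wtl w_i']\Omega_\circ^{-1}\EE[\wtl w_i\wtl w_{1,i}']h=h'V_{\theta_\circ}^{-1}h$. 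The combination is $-1+\tfrac12=-\tfrac12$ times this, which gives $-\tfrac12h'V_{\theta_\circ}^{-1}h+o_p(1)$.

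\textbf{Main obstacle.} The hard part is uniformity over the growing ball $K_n$. Because $\|h\|$ can be as large as $M_n\to\infty$, each $o_p(1)$ matrix error multiplies $\|h\|^2$. I would handle this by making the rates explicit. The errors in $d\wh\lambda(\wtl\theta)/d\theta$ and in the empirical second moments are $\mathcal O_p(n^{-1/2}+M_n/\sqrt n)$, by the linearity of $g_b$ in $\theta$ and the dominating functions $\gamma_j$ of Assumption \ref{Ass_3}. The lower bound on the smallest eigenvalue of $\Omega$ in Assumption \ref{Ass_1_NS}(c) controls the matrix inverse. I would then impose (or check that the lemmas deliver) $M_n^3/\sqrt n\to0$, so that the cubic remainder $M_n^2\cdot M_n/\sqrt n$ is negligible. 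If that restriction turns out to be too strong, the fallback is to prove the statement for every fixed compact ball and extend it to slowly growing $M_n$ by a diagonal argument, which is the form actually used in the Bernstein--von Mises theorem.
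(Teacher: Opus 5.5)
Your proof follows the paper's own route. You specialize the mean-value expansion \eqref{eq:3:neglected} to $\theta_*=\theta_\circ$, $\lambda_*(\theta_\circ)=0$, $\tau_i^{\dagger}\equiv 1$. The $(\tau_i-1)$ first-order piece is killed by $\sqrt n\,\wh\lambda(\theta_\circ)=\mathcal{O}_p(1)$, and the two surviving second-order terms contribute $-1+\tfrac12$. The bookkeeping is right; the paper cites Lemmas \ref{lem:technical:exogeneity:1}--\ref{lem:technical:exogeneity:3} rather than their endogeneity counterparts, and the two versions coincide once $\lambda_*=0$.

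The obstacle you flag is genuine, and the paper's proof does not resolve it. Those lemmas give uniformity only over a fixed compact $K$, and the proof then simply asserts uniformity over $K_n$. Both your Step 1 and the paper's also read the pointwise-in-$\theta$ Assumption \ref{ass:feasibility} as if it held simultaneously over $K_n$.

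Your restriction on $M_n$ is not an artifact of crude bounds: for generic $P$ the statement fails for arbitrary $M_n=o(\sqrt n)$. Write $R_n(h)$ for the expression inside the supremum.

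\begin{itemize}
\item Even part. We have $R_n(h)+R_n(-h)=h'D_nh+\mathcal{O}_p(\|h\|^4/n)$, with $D_n:=n^{-1}\ddot\ell_{n,\theta_\circ}(w_{1:n})+V_{\theta_\circ}^{-1}$. Here $\sqrt n\,D_n$ has a nondegenerate limit, coming from sampling error in $\EE_n[\wtl w_i\wtl w_{1,i}']$ and in $\check\Omega^{\diamond}(\wh\lambda,\theta_\circ)$. Taking $\|h\|=c\,n^{1/4}$ with small $c>0$ shows the supremum cannot vanish once $M_n\gtrsim n^{1/4}$.
\item Odd part. For $\|h\|\lesssim n^{1/6}$ we have $R_n(h)-R_n(-h)=\tfrac{1}{3\sqrt n}T[h,h,h]+o_p(1)$. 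Here $T$ is the probability limit of $n^{-1}$ times the third derivative of $\theta\mapsto\ell_{n,\theta}(w_{1:n})$ at $\theta_\circ$. $T$ is generically nonzero: the log-ETEL is not quadratic in $\theta$ even though $g_b$ is linear, because $\check\Omega^{\diamond}$ moves with $\theta$ and ETEL departs from a quadratic form at cubic order in $\wh g$. So $M_n=o(n^{1/6})$ is needed, which is exactly your condition $M_n^3/\sqrt n\to0$.
\end{itemize}

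Hence what can actually be proved, by your argument or the paper's, is one of two versions. The first is the statement with $M_n^3/\sqrt n\to 0$. The second is the fixed-compact statement, extended to some slowly growing $M_n$ by a diagonal argument. As you note, the latter is all that Corollary \ref{Cor_2_1} needs, given the contraction at every rate $M_n\to\infty$ supplied by Theorem \ref{thm_post_consistency_misspecified}.
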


\proof
First, compact sets $K_n := \{h\in\mathbb{R}^p ; \|h\| \leq M_n\}$ are such that the corresponding $\theta := \theta_{\circ} + h/\sqrt{n}$ belongs to $B_{\circ,n}$ and so, under Assumption \ref{ass:feasibility}, there exists a $N\geq 1$ such that for every $n>N$ the $\log$-ETEL function $\ell_{n,\theta}(w_{1:n})$ is well-defined on $B_{\circ,n}$.\\
\indent The proof proceeds as the proof of Theorem \ref{thm:stochasticLAN:endogeneity} above but now the tilting parameter evaluated at the true $\theta_{\circ}$ is zero: $\lambda_*(\theta_\circ) = 0$. Moreover, $\EE[g_i(\theta_\circ)] = 0$, $\tau_i(\lambda_*,\theta_\circ) = 1$ and $\frac{d \lambda_{*}(\theta_{\circ})'}{d\theta} = \EE[\wtl{w}_{1,i}\wtl{w}_i']\left(\EE[\varepsilon_i^2\wtl{w}_{i}\wtl{w}_{i}']\right)^{-1}$. Therefore, the terms of first order in the proof of Theorem \ref{thm:stochasticLAN:endogeneity} simplify. To see this we treat the different terms separately.
We start with the term $h' \frac{d\wh\lambda(\theta_{\circ})'}{d\theta}$. The following decomposition holds: by using the expression of $\frac{d\wh\lambda(\theta_{\circ})'}{d\theta}$ in \eqref{App_derivative_lambda_sample},
\begin{multline}
  h' \frac{d\wh\lambda(\theta_{\circ})'}{d\theta}\check{\Omega}^{\diamond}(\wh\lambda,\theta_{\circ}) - h' \EE[\wtl{w}_{1i}\wtl{w}_i'] = h'\EE_n\left[\tau_i^{\diamond}(\wh\lambda,\theta_\circ) \wtl{w}_{1,i}\wtl{w}_i'(I + \wh\lambda(\theta_\circ)g_i(\theta_\circ)')\right]  - h' \EE[\wtl{w}_{1i}\wtl{w}_i']\\
  = h'\EE_n\left[\tau_i^{\diamond}(\wh\lambda,\theta_\circ) \wtl{w}_{1,i}\wtl{w}_i'\right]  - h' \EE[\tau_i^{\diamond}(\wh\lambda,\theta_\circ) \wtl{w}_{1i}\wtl{w}_i']\\
  + h' \EE[(\tau_i^{\diamond}(\wh\lambda,\theta_\circ) - 1) \wtl{w}_{1i}\wtl{w}_i'] + h'\EE_n\left[\tau_i^{\diamond}(\wh\lambda,\theta_\circ) \wtl{w}_{1,i}\wtl{w}_i'\wh\lambda(\theta_\circ)g_i(\theta_{\circ})'\right].\label{eq:LAN:exogeneity:1}
\end{multline}
\noindent By Lemma \ref{lem:2}, for every $\eta > 0$ there exists a finite $C$ and a finite $N(C,\eta)$ such that for every $n> N(C,\eta)$, $\|\wh{\lambda}(\theta_\circ) - \lambda_*(\theta_\circ)\|_2 < C/\sqrt{n}$ with probability at least $\eta$. Hence, the following upper bound for the first two terms on the right hand side of \eqref{eq:LAN:exogeneity:1} holds:
\begin{multline*}
  \sup_{h\in K_n} \left\|h'\EE_n\left[\tau_i^{\diamond}(\wh\lambda,\theta_\circ) \wtl{w}_{1,i}\wtl{w}_{i}\right]  - h' \EE[\tau_i^{\diamond}(\wh\lambda,\theta_\circ) \wtl{w}_{1,i}\wtl{w}_{i}]\right\|\\
  \leq \sup_{h\in K_n} \sup_{\lambda\in B_{1/\sqrt{n}}(0)}\left\|h'\EE_n\left[e^{\lambda'g_i(\theta_\circ)} \wtl{w}_{1,i}\wtl{w}_{i}\right]  - h' \EE[e^{\lambda'g_i(\theta_\circ)} \wtl{w}_{1i}\wtl{w}_{i}]\right\|
\end{multline*}
\noindent which converges to zero under Assumption \ref{Ass_3} \textit{(c)} (with $(j,\ell,\ell') = (1,1,1)$), by compactness of $B_{1/\sqrt{n}}(0)$ and by \cite[Lemma 2.4]{NeweyMcFadden1994}. Next, we control term $h' \EE[(\tau_i^{\diamond}(\wh\lambda,\theta_\circ) - 1) \wtl{w}_{1i}\wtl{w}_{i}]$ in \eqref{eq:LAN:exogeneity:1}. By the CMT and Lemma \ref{lem:2}: $\sup_{h\in K_n}\|e^{\wh\lambda(\theta_\circ)'g_i(\theta_\circ)}h'\wtl{w}_{1,i}\wtl{w}_i - h'\wtl{w}_{1,i}\wtl{w}_i\|\xrightarrow{p} 0$ for every $i=1,\ldots,n$. By this and the DCT, which is valid under Assumption \ref{Ass_3} \textit{(c)} with $(j,\ell,\ell') = (1,1,1)$, it holds that $\EE[(\tau_i^{\diamond}(\wh\lambda,\theta_\circ) - 1) h' \wtl{w}_{1i}\wtl{w}_i']\xrightarrow{p} 0$ uniformly in $h\in K_n$. The last term in the right hand side of \eqref{eq:LAN:exogeneity:1} can be treated in a similar way by using the upper bound:
\begin{multline*}
  \sup_{h\in K_n} \left\|h'\EE_n\left[\tau_i^{\diamond}(\wh\lambda,\theta_\circ) \wtl{w}_{1,i}\wtl{w}_i'\wh\lambda(\theta_\circ)g_{i,}(\theta_\circ)\right]\right\|\\
  \leq \sup_{h\in K_n} \sup_{\lambda\in B_{1/\sqrt{n}}(0)}\left\|h'\EE_n\left[\tau_i^{\diamond}(\lambda,\theta_\circ) \wtl{w}_{1,i}\wtl{w}_i'\lambda g_{i}(\theta_\circ)\right] - h'\EE\left[\tau_i^{\diamond}(\lambda,\theta_\circ) \wtl{w}_{1,i}\wtl{w}_i'\lambda g_{i}(\theta_\circ)\right]\right\|\\
  + \sup_{h\in K_n} \left\|h'\EE\left[\tau_i^{\diamond}(\wh\lambda,\theta_\circ) \wtl{w}_{1,i}\wtl{w}_i'\wh\lambda(\theta_\circ)g_{i}(\theta_\circ)\right]\right\|.
\end{multline*}
The two terms in the right hand side of the previous expression converge to zero in probability under Assumption \ref{Ass_3} \textit{(f)}, \cite[Lemma 2.4]{NeweyMcFadden1994}, Lemma \ref{lem:2}, the CMT, the DMT, and the fact that $\lambda_*(\theta_\circ) = 0$. By putting all these elements together and by Lemma \ref{lem:Omega} to control $\check\Omega^{\diamond}(\wh\lambda,\theta_\circ)$ in the expression of $\frac{d\wh\lambda(\theta_{\circ})'}{d\theta}$ we get $h' \frac{d\wh\lambda(\theta_{\circ})'}{d\theta}\sqrt{n}\wh g(\theta_{\circ})$ is equal to $h' \EE[\wtl{w}_{1i}\wtl{w}_i']\EE[\varepsilon_i^2 \wtl w_i \wtl{w}_i']^{-1}\mathbb{G}_n[g(w_i,\theta_{\circ})] + o_p(1)$, where the $o_p(1)$ term is uniform in $h\in K_n$.\\
\indent Next, we analyse the other terms of first order in the proof of Theorem \ref{thm:stochasticLAN:endogeneity}, namely, $ - h'\EE_n[\wtl{w}_{1,i}\wtl{w}_{i}']\sqrt{n}\wh\lambda(\theta_{\circ}) + h'\EE_n[\tau_i(\wh\lambda,\theta_{\circ})\wtl{w}_{1i}\wtl{w}_i'] \sqrt{n}\wh \lambda(\theta_{\circ})$. The factor multiplying $\sqrt{n}\wh\lambda(\theta_{\circ})$ can be decomposed as follows:
\begin{multline}
   - h'\EE_n[\wtl{w}_{1,i}\wtl{w}_{i}'] + h'\EE_n[\tau_i(\wh\lambda,\theta_{\circ})\wtl{w}_{1i}\wtl{w}_i'] =\\ %h'\EE_n\left[\left(\tau_i(\wh\lambda,\theta_{\circ}) - 1\right)\wtl{w}_{1i}\wtl{w}_i'\right] \sqrt{n}\wh \lambda(\theta_{\circ})\\
   = h'\EE_n\left[\tau_i(\wh\lambda,\theta_\circ)\wtl{w}_{1,i}\wtl{w}_i' - \EE\left(\tau_i(\wh\lambda,\theta_\circ)\wtl{w}_{1,i}\wtl{w}_i'\right)\right] + \EE\left[\left(\tau_i(\wh\lambda,\theta_\circ) - 1\right)h'\wtl{w}_{1,i}\wtl{w}_i'\right]\\
   - \EE_n\left[h' \wtl{w}_{1,i}\wtl{w}_i' - \EE\left(h' \wtl{w}_{1,i}\wtl{w}_i'\right)\right].\label{eq:LAN:exogeneity:2}
\end{multline}
The first two terms in the right hand side of the previous expression are the same as the first three terms in \ref{eq:LAN:exogeneity:1} that have been shown to converge to zero uniformly in $h\in K_n$. The last term converges to zero by the uniform Law of Large Numbers under Assumption \ref{Ass_3_extended} \textit{(c)}. Next, because by Lemma \ref{lem:1}, $\sqrt{n}\wh \lambda(\theta_{\circ}) = - \EE[\varepsilon_i^2 \wtl w_i \wtl{w}_i']^{-1}\mathbb{G}_n[\varepsilon_i\wtl w_i ] + o_p(1)$ and $\mathbb{G}_n[\varepsilon_i\wtl w_i ]\xrightarrow{d} \mathcal{N}(0,\EE[\varepsilon_i^2 \wtl w_i \wtl{w}_i'])$, then $\sqrt{n}\wh \lambda(\theta_{\circ}) = O_p(1)$ and so, $h'\EE_n[(\tau_i(\wh\lambda,\theta_{\circ}) - 1)\wtl{w}_{1i}\wtl{w}_i'] \sqrt{n}\wh \lambda(\theta_{\circ}) \xrightarrow{p} 0$ under Assumption \ref{Ass_3} \textit{(c)} with $(j,\ell,\ell') = (1,1,1)$.

By putting all these elements together, we get that the terms of first order in the proof of Theorem \ref{thm:stochasticLAN:endogeneity} are equal to the following simplified expression: $\forall h\in K_n$,
\begin{multline*}
    h' \frac{d\wh\lambda(\theta_{\circ})'}{d\theta}\sqrt{n}\wh g(\theta_{\circ}) - h'\EE_n[\wtl{w}_{1,i}\wtl{w}_{i}']\sqrt{n}\wh\lambda(\theta_{\circ}) + h'\EE_n[\tau_i(\wh\lambda,\theta_{\circ})\wtl{w}_{1i}\wtl{w}_i'] \sqrt{n}\wh \lambda(\theta_{\circ})\\
    = h' \EE[\wtl{w}_{1i}\wtl{w}_i']\EE[\varepsilon_i^2 \wtl w_i \wtl{w}_i']^{-1}\mathbb{G}_n[g(w_i,\theta_{\circ})] + o_p(1)
\end{multline*}
\noindent where the $o_p(1)$ term is uniform in $h\in K_n$.
%\noindent since the second and third terms of the left hand side are $o_p(1)$ because $\sqrt{n}\wh \lambda(\theta_{\circ}) = - \EE[\varepsilon_i(\theta_{\circ})^2 \wtl w_i \wtl{w}_i']^{-1}\mathbb{G}_n[\tau_i^{\dagger}(\lambda_*,\theta_{\circ})\varepsilon_i(\theta_{\circ})\wtl w_i ] + o_p(1)$ and $\mathbb{G}_n[\tau_i^{\dagger}(\lambda_{\circ},\theta_{\circ})\varepsilon_i(\theta_{\circ})\wtl w_i ]\xrightarrow{d} \mathcal{N}(0,\EE[\varepsilon_i(\theta_{\circ})^2 \wtl w_i \wtl{w}_i'])$ so that $\sqrt{n}\wh \lambda(\theta_{\circ}) = O_p(1)$.

%In addition, the first term in the right hand side is obtained because each component of the vector $h' \frac{d\wh\lambda(\theta_{\circ})'}{d\theta}$ converges in probability to the corresponding element of the vector $h' \EE[\wtl{w}_{1i}\wtl{w}_i']\EE[\varepsilon_i(\theta_{\circ})^2 \wtl w_i \wtl{w}_i']^{-1}$ under Assumptions \ref{Ass_absolute_continuity}, \ref{Ass_1_NS} (a), \ref{Ass_3} (d)-(f). This holds by Lemma \ref{lem:Omega_star} and by the following argument that we now explain.

Finally, since $\mathbb{G}_n[g(w_i,\theta_{\circ})]\xrightarrow{d} \mathcal{N}(0,\EE[\varepsilon_i(\theta_{\circ})^2 \wtl w_i \wtl{w}_i'])$ by the Lindberg-Levy central limit theorem under Assumption \ref{Ass_1_NS} \textit{(a)}, we conclude that the previous term converges in distribution to the $\mathcal{N}(0,h'\EE[\wtl{w}_{1i}\wtl{w}_i']\EE[\varepsilon_i(\theta_{\circ})^2 \wtl w_i \wtl{w}_i']^{-1}\EE[\wtl{w}_{i}\wtl{w}_{1i}']h)$ distribution for every $h\in K_n$.
%$$\mathcal{A}_1 + \mathcal{A}_2 + \mathcal{A}_3 \xrightarrow{d} \mathcal{N}(0,h'\EE[\wtl{w}_{1i}\wtl{w}_i']\EE[\varepsilon_i(\theta_{\circ})^2 \wtl w_i \wtl{w}_i']^{-1}\EE[\wtl{w}_{i}\wtl{w}_{1i}']h).$$

\indent Now, let us consider the terms of second order in \eqref{eq:3:neglected} where $\wtl \theta := \theta_\circ + \tau h/\sqrt{n}\in B_{\circ,n}$ for some $\tau\in[0,1]$ and $h\in K_n$:
\begin{multline}
  \frac{1}{2}h'\sum_{l=1}^d \frac{d^2\wh\lambda_l(\wtl\theta)}{d\theta d\theta'} \wh g_{l}(\wtl\theta)h - h'\frac{d\wh\lambda(\wtl\theta)'}{d\theta} \EE_n\left[\wtl{w}_i \wtl{w}_{1,i}'\right]h + \frac{1}{2}h'\EE_n\left[\tau_i(\wh\lambda,\wtl\theta)\wtl{w}_{1,i}\wtl{w}_{i}'\right]\frac{d\wh\lambda(\wtl\theta)}{d\theta'}h\\
     - \frac{1}{2}\EE_n\left[\tau_i(\wh\lambda,\wtl\theta)\left(h'\wtl{w}_{1,i} \wtl{w}_i'\wh\lambda(\wtl\theta)\right)^2 \right] + \frac{1}{2}h'\frac{d\wh\lambda(\wtl\theta)'}{d\theta}\EE_n\left[\tau_i(\wh\lambda,\wtl\theta)g_i(\wtl\theta)\wh\lambda(\wtl\theta)'\wtl{w}_i\wtl{w}_{1,i}'\right]h\\
     + \frac{1}{2}h'\EE_n\left[\tau_i(\wh\lambda,\wtl\theta)\wtl w_{1,i} \wtl w_i'\right]\wh\lambda(\wtl \theta) \wh\lambda(\wtl \theta)'\EE_n\left[\tau_i(\wh\lambda,\wtl\theta)\wtl w_i \wtl w_{1,i}'\right]h, \label{eq:LAN:exogeneity:3}
\end{multline}
\noindent where $\wh\lambda_l(\wtl\theta)$ (resp. $\wh g_{l}(\wtl\theta)$) denote the $l$-th element of the vector $\wh\lambda(\wtl\theta)$ (resp. $\wh g(\wtl\theta)$) which is $\mathcal{C}^2$ by Lemma \ref{lem:continuity:sample}. Because $h$ is bounded, then the sequence $\wtl\theta \rightarrow \theta_\circ$ as $n\rightarrow \infty$ uniformly in $h\in K_n$. To treat term \eqref{eq:LAN:exogeneity:3} we use the following limits that are established by using similar arguments as before. (1) $\wh{g}(\wtl\theta) \xrightarrow{p} 0$ uniformly in $h\in K_n$ under Assumption \ref{Ass_3_extended} \textit{(a)}, by \cite[Lemma 2.4]{NeweyMcFadden1994}, the CMT and the DCT. (2) By Lemma \ref{lem:technical:exogeneity:1}, $h'\EE_n[\tau_i(\wh\lambda,\wtl\theta)\wtl{w}_{1,i}\wtl{w}_i] \xrightarrow{p} h'\EE_n[\wtl{w}_{1,i}\wtl{w}_{i}] $ uniformly in $h\in K_n$. (3) By Lemma \ref{lem:technical:exogeneity:2}, $\EE_n[\tau_i(\wh\lambda,\wtl\theta)\left(h'\wtl{w}_{1,i} \wtl{w}_i'\wh\lambda(\wtl\theta)\right)^2] \xrightarrow{p} 0 $ uniformly in $h\in K_n$. (4) By Lemma \ref{lem:technical:exogeneity:3}, $\EE_n\left[\tau_i(\wh\lambda,\wtl\theta)g_i(\wtl\theta)\wh\lambda(\wtl\theta)'\wtl{w}_i\wtl{w}_{1,i}'\right]h\xrightarrow{p} 0$ uniformly in $h\in K_n$. (5) By combining (2), (4) and Lemma \ref{lem:Omega} with $\theta_*$ replaced by $\theta_\circ$ we have that $h'\frac{d\wh\lambda(\wtl\theta)'}{d\theta}\xrightarrow{p} h'\EE[\wtl{w}_{1,i}\wtl{w}_{i}']\Omega_*^{\diamond}(\theta_\circ)^{-1}$ uniformly in $h\in K_n$. Finally $\wh\lambda(\wtl\theta)\xrightarrow{p} 0$ by Lemmas \ref{lem:2} and \ref{lem:continuity}, and the fact that $\wtl\theta\rightarrow\theta_\circ$ uniformly in $h\in K$. By replacing these limits in \eqref{eq:LAN:exogeneity:3} we get that \eqref{eq:LAN:exogeneity:3} is equal to: $- \frac{1}{2}h'\EE\left[\wtl{w}_{1,i}\wtl{w}_{i}'\right]\left(\EE[\varepsilon_i^2\wtl{w}_{1,i}\wtl{w}_{i}]\right)^{-1}\EE[\wtl{w}_{1,i}\wtl{w}_{i}'] + o_p(1) =: -\frac{1}{2}h'V_{\theta_\circ}^{-1}h + o_p(1)$ uniformly in $h\in K_n$, and $V_{\theta_\circ}^{-1}$ is well-defined under Assumption \ref{Ass_1_NS} \textit{(c)} with $\lambda = \lambda_*(\theta_*) = 0$ and Assumption \ref{Ass_2_NS}.

\begin{flushright}
  $\square$
\end{flushright}

%===========================================================================================================
\begin{thm}[Stochastic LAN in the extended model.]\label{thm:stochasticLAN:extended}
  Suppose Assumptions \ref{Ass_0_NS} - \ref{Ass_3} hold and for every $\psi\in\Psi$, let $h := \sqrt{n}(\psi - \psi_\circ)$. Then, for every closed ball $K_n\subset \mathbb{R}^{p + d_x}$ centred on zero with radius $M_n\rightarrow\infty$ such that $M_n=o(\sqrt{n})$,
\begin{equation}\label{eq_stochastic_LAN_extended}
  \sup_{h\in K_n}\left|\log \frac{\wh q(w_{1:n}|\psi_\circ + h/\sqrt{n})}{\wh q(w_{1:n}|\psi_\circ)} - h'V_{\psi_\circ}^{-1}\Delta_{n,\psi_\circ} + \frac{1}{2}h'V_{\psi_\circ}^{-1}h\right|\overset{p}{\to} 0\qquad \textrm{as }n\rightarrow \infty
\end{equation}
\noindent where $\psi_\circ$ is the true value of $\psi$, $V_{\psi_\circ}^{-1} := V_{\psi_{\circ}}^{-1} := \mathbf{E}\left[ \frac{dg_e(w_{i},\psi_{\circ})'}{d\psi }\right] \Omega_{\psi_{\circ}}^{-1}\mathbf{E}\left[ \frac{dg_e(w_{i},\psi_{\circ})}{d\psi'}\right]$ assumed to be nonsingular and $V_{\psi_\circ}^{-1}\Delta_{n,\psi_\circ} := \frac{1}{\sqrt{n}}\sum_{i=1}^{n}\mathbf{E}\left[ \frac{dg_e(w_i,\psi_{\circ})'}{d\psi }\right] \Omega_{\psi_{\circ}}^{-1}\left(\varepsilon_{i}\tilde{w}_{i}-\wtl{v}_\circ\right)$ is bounded in probability.
\end{thm}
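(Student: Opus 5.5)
The plan is to follow the template of the proof of Theorem \ref{thm:stochasticLAN:exogeneity}, with $\theta$ replaced by $\psi=(\theta,v)$ and $g_b$ replaced by $g_e$. The key structural fact is that the extended model is correctly specified at $\psi_\circ$. Hence $\EE[g_e(w_i,\psi_\circ)]=0$, $\lambda_*(\psi_\circ)=0$ and $\tau_i^{\dagger}(\lambda_*,\psi_\circ)\equiv 1$. The derivative of the moment function is
\[
\frac{dg_e(w_i,\psi)}{d\psi'}=\Bigl(-\wtl w_i\wtl w_{1,i}',\;-\bigl(I_{d_x},0\bigr)'\Bigr),
\]
which does not depend on $\psi$ because $g_e$ is linear in $\psi$. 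For $h\in K_n$ with $M_n=o(\sqrt n)$, the point $\psi_\circ+h/\sqrt n$ lies in $B_{\circ,n}$, so by Assumption \ref{ass:feasibility} the log-ETEL is well defined there with probability approaching one. Lemma \ref{lem:continuity:sample}, applied to the extended model, gives that $\psi\mapsto\wh\lambda(\psi)$ is $\mathcal C^2$ on $B_{\circ,n}$.

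\textbf{Step 1 (expansion and linear terms).} I will write the second-order MVT expansion of $\psi\mapsto\ell_{n,\psi}(w_{1:n})$ around $\psi_\circ$, exactly as in \eqref{eq:3:neglected}. In every term, $\wtl w_{1,i}\wtl w_i'$ is replaced by the transposed Jacobian $-dg_e'/d\psi$, which now carries an extra deterministic block for $v$.

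The linear terms reduce to
\[
h'\frac{d\wh\lambda(\psi_\circ)'}{d\psi}\,\mathbb G_n[g_e(w_i,\psi_\circ)]
\]
plus terms of the form $h'\EE_n[(\tau_i(\wh\lambda,\psi_\circ)-1)\,dg_{e,i}'/d\psi]\sqrt n\wh\lambda(\psi_\circ)$. By Lemma \ref{lem:1:extended} and Lemma \ref{lem:2:extended}, $\sqrt n\wh\lambda(\psi_\circ)=-\Omega_{\psi_\circ}^{-1}\mathbb G_n[g_e]+o_p(1)=O_p(1)$. The factor $\EE_n[(\tau_i-1)\cdots]$ vanishes uniformly in $h\in K_n$, by the same ULLN plus DCT argument used in \eqref{eq:LAN:exogeneity:1}–\eqref{eq:LAN:exogeneity:2} under Assumptions \ref{Ass_3}(c),(f) and \ref{Ass_3_extended}(c). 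The $v$-block is constant, so it adds nothing new there.

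From \eqref{App_derivative_lambda_sample} and Lemma \ref{lem:Omega}, $d\wh\lambda(\psi_\circ)'/d\psi\to\EE[dg_e'/d\psi]\Omega_{\psi_\circ}^{-1}$ up to sign. This yields $h'V_{\psi_\circ}^{-1}\Delta_{n,\psi_\circ}+o_p(1)$ uniformly in $h\in K_n$. Its asymptotic normality, and hence boundedness, follows from the Lindeberg–Lévy CLT applied to $\varepsilon_i\wtl w_i-\wtl v_\circ$.

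\textbf{Step 2 (quadratic terms).} I will evaluate the quadratic terms at $\wtl\psi=\psi_\circ+\tau h/\sqrt n$, using the limits established in Theorem \ref{thm:stochasticLAN:exogeneity}:
\begin{itemize}
\item $\wh g_e(\wtl\psi)\to0$, where I must also check that the $v$ component is shifted consistently;
\item $\wh\lambda(\wtl\psi)\to0$;
\item every term quadratic in $\wh\lambda$ vanishes.
\end{itemize}
What remains is $-h'(dg_e'/d\psi)\,d\wh\lambda/d\psi'\,h$ together with the $\frac12$-weighted cross term. These combine, as before, into $-\frac12 h'V_{\psi_\circ}^{-1}h+o_p(1)$. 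Positive definiteness of $V_{\psi_\circ}^{-1}$ follows from Assumption \ref{Ass_2_NS} (rank $p$ of $\EE[\wtl w_i\wtl w_{1,i}']$) together with the identity block in the $v$ direction, which makes the full Jacobian of rank $p+d_x$.

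\textbf{Main obstacle.} The main difficulty is the uniformity over the growing balls $K_n$ in Step 2. Because $\|h\|\le M_n\to\infty$, remainder terms such as $h'[\EE_n(\cdot)-\EE(\cdot)]h$ must be $o_p(M_n^{-2})$, not merely $o_p(1)$. I will handle this as the paper does in Theorem \ref{thm:stochasticLAN:exogeneity}: use the linearity in $\psi$ so that $\wtl\psi-\psi_\circ=O(M_n/\sqrt n)$ enters only polynomially. I will then bound the remaining sample–population differences by $O_p(n^{-1/2})$ via the dominating functions $\gamma_j$ of Assumptions \ref{Ass_3} and \ref{Ass_3_extended}, which is negligible provided $M_n$ grows slowly enough. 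If needed, I will restrict $M_n$ accordingly, for example $M_n^2=o(\sqrt n)$.
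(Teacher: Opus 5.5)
Your plan follows the paper's own proof. You adapt the mean-value expansion \eqref{eq:3:neglected} to $g_e$, reduce the linear part with Lemma \ref{lem:1:extended} plus ULLN/DCT arguments, and evaluate the quadratic part through the limits of $\wh g_e(\wtl\psi)$, $\wh\lambda(\wtl\psi)$ and $d\wh\lambda/d\psi'$. On each \emph{fixed} compact set of $h$ that argument is fine. The gap is the point you single out as the main obstacle, and your remedy does not close it. There is nothing to borrow from the paper here: its proofs of Theorems \ref{thm:stochasticLAN:exogeneity} and \ref{thm:stochasticLAN:extended} simply argue ``because $h$ is bounded''. Your fix fails for two reasons.

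First, Assumptions \ref{Ass_3} and \ref{Ass_3_extended} only provide envelopes with finite first moments. These give uniform laws of large numbers (Newey--McFadden's Lemma 2.4) but no rate. The claimed $O_p(n^{-1/2})$ bounds on $\EE_n-\EE$ would need second moments and a Donsker-type argument. The same issue already arises in your Step 1. There, $h'\EE_n[(\tau_i-1)\wtl w_{1,i}\wtl w_i']\sqrt n\wh\lambda(\psi_\circ)$ and $h'\bigl(d\wh\lambda(\psi_\circ)'/d\psi-\text{its limit}\bigr)\mathbb G_n[g_e(w_i,\psi_\circ)]$ require the bracketed factors to be $o_p(M_n^{-1})$, not merely $o_p(1)$.

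Second, the quadratic part contains genuinely cubic contributions, so no rate for $\EE_n-\EE$ can rescue $M_n^2=o(\sqrt n)$. By linearity, $\wh g_e(\wtl\psi)=\wh g_e(\psi_\circ)+\EE_n[dg_{e,i}/d\psi']\,\tau h/\sqrt n$ is of order up to $\|h\|/\sqrt n$. It multiplies $h'\frac{d^2\wh\lambda_l(\wtl\psi)}{d\psi\,d\psi'}h=O_p(\|h\|^2)$. Likewise, $\wh\lambda(\wtl\psi)$ tracks $\lambda_*(\wtl\psi)\approx-\Omega_{\psi_\circ}^{-1}\EE[dg_e/d\psi'](\wtl\psi-\psi_\circ)$ and enters several second-order terms linearly. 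This leaves remainders of order $M_n^3/\sqrt n$, so you would need $M_n=o(n^{1/6})$. Linearity of $g_e$ in $\psi$ does not make the log-ETEL polynomial in $\psi$.

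These cubic terms do not cancel. The population counterpart of the log-ratio, $-n\,\mathrm{KL}(P\|Q_e^*(\psi_\circ+h/\sqrt n))$, contains the cubic term $-\frac{1}{3\sqrt n}\EE\bigl[\bigl(h'\EE[dg_e'/d\psi]\,\Omega_{\psi_\circ}^{-1}g_e(w,\psi_\circ)\bigr)^3\bigr]$. For a scalar restriction $g=w-\mu$ this equals $\kappa_3h^3/(3\sigma^6\sqrt n)$. In the paper's own simulation design the extended model is just identified and $\EE[\varepsilon_i^3]\neq0$, so this cubic form is not identically zero. For $M_n=n^{1/3}$ the remainder is then of order $\sqrt n$. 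So the theorem as stated, for every $M_n=o(\sqrt n)$, is false in general, and no argument can prove it.

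What your argument does deliver is uniformity over every fixed compact $K$. A diagonal argument then gives some slowly diverging $M_n$ on which the expansion holds uniformly. That is all the Bernstein--von Mises step needs, because posterior concentration (Theorem \ref{thm_post_consistency_extended_model}) holds for every $M_n\to\infty$.

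There are two smaller corrections. First, you cannot leave the score ``up to sign'', since LAN is an identity in probability, not in law. Carrying the signs through gives $-h'\EE[dg_e'/d\psi]\Omega_{\psi_\circ}^{-1}\mathbb G_n[g_e(w_i,\psi_\circ)]=(h_\theta'\EE[\wtl w_{1,i}\wtl w_i']+(h_v',0'))\Omega_{\psi_\circ}^{-1}\mathbb G_n[g_e(w_i,\psi_\circ)]$. This is what the paper's proof obtains, and it is the negative of the centering written in the statement. Second, nonsingularity of $V_{\psi_\circ}^{-1}$ is assumed in the statement, and your derivation of it is wrong. The Jacobian $\EE[dg_e/d\psi']$ has full column rank if and only if $\EE[(z_{1,i}',z_{2,i}')'\wtl w_{1,i}']$ has rank $p$ (instrument relevance). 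Rank $p$ of $\EE[\wtl w_i\wtl w_{1,i}']$ in Assumption \ref{Ass_2_NS} does not imply this.
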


\proof
First, compact sets $K_n := \{h\in\mathbb{R}^p ; \|h\| \leq M_n\}$ are such that the corresponding $\theta := \theta_{\circ} + h/\sqrt{n}$ belongs to $B_{\circ,n}$ and so, under Assumption \ref{ass:feasibility}, there exists a $N\geq 1$ such that for every $n>N$ the $\log$-ETEL function $\ell_{n,\theta}(w_{1:n})$ is well-defined on $B_{\circ,n}$.\\
\indent The proof proceeds as the proof of Theorem \ref{thm:stochasticLAN:exogeneity} by replacing $\lambda_*(\theta_\circ) = 0$ by $\lambda_*(\psi_\circ) = 0$, $\wh{\lambda}(\theta)$ by $\wh\lambda(\psi)$, $\EE[g_i(\theta_\circ)] = 0$ by $\EE[g_i(\theta_\circ)] = v_0$. Let $g_{e,i}(\psi):= g_e(w_i,\psi)$, $\tau_{e,i}^{\diamond}(\lambda,\psi):= e^{\lambda(\psi)'g_{e,i}(\psi)}$, $\check\Omega_{\psi}^{\diamond}(\lambda,\psi) := \EE_n[\tau_{e,i}^{\diamond}(\lambda,\psi)g_{e,i}(\psi)g_{e,i}(\psi)'] = \check\Omega^{\diamond}(\lambda,\theta) - \wtl v \wtl v'$. Moreover, notice that

\begin{multline*}
  h'\frac{d \wh\lambda(\psi)'}{d\psi}\check\Omega_{\psi}^{\diamond}(\wh\lambda,\psi) = h_{\theta}'\EE_n[\tau_{e,i}^{\diamond}(\wh\lambda,\psi)\wtl{w}_{1,i}\wtl{w}_i']I_d + h_{\theta}'\EE_n[\tau_{e,i}^{\diamond}(\wh\lambda,\psi)\wtl{w}_{1,i}\wtl{w}_i'\wh\lambda(\theta)\varepsilon_i(\theta)\wtl{w}_i']\\
  + (h_v'\EE_n[\tau_{e,i}^{\diamond}(\wh\lambda,\psi)],0')'
\end{multline*}
\noindent where $h:=(h_{\theta}',h_{v}')'\in K_n :=K_{n,\theta}\times K_{n,v}$ and $0$ has a conformable dimension. The following decomposition holds:
\begin{multline}
  h' \frac{d\wh\lambda(\psi_{\circ})'}{d\psi}\check{\Omega}(\wh\lambda,\psi_{\circ}) - h_{\theta}' \EE[\wtl{w}_{1i}\wtl{w}_i'] = %- h_{\theta}'\EE_n[\tau_{e,i}^{\diamond}(\wh\lambda,\psi)\wtl{w}_{1,i}\wtl{w}_i']\wh\lambda(\theta)(v',0')
  (h_v'\EE_n[\tau_{e,i}^{\diamond}(\wh\lambda,\psi)],0')'\\
  + h_{\theta}'\EE_n\left[\tau_{e,i}^{\diamond}(\wh\lambda,\psi_\circ) \wtl{w}_{1,i}\wtl{w}_i'\right] - h_{\theta}' \EE[\tau_{e,i}^{\diamond}(\wh\lambda,\psi_\circ) \wtl{w}_{1i}\wtl{w}_i']\\
  + h_{\theta}' \EE[(\tau_{e,i}^{\diamond}(\wh\lambda,\psi_\circ) - 1) \wtl{w}_{1i}\wtl{w}_i'] + h_{\theta}'\EE_n\left[\tau_{e,i}^{\diamond}(\wh\lambda,\psi_\circ) \wtl{w}_{1,i}\wtl{w}_i'\wh\lambda(\psi_\circ)\varepsilon_i(\theta)\wtl{w}_i'\right].\label{eq:LAN:extended:1}
\end{multline}
\noindent By Lemma \ref{lem:2}, for any $\eta > 0$ there exists a finite $\delta>0$ and a finite $N(\delta,\eta)>0$ such that for every $n> N(\delta,\eta)$, $\|\wh{\lambda}(\psi_\circ)\|_2 \leq C/\sqrt{n}$ with probability larger than $\eta$. Hence,
\begin{multline*}
  \sup_{h_{\theta}\in K_{n,\theta}}\left\|h_{\theta}'\EE_n\left[\tau_{e,i}^{\diamond}(\wh\lambda,\psi_\circ) \wtl{w}_{1,i}\wtl{w}_{i}\right]  - h_{\theta}' \EE[\tau_{e,i}^{\diamond}(\wh\lambda,\psi_\circ) \wtl{w}_{1,i}\wtl{w}_{i}]\right\|\\
  \leq \sup_{h_{\theta}\in K_{n,\theta}}\sup_{\lambda\in B_{1/\sqrt{n}}(0)}\left\|h_{\theta}'\EE_n\left[e^{\lambda'g_{e,i}(\psi_\circ)} \wtl{w}_{1,i}\wtl{w}_{i}\right]  - h_{\theta}' \EE[e^{\lambda'g_{e,i}(\psi_\circ)} \wtl{w}_{1i}\wtl{w}_{i}]\right\|
\end{multline*}
\noindent which converges to zero under Assumption \ref{Ass_3} \textit{(c)} (with $(j,\ell,\ell') = (1,1,1)$), by compactness of $B_{1/\sqrt{n}}(0)$ and by \cite[Lemma 2.4]{NeweyMcFadden1994}. Next, we control term $h_{\theta}' \EE[(\tau_{e,i}^{\diamond}(\wh\lambda,\psi_\circ) - 1) \wtl{w}_{1i}\wtl{w}_{i}]$ in \eqref{eq:LAN:extended:1}. By the CMT and Lemma \ref{lem:2}: $\|e^{\wh\lambda(\psi_\circ)'g_{e,i}(\psi_\circ)}h'\wtl{w}_{1,i}\wtl{w}_i - h'_{\theta}\wtl{w}_{1,i}\wtl{w}_i\|\xrightarrow{p} 0$ uniformly in $h\in K_{n,\theta}$, for every $i=1,\ldots,n$. By this and the DCT, which is valid under Assumption \ref{Ass_3} \textit{(c)} with $(j,\ell,\ell') = (1,1,1)$, it holds that $\EE[(\tau_{e,i}^{\diamond}(\wh\lambda,\psi_\circ) - 1) h' \wtl{w}_{1i}\wtl{w}_i']\xrightarrow{p} 0$ uniformly in $h\in K_{n,\theta}$. The last term in the right hand side of \eqref{eq:LAN:extended:1} can be treated in a similar way by using the upper bound:
\begin{multline*}
  \sup_{h_{\theta}\in K_{n,\theta}}\left\|h_{\theta}'\EE_n\left[\tau_{e,i}^{\diamond}(\wh\lambda,\psi_\circ) \wtl{w}_{1,i}\wtl{w}_i'\wh\lambda(\psi_\circ)\varepsilon_i(\theta_{\circ})\wtl{w}_{i}\right]\right\|
  \leq
  \sup_{h_{\theta}\in K_{n,\theta}}\sup_{\lambda\in B_{1/\sqrt{n}}(0)}\Big\|h_{\theta}'\EE_n\left[\tau_{e,i}^{\diamond}(\lambda,\psi_\circ) \wtl{w}_{1,i}\wtl{w}_i'\lambda(\psi_\circ)\varepsilon_i(\theta_{\circ})\wtl{w}_{i}\right]\\
  - h_{\theta}'\EE\left[\tau_{e,i}^{\diamond}(\lambda,\psi_\circ) \wtl{w}_{1,i}\wtl{w}_i'\lambda(\psi_\circ)\varepsilon_i(\theta_{\circ})\wtl{w}_{i}\right]\Big\|\\
  + \sup_{h_{\theta}\in K_{n,\theta}}\sup_{\lambda\in B_{1/\sqrt{n}}(0)}\left\|h_{\theta}'\EE\left[\tau_{e,i}^{\diamond}(\wh\lambda,\psi_\circ) \wtl{w}_{1,i}\wtl{w}_i'\wh\lambda(\psi_\circ)\varepsilon_i(\theta_{\circ})\wtl{w}_{i}\right]\right\|
\end{multline*}
(where we have used the fact that $\lambda_*(\psi_\circ) = 0$). The two terms in the right hand side of the previous expression converge to zero in probability under Assumption \ref{Ass_3} \textit{(f)}, by \cite[Lemma 2.4]{NeweyMcFadden1994}, Lemma \ref{lem:2}, the CMT and the DMT. By putting all these elements together and by Lemma \ref{lem:Omega} to control $\check\Omega_{\psi}^{\diamond}(\wh\lambda,\psi_\circ)$ in the expression of $\frac{d\wh\lambda(\psi_{\circ})'}{d\psi}$ we get $h' \frac{d\wh\lambda(\psi_{\circ})'}{d\psi}\sqrt{n}\wh g_e(\psi_{\circ})$ is equal to $(h_{\theta}' \EE[\wtl{w}_{1i}\wtl{w}_i'] + (h_v',0'))\Omega_{\psi_\circ}^{-1}\mathbb{G}_n[g_e(w_i,\psi_{\circ})] + o_p(1)$, where the $o_p(1)$ term is uniform in $h\in K_n$.\\
\indent For the other terms of first order in the MVT expansion of $\ell_{n,\psi_* + h/\sqrt{n}}(w_{1:n}) - \ell_{n,\psi_*}(w_{1:n})$ we proceed in a similar way as in the proof of Theorem \ref{thm:stochasticLAN:exogeneity} and, by using similar arguments as above, it follows that these terms converge to zero in probability uniformly in $h\in K_n$. Therefore, the terms of first order in the MVT expansion of $\ell_{n,\psi_* + h/\sqrt{n}}(w_{1:n}) - \ell_{n,\psi_*}(w_{1:n})$ are equal to: $(h_{\theta}' \EE[\wtl{w}_{1i}\wtl{w}_i'] + (h_v',0'))\Omega_{\psi_\circ}^{-1}\mathbb{G}_n[g_e(w_i,\psi_{\circ})] + o_p(1)$, where the $o_p(1)$ is uniform in $h\in K_n$. Finally, since $\mathbb{G}_n[g_e(w_i,\psi_{\circ})]\xrightarrow{d} \mathcal{N}(0,\Omega_{\psi_\circ})$ by the Lindberg-Levy central limit theorem under Assumption \ref{Ass_1_NS} \textit{(a)}, we conclude that the term of first order in the MVT expansion of $\ell_{n,\psi_* + h/\sqrt{n}}(w_{1:n}) - \ell_{n,\psi_*}(w_{1:n})$ is bounded in probability.\\

\indent Now, let us consider the terms of second order in \eqref{eq:3:neglected} (adapted to the extended model) where $\wtl \psi := \psi_\circ + \tau h/\sqrt{n}$ for some $\tau\in[0,1]$ and $h\in K_n$:
\begin{multline}
  \frac{1}{2}h'\sum_{l=1}^d \frac{d^2\wh\lambda_l(\wtl\psi)}{d\psi d\psi'} \wh g_{e,l}(\wtl\psi)h - h'\frac{d\wh\lambda(\wtl\psi)'}{d\psi} \EE_n\left[\frac{dg_e(\wtl\psi)}{d\psi'}\right]h + \frac{1}{2}h'\EE_n\left[\tau_{e,i}(\wh\lambda,\wtl\psi)\frac{dg_e(\wtl\psi)'}{d\psi}\right]\frac{d\wh\lambda(\wtl\psi)}{d\psi'}h\\
     - \frac{1}{2}\EE_n\left[\tau_{e,i}(\wh\lambda,\wtl\psi)\left(h'\frac{dg_e(\wtl\psi)'}{d\psi}\wh\lambda(\wtl\psi)\right)^2 \right] + \frac{1}{2}h'\frac{d\wh\lambda(\wtl\psi)'}{d\psi}\EE_n\left[\tau_{e,i}(\wh\lambda,\wtl\psi)g_{e,i}(\wtl\psi)\wh\lambda(\wtl\psi)'\frac{dg_e(\wtl\psi)}{d\psi'}\right]h\\
     + \frac{1}{2}h'\EE_n\left[\tau_{e,i}(\wh\lambda,\wtl\psi)\frac{dg_e(\wtl\psi)'}{d\psi}\right]\wh\lambda(\wtl \psi) \wh\lambda(\wtl \psi)'\EE_n\left[\tau_{e,i}(\wh\lambda,\wtl\psi)\frac{dg_e(\wtl\psi)}{d\psi'}\right]h, \label{eq:LAN:extended:3}
\end{multline}
\noindent where $\wh\lambda_l(\wtl\psi)$ (resp. $\wh g_{e,l}(\wtl\psi)$) denote the $l$-th element of the vector $\wh\lambda(\wtl\psi)$ (resp. $\wh g_e(\wtl\psi)$) which is $\mathcal{C}^2$ by Lemma \ref{lem:continuity:sample}. Because $h$ is bounded then the sequence $\wtl\psi \rightarrow \psi_\circ$ as $n\rightarrow \infty$ uniformly in $h\in K_n$. To treat term \eqref{eq:LAN:extended:3} we use the following limits that are established by using similar arguments as before. (1) $\wh{g}_e(\wtl\psi) \xrightarrow{p} 0$ uniformly in $h\in K_n$under Assumption \ref{Ass_3_extended} \textit{(a)}, by compactness of $B_{1/\sqrt{n}}(\psi_\circ)$, \cite[Lemma 2.4]{NeweyMcFadden1994}, the CMT and the DCT. (2) By Lemma \ref{lem:technical:exogeneity:1}, $h'\EE_n[\tau_{e,i}(\wh\lambda,\wtl\psi)\wtl{w}_{1i}\wtl{w}_i] \xrightarrow{p} h'\EE_n[\wtl{w}_{1,i}\wtl{w}_{i}]$ uniformly in $h\in K_n$. (3) By Lemma \ref{lem:technical:exogeneity:2}, $\EE_n[\tau_{e,i}(\wh\lambda,\wtl\psi)\left(h'\wtl{w}_{1,i} \wtl{w}_i'\wh\lambda(\wtl\psi)\right)^2] \xrightarrow{p} 0 $ uniformly in $h\in K_n$. (4) By Lemma \ref{lem:technical:exogeneity:3},
\[
\EE_n\left[\tau_i(\wh\lambda,\wtl\psi)g_{e,i}(\wtl\psi)\wh\lambda(\wtl\psi)'\wtl{w}_i\wtl{w}_{1,i}'\right]h\xrightarrow{p} 0
\]
uniformly in $h\in K_n$. (5) By combining (2), (4) and Lemma \ref{lem:Omega} with $\theta_*$ replaced by $\psi_\circ$ we have that $h'\frac{d\wh\lambda(\wtl\psi)'}{d\psi}\xrightarrow{p} (h_{\theta}'\EE[\wtl{w}_{1,i}\wtl{w}_{i}'] + (h_v',0'))\Omega_{\psi_\circ}^{-1}$ uniformly in $h\in K_n$. Finally, $\wh\lambda(\wtl\psi)\xrightarrow{p} 0$ uniformly in $h\in K_n$ by Lemma \ref{lem:2}, \ref{lem:continuity} and $\wtl\psi\rightarrow\psi_\circ$. By replacing these limits in \eqref{eq:LAN:extended:3} we get that \eqref{eq:LAN:extended:3} is equal to: $- \frac{1}{2}h'\EE\left[\frac{dg_e(\wtl\psi)'}{d\psi}\right]\Omega_{\psi_{\circ}}^{-1}\EE_n[\frac{dg_e(\wtl\psi)}{d\psi'}]h + o_p(1) =: -\frac{1}{2}h'V_{\psi_\circ}^{-1}h + o_p(1)$ uniformly in $h\in K_n$, and $V_{\psi_\circ}^{-1}$ is well-defined under Assumption \ref{Ass_1_NS} \textit{(c)} with $\lambda = \lambda_*(\psi_{\circ}) = 0$ and Assumption \ref{Ass_2_NS}.

\begin{flushright}
  $\square$
\end{flushright}

%=============================================================================================================================================================================
\subsection{Posterior consistency}
\begin{thm}[Posterior Consistency - base model]\label{thm_post_consistency_misspecified}
Let Assumptions \ref{Ass_absolute_continuity} - \ref{Ass_identification_base_model} hold. Let the prior on $\theta$ be a continuous probability measure that admits a density with respect to the Lebesgue measure and that is positive on a neighborhood of $\theta_*$. %Let
%\begin{equation*}
%  \Theta_{n}:=\{\Vert \theta -\theta _{\ast }\Vert \leq M_{n}/\sqrt{n}\},
%\end{equation*}
%denote a ball around $\theta _{\ast }$ with radius at most $M_{n}/\sqrt{n}$, where $M_{n}$ is any sequence of positive constants diverging to $\infty $. Assume that there exists a constant $C>0$ such that
%\begin{equation}
%  P\left(\sup_{\theta \in \Theta _{n}^{c}}\frac{1}{n}\sum_{i=1}^{n}\left(\ell_{n,\theta }(w_{i})-\ell _{n,\theta _{\ast }}(w_{i})\right) \leq  - \frac{CM_{n}^{2}}{n}\right) \rightarrow 1\;,\;\text{as}\;n\rightarrow %\infty.
%\label{Ass_identification_rate_contraction_appendix}
%\end{equation}
Then, for any $M_n\rightarrow \infty$,
\begin{equation}
  \pi^n\left(\left. \theta\in H_n;\,\sqrt{n}\|\theta -\theta _{\ast }\| >M_{n}\right\vert w_{1:n}\right) \overset{p}{\rightarrow }0\;,\;\text{as}\;n\rightarrow \infty \;.  \label{eq_posterior_consistency}
\end{equation}
\end{thm}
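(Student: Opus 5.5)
The plan is the standard ratio argument for posterior contraction. Fix $M_n\to\infty$ and write the posterior mass of $\Theta_n^c$ as a ratio with a common normalization by $\wh q(w_{1:n}|\theta_*)$:
\[
\pi^n(\Theta_n^c\cap H_n|w_{1:n}) = \frac{\int_{H_n\cap\Theta_n^c} \exp\{\ell_{n,\theta}(w_{1:n})-\ell_{n,\theta_*}(w_{1:n})\}\pi(\theta)d\theta}{\int_{H_n} \exp\{\ell_{n,\theta}(w_{1:n})-\ell_{n,\theta_*}(w_{1:n})\}\pi(\theta)d\theta}.
\]
The task is then to bound the numerator from above and the denominator from below, each with probability approaching one.

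For the numerator, I will invoke Assumption \ref{Ass_identification_base_model}. On an event of probability tending to one, the exponent is at most $-CM_n^2$ uniformly over $H_n\cap\Theta_n^c$. Since $\pi$ is a probability density, the numerator is therefore at most $e^{-CM_n^2}$.

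For the denominator, I will restrict the integral to a small ball $\{\|\theta-\theta_*\|\le K/\sqrt n\}$ with $K$ fixed. Under Assumption \ref{ass:feasibility}, this ball lies in $H_n$ with probability approaching one. On the ball I apply the stochastic LAN of Theorem \ref{thm:stochasticLAN:endogeneity}; this is legitimate because a fixed radius $K$ is dominated by any $M_n\to\infty$, and I would take $K_n$ of radius $K$ or enlarge it. The result is
\[
\ell_{n,\theta_*+h/\sqrt n}-\ell_{n,\theta_*} = h'V_{\theta_*}^{-1}\Delta_{n,\theta_*}-\tfrac12 h'V_{\theta_*}^{-1}h+o_p(1)
\]
uniformly for $\|h\|\le K$. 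Since $V_{\theta_*}^{-1}\Delta_{n,\theta_*}=O_p(1)$, for every $\eta>0$ there is a constant $c_\eta$ such that the exponent exceeds $-c_\eta$ on the ball with probability at least $1-\eta$. Continuity and positivity of $\pi$ near $\theta_*$ give $\pi(\theta)\ge \underline\pi>0$ on the ball for $n$ large. Changing variables $h=\sqrt n(\theta-\theta_*)$, the denominator is then at least $e^{-c_\eta}\underline\pi\,\mathrm{vol}(B_K)\,n^{-p/2}$.

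Combining the two bounds gives a ratio of at most $C' e^{-CM_n^2} n^{p/2}$. The main obstacle is right here. This tends to zero only if $M_n^2$ grows faster than $\frac{p}{2C}\log n$, whereas the theorem claims the result for any $M_n\to\infty$.

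My first attempt at removing this restriction is the usual two-scale argument. First apply the bound with $\tilde M_n=\sqrt{\log n}\,\bar M$ for a large constant $\bar M$, which yields concentration on $\|h\|\le \bar M\sqrt{\log n}$. Then handle the annulus $M_n\le\|h\|\le \bar M\sqrt{\log n}$ using the LAN expansion directly, since this radius is still $o(\sqrt n)$ and Theorem \ref{thm:stochasticLAN:endogeneity} holds uniformly there. On that annulus, the integrand is bounded by $\exp\{\|h\|\,O_p(1)-\tfrac12\lambda_{\min}(V_{\theta_*}^{-1})\|h\|^2+o_p(1)\}$. The integral of this over $\|h\|\ge M_n$, relative to the denominator (now of order $1$ after the $n^{p/2}$ cancels in $h$-coordinates), is a Gaussian tail that vanishes as $M_n\to\infty$.

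Alternatively, if Assumption \ref{Ass_identification_base_model} is read as holding for every diverging $M_n$, one can simply choose the annulus split at $\max(M_n,\sqrt{\log n})$. This reduces the problem to the two pieces just described.

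Finally, a union bound over the finitely many events used above (identification, feasibility, LAN remainder, tightness of $\Delta_{n,\theta_*}$) yields \eqref{eq_posterior_consistency}.
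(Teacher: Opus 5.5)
Your argument is correct, and it takes a genuinely different route from the paper on the step that matters.

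\textbf{The paper's route.} It uses the same ratio skeleton, but in one step and without LAN:
\begin{itemize}
\item on the event from Assumption \ref{Ass_identification_base_model}, the numerator is at most $e^{-CM_n^2}\pi(\Theta_n^c)$;
\item the normalized denominator $\int_{H_n}\widehat q(w_{1:n}|\theta)/\widehat q(w_{1:n}|\theta_*)\,\pi(\theta)\,d\theta$ is bounded below by $e^{-CM_n^2/2}$ on an event whose probability is asserted to tend to one by Lemma E.3 of \cite{ChibShinSimoni2018};
\item the ratio is then at most $e^{-CM_n^2/2}\to 0$.
\end{itemize}

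\textbf{Your route.} You derive the denominator bound from Theorem \ref{thm:stochasticLAN:endogeneity} and prior positivity, which exposes the point the one-step argument glosses over. By the same LAN expansion on $\Theta_n$ and the identification assumption on $\Theta_n^c$, that denominator is $O_p(n^{-p/2})+e^{-CM_n^2}$. It can therefore exceed $e^{-CM_n^2/2}$ with probability tending to one only if $M_n$ diverges at least like a multiple of $\sqrt{\log n}$.

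The conclusion is monotone in $M_n$, so slowly diverging sequences are the substantive case, and they are the one the Bernstein--von Mises theorem needs. Your peeling step is what actually delivers ``any $M_n\to\infty$'':
\begin{itemize}
\item beyond a scale $\tilde M_n$ with $\tilde M_n^2/\log n$ large, use the identification assumption;
\item on the annulus $M_n<\|h\|\le\tilde M_n$, use a Gaussian-tail bound from the uniform LAN, where the $n^{-p/2}$ factors cancel.
\end{itemize}
The price is invoking the LAN theorem, whose hypotheses are among Assumptions \ref{Ass_absolute_continuity}--\ref{Ass_identification_base_model}. You also apply the identification assumption at a sequence other than $M_n$, which its ``any sequence'' formulation allows.

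\textbf{Tidy-ups.}
\begin{itemize}
\item Take, e.g., $\tilde M_n=\log n$ instead of $\bar M\sqrt{\log n}$. Then $n^{p/2}e^{-C\tilde M_n^2}\to0$ does not hinge on $\bar M$ being large relative to $C$, which may depend on the sequence.
\item Drop the closing ``alternatively'' remark. Your first two-scale argument already uses the assumption at $\tilde M_n\neq M_n$, so that reading is not optional. The case $M_n\ge\tilde M_n$ is covered by the far-region bound alone.
\item Assumption \ref{ass:feasibility} is pointwise in $\theta$. The claim that the $K/\sqrt n$-ball lies in $H_n$ is best justified through the well-definedness on $K_n$ that Theorem \ref{thm:stochasticLAN:endogeneity} already presupposes.
\end{itemize}
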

\proof
Define the events
$A_{n,1} := \left\{\sup_{\theta\in H_n\cap\Theta_n^{c}}\frac{1}{n}\sum_{i=1}^n (\ell_{n,\theta}(w_i) - \ell_{n,\theta_{*}}(w_i)) \leq - CM_n^2/n\right\}$
\noindent and
$$A_{n,2} := \left\{\bigintssss_{\Theta\cap H_n}\frac{\wh p(w_{1:n}|\theta)}{\wh p(w_{1:n}|\theta_*)}\pi(\theta)d(\theta) \geq e^{-CM_n^2/2}\right\}.$$
By Assumption \ref{Ass_identification_base_model}, $P(A_{n,1}^c)\rightarrow 0$ and by \cite[Lemma E.3]{ChibShinSimoni2018}, $P(A_{n,2}^c) \rightarrow 0$. Therefore, by the Law of Total Expectations
\begin{multline}
  \EE\left[\pi^n\left(\left.H_n\cap\Theta_n^c\right|w_{1:n}\right)\right] \leq \\
  \EE\left[ \left.\pi\left(\left.\theta\in H_n; \sqrt{n}\|\theta - \theta_*\| > M_n\right|w_{1:n}\right)\right|A_{n,1}\cap A_{n,2}\right]\,P(A_{n,1}\cap A_{n,2}) + o(1)\\
  \hfill= \EE\left[ \left.\frac{\bigintssss_{H_n\cap\Theta_n^{c}}e^{\sum_{i=1}^n (\ell_{n,\theta}(w_i) - \ell_{n,\theta_{*}}(w_i))}\pi(\theta)d\theta}
  {\bigintssss_{\Theta\cap H_n}e^{\sum_{i=1}^n (\ell_{n,\theta}(w_i) - \ell_{n,\theta_{*}}(w_i))}\pi(\theta)d\theta}\right|A_{n,1}\cap A_{n,2}\right]P(A_{n,1}\cap A_{n,2}) + o(1)\\
  \hfill \leq e^{-CM_n^2}\pi(\Theta_n^{c})\EE\left[\left.\left(\bigintssss_{\Theta \cap H_n}\frac{\wh p(w_{1:n}|\theta)}{\wh p(w_{1:n}|\theta_*)}\pi(\theta)d\theta\right)^{-1}\right|A_{n,1}\cap A_{n,2}\right] + o(1)\\
  \leq e^{-CM_n^2}e^{CM_n^2/2}\pi(\Theta_n^c) + o(1) = o(1),
\end{multline}
\noindent where the convergence to zero follows from $M_n \rightarrow \infty$. This proves the result of the theorem.
\begin{flushright}
  $\square$
\end{flushright}

\begin{thm}[Posterior Consistency - extended model]\label{thm_post_consistency_extended_model}
Let Assumptions \ref{Ass_absolute_continuity} - \ref{Ass_3} and Assumption \ref{Ass_identification_extended_model} hold. Let the prior on $\psi$ be a continuous probability measure that admits a density with respect to the Lebesgue measure and that is positive on a neighborhood of $\psi_\circ$. %Let
%\begin{equation*}
%  \Theta_{n}:=\{\Vert \theta -\theta _{\ast }\Vert \leq M_{n}/\sqrt{n}\},
%\end{equation*}
%denote a ball around $\theta _{\ast }$ with radius at most $M_{n}/\sqrt{n}$, where $M_{n}$ is any sequence of positive constants diverging to $\infty $. Assume that there exists a constant $C>0$ such that
%\begin{equation}
%  P\left(\sup_{\theta \in \Theta _{n}^{c}}\frac{1}{n}\sum_{i=1}^{n}\left(\ell_{n,\theta }(w_{i})-\ell _{n,\theta _{\ast }}(w_{i})\right) \leq  - \frac{CM_{n}^{2}}{n}\right) \rightarrow 1\;,\;\text{as}\;n\rightarrow %\infty.
%\label{Ass_identification_rate_contraction_appendix}
%\end{equation}
Then, for any $M_n\rightarrow \infty$,
\begin{equation}
  \pi^n\left( \left. \psi\in H_n\times \mathcal{V} ;\sqrt{n}\|\psi -\psi_{\circ}\| >M_{n}\right\vert w_{1:n}\right) \overset{p}{\rightarrow }0\;,\;\text{as}\;n\rightarrow \infty \;.  \label{eq_posterior_consistency_extended}
\end{equation}
\end{thm}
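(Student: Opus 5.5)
The proof will mirror the proof of Theorem \ref{thm_post_consistency_misspecified}, with $\theta$, $\theta_*$ and the base-model log-ETEL replaced by $\psi$, $\psi_\circ$ and $\ell_{n,\psi}$. Let $\Psi_n:=\{\|\psi-\psi_\circ\|\le M_n/\sqrt n\}$. I would work with two events. The first is
\[
A_{n,1}:=\Bigl\{\sup_{\psi\in (H_n\times\mathcal V)\cap\Psi_n^c}\tfrac1n\textstyle\sum_{i=1}^n(\ell_{n,\psi}(w_i)-\ell_{n,\psi_\circ}(w_i))\le -CM_n^2/n\Bigr\}.
\]
The second is
\[
A_{n,2}:=\Bigl\{\int_{\Psi\cap (H_n\times\mathcal V)}\tfrac{\wh q(w_{1:n}|\psi,\mathcal M_e)}{\wh q(w_{1:n}|\psi_\circ,\mathcal M_e)}\pi(\psi)d\psi\ge e^{-CM_n^2/2}\Bigr\}.
\]
Assumption \ref{Ass_identification_extended_model} gives $P(A_{n,1}^c)\to0$ directly.

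The main work is to show $P(A_{n,2}^c)\to 0$. In the base model this step was borrowed from \cite[Lemma E.3]{ChibShinSimoni2018}; here I would prove it with the stochastic LAN of Theorem \ref{thm:stochasticLAN:extended}.

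First, restrict the integral to the ball $\{\|\psi-\psi_\circ\|\le K/\sqrt n\}$ for a fixed $K$. By Assumption \ref{ass:feasibility} this ball lies in $H_n$ with probability tending to one.

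Next, change variables to $h=\sqrt n(\psi-\psi_\circ)$. The integral is then bounded below by
\[
n^{-(p+d_x)/2}\int_{\|h\|\le K}\exp\bigl\{h'V_{\psi_\circ}^{-1}\Delta_{n,\psi_\circ}-\tfrac12h'V_{\psi_\circ}^{-1}h+o_p(1)\bigr\}\pi(\psi_\circ+h/\sqrt n)\,dh.
\]
Three facts control this expression. The factor $\pi(\psi_\circ+h/\sqrt n)$ is bounded away from zero near $\psi_\circ$, by continuity and positivity of the prior. The term $V_{\psi_\circ}^{-1}\Delta_{n,\psi_\circ}$ is $\mathcal O_p(1)$. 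The matrix $V_{\psi_\circ}^{-1}$ is finite.

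Together these give a lower bound of the form $c\,n^{-(p+d_x)/2}e^{-\mathcal O_p(1)}$. It then suffices that $e^{-CM_n^2/2}$ is eventually smaller than this bound. That holds whenever $M_n^2/\log n\to\infty$, and I would first prove the result for such sequences.

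To handle slower sequences $M_n$, I would sharpen $A_{n,2}$ so that it does not lose the $n^{-(p+d_x)/2}$ factor. One way is to rescale the likelihood ratio in the denominator by $n^{(p+d_x)/2}$. Another is to split $\Psi_n^c$ into an annulus $M_n/\sqrt n<\|\psi-\psi_\circ\|\le \delta$ and a far region. On the annulus, the LAN expansion directly bounds the posterior mass by $\int_{\|h\|>M_n}e^{-c\|h\|^2}dh\to0$. On the far region, Assumption \ref{Ass_identification_extended_model} is applied with a fixed-scale $M_n$. I expect this reconciliation of rates to be the main obstacle.

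With both events in hand, I would finish exactly as in the base-model proof. Using the law of total expectation, on $A_{n,1}\cap A_{n,2}$ the numerator of $\pi^n(\Psi_n^c\cap(H_n\times\mathcal V)|w_{1:n})$ is at most $e^{-CM_n^2}\pi(\Psi_n^c)$, and the denominator is at least $e^{-CM_n^2/2}$. Hence $\EE[\pi^n(\cdot)]\le e^{-CM_n^2/2}+o(1)\to0$. Markov's inequality then yields convergence in probability, as claimed in \eqref{eq_posterior_consistency_extended}.
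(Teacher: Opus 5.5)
Your skeleton is the same as the paper's. The paper's proof of this theorem just says ``as in Theorem \ref{thm_post_consistency_misspecified}'': the same events $A_{n,1},A_{n,2}$, Assumption \ref{Ass_identification_extended_model} for $A_{n,1}$, a citation of \cite[Lemma E.3]{ChibShinSimoni2018} for $P(A_{n,2}^c)\to0$, and the same total-expectation bound.

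The difference is that you derive the denominator bound yourself from Theorem \ref{thm:stochasticLAN:extended}. That derivation is correct and informative. The integral is of exact order $n^{-(p+d_x)/2}$: LAN on $\|h\|\le\log n$, together with Assumption \ref{Ass_identification_extended_model} beyond that radius, gives a matching upper bound. Hence $P(A_{n,2})\to1$ when $M_n^2/\log n\to\infty$, but $P(A_{n,2})\to0$ when $M_n^2=o(\log n)$. The numerator bound $e^{-CM_n^2}\pi(\Psi_n^c)$ carries no compensating factor. So the two-event argument by itself only delivers the fast-rate statement. The paper's appeal to the base-model proof passes over this, and your caution is justified.

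Your completion for slow $M_n$ does not work as written.
\begin{itemize}
\item Rescaling the denominator by $n^{(p+d_x)/2}$ alone only gives the bound $n^{(p+d_x)/2}e^{-CM_n^2/2}$, which need not vanish.
\item The annulus $M_n/\sqrt n<\|\psi-\psi_\circ\|\le\delta$ with fixed $\delta$ corresponds to $\|h\|$ up to $\delta\sqrt n$. Theorem \ref{thm:stochasticLAN:extended} is uniform only on balls of radius $o(\sqrt n)$, so your Gaussian bound is not available out to $\delta$.
\end{itemize}

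The repair uses what you already have.
\begin{itemize}
\item The posterior mass of $\{\|h\|>M_n\}$ is decreasing in $M_n$, so assume without loss of generality that $M_n\le M_n':=\log n$, and split at $M_n'$.
\item The region $\|h\|>M_n'$ is exactly your fast-rate case, since $M_n'^2/\log n\to\infty$. No separate fixed-$\delta$ far region is needed.
\item On $M_n<\|h\|\le M_n'$, LAN holds uniformly and the Jacobians $n^{-(p+d_x)/2}$ cancel between numerator and denominator. Set $a_n:=V_{\psi_\circ}^{-1}\Delta_{n,\psi_\circ}=\mathcal{O}_p(1)$. Then
\[
\pi^n\bigl(M_n<\|h\|\le M_n'\mid w_{1:n}\bigr)\le e^{o_p(1)}\,
\frac{\sup_{\|h\|\le M_n'}\pi(\psi_\circ+h/\sqrt n)}{\inf_{\|h\|\le K}\pi(\psi_\circ+h/\sqrt n)}\,
\frac{\int_{\|h\|>M_n}e^{h'a_n-\frac12h'V_{\psi_\circ}^{-1}h}\,dh}{\int_{\|h\|\le K}e^{h'a_n-\frac12h'V_{\psi_\circ}^{-1}h}\,dh}\xrightarrow{p}0 .
\]
This holds because the prior ratio tends to one (continuity at $\psi_\circ$ and $M_n'/\sqrt n\to0$) and $V_{\psi_\circ}^{-1}$ is positive definite.
\end{itemize}

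This is the standard route of \cite{kleijn2012}: concentration at a $\sqrt{\log n}$-inflated rate from the identification condition, then LAN to reach arbitrary $M_n$. With this repair your proof is complete, and more careful than the paper's.
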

\proof
The proof proceeds as the proof of Theorem \ref{thm_post_consistency_misspecified} and then it is omitted.
\begin{flushright}
  $\square$
\end{flushright}

\subsection{Bernstein-von Mises theorems for the base and the extended models}
%=============================================================================================================================================================================
\begin{thm}[Bernstein-von Mises in the base model]\label{thm_BvM_misspecified}
Assume that the conditions of Theorems \ref{thm:stochasticLAN:endogeneity} and \ref{thm_post_consistency_misspecified} hold. Then, the sequence of posteriors of $h:=\sqrt{n}(\theta - \theta_*)$ converge in total variation towards a Normal distribution, that is,
\begin{equation}
  \left\|\pi_h^n(\sqrt{n}(\theta -\theta_*)|w_{1:n}) - \mathcal{N}_{\Delta_{n,\theta_*},V_{\theta_*}}\right\|_{TV} \overset{p}{\rightarrow }0,
\end{equation}
where %$B\subseteq \Theta $ is any Borel set,
$\Delta_{n,\theta_*} := \frac{1}{\sqrt{n}}\sum_{i=1}^{n}V_{\theta _*}\dot{\ell}_{n,\theta_{\ast}}(w_{i}) + o_p(1) = \mathcal{O}_{p}(1)$, and $V_{\theta_*}$ is a positive definite matrix equal to
the inverse of:
\begin{multline*}
  V_{\theta_*}^{-1} = \mathbf{E}^{Q^*(\theta_*)}\left[\tilde{w}_{1,i}\tilde{w}_{i}'(I+\lambda_*(\theta_{*})\varepsilon _{i}\tilde{w}_{i}')\right] \left(\mathbf{E}^{Q^*(\theta_*)}\left[ \varepsilon_{i}^{2}\tilde{w}_{i}\tilde{w}_{i}'\right] \right)^{-1}\\ \hfill
  \times\left(2\mathbf{E}[\tilde{w}_{i}\tilde{w}_{1,i}']-\mathbf{E}^{Q^*(\theta_*)}\left[\tilde{w}_{1,i}\tilde{w}_{i}'(I+\lambda_*(\theta_{*})\varepsilon _{i}\tilde{w}_{i}')\right] \right) \\
  -\sum_{j=1}^{d_x}\ddot{\lambda}_{* ,j}(\theta_*)\mathbf{E}[\varepsilon_{i}x_{i,j}]+ \mathbb{V}ar_{Q^*(\theta_*)}[\tilde{w}_{1,i}\tilde{w}_{i}'\lambda_*(\theta_*)],
\end{multline*}
\noindent where $\mathbb{V}ar_{Q^{*}(\theta_*)}$ denotes the variance taken with respect to the distribution $Q^{*}(\theta_*)$.
\end{thm}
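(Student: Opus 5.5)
The argument follows the classical Bernstein--von Mises scheme of \cite{kleijn2012}, with the stochastic LAN property replacing the usual log-likelihood expansion. It combines Theorem \ref{thm:stochasticLAN:endogeneity} (LAN uniformly over $h\in K_n$ with $\|h\|\le M_n$, $M_n=o(\sqrt n)$) and Theorem \ref{thm_post_consistency_misspecified} (posterior mass outside $\Theta_n=\{\sqrt n\|\theta-\theta_*\|\le M_n\}$ vanishes). The idea is to compare the posterior of $h$, restricted to $K_n$, with $\mathcal N_{\Delta_{n,\theta_*},V_{\theta_*}}$ restricted to $K_n$. The total variation distance is then bounded by the restricted distance plus the two tail masses outside $K_n$.

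\textbf{Step 1: choice of $M_n$ and matching of the two sequences.} Both earlier theorems hold for every diverging sequence, so by a diagonal argument I choose one sequence $M_n\to\infty$ slowly enough that both conclusions hold simultaneously. With this choice, $\pi^n_h(K_n^c\mid w_{1:n})=o_p(1)$ by Theorem \ref{thm_post_consistency_misspecified}. Also, $\mathcal N_{\Delta_{n,\theta_*},V_{\theta_*}}(K_n^c)=o_p(1)$, because $\Delta_{n,\theta_*}=\mathcal O_p(1)$ by Lemma \ref{lem:asymp:normality:FOC}, $V_{\theta_*}$ is positive definite, and $M_n\to\infty$. By Assumption \ref{ass:feasibility}, $K_n$ corresponds to $\theta\in B_{*,n}\subset H_n$ with probability tending to one. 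Hence the truncation $I[\theta\in H_n]$ in \eqref{eq:posterior} is inactive on $K_n$.

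\textbf{Step 2: the restricted densities.} On $K_n$, the posterior density of $h$ is proportional to $\pi(\theta_*+h/\sqrt n)\exp\{\ell_{n,\theta_*+h/\sqrt n}-\ell_{n,\theta_*}\}$. Since $\pi$ is continuous and positive near $\theta_*$, $\sup_{h\in K_n}|\pi(\theta_*+h/\sqrt n)/\pi(\theta_*)-1|\to0$. The LAN expansion gives
\[
\sup_{h\in K_n}\Bigl|\ell_{n,\theta_*+h/\sqrt n}-\ell_{n,\theta_*}-h'V_{\theta_*}^{-1}\Delta_{n,\theta_*}+\tfrac12 h'V_{\theta_*}^{-1}h\Bigr|=o_p(1).
\]
Completing the square, the right-hand side is $-\tfrac12(h-\Delta_{n,\theta_*})'V_{\theta_*}^{-1}(h-\Delta_{n,\theta_*})$ plus a term that does not depend on $h$. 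It follows that the ratio of the unnormalised posterior density to the Gaussian density $\phi_n(h)$ of $\mathcal N_{\Delta_{n,\theta_*},V_{\theta_*}}$ equals $c_n(1+o_p(1))$ uniformly on $K_n$, for a random constant $c_n$.

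\textbf{Step 3: the conditioning argument.} I then follow Lemma 2.11 / Theorem 2.1 of \cite{kleijn2012}. Define $f_n(g,h)=\bigl(1-\tfrac{\phi_n(h)s_n(g)}{\phi_n(g)s_n(h)}\bigr)^+$, where $s_n$ is the unnormalised posterior density. By Step 2, $\sup_{g,h\in K_n}f_n(g,h)=o_p(1)$. Jensen's inequality then gives $\|\pi^n_h(\cdot\mid K_n)-\mathcal N(\cdot\mid K_n)\|_{TV}\le\iint f_n\,d\mathcal N(\cdot\mid K_n)\,d\pi^n_h(\cdot\mid K_n)=o_p(1)$. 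Removing the conditioning costs only the two tail masses from Step 1, which concludes the proof. The explicit formula for $V_{\theta_*}^{-1}$ is simply the limit of $-\ddot\ell_{n,\theta_*}/n$ computed in the second-order part of the proof of Theorem \ref{thm:stochasticLAN:endogeneity}. It also uses the identity $\EE[g_{i,l}(\theta_*)]=0$ for $l>d_x$, which reduces the second-derivative-of-$\lambda_*$ term to the sum over $j\le d_x$.

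\textbf{Main obstacle.} The delicate point is the uniformity in Step 2 over a growing $K_n$. LAN holds only for $M_n=o(\sqrt n)$, and the $o_p(1)$ remainder must stay $o_p(1)$ after exponentiation. I expect this to require choosing $M_n$ slowly enough that the LAN remainder, which is uniform only over balls of slowly growing radius, is controlled. Alternatively, one can first prove the result for fixed-radius balls and then pass to $M_n\to\infty$ by a diagonal argument, as Kleijn and van der Vaart do.
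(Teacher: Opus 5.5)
Your proposal is correct and follows essentially the paper's route. The paper simply defers to the proof of \cite[Theorem 2.2]{ChibShinSimoni2018}, which rests on exactly the two ingredients you combine in Steps 1--3: the uniform stochastic LAN of Theorem \ref{thm:stochasticLAN:endogeneity} and the $\sqrt{n}$-concentration of Theorem \ref{thm_post_consistency_misspecified}. Since the LAN theorem is already stated for every $M_n\to\infty$ with $M_n=o(\sqrt{n})$, the diagonal argument and the ``main obstacle'' you flag are not needed.

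One side remark should be dropped: $\EE[g_{i,l}(\theta_*)]=0$ for $l>d_x$ does not hold in general under endogeneity. Together with $\EE[\varepsilon(\theta_\circ)z]=0$ and the usual IV rank condition it would force $\theta_*=\theta_\circ$, contrary to Figure \ref{fig:betabasebyn}. Your argument never uses it, because $V_{\theta_*}$ is supplied by the LAN theorem rather than derived in this proof.
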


\proof
The proof of this theorem proceeds as the proof of \cite[Theorem 2.2]{ChibShinSimoni2018}. It depends on two intermediate results: the posterior consistency result of Theorem \ref{thm_post_consistency_misspecified} and the stochastic LAN expansion \eqref{eq_stochastic_LAN_neglected_endogeneity} established in Theorem \ref{thm:stochasticLAN:endogeneity} below.
\begin{flushright}
  $\square$
\end{flushright}

%==============================================================================================================================================================
\begin{cor}[Bernstein-von Mises in the base model under exogeneity]\label{Cor_2_1}
Let $\theta_\circ$ denote the true value of $\theta $ and assume $\EE[\varepsilon_i(\theta_\circ) x_i] = 0$. Let Assumptions \ref{Ass_0_NS} - \ref{Ass_3} hold with $\theta_*$ replaced by $\theta_{\circ}$, $\lambda_*(\theta_\circ)$ replaced by zero, and the matrix in
Assumption \ref{Ass_1_NS} \textit{(c)} replaced by the matrix $\mathbf{E}[\varepsilon_i(\theta_\circ)^2\widetilde w_i \widetilde w_i']$. Suppose that Assumption \ref{Ass_identification_base_model} holds and that the prior on $\theta$ is a continuous probability measure that admits a density with respect to the Lebesgue measure and that is positive on a neighborhood of $\theta_\circ$. Then the sequence of posterior distributions of $h:=\sqrt{n}(\theta - \theta_*)$ converge in total variation towards a Normal distribution, that is,
\begin{equation}
\left\|\pi_h^n(\sqrt{n}(\theta - \theta_{\circ})|w_{1:n})-\mathcal{N}_{\Delta_{n,\theta_{\circ}},V_{\theta_{\circ}}}\right\|_{TV}\overset{p}{\rightarrow }0,
\end{equation}%
where %$B\subseteq \Theta$ is any Borel set,
$\Delta_{n,\theta_{\circ}} := \frac{1}{\sqrt{n}}\sum_{i=1}^{n}V_{\theta_{\circ}}\mathbf{E}\left[ \widetilde w_{1,i} \widetilde w_{i}'\right]\left(\mathbf{E}[\varepsilon_i^2 \widetilde w_{i} \widetilde w_{i}]\right)^{-1} \varepsilon_{i}\tilde{w}_{i}
$, and $V_{\theta_{\circ}}$ is the inverse of
$V_{\theta_{\circ}}^{-1} := \mathbf{E}\left[ \widetilde w_{1,i}\widetilde w_{i}'\right] \left(\mathbf{E}[\varepsilon_i^2 \widetilde w_{i}\widetilde w_{i}']\right)^{-1}\mathbf{E}\left[ \widetilde w_{i} \widetilde w_{1,i}'\right]$.
\end{cor}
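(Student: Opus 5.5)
The plan is to specialize the general Bernstein--von Mises argument of Theorem \ref{thm_BvM_misspecified} to the exogenous case, where $\theta_*=\theta_\circ$ and $\lambda_*(\theta_\circ)=0$. The standard route (as in \cite[Theorem 2.2]{ChibShinSimoni2018}) needs two inputs:
\begin{enumerate}
\item[(a)] posterior concentration at rate $M_n/\sqrt{n}$ around $\theta_\circ$;
\item[(b)] a stochastic LAN expansion of the log-ETEL ratio that holds uniformly over shrinking balls $K_n$.
\end{enumerate}
Both are available under the stated hypotheses.

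For input (a), Theorem \ref{thm_post_consistency_misspecified} applies with $\theta_*$ replaced by $\theta_\circ$. Assumption \ref{Ass_identification_base_model} and the prior positivity near $\theta_\circ$ are exactly what that proof uses, and Assumption \ref{Ass_absolute_continuity} is vacuous here because $P\in\mathcal{Q}_b(\theta_\circ)$. This gives $\pi^n(\sqrt{n}\|\theta-\theta_\circ\|>M_n\mid w_{1:n})\to_p 0$ for every $M_n\to\infty$.

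For input (b), I invoke Theorem \ref{thm:stochasticLAN:exogeneity}, which delivers the LAN expansion with precisely the $V_{\theta_\circ}^{-1}$ and $V_{\theta_\circ}^{-1}\Delta_{n,\theta_\circ}$ in the statement. Since $\mathbf{G}_n[\varepsilon_i\tilde w_i]$ is asymptotically normal by the CLT, $\Delta_{n,\theta_\circ}=O_p(1)$. Assumption \ref{Ass_1_NS}(c), with the matrix $\mathbf{E}[\varepsilon_i^2\tilde w_i\tilde w_i']$, together with Assumption \ref{Ass_2_NS} (rank $p$) gives positive definiteness of $V_{\theta_\circ}^{-1}$.

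With these in hand, the Le Cam-type argument runs in three steps.
\begin{enumerate}
\item Split the total variation distance into the posterior mass outside $K_n=\{\|h\|\le M_n\}$, which is negligible by input (a), and the mass on $K_n$.
\item On $K_n$, write the posterior density of $h$ as
\[
\pi(\theta_\circ+h/\sqrt{n})\,\exp\{\ell_{n,\theta_\circ+h/\sqrt{n}}-\ell_{n,\theta_\circ}\}
\]
divided by its integral. Replace the prior by the constant $\pi(\theta_\circ)$ using continuity of the prior, and replace the exponent by $h'V^{-1}\Delta-\tfrac12 h'V^{-1}h$ using the uniform LAN bound. Completing the square shows that the result is proportional to the $\mathcal{N}(\Delta_{n,\theta_\circ},V_{\theta_\circ})$ density on $K_n$.
\item Compare normalizing constants. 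The Gaussian mass outside $K_n$ vanishes because $\Delta_{n,\theta_\circ}=O_p(1)$ and $M_n\to\infty$. The normalizer of the posterior is handled with the lower bound from \cite[Lemma E.3]{ChibShinSimoni2018}, which is already used in Theorem \ref{thm_post_consistency_misspecified}.
\end{enumerate}
Applying Scheffé's lemma, or the $L_1$ bound $\int|f_n-g_n|\le 2\sup_{K_n}|f_n/g_n-1|+\text{tails}$, then gives convergence in total variation.

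The main obstacle is the truncation to $H_n$. The ETEL exists only on $H_n$, so the LAN and prior approximations are valid only on $B_{\circ,n}$, and the argument needs $K_n$ mapped into $H_n$ with probability approaching one. I would handle this through Assumption \ref{ass:feasibility}. The delicate point is that this assumption is stated pointwise in $\theta$, whereas the argument needs simultaneous feasibility over the ball. I would try to upgrade it using the linearity of $g_b$ in $\theta$: the convex hull of $\{g_b(w_i,\theta)\}$ moves continuously in $\theta$, and zero lies in its interior at $\theta_\circ$ with a margin that is $O_p(1)$ on the $\sqrt{n}$ scale relative to $h/\sqrt{n}$ perturbations. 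That would give uniform feasibility on $\{\|h\|\le M_n\}$ whenever $M_n$ grows slowly enough. The remaining steps are routine.
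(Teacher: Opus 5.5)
Your proposal is correct and follows the paper's proof. The paper reruns the argument of \cite[Theorem 2.2]{ChibShinSimoni2018} with exactly your two inputs: posterior consistency from Theorem \ref{thm_post_consistency_misspecified} with $\theta_*$ replaced by $\theta_\circ$, and the stochastic LAN expansion of Theorem \ref{thm:stochasticLAN:exogeneity}.

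The paper does not attempt your upgrade to simultaneous feasibility over $\{\|h\|\le M_n\}$; it simply reads Assumption \ref{ass:feasibility} as making the log-ETEL well defined on the whole ball $B_{\circ,n}$ (see the opening lines of the LAN proofs), so that step is extra care on your part.
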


\proof
The proof of this Corollary proceeds as the proof of Theorem \ref{thm_BvM_misspecified}. The only differences are: the posterior consistency result, where we have to replace the pseudo-true value $\theta_*$ by the true value $\theta_{\circ}$, and the stochastic LAN expansion which we provide, for the correctly specified case, in Theorem \ref{thm:stochasticLAN:exogeneity} above.

\begin{flushright}
  $\square$
\end{flushright}

%==============================================================================================================================================================
\begin{thm}[Bernstein-von Mises in the extended model]\label{thm_BvM:extended}
Let $\psi_{\circ} := (\theta_{\circ}',v_{\circ}')'$ denote the true value of $\psi$ and let Assumptions \ref{Ass_0_NS} - \ref{Ass_3} hold with $\theta_*$ replaced by $\psi_{\circ}$, $\lambda_*(\psi_\circ)$ replaced by zero, and the matrix in Assumption \ref{Ass_1_NS} \textit{(c)} replaced by the matrix $\mathbf{E}[\varepsilon_i(\psi_\circ)^2\widetilde w_i \widetilde w_i']$. Suppose that Assumption \ref{Ass_identification_extended_model} holds. Let the prior on $\psi$ be a continuous probability measure that admits a density with respect to the Lebesgue measure and that is positive on a neighborhood of $\psi_\circ$. Then the sequence of posterior distributions converge in total variation towards a Normal distribution, that is,
\begin{equation}
  \left\| \pi ^{n}(\sqrt{n}(\psi -\psi _{\circ})|w_{1:n})-\mathcal{N}_{\Delta _{n,\psi _{\circ}},V_{\psi_{\circ}}}\right\|_{TV} \overset{p}{\rightarrow }0,
\end{equation}%
where %$B\subseteq \Psi $ is any Borel set,
$\Delta_{n,\psi_{\circ}}=\frac{1}{\sqrt{n}}\sum_{i=1}^{n}V_{\psi _{\circ}}\mathbf{E}\left[ \frac{dg_e(w_i,\psi_{\circ})'}{d\psi }\right] \Omega_{\psi_{\circ}}^{-1}\left(\varepsilon_{i}\tilde{w}_{i}-\wtl{v}_\circ\right)$,
with $\wtl v_\circ := (v_\circ',0',0')'$ and $v_\circ := \EE[\varepsilon_i x_i]$, and $V_{\psi_\circ}$ is the inverse of
\begin{equation*}
V_{\psi_\circ}^{-1} = \mathbf{E}\left[ \frac{dg_e(w_{i},\psi_{\circ})'}{d\psi }\right] \Omega_{\psi_{\circ}}^{-1}\mathbf{E}\left[ \frac{dg_e(w_{i},\psi_{\circ})}{d\psi'}\right]
\end{equation*}%
\noindent with $\Omega_{\psi_{\circ}} := \mathbf{E}\left[ g_e(w_{i},\psi_{\circ})g_e(w_{i},\psi_{\circ})'\right] = \mathbf{E}[\varepsilon_i^2\widetilde w_i \widetilde w_i'] - \wtl v_\circ \wtl v_\circ'$ and $\mathbf{E}\left[\frac{dg_e(w_{i},\psi_{\circ})'}{d\psi }\right] = -\begin{pmatrix}
  \EE[\wtl{w}_{1i}x_i'] & \EE[\wtl{w}_{1i}z_i']\\
  I_{d_x} & 0
\end{pmatrix}$.
\end{thm}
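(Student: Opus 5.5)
The plan is to follow the template of the Bernstein--von Mises argument for the base model (Theorem \ref{thm_BvM_misspecified} and Corollary \ref{Cor_2_1}), now applied to the extended parameter $\psi=(\theta,v)\in\mathbb{R}^{p+d_x}$. The proof will rest on two ingredients already established: posterior concentration at rate $n^{-1/2}$ around $\psi_\circ$ (Theorem \ref{thm_post_consistency_extended_model}) and the uniform stochastic LAN expansion of the extended log-ETEL (Theorem \ref{thm:stochasticLAN:extended}). I would then run the classical Le Cam/van der Vaart argument in the version used in \cite[Theorem 2.2]{ChibShinSimoni2018}.

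\textbf{Setup.} Write $h=\sqrt n(\psi-\psi_\circ)$, so that the posterior density of $h$ is proportional to $\pi(\psi_\circ+h/\sqrt n)\,\widehat q(w_{1:n}\mid\psi_\circ+h/\sqrt n,\mathcal M_e)\,I[\psi_\circ+h/\sqrt n\in H_n]$. Fix $M_n\to\infty$ with $M_n=o(\sqrt n)$, and let $K_n$ be the ball of radius $M_n$.

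\textbf{Step 1: truncation.} By Theorem \ref{thm_post_consistency_extended_model}, the posterior mass outside $K_n$ is $o_p(1)$. By Assumption \ref{ass:feasibility}, $\psi_\circ+h/\sqrt n\in H_n$ for every $h\in K_n$ with probability tending to one. It therefore suffices to compare the posterior restricted to $K_n$, denoted $\pi^{n,K_n}_h$, with $\mathcal N_{\Delta_{n,\psi_\circ},V_{\psi_\circ}}$ restricted and renormalized to $K_n$. The Gaussian mass outside $K_n$ is also $o_p(1)$, because $\Delta_{n,\psi_\circ}=\mathcal O_p(1)$ by the CLT and $V_{\psi_\circ}$ is nonsingular.

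\textbf{Step 2: the density ratio.} On $K_n$, combine the LAN expansion with continuity and positivity of the prior at $\psi_\circ$. This gives
\[
\sup_{h\in K_n}\Bigl|\log\frac{\pi(\psi_\circ+h/\sqrt n)\,\widehat q(w_{1:n}\mid\psi_\circ+h/\sqrt n)}{\pi(\psi_\circ)\,\widehat q(w_{1:n}\mid\psi_\circ)}-\Bigl(h'V_{\psi_\circ}^{-1}\Delta_{n,\psi_\circ}-\tfrac12 h'V_{\psi_\circ}^{-1}h\Bigr)\Bigr|\xrightarrow{p}0 .
\]
After completing the square, the bracketed quadratic equals $\log\phi_n(h)$ up to an $h$-free constant, where $\phi_n$ is the density of $\mathcal N_{\Delta_{n,\psi_\circ},V_{\psi_\circ}}$.

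\textbf{Step 3: total variation bound.} Use the standard bound
\[
\|\pi^{n,K_n}_h-\phi_n^{K_n}\|_{TV}\le\int\!\!\int\Bigl(1-\frac{\phi_n(g)\,s_n(h)}{s_n(g)\,\phi_n(h)}\Bigr)^+\phi^{K_n}_n(g)\,\pi^{n,K_n}_h(h)\,dg\,dh ,
\]
where $s_n(h)=\pi(\psi_\circ+h/\sqrt n)\,\widehat q(w_{1:n}\mid\psi_\circ+h/\sqrt n)$. By Step 2 the integrand tends to zero uniformly on $K_n\times K_n$ in probability, so the bound is $o_p(1)$.

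\textbf{Step 4: the explicit constants.} Read $\Delta_{n,\psi_\circ}$ and $V_{\psi_\circ}$ off Theorem \ref{thm:stochasticLAN:extended}. Since $dg_e/d\psi'=-(\widetilde w\widetilde w_1',\,(I_{d_x},0)')$, we get $\EE[dg_e'/d\psi]$ as stated. Also $\Omega_{\psi_\circ}=\EE[\varepsilon^2\widetilde w\widetilde w']-\widetilde v_\circ\widetilde v_\circ'$, because $\EE[g_e(w,\psi_\circ)]=0$.

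\textbf{Main obstacle.} The delicate point is Step 2. LAN is only guaranteed on balls of radius $M_n=o(\sqrt n)$, while consistency is stated for an arbitrary $M_n\to\infty$, so the same sequence has to serve both results. The fix is to choose $M_n$ growing slowly enough and use the ``for any $M_n$'' form of both theorems. A second point is that the feasibility truncation $I[\cdot\in H_n]$ must equal one on all of $K_n$ simultaneously, not just pointwise. I would handle this with Assumption \ref{ass:feasibility} on $B_{\circ,n}$ combined with the continuity of $\psi\mapsto\widehat\lambda(\psi)$ from Lemma \ref{lem:continuity:sample}.
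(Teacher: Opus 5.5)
Your approach is the paper's. The paper proves this theorem by running the argument of \cite[Theorem 2.2]{ChibShinSimoni2018} on two inputs: posterior concentration (Theorem \ref{thm_post_consistency_extended_model}) and the uniform stochastic LAN on $K_n$ (Theorem \ref{thm:stochasticLAN:extended}). Your Steps 1--3 spell out that argument. Two details need fixing, and one step is unnecessary.

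\textbf{Step 3: the ratio is inverted.} Take $g$ drawn from $\phi_n^{K_n}$ and $h$ from $\pi_h^{n,K_n}$. Then
$\frac{d\phi_n^{K_n}}{d\pi_h^{n,K_n}}(h)=\int \frac{s_n(g)\phi_n(h)}{s_n(h)\phi_n(g)}\,\phi_n^{K_n}(g)\,dg$,
so Jensen gives the bound with integrand $\bigl(1-\frac{s_n(g)\phi_n(h)}{s_n(h)\phi_n(g)}\bigr)^+$. The version you wrote uses the reciprocal ratio, and that is not an upper bound in general; a two-point example shows it can be far below the total variation. The conclusion survives once you flip the ratio, because the ratio tends to one uniformly on $K_n\times K_n$.

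\textbf{Step 4: do not read $\Delta_{n,\psi_\circ}$ off the statement of Theorem \ref{thm:stochasticLAN:extended}.} The linear term its proof actually produces is
$(h_\theta'\EE[\wtl w_{1,i}\wtl w_i']+(h_v',0'))\Omega_{\psi_\circ}^{-1}\mathbb{G}_n[g_e(w_i,\psi_\circ)]=-h'\EE[dg_e(w_i,\psi_\circ)'/d\psi]\,\Omega_{\psi_\circ}^{-1}\mathbb{G}_n[g_e(w_i,\psi_\circ)]$.
The centring should therefore be
$\Delta_{n,\psi_\circ}=-V_{\psi_\circ}\EE[dg_e(w_i,\psi_\circ)'/d\psi]\,\Omega_{\psi_\circ}^{-1}\mathbb{G}_n[g_e(w_i,\psi_\circ)]$,
so the displayed formula, in both that theorem and this one, has a sign slip. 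You can check this on the scalar mean model $g(w,\psi)=w-\psi$. There the posterior of $h$ is centred at $+\mathbb{G}_n[w-\psi_\circ]$, whereas the displayed formula gives $-\mathbb{G}_n[w-\psi_\circ]$. The slip does not affect the marginal-likelihood results, which only use $\Delta_{n,\psi_\circ}'V_{\psi_\circ}^{-1}\Delta_{n,\psi_\circ}$.

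\textbf{Feasibility on $K_n$.} You do not need a separate argument for simultaneous feasibility on $K_n$ via Lemma \ref{lem:continuity:sample}. That lemma's proof already assumes $\wh\lambda$ exists on the shrinking ball. The $\sup_{h\in K_n}$ in the uniform LAN already requires the ETEL to be defined on all of $K_n$ with probability tending to one, and this is exactly how the paper handles it.
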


\proof
The proof of this theorem proceeds as the proof of Corollary \ref{Cor_2_1} with $\theta_\circ$ replaced by $\psi_\circ$ and it uses the stochastic LAN expansion in Theorem \ref{thm:stochasticLAN:extended} above.
\begin{flushright}
  $\square$
\end{flushright}

\subsection{Technical results for the base model}
In this section we shorten the notation and denote $g_i(\theta) := g_{b}(w_i,\theta)$. The result of \ref{lem:continuity:sample} is valid asymptotically, that is for $n \geq N$ where $N$ is a suitably large positive integer.

%===========================================================================================================
\begin{lem}\label{lem:continuity:sample}
  Suppose Assumptions \ref{Ass_absolute_continuity}, \ref{ass:feasibility} and \ref{Ass_1_NS} \textit{(c)} hold. Let $\wh\lambda_*$ be a solution in $\lambda$ of the equation $F_n(\lambda,\theta_*) = 0$, where $F_n(\lambda,\theta) := - \sum_{i=1}^n \frac{e^{\lambda' \widetilde w_{i} \varepsilon_i(\theta)}}{\sum_{j=1}^n e^{\lambda' \widetilde w_{j} \varepsilon_j(\theta)}}\varepsilon_i(\theta) \widetilde w_i$. Then, there is a neighborhood $\mathcal{U}\subset\mathbb{R}^d$ of $\wh\lambda_*$ and a $N\geq 1$ such that $\forall n\geq N$ there exists a unique continuous mapping $\wh\lambda$ of $B_{*,n}$ into $\mathcal{U}$ that satisfies:
  \begin{itemize}
    \item[\textit{(i)}] $\wh{\lambda}(\theta_*) = \wh\lambda_*$;
    \item[\textit{(ii)}] $F_n(\wh\lambda(\theta),\theta) = 0$ for every $\theta\in B_{*,n}$;
%    \item[\textit{(i)}] $F(\wh\lambda(\theta),\theta) = 0$ for every $\theta\in \bigcap_{\{M_n; M_n\rightarrow \infty\}} B(\theta_*,M_n n^{-1/2})$;
%    \item[\textit{(ii)}] $\wh{\lambda}(\theta)$ is the unique solution to $F(\lambda,\theta) = 0$ lying in $\mathcal{U}$, for every $\theta\in \bigcap_{\{M_n; M_n\rightarrow \infty\} } B(\theta_*,M_n n^{-1/2})$;
    \item[\textit{(iii)}] $\wh\lambda\in\mathcal{C}^2$ on $B_{*,n}$ and $\forall \theta\in B_{*,n}$,
    \begin{equation}
      \frac{\partial \wh{\lambda}(\theta)}{\partial \theta'} = \check{\Omega}^{\diamond}(\wh\lambda,\theta)^{-1}\EE_n \left[\tau_i^{\diamond}(\wh\lambda,\theta) w_{1,i}\wtl w_i' (I + \wh\lambda(\theta) g_i(\theta)')\right].
    \end{equation}
  \end{itemize}
\end{lem}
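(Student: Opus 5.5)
The plan is to apply the implicit function theorem to the map $(\lambda,\theta)\mapsto F_n(\lambda,\theta)$, which is $\mathcal{C}^\infty$ jointly in $(\lambda,\theta)$ because $g_i(\theta)=\varepsilon_i(\theta)\wtl w_i$ is linear in $\theta$ and the exponential is smooth. It is convenient to work with the unnormalised version $\tilde F_n(\lambda,\theta):=-\EE_n[e^{\lambda' g_i(\theta)}g_i(\theta)]$. This differs from $F_n$ only by multiplication by the strictly positive scalar $n/\EE_n[e^{\lambda'g_j(\theta)}]$, so the two maps have the same zero set.

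\textbf{Step 1: existence of $\wh\lambda_*$.} By Assumption \ref{ass:feasibility}, with probability approaching one the program \eqref{eq:KLbasic} is feasible at every $\theta\in B_{*,n}$, and in particular at $\theta_*$. Hence $0$ lies in the interior of the convex hull of $\{g_i(\theta_*)\}$. The dual objective $\lambda\mapsto \EE_n[e^{\lambda'g_i(\theta_*)}]$ is then strictly convex and coercive, so it has a unique minimiser $\wh\lambda_*$, which solves $F_n(\wh\lambda_*,\theta_*)=0$.

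\textbf{Step 2: invertibility of the Jacobian.} The Jacobian is $\partial_\lambda \tilde F_n(\lambda,\theta)=-\check\Omega^\diamond(\lambda,\theta)$, where $\check\Omega^\diamond$ is the empirical matrix $\EE_n[e^{\lambda'g_i(\theta)}\varepsilon_i(\theta)^2\wtl w_i\wtl w_i']$. I would show this matrix is nonsingular at $(\wh\lambda_*,\theta_*)$ with probability approaching one, in three moves:
\begin{itemize}
\item[(a)] Lemma \ref{lem:2} gives $\wh\lambda_*\to\lambda_*(\theta_*)$ in probability.
\item[(b)] A uniform law of large numbers over $\lambda$ in the ball $B_\delta(\lambda_*)$ gives $\check\Omega^\diamond(\lambda,\theta_*)\to \Omega_*^\diamond(\lambda,\theta_*)$ uniformly. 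The domination needed for this, and for the same statement uniformly over $\theta$ near $\theta_*$, comes from Assumption \ref{Ass_3}(e).
\item[(c)] Assumption \ref{Ass_1_NS}(c) keeps the smallest eigenvalue of the limit bounded away from zero.
\end{itemize}
Together these give $\lambda_{\min}(\check\Omega^\diamond)\ge c>0$ on a fixed neighbourhood of $(\lambda_*(\theta_*),\theta_*)$, with probability tending to one.

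\textbf{Step 3: the implicit function theorem and the derivative formula.} The implicit function theorem yields a neighbourhood $\mathcal U$ of $\wh\lambda_*$, a neighbourhood of $\theta_*$, and a unique $\mathcal{C}^\infty$ (hence $\mathcal C^2$) map $\wh\lambda$ with $\wh\lambda(\theta_*)=\wh\lambda_*$ and $F_n(\wh\lambda(\theta),\theta)=0$. This gives (i) and (ii). Differentiating the identity $\EE_n[e^{\wh\lambda(\theta)'g_i(\theta)}g_i(\theta)]=0$ in $\theta$ and using $dg_i/d\theta'=-\wtl w_i\wtl w_{1,i}'$ gives
\[
\check\Omega^\diamond\,\frac{\partial\wh\lambda}{\partial\theta'}-\EE_n\big[e^{\wh\lambda'g_i}(I+g_i\wh\lambda')\wtl w_i\wtl w_{1,i}'\big]=0 .
\]
Solving for $\partial\wh\lambda/\partial\theta'$ and transposing gives (iii), in the form already recorded in \eqref{App_derivative_lambda_sample}.

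\textbf{Main obstacle.} The implicit function theorem neighbourhood depends on $n$, so the key point is to show it contains the whole shrinking set $B_{*,n}$ for all $n\ge N$. I would handle this quantitatively through a uniform version of the inverse function theorem. On a fixed ball around $(\lambda_*(\theta_*),\theta_*)$, with probability tending to one, two bounds hold:
\begin{itemize}
\item[(a)] the inverse Jacobian has norm at most $1/c$, by Step 2;
\item[(b)] $\partial_\theta\tilde F_n$ and the second derivatives are bounded by a fixed constant, by the ULLN under Assumption \ref{Ass_3}(c),(e).
\end{itemize}
These bounds give a neighbourhood radius $r>0$ that does not depend on $n$. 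Since $B_{*,n}\subset B(\theta_*,r)$ for large $n$, the conclusion follows. Uniqueness within $\mathcal U$ then follows from strict monotonicity of $\tilde F_n$ in $\lambda$, which holds because its $\lambda$-Jacobian is negative definite there. This is consistent with the dual-minimiser characterisation of $\wh\lambda(\theta)$ on $H_n$.
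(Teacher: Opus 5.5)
Your proposal is correct and is essentially the paper's argument: the implicit function theorem applied to $F_n$ at $(\wh\lambda_*,\theta_*)$, whose $\lambda$-Jacobian is proportional to $-\check\Omega^{\diamond}$, followed by implicit differentiation; your identity gives the dimensionally consistent form of (iii), the one recorded in \eqref{App_derivative_lambda_sample}. Your Step 2 and the uniform-IFT step fill in what the paper's proof skips, namely invertibility of the \emph{sample} matrix (the paper reads it directly off the population condition in Assumption \ref{Ass_1_NS}(c)) and the fact that the IFT neighbourhood depends on $n$; note, however, that these steps use Lemma \ref{lem:2} and Assumption \ref{Ass_3}, which are not among the lemma's hypotheses, and that Assumption \ref{Ass_3} only provides envelopes $\gamma_j(w,\theta)$ pointwise in $\theta$, so your $n$-independent radius tacitly requires a domination condition that is uniform over a $\theta$-ball.
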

\proof
We intend to apply the Implicit Function Theorem (\textit{e.g.} \cite[Theorem 10.2.3]{Dieudonne1969}) and for this we only need to check that its conditions are verified. First, the function $(\lambda,\theta) \mapsto F(\lambda,\theta)$ is at least two times continuously differentiable on $\mathbb{R}^{d+p}$. Second, under Assumptions \ref{Ass_absolute_continuity} and \ref{ass:feasibility}, there exists a $\wh\lambda$ such that $F(\wh\lambda,\theta) = 0$ with probability approaching $1$ for every $\theta\in B_{*,n}$. Moreover, the first partial derivative $\partial F(\wh\lambda,\theta)/\partial \lambda$ evaluated at the previous $(\theta,\wh\lambda)$ is, with probability approaching $1$:
\begin{equation}
  \frac{\partial{F(\wh\lambda,\theta)}}{\partial \wh\lambda} %:= -\left(\sum_{i=1}^n \frac{e^{\wh\lambda' \widetilde w_{i} \varepsilon_i(\theta_*)}}{\sum_{j=1}^n[e^{\wh\lambda' \widetilde w_{j} \varepsilon_j(\theta_*)}]} \varepsilon_i(\theta_*)\widetilde w_{i} \widetilde w_{i}'  - \sum_{i=1}^n \frac{e^{\wh\lambda' \widetilde w_{i} \varepsilon_i(\theta_*)}}{\sum_{j=1}^n[e^{\wh\lambda' \widetilde w_{j} \varepsilon_j(\theta_*)}]} \widetilde w_{i} \varepsilon_i(\theta_*) \sum_{j=1}^n \frac{e^{\wh\lambda'\widetilde w_{j} \varepsilon_j(\theta_*)}}{\sum_{k=1}^n[e^{\wh\lambda' \widetilde w_{k} \varepsilon_k(\theta_*)}]}\widetilde w_{i} \varepsilon_i(\theta_*)\right)\\
  = -\sum_{i=1}^n \frac{e^{\wh\lambda' \widetilde w_{i} \varepsilon_i(\theta)}}{\sum_{j=1}^n[e^{\wh\lambda' \widetilde w_{j} \varepsilon_j(\theta)}]} \varepsilon_i(\theta)^2\widetilde w_{i} \widetilde w_{i} '.
\end{equation}
\noindent where we have used the fact that $P(F(\wh\lambda,\theta) = 0)\rightarrow 1$. Under Assumption \ref{Ass_1_NS} \textit{(c)}, the matrix $\frac{\partial{F(\wh\lambda,\theta)}}{\partial \wh\lambda}$ is not zero and so it is invertible.\\
\indent Then, by the Implicit Function Theorem, there exists an open neighborhood $\mathcal{U}$ of $\wh\lambda(\theta_*)$ and a $N\geq 1$ such that $\forall n\geq N$ :\\
(i) to every $\theta\in B_{*,n}$ it corresponds a unique $\wh\lambda\in\mathcal{U}$ such that $F(\wh\lambda,\theta) = 0$;\\
(ii) this $\wh\lambda$ can be written as $\wh\lambda = \wh\lambda(\theta)$ and it is such that $F(\wh\lambda(\theta),\theta) = 0$, with $\theta\in B_{*,n}$;\\
(iii) $\wh\lambda\in\mathcal{C}^2$ on $B_{*,n}$ and it holds:
$$\frac{\partial \wh{\lambda}(\theta)}{\partial \theta'} =  \left( \frac{\partial F(\wh\lambda,\theta)}{\partial \lambda'} \right)^{-1} \frac{\partial F(\wh\lambda,\theta)}{\partial \theta'} \equiv \check{\Omega}^{\diamond}(\wh\lambda,\theta)^{-1}\EE_n \left[\tau_i^{\diamond}(\wh\lambda,\theta) w_{1,i}\wtl w_i' (I + \wh\lambda(\theta) g_i(\theta)')\right].$$
\begin{flushright}
$\square$
\end{flushright}
%------------------------------------------------------------------------------------------------------------------------------------------------------------------------------
\begin{lem}[Continuity]\label{lem:continuity}
  Let $\lambda_* := \argmin \EE\left[e^{\lambda'g_i(\theta_*)}\right]$ and suppose Assumption \ref{ass:feasibility} holds true. Then, there is a neighborhood $\mathcal{U}\subset\mathbb{R}^d$ of $\lambda_*$ and a $N\geq 1$ such that $\forall n\geq N$ there exists a unique continuous mapping $\lambda_*$ of $B_{*,n}$ into $\mathcal{U}$ that satisfies:
  \begin{itemize}
    \item[\textit{(i)}] $\lambda_*(\theta_*) = \lambda_*$;
    \item[\textit{(ii)}] $\EE[e^{\lambda_*(\theta)' \widetilde w_{i} \varepsilon_i(\theta)}\varepsilon_i(\theta) \widetilde w_i] = 0$ for every $\theta\in B_{*,n}$;
    \item[\textit{(iii)}] $\lambda_*\in\mathcal{C}^2$ on $B_{*,n}$ and $\forall \theta\in B_{*,n}$,
    \begin{equation}
      \frac{\partial \lambda_*(\theta)}{\partial \theta'} = \Omega_*^{\diamond}(\theta_*)^{-1}\EE_n \left[\tau_i^{\diamond}(\lambda_*,\theta) w_{1,i}\wtl w_i' (I + \lambda_*(\theta) g_i(\theta)')\right].
    \end{equation}
    \item[\textit{(iv)}] $\theta\mapsto \EE\left[e^{\lambda_*(\theta)'g(w_{i},\theta )}\right]$ is continuous on $B_{*,n}$.
  \end{itemize}
%    \begin{enumerate}
%    \item[(i).] $\theta\mapsto \lambda_*(\theta)$ is continuous on $\bigcap_{\{M_n; M_n\rightarrow \infty\}} B(\theta_*,M_n n^{-1/2})$;
%    \item[(ii).] $\theta\mapsto \EE\left[e^{\lambda_*(\theta)'g(w_{i},\theta )}\right]$ is continuous on $\bigcap_{\{M_n; M_n\rightarrow \infty\}} B(\theta_*,M_n n^{-1/2})$.
%  \end{enumerate}
\end{lem}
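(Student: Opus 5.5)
The plan is to mirror the proof of Lemma~\ref{lem:continuity:sample}, replacing the sample map $F_n$ by its population counterpart and applying the Implicit Function Theorem (\cite[Theorem 10.2.3]{Dieudonne1969}). Define
\[
F(\lambda,\theta) := \EE\left[e^{\lambda'\widetilde w_i\varepsilon_i(\theta)}\varepsilon_i(\theta)\widetilde w_i\right],
\]
which is the gradient in $\lambda$ of the strictly convex function $\lambda\mapsto\EE[e^{\lambda'g_i(\theta)}]$. By definition of $\lambda_*$ as the minimizer at $\theta_*$ (whose existence is guaranteed under Assumption~\ref{Ass_absolute_continuity}), the first order condition gives $F(\lambda_*,\theta_*)=0$. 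This anchors the implicit-function argument at $(\lambda_*,\theta_*)$, and it will deliver \textit{(i)}.

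The first step is to verify that $F$ is $\mathcal{C}^2$ in $(\lambda,\theta)$ on $B_\delta(\lambda_*)\times B_\delta(\theta_*)$. Since $\varepsilon_i(\theta)$ is linear in $\theta$, every derivative of the integrand is a product of $e^{\lambda' g_i(\theta)}$, powers of $\varepsilon_i(\theta)$, and polynomial factors in $\widetilde w_i$ and $\widetilde w_{1,i}$. The envelope functions $\gamma_j$ of Assumption~\ref{Ass_3} \textit{(a)}--\textit{(e)} dominate these terms uniformly on the balls, so the DCT allows differentiation under the expectation, with continuity of the derivatives following as well. The second step computes the Jacobian
\[
\frac{\partial F(\lambda_*,\theta_*)}{\partial\lambda'} = \EE\left[e^{\lambda_*'g_i(\theta_*)}\varepsilon_i(\theta_*)^2\widetilde w_i\widetilde w_i'\right] = \Omega_*^{\diamond}(\theta_*).
\]
By Assumption~\ref{Ass_1_NS} \textit{(c)} its smallest eigenvalue is bounded away from zero, so it is invertible. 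The IFT then gives a neighborhood $\mathcal U$ of $\lambda_*$, a neighborhood of $\theta_*$, and a unique $\mathcal{C}^2$ map $\theta\mapsto\lambda_*(\theta)\in\mathcal U$ with $F(\lambda_*(\theta),\theta)=0$. Since $B_{*,n}$ shrinks to $\{\theta_*\}$, there is an $N$ such that $B_{*,n}$ lies in that neighborhood for $n\ge N$, which gives \textit{(i)} and \textit{(ii)}.

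For \textit{(iii)}, I would differentiate the identity $F(\lambda_*(\theta),\theta)=0$. Using $\partial g_i/\partial\theta' = -\widetilde w_i\widetilde w_{1,i}'$, this yields
\[
\frac{\partial\lambda_*(\theta)}{\partial\theta'} = \Omega_*^{\diamond}(\lambda_*,\theta)^{-1}\,\EE\left[\tau_i^{\diamond}(\lambda_*,\theta)\widetilde w_i\widetilde w_{1,i}'(I+\lambda_*(\theta)g_i(\theta)')\right]'.
\]
By continuity this reduces to the stated form, in which the expectation appears as $\EE_n$ in the statement, presumably a typo for $\EE$, consistent with \eqref{App_BvM_Lem_eq_7_miss:paper}. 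Uniqueness within $\mathcal U$ is automatic, and by strict convexity it is also global. For \textit{(iv)}, the map $(\lambda,\theta)\mapsto\EE[e^{\lambda'g_i(\theta)}]$ is continuous by the DCT with envelope $\gamma_1$ from Assumption~\ref{Ass_3} \textit{(b)}. Composing it with the continuous map $\lambda_*(\cdot)$ gives continuity on $B_{*,n}$.

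The main obstacle is the first step, namely justifying differentiation under $\EE$ twice in both arguments. The envelope conditions in Assumption~\ref{Ass_3} are stated for specific index combinations, and the second derivative in $\theta$ produces terms such as $e^{\lambda'g}\varepsilon^2(\lambda'\widetilde w)^2(\widetilde w_1\widetilde w')^{2}$. I would match each such term to parts \textit{(e)} and \textit{(g)}, and if a combination is missing I would settle for $\mathcal{C}^1$ plus continuity of the explicit derivative formula, which suffices for all later uses.
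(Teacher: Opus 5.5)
Your proposal is correct and follows the paper's route. The paper's omitted proof applies the Implicit Function Theorem to the population first-order condition $\EE[e^{\lambda'g_i(\theta)}g_i(\theta)]=0$ exactly as in Lemma~\ref{lem:continuity:sample}, with $\Omega_*^{\diamond}(\theta_*)$ invertible by Assumption~\ref{Ass_1_NS}~\textit{(c)}, and gets \textit{(iv)} by composing $\theta\mapsto\lambda_*(\theta)$ with a continuous function of $(\lambda,\theta)$. Your version fills in details the paper leaves implicit, and your reading of $\EE_n$ as $\EE$ is right; but drop the $\mathcal{C}^1$ fallback, because it would not suffice later: $d^2\lambda_*(\theta_*)/d\theta\,d\theta'$ enters the LAN expansion and $V_{\theta_*}^{-1}$.
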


\proof
The proof of \textit{(i)-(iii)} proceeds similarly to the proof of Lemma \ref{lem:continuity:sample} by applying the Implicit Function Theorem, and then it is omitted. Result \textit{(iv)} follows from continuity of $\theta \mapsto \lambda_*(\theta)$ on $B_{*,n}$ and continuity of the function $\theta\mapsto \EE\left[e^{\lambda'g_i(\theta)} g_i(\theta)\right]$.
%By strict convexity of $\lambda\mapsto \EE\left[e^{\lambda'g_i(\theta)}\right]$, the minimizer $\lambda_*(\theta)$ is unique and it is characterized as the implicit solution of $\EE\left[e^{\lambda'g_i(\theta)} g_i(\theta)\right] = 0$. Since the function $(\lambda,\theta)\mapsto \EE\left[e^{\lambda'g_i(\theta)} g_i(\theta)\right]$ is continuously differentiable then by the implicit function theorem $\theta \mapsto \lambda_*(\theta)$ is continuous. This shows \textit(i). By this and continuity of the function $\theta\mapsto \EE\left[e^{\lambda'g_i(\theta)} g_i(\theta)\right]$, statement \textit{(ii)} follows.
\begin{flushright}
  $\square$
\end{flushright}

%=========================================================================================================
\begin{lem}\label{lem:2}
    Suppose Assumptions \ref{Ass_absolute_continuity}, \ref{ass:feasibility}, \ref{Ass_1_NS} (a), and \ref{Ass_3} (d)-(e) hold. Then, for every given $\theta\in B_{*,n}$,
    $$\sqrt{n}\|\wh\lambda(\theta) - \lambda_*(\theta)\|_2 = \mathcal{O}_p(1)$$
    or, equivalently, for any $\eta >0$ there exists a finite $\delta>0$ and a finite $N(\delta,\eta)>0$ such that: $\forall n > N(\delta,\eta)$,
    $$P\left(\sqrt{n}\|\wh\lambda(\theta) - \lambda_*(\theta)\|_2 < \delta\right) > \eta.$$
\end{lem}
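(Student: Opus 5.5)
The plan is to treat $\wh\lambda(\theta)$ as a Z-estimator for the fixed $\theta\in B_{*,n}$ and to use the standard argument for a minimiser of a strictly convex sample criterion. On the event from Assumption \ref{ass:feasibility} (probability approaching one), $\wh\lambda(\theta)$ exists and is characterised by the first order condition
\[
\wh\Psi_n(\lambda):=\EE_n\big[e^{\lambda'g_i(\theta)}g_i(\theta)\big]=0 .
\]
The population counterpart $\Psi(\lambda):=\EE[e^{\lambda'g_i(\theta)}g_i(\theta)]$ vanishes at $\lambda_*(\theta)$. This is where Assumption \ref{Ass_absolute_continuity} enters: it guarantees that the dual representation holds, so $\lambda_*(\theta)$ is a well-defined interior minimiser. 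Lemma \ref{lem:continuity} then keeps $\lambda_*(\theta)$ inside the neighbourhood $B_\delta(\lambda_*)$ on which the dominance bounds of Assumption \ref{Ass_3} apply.

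The first step is consistency, $\wh\lambda(\theta)\xrightarrow{p}\lambda_*(\theta)$. The map $\lambda\mapsto\EE_n[e^{\lambda'g_i(\theta)}]$ is convex, and its population limit is strictly convex near $\lambda_*(\theta)$ because the Hessian is $\Omega^{\diamond}_*(\lambda,\theta)$, which is positive definite by Assumption \ref{Ass_1_NS}\textit{(c)}. Assumption \ref{Ass_3}\textit{(e)} with $(\ell,\ell',i)=(1,2,0)$ and \textit{(b)} gives pointwise LLN, and convexity upgrades this to uniform convergence on compacts, in the usual Newey--McFadden convexity lemma fashion. Hence the sample minimiser lies in $B_\delta(\lambda_*)$ with probability approaching one and converges to $\lambda_*(\theta)$.

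The second step is a mean value expansion of $\wh\Psi_n$ around $\lambda_*(\theta)$:
\[
0=\wh\Psi_n(\lambda_*(\theta))+\check\Omega^{\diamond}(\bar\lambda,\theta)\,(\wh\lambda(\theta)-\lambda_*(\theta)),
\]
with $\bar\lambda$ between the two points (row by row). Two facts are then needed.

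(a) $\sqrt n\,\wh\Psi_n(\lambda_*(\theta))=\mathbb G_n[e^{\lambda_*'g_i}g_i]=\mathcal O_p(1)$. This holds because $\Psi(\lambda_*(\theta))=0$ and the summands have finite second moment by Assumption \ref{Ass_3}\textit{(d)}, so Chebyshev applies, using i.i.d. sampling from Assumption \ref{Ass_1_NS}\textit{(a)}.

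(b) $\check\Omega^{\diamond}(\bar\lambda,\theta)\xrightarrow{p}\Omega^{\diamond}_*(\lambda_*(\theta),\theta)$, which is nonsingular. This follows from a uniform LLN over $B_\delta(\lambda_*)$ using the envelope in Assumption \ref{Ass_3}\textit{(e)}, together with continuity in $\lambda$ (dominated convergence) and consistency of $\bar\lambda$.

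Inverting gives $\sqrt n(\wh\lambda(\theta)-\lambda_*(\theta))=-\Omega^{\diamond}_*{}^{-1}\mathbb G_n[\cdot]+o_p(1)=\mathcal O_p(1)$, and the $\eta$--$\delta$ reformulation is just the definition of tightness.

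The main obstacle is the consistency step. The sample problem may be infeasible or the minimiser may escape to infinity, and the envelope bounds in Assumption \ref{Ass_3} hold only on $B_\delta(\lambda_*)$. I would handle this by evaluating the convex sample criterion on the sphere $\partial B_\delta(\lambda_*(\theta))$. Uniform convergence on that compact sphere, together with strict convexity of the limit, shows that with probability approaching one the criterion is strictly larger on the sphere than at the centre. Convexity then forces the global minimiser to lie inside the ball, so no control outside $B_\delta$ is needed.
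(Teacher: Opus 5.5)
Your argument is correct, but it takes a different route from the paper. The paper treats $\wh\lambda(\theta)$ as an M-estimator of $\lambda\mapsto-\EE_n[e^{\lambda'g_i(\theta)}]$ and relies on empirical-process tools. It builds a Lipschitz-in-$\lambda$ envelope for the class $\{e^{\lambda_*'g_i(\theta)}-e^{\lambda'g_i(\theta)}:\lambda\in B_\delta(\lambda_*)\}$ and bounds its bracketing numbers by order $(\delta/\varepsilon)^d$. A Glivenko--Cantelli property then feeds an argmax consistency theorem (Corollary 3.2.3 of van der Vaart and Wellner), and the rate theorem (Corollary 3.2.6), with a modulus of continuity of order $\delta$, gives $r_n\asymp\sqrt n$. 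You instead use the Z-estimator structure: convexity for existence and consistency, a first-order mean value expansion of the score with $\check\Omega^{\diamond}$ at intermediate points, Chebyshev for $\mathbb G_n[e^{\lambda_*(\theta)'g_i(\theta)}g_i(\theta)]$, and a dominated uniform LLN for the Jacobian.

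Your route buys three things. It is more elementary, with no entropy calculations. It also delivers the linear representation of Lemma \ref{lem:1} at the same time: the paper needs the present lemma as an input to control a second-order remainder there, whereas your first-order expansion needs only consistency. Finally, your sphere-plus-convexity step supplies a localization the paper leaves implicit, since the paper proves uniform convergence only over $B_\delta(\lambda_*)$, while an argmax theorem needs either global uniform convergence or tightness of $\wh\lambda(\theta)$.

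The price is that you explicitly use Assumption \ref{Ass_1_NS}\textit{(c)}, which is not among the lemma's listed hypotheses. Some such curvature condition is unavoidable, though, and the paper's rate theorem implicitly requires it too: it needs a nonsingular second derivative of $\lambda\mapsto\EE[e^{\lambda'g_i(\theta)}]$ at $\lambda_*(\theta)$. You can drop your use of Assumption \ref{Ass_3}\textit{(b)}, because $e^{\lambda'g}\le e^{\|\lambda\|_2}+e^{\lambda'g}\|g\|_2^2$ and the last term is dominated under \textit{(e)} with $(\ell,\ell',i)=(1,2,0)$.
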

\proof To prove the lemma we intend to apply \cite[Corollary 3.2.6]{VanDerVaartWellner1996} with $\mathbb{M}_n = -\EE_n\left[e^{\lambda'g_i(\theta)}\right]$. Here, $\theta\in B_{*,n}$ is kept fixed and we see $\mathbb{M}_n$ as a function of the data $w_i$ and of $\lambda$. For $\lambda_* = \arg\max_{\lambda\in\mathbb{R}^d}\mathbb{M}$, where $\mathbb{M} :=  -\EE\left[e^{\lambda'g_i(\theta)}\right]$, define the set of functions $\mathcal{M}_{\delta}(\theta) := \left\{w_i\mapsto e^{\lambda_*'g_i(\theta)} - e^{\lambda'g_i(\theta)}; \lambda \in B_{\delta}(\lambda_*) \right\}$ indexed by $\lambda\in B_{\delta}(\lambda_*)$. Because the functions in $\mathcal{M}_{\delta}(\theta)$ are Lipschitz in the index parameter $\lambda\in B_{\delta}(\lambda_*)$ then, by the MVT and the CS there exists a $\tau\in(0,1)$ such that $\wtl\lambda_{12} := \lambda_1 + \tau(\lambda_2 - \lambda_1)$ with $\lambda_1,\lambda_2\in B_{\delta}(\lambda_*)$ satisfies
$$|e^{\lambda_2'g_i(\theta)} - e^{\lambda_1'g_i(\theta)}| \leq 2\delta \|g_i(\theta) e^{\wtl \lambda_{12}'g_i(\theta)}\|_2$$
for every $w_i$ and every fixed $\theta\in B_{*,n}$. Hence, an envelope function of $\mathcal{M}_{\delta}(\theta)$, denoted by $M_{\delta}(\theta)$, is given by
\[
M_{\delta}(\theta) = 2\delta \|g_i(\theta)\|_2 \sup_{\lambda_1,\lambda_2\in B_{\delta}(\lambda_*)} e^{\wtl \lambda_{12}'g_i(\theta)}
\]
(see the discussion before \cite[Theorem 2.7.11]{VanDerVaartWellner1996} about envelope functions). Moreover, there exists a constant $K$ such that the bracketing number $N_{[\,]}(\varepsilon \|M_{\delta}(\theta)\|_{P,2},\mathcal{M}_{\delta}(\theta), \|\cdot\|_{P,2})$ of $\mathcal{M}_{\delta}(\theta)$ is upper bounded by $K(2\delta)^d/\varepsilon^d$ by \cite[Example 19.7]{VanDerVaart2000} for every $0<\varepsilon < 2\delta $ if $\|M_{\delta}(\theta)\|_{P,2}<C/\sqrt{n}$, which is guaranteed for $\delta = 1/\sqrt{n}$ under Assumption \ref{Ass_3} (d). Hence, the bracketing integral $J_{[\,]}(1,\mathcal{M}_{\delta}(\theta),\|\cdot\|_{P,2}) := \bigintsss_0^{1}\sqrt{1 + \log N_{[\,]}(\varepsilon \|M_{\delta}(\theta)\|_{P,2},\mathcal{M}_{\delta}(\theta), \|\cdot\|_{P,2})}d\varepsilon$ is bounded and the class of functions $\mathcal{M}_{\delta}(\theta)$ is $P$-Glivenko-Cantelli (see \textit{e.g.} \cite[Theorem 2.5.6]{VanDerVaartWellner1996} and its simplified version discussed on \cite[pages 243-244]{VanDerVaartWellner1996}). This ensure that $|\mathbb{M}_n - \mathbb{M}| \rightarrow 0$ in $P$-probability uniformly over $\mathcal{M}_{\delta}(\theta)$ for every $\theta\in B_{*,n}$.\\
\indent Therefore, by \cite[Corollary 3.2.3]{VanDerVaartWellner1996}, $\wh\lambda(\theta) \xrightarrow{p} \lambda_*(\theta)$ for every fixed $\theta\in B_{*,n}$. Finally, to apply \cite[Corollary 3.2.6]{VanDerVaartWellner1996} we need to upper bound $\EE (\sup_{f\in \mathcal{M}_{1/\sqrt{n}}}\left|\sqrt{n}\EE_n[f - \EE(f)]\right|)$. According with the discussion following \cite[Corollary 3.2.6]{VanDerVaartWellner1996} (and because $J_{[\,]}(1,\mathcal{M}_{\delta},\|\cdot\|_{P,2})$ is bounded by the previous argument), the latter is upper bounded by a constant times $\|M_{1/\sqrt{n}}\|_{P,2}$. Since $\|M_{1/\sqrt{n}}\|_{P,2} \leq C/\sqrt{n}$, then $r_n^4 \|M_{1/r_n}\|_{P,2}^2 \leq n$ yields $r_n \asymp \sqrt{n}$. Therefore, the result of \cite[Corollary 3.2.6]{VanDerVaartWellner1996} holds with $r_n \leq \sqrt{n}$. This proves the result of the Lemma.

\begin{flushright}
  $\square$
\end{flushright}

%=========================================================================================================
\begin{lem}\label{lem:1}
    Let Assumptions \ref{Ass_absolute_continuity}, \ref{ass:feasibility}, \ref{Ass_1_NS} (a), \ref{Ass_3} (e) hold. Then,
    $$\sqrt{n}(\wh\lambda(\theta_*) - \lambda_*(\theta_*)) = -\Omega_*^{\dagger}(\theta_*)^{-1}\mathbb{G}_n[\tau_i^{\dagger}(\lambda_*,\theta_*)\varepsilon_i(\theta_*)\wtl{w}_i] + o_p(1).$$
\end{lem}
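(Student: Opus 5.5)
This is the standard Z-estimator argument for the dual multiplier at the fixed point $\theta_*$: linearize the sample first-order condition around the population one. Write $g_i:=g_b(w_i,\theta_*)=\varepsilon_i(\theta_*)\wtl w_i$. By definition $\wh\lambda(\theta_*)$ solves $\EE_n[e^{\lambda'g_i}g_i]=0$, and $\lambda_*:=\lambda_*(\theta_*)$ solves the population condition $\EE[e^{\lambda_*'g_i}g_i]=0$. The solution is well defined with probability approaching one by Assumption \ref{ass:feasibility}, and the population condition holds under Assumption \ref{Ass_absolute_continuity}.

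Lemma \ref{lem:2} gives $\|\wh\lambda(\theta_*)-\lambda_*\|_2=\mathcal O_p(n^{-1/2})$. On that event I will take a mean value expansion of $\lambda\mapsto\EE_n[e^{\lambda'g_i}g_i]$ around $\lambda_*$, row by row. This gives
\[
0=\EE_n[e^{\lambda_*'g_i}g_i]+\EE_n[e^{\bar\lambda'g_i}g_ig_i'](\wh\lambda(\theta_*)-\lambda_*),
\]
with $\bar\lambda$ on the segment between $\lambda_*$ and $\wh\lambda(\theta_*)$. Because the population first-order condition makes $\EE[e^{\lambda_*'g_i}g_i]=0$, the first term equals $n^{-1/2}\mathbb G_n[e^{\lambda_*'g_i}g_i]$.

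The main step is showing that the Hessian term converges:
\[
\EE_n[e^{\bar\lambda'g_i}g_ig_i']\xrightarrow{p}\Omega^\diamond_*(\theta_*)=\EE[e^{\lambda_*'g_i}\varepsilon_i(\theta_*)^2\wtl w_i\wtl w_i'].
\]
I will split this into two pieces. The first is a uniform law of large numbers over $\lambda\in B_\delta(\lambda_*)$, which follows from \cite[Lemma 2.4]{NeweyMcFadden1994}; its dominance condition is supplied by Assumption \ref{Ass_3}(e) with $(\ell,\ell',i)=(1,2,0)$, since the operator norm is bounded by $d$ times the maximal entry. The second is continuity of $\lambda\mapsto\EE[e^{\lambda'g_i}g_ig_i']$ at $\lambda_*$, which follows from dominated convergence under the same envelope. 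This argument is the one referred to as Lemma \ref{lem:Omega} elsewhere in the paper. By Assumption \ref{Ass_1_NS}(c) the limit is invertible, so the sample Hessian is invertible with probability approaching one.

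Solving the expansion for the multiplier and multiplying by $\sqrt n$ then gives
\[
\sqrt n(\wh\lambda(\theta_*)-\lambda_*)=-\Omega^\diamond_*(\theta_*)^{-1}\mathbb G_n[e^{\lambda_*'g_i}g_i]+o_p(1).
\]
The $o_p(1)$ is justified because $\mathbb G_n[e^{\lambda_*'g_i}g_i]=\mathcal O_p(1)$ by the Lindeberg–Lévy CLT. Its variance is finite by Assumption \ref{Ass_3}(d), and the observations are i.i.d. by Assumption \ref{Ass_1_NS}(a).

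It remains to pass to the normalized tilt. Set $c:=\EE[e^{\lambda_*'g_i}]$, which is a positive constant. Then $e^{\lambda_*'g_i}=c\,\tau_i^\dagger(\lambda_*,\theta_*)$ and $\Omega^\diamond_*(\theta_*)=c\,\Omega^\dagger_*(\theta_*)$. The factors of $c$ cancel, which yields exactly the claimed representation. I expect the only delicate point to be the uniform convergence of the random Hessian at the data-dependent $\bar\lambda$; the localization from Lemma \ref{lem:2} together with the envelope in Assumption \ref{Ass_3} handles it.
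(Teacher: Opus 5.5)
Your proof is correct and follows the paper's route: you linearize the sample first-order condition for $\wh\lambda(\theta_*)$ around $\lambda_*(\theta_*)$, invert the limit $\Omega_*^{\diamond}(\theta_*)$ of the Jacobian, and rescale $\tau_i^{\diamond}$ to $\tau_i^{\dagger}$. The paper does that last step without comment; you justify it via the constant $c$. The only difference is technical. The paper expands to second order with the Hessian fixed at $\lambda_*(\theta_*)$, where a plain LLN suffices, and bounds the quadratic remainder by $\mathcal{O}_p(1/n)$ using the $\sqrt{n}$-rate of Lemma \ref{lem:2} and Assumption \ref{Ass_3}\textit{(e)} with $(\ell,\ell',i)=(1,3,1)$. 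Your exact mean-value form instead puts the Hessian at $\bar\lambda$ and handles it by a uniform LLN plus DCT under $(1,2,0)$, as in Lemma \ref{lem:Omega}, so it needs only consistency of $\wh\lambda(\theta_*)$ rather than its rate.
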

\proof
Let $\wh\lambda_*\equiv \wh\lambda(\theta_*)$ and $\lambda_*\equiv \lambda_*(\theta_*)$. By a second order MVT expansion of the function $\lambda \mapsto \EE_n[e^{\lambda' \widetilde w_{i} \varepsilon_i(\theta_*)}\varepsilon_i(\theta_*) \widetilde w_i]$  around $\lambda_*$, evaluated at $\wh\lambda$, we get: $\wh\lambda_* - \lambda_* = -\Omega_*^{\diamond}(\theta_*)^{-1}\EE_n[\tau_i^{\diamond}(\lambda_*,\theta_*)\varepsilon_i(\theta_*)\wtl{w}_i] + \mathcal{O}_p(1/n)$, where the convergence in probability of $\check{\Omega}_*^{\diamond}(\lambda_*,\theta_*)$ towards $\Omega_*^{\diamond}(\theta_*)$ follows from the Law of Large Number, and the $\mathcal{O}_p(1/n)$ term follows from Lemma \ref{lem:2} and Assumption \ref{Ass_3} \textit{(e)} (with $(\ell,\ell',i) = (1,3,1)$) which allow to control the quadratic term in the MVT expansion. Moreover, $\EE[\tau_i^{\diamond}(\lambda_*,\theta_*)\varepsilon_i(\theta_*)\wtl{w}_i] = 0$. Therefore,
\begin{multline*}
  \sqrt{n} \left(\wh\lambda(\theta_*) - \lambda_*(\theta_*)\right) = -\Omega_*^{\diamond}(\theta_*)^{-1}\mathbb{G}_n[\tau_i^{\diamond}(\lambda_*,\theta_*)\varepsilon_i(\theta_*)\wtl{w}_i] + o_p(1)\\
  = -\Omega_*^{\dagger}(\theta_*)^{-1}\mathbb{G}_n[\tau_i^{\dagger}(\lambda_*,\theta_*)\varepsilon_i(\theta_*)\wtl{w}_i] + o_p(1).
\end{multline*}
\begin{flushright}
  $\square$
\end{flushright}

\begin{lem}\label{lem:asymp:normality:FOC}
    Let Assumptions \ref{Ass_absolute_continuity} - \ref{Ass_3} hold. Then, for every $h$ belonging to a closed ball $K_n\subset \mathbb{R}^p$ centred on zero with radius $M_n\rightarrow\infty$,
\begin{multline*}
  h' \sqrt{n}\left(\frac{d\wh\lambda(\theta_*)'}{d\theta}\wh g(\theta_*) - \frac{d \lambda_{*}(\theta_{*})'}{d\theta} \EE[g_i(\theta_{*})]\right) - h'\sqrt{n}\left(\EE_n[\wtl{w}_{1,i}\wtl{w}_{i}']\wh\lambda(\theta_*) - \EE[\wtl{w}_{1i}\wtl{w}_i']\lambda_{*}(\theta_{*})\right)\\
  + \sqrt{n}h'\left(\EE_n\left[\tau_i(\wh\lambda,\theta_*)\wtl{w}_{1,i}\wtl{w}_{i}'\right] \wh\lambda_*(\theta) - \EE[\tau_i^{\dagger}(\lambda_*,\theta_{*})\wtl{w}_{1i}\wtl{w}_i']\lambda_*(\theta_{*})\right)
\end{multline*}
\noindent is asymptotically normal with zero mean and variance $h'H_*h$ defined as
\begin{multline*}
  h'H_*h := h'A\EE^{Q^*(\theta_*)}\left[\tau_i^{\dagger}(\lambda_*,\theta_*)\varepsilon_{i*}^2 \wtl w_i\wtl w_i'\right] A'h\\
  + Var \Big(\tau_i^{\dagger}(\lambda_*,\theta_*)h'\wtl w_{1,i}\wtl w_i'(I + \lambda_*(\theta_*) g_i(\theta_*)')B + h'a' (1 - \tau_i^{\dagger}(\lambda_*,\theta_*))\varepsilon_{i*}(\theta_*)\wtl{w}_i\\
  + h'a'\tau_i^{\dagger}(\lambda_*,\theta_*)\varepsilon_i(\theta_*)\wtl{w}_i'\Omega_*^{\dagger}(\theta_*)^{-1}\EE^{Q^*(\theta_*)}\left[\varepsilon_{i*}^3 \wtl w_i \wtl w_i \wtl w_i'  \right]B\\
  - h'\tau_i^{\dagger}(\lambda_*,\theta_*)\varepsilon_{i*}\wtl{w}_i'\Omega_*^{\dagger}(\theta_*)^{-1}\EE^{Q^*(\theta_*)}\left[g_i(\theta_*)\wtl w_{1,i} \wtl w_i'(I + \lambda_*(\theta_*) g_i(\theta_*)')\right]B\\
  - h'a' \tau_i^{\dagger}(\lambda_*,\theta_*)\varepsilon_{i*}^2 \wtl w_{i} \wtl w_i'B - \tau_i^{\dagger}(\lambda_*,\theta_*)\EE^{Q_*(\theta_*)}\left[h'\wtl{w}_{1,i}\wtl{w}_i'\right] \lambda_*(\theta_*)\\
  - \left(1 - \tau_i^{\dagger}(\lambda_*,\theta_*)\right)h'\wtl{w}_{1,i}\wtl{w}_{i}'\lambda_*(\theta_*)\Big)\\
  + 2 h'A\EE^{Q^*(\theta_*)}\left[\tau_i^{\dagger}(\lambda_*,\theta_*)\varepsilon_{i*} \wtl w_i \wtl w_i'(I + \lambda_*(\theta_*) g_i(\theta_*)')B\wtl w_{1,i}'h\right]\\
  + 2 h'A\EE^{Q^*(\theta_*)}\left[\left(1 - \tau_i^{\dagger}(\lambda_*,\theta_*)\right)\varepsilon_{i*}^2  \wtl w_i \wtl w_i'\right]a h\\
  + 2 h'A \EE^{Q^*(\theta_*)}\left[\tau_i^{\dagger}(\lambda_*,\theta_*)\varepsilon_{i*}^2 \wtl w_i \wtl{w}_i'\right]\Omega_*^{\dagger}(\theta_*)^{-1}\EE^{Q^*(\theta_*)}\left[\varepsilon_{i*}(\theta_*)^3 \wtl w_i \wtl w_i'  B\wtl w_i'\right]a h \\
  - 2 h'A\EE^{Q^*(\theta_*)}\left[\tau_i^{\dagger}(\lambda_*,\theta_*)\varepsilon_{i*}^2 \wtl w_i \wtl{w}_i'\right]\Omega_*^{\dagger}(\theta_*)^{-1}\EE^{Q^*(\theta_*)}\left[g_i(\theta_*)\wtl w_i'(I + \lambda_*(\theta_*) g_i(\theta_*)')B\wtl w_{1,i}'\right]h\\
  - 2 h'A\EE^{Q^*(\theta_*)}\left[\tau_i^{\dagger}(\lambda_*,\theta_*)\varepsilon_{i*}^3 \wtl w_i \wtl w_i' B\wtl w_{i}'\right]  a h \\
  - 2 h'A \EE^{Q^*(\theta_*)}\left[\tau_i^{\dagger}(\lambda_*,\theta_*) \varepsilon_{i*} \wtl w_i \right] \lambda_*(\theta_*)'\EE^{Q^*(\theta_*)}[\wtl{w}_i\wtl{w}_{1,i}'h]\\
  - 2 h'A \EE^{Q^*(\theta_*)}\left[\left(1 - \tau_i^{\dagger}(\lambda_*,\theta_*)\right) \varepsilon_{i*} \wtl w_i \lambda_*(\theta_*)' \wtl{w}_{i} \wtl{w}_{1,i}'\right] h,
\end{multline*}
\noindent where $H_*$ is a positive definite matrix, $A := \EE[\wtl w_{1i} \wtl w_i'] \Omega_*^{\dagger}(\theta_*)^{-1}$ is a $(p\times d)$ matrix, $B := \Omega_*^{\dagger}(\theta_*)^{-1}\EE[g_i(\theta_*)]$ is a $d$-vector, and $a := \Omega_*^{\dagger}(\theta)^{-1}\EE^{Q^*(\theta)} \left[(I + g_i(\theta)\lambda_*(\theta) ')\wtl w_i \wtl w_{1,i}'\right]$.
\end{lem}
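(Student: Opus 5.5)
The plan is to linearize each of the three bracketed differences around $(\lambda_*(\theta_*),\theta_*)$, write the whole expression as $h'\mathbb{G}_n[\phi_i]+o_p(1)$ for a single i.i.d. mean-zero influence function $\phi_i$, and then apply the Lindeberg--L\'evy CLT. Uniformity over $h\in K_n$ is easy: every term is linear in $h$, so it suffices to control the $p$-vector multiplying $h'$.

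\textbf{First bracket.} Write $\frac{d\wh\lambda(\theta_*)'}{d\theta}\wh g(\theta_*)-\frac{d\lambda_*(\theta_*)'}{d\theta}\EE[g_i(\theta_*)]$ as
\[
\frac{d\wh\lambda(\theta_*)'}{d\theta}\bigl(\wh g(\theta_*)-\EE g_i(\theta_*)\bigr)+\Bigl(\frac{d\wh\lambda(\theta_*)'}{d\theta}-\frac{d\lambda_*(\theta_*)'}{d\theta}\Bigr)\EE g_i(\theta_*).
\]
The first piece is $A\,\mathbb{G}_n[g_i(\theta_*)]/\sqrt n+o_p(n^{-1/2})$, using Lemma \ref{lem:Omega} and the uniform LLN from Assumption \ref{Ass_3}. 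For the second piece, use the explicit formulas \eqref{App_derivative_lambda_sample} and \eqref{App_BvM_Lem_eq_7_miss:paper}. Write the sample derivative as $N_n\check\Omega^{\diamond}(\wh\lambda,\theta_*)^{-1}$, with numerator $N_n:=\EE_n[\tau_i^{\diamond}(\wh\lambda,\theta_*)\wtl w_{1,i}\wtl w_i'(I+\wh\lambda g_i')]$. Expand as
\[
N_n\check\Omega^{-1}-N\Omega^{-1}=(N_n-N)\Omega^{-1}-N\Omega^{-1}(\check\Omega-\Omega)\Omega^{-1}+o_p(n^{-1/2}).
\]
Each of $N_n-N$ and $\check\Omega-\Omega$ splits into two parts:
\begin{itemize}
\item an empirical-process part at fixed $\lambda_*$, of order $\mathbb{G}_n[\cdot]/\sqrt n$;
\item a $\lambda$-perturbation part obtained by a first-order MVT in $\lambda$, which produces $\EE^{Q^*}[\cdots\varepsilon_{i*}^3\wtl w_i\wtl w_i\wtl w_i']$-type matrices times $\sqrt n(\wh\lambda-\lambda_*)$.
\end{itemize}
Then substitute Lemma \ref{lem:1} for $\sqrt n(\wh\lambda-\lambda_*)$. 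Assumption \ref{Ass_3}(e),(g) dominates the MVT remainders uniformly over $B_{1/\sqrt n}(\lambda_*)$, and Lemma \ref{lem:2} makes them $o_p(1)$ after multiplying by $\sqrt n$. This is where the $B$- and $a$-terms in $H_*$ arise.

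\textbf{Second and third brackets.} For the second, write
\[
\EE_n[\wtl w_{1,i}\wtl w_i']\wh\lambda-\EE[\wtl w_{1,i}\wtl w_i']\lambda_*=(\EE_n-\EE)[\wtl w_{1,i}\wtl w_i']\lambda_*+\EE_n[\wtl w_{1,i}\wtl w_i'](\wh\lambda-\lambda_*).
\]
The first piece is handled by the CLT and the second by Lemma \ref{lem:1}. For the third, expand $\tau_i(\wh\lambda,\theta_*)$ around $\tau_i^{\dagger}(\lambda_*,\theta_*)$. This requires two ingredients. First, replace the random normalization $\EE_n[e^{\lambda_*'g_j}]$ by $\EE[e^{\lambda_*'g_j}]$; a delta method gives a $-\tau_i^{\dagger}\EE^{Q^*}[\wtl w_{1,i}\wtl w_i']\lambda_*$ contribution per observation. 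Second, take the derivative of $\tau_i$ in $\lambda$, which gives $\tau_i^{\dagger}(g_i-\EE^{Q^*}g_i)'(\wh\lambda-\lambda_*)$. Since $\EE^{Q^*}g_i=0$ by the FOC for $\lambda_*$, this reduces to $\tau_i^{\dagger}g_i'(\wh\lambda-\lambda_*)$. Collecting the pieces, the $\wh\lambda-\lambda_*$ coefficients combine with $\EE[(1-\tau_i^{\dagger})\wtl w_{1,i}\wtl w_i']$, which explains the $(1-\tau_i^{\dagger})$ terms.

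\textbf{Zero mean, CLT and variance.} The population FOC \eqref{FOC_theta_circ} guarantees that the deterministic $O(\sqrt n)$ parts cancel exactly, so only centered terms remain. The expression therefore equals $h'\mathbb{G}_n[\phi_i]+o_p(1)$, where $\phi_i=A\tau_i^{\dagger}\varepsilon_{i*}\wtl w_i+\psi_i$ and $\psi_i$ collects the remaining centered summands. Each component of $\phi_i$ has a finite second moment by Assumption \ref{Ass_3}(d),(e). The CLT then gives normality with variance $h'\mathrm{Var}(\phi_i)h$. Expanding $\mathrm{Var}(\phi_i)$ produces the displayed form of $H_*$:
\begin{itemize}
\item the $A$-block variance;
\item the variance of the remaining collection;
\item twice the cross-covariances.
\end{itemize}
Positive definiteness of $H_*$ follows from Assumption \ref{Ass_1_NS}(c) and the rank condition in Assumption \ref{Ass_2_NS}, provided $\phi_i$ is not degenerate in any direction; I would argue this from the $A$-block together with $\Omega_*^{\dagger}$ being positive definite.

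The main obstacle is the first bracket, the expansion of $d\wh\lambda(\theta_*)'/d\theta$ to order $n^{-1/2}$. It involves products of a random inverse matrix with random numerators, each depending on $\wh\lambda$. Controlling the second-order MVT remainders requires uniform-in-$\lambda$ dominance of third-moment-type functions, which Assumption \ref{Ass_3}(e),(g) is designed to supply. Keeping the bookkeeping straight enough to match the stated $H_*$ exactly will also be tedious. If it proves unwieldy, I would settle for asymptotic normality with variance $h'\mathrm{Var}(\phi_i)h$ for an explicitly defined $\phi_i$ and identify it with $H_*$ term by term afterwards.
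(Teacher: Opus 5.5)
Your route is the paper's: split the expression into the three brackets ($\mathcal A_1,\mathcal A_2,\mathcal A_3$ in the proof), expand everything in $\lambda$, substitute Lemma~\ref{lem:1}, and apply the Lindeberg--L\'evy CLT. The one technical difference is minor. For the $\lambda$-perturbation of $\check\Omega^{\diamond}$ the paper uses asymptotic equicontinuity plus Lemma~\ref{lem:Omega_star}, while you use a sample-level mean value expansion with a uniform LLN; both work under Assumption~\ref{Ass_3}(e).

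Your first bracket, however, contains a concrete error. Write $g_i:=g_i(\theta_*)$, $\lambda_*:=\lambda_*(\theta_*)$ and $\EE^{Q^*}:=\EE^{Q^*(\theta_*)}$. Then $d\wh\lambda(\theta_*)'/d\theta$ converges to $d\lambda_*(\theta_*)'/d\theta=\EE^{Q^*}[\wtl w_{1,i}\wtl w_i'(I+\lambda_*g_i')]\Omega_*^{\dagger}(\theta_*)^{-1}=a'$, not to $A=\EE[\wtl w_{1,i}\wtl w_i']\Omega_*^{\dagger}(\theta_*)^{-1}$. These differ in general when $\lambda_*\neq0$, which is exactly the case of this lemma. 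So the first piece is $h'a'\mathbb G_n[g_i]$, not $h'A\,\mathbb G_n[g_i]$.

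The $A$-block appears only after combining with brackets two and three. There the coefficients of $\sqrt n(\wh\lambda-\lambda_*)$ sum to $-h'\EE[\wtl w_{1,i}\wtl w_i']+h'\EE^{Q^*}[\wtl w_{1,i}\wtl w_i'(I+\lambda_*g_i')]$. Lemma~\ref{lem:1} turns this into $h'A\,\mathbb G_n[\tau_i^{\dagger}g_i]-h'a'\,\mathbb G_n[\tau_i^{\dagger}g_i]$, so the net linear part is $h'A\,\mathbb G_n[\tau_i^{\dagger}g_i]+h'a'\,\mathbb G_n[(1-\tau_i^{\dagger})g_i]$. With your $A\,\mathbb G_n[g_i]$ the influence function is wrong.

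Also, \eqref{FOC_theta_circ} plays no role inside this lemma. Each bracket is already a sample-minus-population difference, so there are no deterministic $O(\sqrt n)$ parts and centering is automatic. The FOC is what created these brackets in Theorem~\ref{thm:stochasticLAN:endogeneity}.

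Second, carry out your full linearization of $N_n=\EE_n[\tau_i^{\diamond}(\wh\lambda,\theta_*)\wtl w_{1,i}\wtl w_i'(I+\wh\lambda g_i')]$ in $\lambda$. You will get a term that the paper's decomposition \eqref{eq:5} lacks, and without it \eqref{eq:5} is not an identity. This is the effect of $\wh\lambda$ through the factor $(I+\wh\lambda g_i')$. With $u:=\sqrt n(\wh\lambda-\lambda_*)$, it equals $\sqrt n\,\EE_n[\tau_i^{\diamond}(\lambda_*,\theta_*)\,h'\wtl w_{1,i}\wtl w_i'(\wh\lambda-\lambda_*)\,g_i']\,\check\Omega^{\diamond}(\wh\lambda,\theta_*)^{-1}\wh g(\theta_*)=h'\EE^{Q^*}[\wtl w_{1,i}\wtl w_i'(g_i'B)]\,u+o_p(1)$.

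This term is not negligible when $\EE[g_i]\neq0$. Since $g_i=\varepsilon_{i*}\wtl w_i$, it equals the $I$-part of the paper's term $u'\EE^{Q^*}[g_i\,h'\wtl w_{1,i}\wtl w_i'(I+\lambda_*g_i')]B$. Both integrands are $\varepsilon_{i*}(h'\wtl w_{1,i})(\wtl w_i'u)(\wtl w_i'B)$. Hence your $\mathrm{Var}(\phi_i)$ will carry $(2I+\lambda_*g_i')$ in that summand of the $\mathrm{Var}(\cdot)$ block and in its cross-covariance with the $A$-block. The term-by-term identification with the displayed $H_*$ will therefore not close. Your fallback, normality with variance $h'\mathrm{Var}(\phi_i)h$ for an explicit $\phi_i$, is what the argument actually delivers.

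Two smaller points. Uniformity over $K_n$ is not automatic when $M_n\to\infty$: a remainder $h'r_n$ with $r_n=o_p(1)$ is uniformly negligible only if $M_n\|r_n\|\xrightarrow{p}0$. The lemma is pointwise in $h$, so you do not need it. Also, positive definiteness of the $A$-block does not make $\mathrm{Var}(\phi_i)$ positive definite, since cross-covariances can cancel it. The paper only asserts nonsingularity of $H_*$, so this must be assumed or proved separately.
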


\proof
In this proof we denote $\varepsilon_{i*} := \varepsilon_i(\theta_*)$ to simplify the expressions. We start by defining the terms $\mathcal{A}_1$, $\mathcal{A}_2$ and $\mathcal{A}_3$ as follows:
\begin{multline*}
  h' \sqrt{n}\left(\frac{d\wh\lambda(\theta_*)'}{d\theta}\wh g(\theta_*) - \frac{d \lambda_{*}(\theta_{*})'}{d\theta} \EE[g_i(\theta_{*})]\right) - h'\sqrt{n}\left(\EE_n[\wtl{w}_{1,i}\wtl{w}_{i}']\wh\lambda(\theta_*) - \EE[\wtl{w}_{1,i}\wtl{w}_i']\lambda_{*}(\theta_{*})\right)\\
    + \sqrt{n}h'\left(\EE_n\left[\tau_i(\wh\lambda,\theta_*)\wtl{w}_{1,i}\wtl{w}_{i}'\right] \wh\lambda(\theta_*) - \EE[\tau_i^{\dagger}(\lambda_*,\theta_{*})\wtl{w}_{1,i}\wtl{w}_i']\lambda_*(\theta_{*})\right) =: \mathcal{A}_1 + \mathcal{A}_2 + \mathcal{A}_3.
\end{multline*}

First, we analyse $\mathcal{A}_1$:
\begin{displaymath}
    \mathcal{A}_1 = h' \sqrt{n}\left(\frac{d\wh\lambda(\theta_*)'}{d\theta}- \frac{d \lambda_{*}(\theta_{*})'}{d\theta}\right)\wh g(\theta_*) + h'\frac{d \lambda_{*}(\theta_{*})'}{d\theta} \sqrt{n}\EE_n\left[g_i(\theta_*) - \EE[g_i(\theta_{*})]\right]
\end{displaymath}
\noindent and by using the expression for $\frac{d\wh\lambda(\theta_*)'}{d\theta}$ and $\frac{d \lambda_{*}(\theta_{*})'}{d\theta}$ given in Lemmas \ref{lem:continuity:sample} and \ref{lem:continuity} we get:
\begin{multline*}
    \sqrt{n}h'\left(\frac{d\wh\lambda(\theta_*)'}{d\theta}- \frac{d \lambda_{*}(\theta_{*})'}{d\theta}\right) = \sqrt{n}\Big(\EE_n \left[\tau_i^{\diamond}(\wh\lambda,\theta_*)h'\wtl w_{1,i}\wtl w_i'(I + \wh\lambda(\theta_*) g_i(\theta_*)')\right]\check\Omega^{\diamond}(\wh\lambda,\theta_*)^{-1}\\
    - \EE \left[e^{\lambda_{*}(\theta_*)'g_i(\theta_*)}h'\wtl w_{1,i}\wtl w_i'(I + \lambda_*(\theta_*) g_i(\theta_*)')\right]\left(\EE[e^{\lambda_{*}(\theta_*)'g_i(\theta_*)}\varepsilon_{i*}^2\wtl w_{i}\wtl w_{i}']\right)^{-1}\Big).
\end{multline*}
\noindent  By a first order MVT expansion of $\lambda \mapsto\tau_i^{\diamond}(\lambda,\theta_*)$ around $\lambda_*(\theta_*)$, evaluated at $\wh\lambda(\theta_*)$, there exists a $\tau\in[0,1]$ such that: $\tau_i^{\diamond}(\wh\lambda,\theta_*) = \tau_i^{\diamond}(\lambda_*,\theta_*) + \tau_i^{\diamond}(\wtl\lambda_*,\theta_*)g_i(\theta_*)'(\wh\lambda(\theta_*) - \lambda_*(\theta_*))$, where $\wtl\lambda_*:=\tau \wh\lambda(\theta_*) + (1 - \tau)\lambda_*(\theta_*)$. By replacing this in the previous expression we get:
\begin{multline}
    \sqrt{n}h'\left(\frac{d\wh\lambda(\theta_*)'}{d\theta}- \frac{d \lambda_{*}(\theta_{*})'}{d\psi}\right) = \mathbb{G}_n\left[\tau_i^{\diamond}(\lambda_*,\theta_*)h'\wtl w_{1,i}\wtl w_i'(I + \lambda_*(\theta_*) g_i(\theta_*)')\right]\check\Omega^{\diamond}(\wh\lambda,\theta_*)^{-1}\\
    + \sqrt{n}\EE_n\left[\tau_i^{\diamond}(\wtl\lambda_*,\theta_*)(\wh\lambda(\theta_*) - \lambda_*(\theta_*))'g_i(\theta_*)h'\wtl w_{1,i} \wtl w_i'(I + \wh\lambda(\theta_*) g_i(\theta_*)')\right]\check\Omega^{\diamond}(\wh\lambda,\theta_*)^{-1}\\
    %+ \sqrt{n}\EE_n\left[\tau_i^{\diamond}(\lambda_*,\theta_*)\wtl w_{1,i} \wtl w_i'(\wh\lambda(\theta_*) - \lambda_*(\theta_*)) g_i(\theta_*)'\right]\check\Omega^{\diamond}(\wh\lambda,\theta_*)^{-1}\\
    + \sqrt{n}\EE \left[e^{\lambda_{*}(\theta_*)'g_i(\theta_*)}h'\wtl w_{1,i}\wtl w_i'(I + \lambda_*(\theta_*) g_i(\theta_*)')\right]\left(\check\Omega^{\diamond}(\wh\lambda,\theta_*)^{-1} - \Omega_*^{\diamond}(\theta_*)^{-1}\right). \label{eq:5}
\end{multline}
The second term of \eqref{eq:5} postmultiplied by $\wh g(\theta_*)$ is equal to (by using Lemma \ref{lem:1} above and the LLN)
\begin{multline*}
  \sqrt{n}\EE_n\left[\tau_i^{\diamond}(\wtl\lambda_*,\theta_*)(\wh\lambda(\theta_*) - \lambda_*(\theta_*))'g_i(\theta_*) h'\wtl w_{1,i} \wtl w_i'(I + \wh\lambda(\theta_*) g_i(\theta_*)')\right]\check\Omega^{\diamond}(\wh\lambda,\theta_*)^{-1}\wh g(\theta_*)\\
  = \sqrt{n}(\wh\lambda(\theta_*) - \lambda_*(\theta_*))'\EE_n\left[\tau_i^{\dagger}(\wtl\lambda_*,\theta_*)g_i(\theta_*)h'\wtl w_{1,i} \wtl w_i'(I + \wh\lambda(\theta_*) g_i(\theta_*)')\right]\check\Omega^{\dagger}(\wh\lambda,\theta_*)^{-1}\wh g(\theta_*)\\
  = - \mathbb{G}_n[\tau_i^{\dagger}(\lambda_*,\theta_*)\varepsilon_{i*}\wtl{w}_i']\Omega_*^{\dagger}(\theta_*)^{-1}\EE^{Q^*(\theta_*)}\left[g_i(\theta_*)h'\wtl w_{1,i} \wtl w_i'(I + \lambda_*(\theta_*) g_i(\theta_*)')\right]\Omega_*^{\dagger}(\theta_*)^{-1}\EE[g_i(\theta_*)].
\end{multline*}
We then analyse the third term in \eqref{eq:5} postmultiplied by $\wh g(\theta_*)$:
\begin{multline}
    \sqrt{n}\EE \left[e^{\lambda_{*}(\theta_*)'g_i(\theta_*)}h'\wtl w_{1,i}\wtl w_i'(I + \lambda_*(\theta_*) g_i(\theta_*)')\right]\left(\check\Omega^{\diamond}(\wh\lambda,\theta_*)^{-1} - \Omega_*^{\diamond}(\theta_*)^{-1}\right) \wh g(\theta_*)\\
    = h'\sqrt{n}\EE^{Q^*(\theta_*)} \left[\wtl w_{1,i}\wtl w_i'(I + \lambda_*(\theta_*) g_i(\theta_*)')\right]\Omega_*^{\dagger}(\theta_*)^{-1} \left(\Omega_*^{\diamond}(\theta_*) - \check\Omega^{\diamond}(\wh\lambda,\theta_*)\right)\check\Omega^{\diamond}(\wh\lambda,\theta_*)^{-1} \wh g(\theta_*)\\
    =: h'a'\sqrt{n}\left(\Omega_*^{\diamond}(\theta_*) - \check\Omega^{\diamond}(\wh\lambda,\theta_*)\right)\check\Omega^{\diamond}(\wh\lambda,\theta_*)^{-1} \wh g(\theta_*),\label{eq:6}
\end{multline}
which we intend to control by using \cite[Theorem 6.15]{VVaartSaintFlour}. Hence, we verify the conditions of that theorem:\\
\indent (1) because $\wh\lambda(\theta_*)\xrightarrow{p} \lambda_*(\theta_*)$ as $n\rightarrow \infty$ by Lemma \ref{lem:2} and by the continuous mapping theorem, $|h'a'\tau_i^{\diamond}(\wh\lambda,\theta_*)\varepsilon_{i*}^2\wtl w_i \wtl{w}_{i,k} - h'a'\tau_i^{\diamond}(\lambda_*,\theta_*)\varepsilon_{i*}^2\wtl w_i \wtl{w}_{i,k}|\xrightarrow{p} 0$ for every $k=1,\ldots,d$. Moreover, for every $k=1,\ldots,d$, for any $\delta >0$ and $\eta > 0$ there exists a number $N(\delta,\eta)$ such that for every $n> N(\delta, \eta)$,
$$|a'\tau_i^{\diamond}(\wh\lambda,\theta_*)\varepsilon_{i*}^2\wtl w_i \wtl{w}_{i,k} - a'\tau_i^{\diamond}(\lambda_*,\theta_*)\varepsilon_{i*}^2\wtl w_i \wtl{w}_{i,k}|^2\leq 4 \sup_{\lambda\in B_{\delta}(\lambda_*(\theta_*))} e^{2\lambda'g_i(\theta_*)} \varepsilon_{i*}^4 h'a'\wtl w_i \wtl w_i' a h \wtl{w}_{ik}^2$$
\noindent with probability $\geq \eta$. By this and Assumption \ref{Ass_3} \textit{(e)} (with $(\ell,\ell',i) = (2,4,2)$) which ensures integrability of the previous upper bound, we can apply the dominated convergence theorem which implies that $\EE[(h'a'\tau_i^{\diamond}(\wh\lambda,\theta_*)\varepsilon_{i*}^2\wtl w_i \wtl{w}_{i,k} - h'a'\tau_i^{\diamond}(\lambda_*,\theta_*)\varepsilon_{i*}^2\wtl w_i \wtl{w}_{i,k})^2] \xrightarrow{p} 0$ for every $k=1,\ldots,d$. The latter is one of the conditions of \cite[Theorem 6.15]{VVaartSaintFlour}.\\
\indent (2) $\EE\left[\left(h'a'\tau_i^{\diamond} (\lambda_*,\theta_*) \varepsilon_{i*}^2\wtl w_i \wtl w_{ik}\right)^2\right]$ is bounded for every $k=1,\ldots,d$ by Assumption \ref{Ass_3} \textit{(e)} (with $(\ell,\ell',i) = (2,4,2)$).\\
\indent (3) For every $k=1,\ldots,d$ and any $\delta > 0$, let us define the class of functions $\mathcal{F}_{k,\delta} := \{h'a'e^{\lambda'g_i(\theta_*)} \varepsilon_{i*}^2\wtl w_i \wtl w_{ik}; \lambda\in B_{\delta}(\lambda_*(\theta_*))\}$ which is a Donsker class under Assumption \ref{Ass_3} \textit{(e)} (with $(\ell,\ell',i) = (2,4,2)$) for some $\delta >0$ (by using \cite[Theorem 2.7.11 and Corollary 2.7.10]{VanDerVaartWellner1996})\footnote{By following the discussion above \cite[Theorem 2.7.11]{VanDerVaartWellner1996} we use as the envelope function
$$\sup_{\|\lambda - \lambda_*\|_2 \leq \delta}\left|h'a'\tau_i^{\diamond}(\lambda,\theta_*)\varepsilon_{i*}^2\wtl w_i\wtl{w}_{ik}\right|.$$}.
By the result of Lemma \ref{lem:2}, for any $\delta>0$ and any $\eta>0$, there exists a number $N(\delta,\eta)$ such that for every $n>N(\delta,\eta)$, $h'a'\tau_i^{\diamond} (\wh\lambda,\theta_*) \varepsilon_{i*}^2\wtl w_i \wtl w_{ik}\in\mathcal{F}_{k,\delta}$ with probability $\geq \eta$ for every $k=1,\ldots,d$.\\
\indent Hence, \cite[Theorem 6.15]{VVaartSaintFlour} guarantees that
$$\sqrt{n}h'a'\left(\check\Omega^{\diamond}(\wh\lambda,\theta_*) - \check\Omega^{\diamond}(\lambda_*,\theta_*) - \left[\Omega_*^{\diamond}(\wh\lambda,\theta_*) - \Omega_*^{\diamond}(\theta_*)\right]\right) \xrightarrow{p} 0$$
\noindent for every non-random matrix $a\in\mathbb{R}^{d\times p}$. To exploit this result we have to add and subtract $\check\Omega^{\diamond}(\lambda_*,\theta_*) + \Omega_*^{\diamond}(\wh\lambda,\theta_*) - \Omega_*^{\diamond}(\theta_*)$ in \eqref{eq:6} so that we get:
\begin{multline}
    \sqrt{n}\EE \left[e^{\lambda_{*}(\theta_*)'g_i(\theta_*)}h'\wtl w_{1,i}\wtl w_i'(I + \lambda_*(\theta_*) g_i(\theta_*)')\right]\left(\check\Omega^{\diamond}(\wh\lambda,\theta_*)^{-1} - \Omega_*^{\diamond}(\theta_*)^{-1}\right) \wh g(\theta_*)\\
    = h'a'\sqrt{n}\left(\Omega_*^{\diamond}(\theta_*) - \check\Omega^{\diamond}(\lambda_*,\theta_*) - \Omega_*^{\diamond}(\wh\lambda,\theta_*) + \Omega_*^{\diamond}(\theta_*)\right)\check\Omega^{\diamond}(\wh\lambda,\theta_*)^{-1}\wh g(\theta_*) + o_p(1). \label{eq:4}
\end{multline}
We start by analyzing the first two terms inside the brackets in \eqref{eq:4}. Since $\check\Omega^{\diamond}(\wh\lambda,\theta_*) \xrightarrow{p} \Omega_*^{\diamond}(\theta_*)$ by Lemma \ref{lem:Omega} and $\wh g(\theta_*) \xrightarrow{p}\EE[g_i(\theta_*)]$ by the law of large number, we get:
\begin{multline*}
    -h'a'\sqrt{n}\left(\check\Omega^{\diamond}(\lambda_*,\theta_*) - \Omega_*^{\diamond}(\theta_*)\right)\check\Omega^{\diamond}(\wh\lambda,\theta_*)^{-1}\wh g(\theta_*)\\
    = -h'a' \mathbb{G}_n\left[e^{\lambda_*(\theta_*)'g_i(\theta_*)}\varepsilon_{i*}^2 \wtl w_{i} \wtl w_i'\right]\Omega_*^{\diamond}(\theta_*)^{-1}\EE[g_i(\theta_*)] + o_p(1),
\end{multline*}
\noindent where, as it will be shown below, the term $\mathbb{G}_n\left[e^{\lambda_*(\theta_*)'g_i(\theta_*)}\varepsilon_{i*}^2 \wtl w_{i} \wtl w_i'\right]$ is bounded in probability. Next, we analyse the last two terms inside the brackets in \eqref{eq:4}. %By the MVT, there exists a $\tau\in[0,1]$ such that $\wtl\lambda_* := \tau\lambda_*(\theta_*) + (1- \tau)\wh \lambda(\theta_*)$ satisfies $e^{\wh\lambda(\theta_*)'g_i(\theta_*)} = e^{\lambda_*(\theta_*)'g_i(\theta_*)} + e^{\wtl\lambda_*'g_i(\theta_*)}g_i(\theta_*)'(\wh\lambda(\theta_*) - \lambda_*(\theta_*))$. By this and Lemma \ref{lem:1}, for any $\delta >0$ and $\eta > 0$ there exists a number $N(\delta,\eta)$ such that for every $n> N(\delta, \eta)$: $e^{\wh\lambda(\theta_*)'g_i(\theta_*)} = e^{\lambda_*(\theta_*)'g_i(\theta_*)} + e^{\lambda_*'g_i(\theta_*)}g_i(\theta_*)'(\wh\lambda(\theta_*) - \lambda_*(\theta_*))$ with probability $\geq \eta$.
By Lemma \ref{lem:Omega_star}, the fact that $\check\Omega^{\diamond}(\wh\lambda,\theta_*) \xrightarrow{p} \Omega_*^{\diamond}(\theta_*)$ by Lemma \ref{lem:Omega}, $\wh g(\theta_*) \xrightarrow{p}\EE[g_i(\theta_*)]$ by the law of large numbers, and by Lemma \ref{lem:1} we obtain:
\begin{multline*}
     \sqrt{n} h'a'\left[- \Omega_*^{\diamond}(\wh\lambda,\theta_*) + \Omega_*^{\diamond}(\lambda_*)\right]\check\Omega^{\diamond}(\wh\lambda,\theta_*)^{-1}\wh g(\theta_*)\\
     = - \sqrt{n}(\wh\lambda(\theta_*) - \lambda_*(\theta_*))'\EE^{Q^*(\theta_*)}\left[\varepsilon_{i*}^3 h'a'\wtl w_i \wtl w_i \wtl w_i'  \right]\check\Omega^{\dagger}(\wh\lambda,\theta_*)^{-1}\wh g(\theta_*) + o_p(1)\\
     = \mathbb{G}_n[\tau_i^{\dagger}(\lambda_*,\theta_*)\varepsilon_{i*}\wtl{w}_i']\Omega_*^{\dagger}(\theta_*)^{-1}\EE^{Q^*(\theta_*)}\left[\varepsilon_{i*}^3 h'a'\wtl w_i \wtl w_i \wtl w_i'  \right]\Omega_*^{\dagger}(\theta_*)^{-1}\EE[g_i(\theta_*)] + o_p(1)
\end{multline*}
\noindent where, as it will be shown below, the term $\mathbb{G}_n[\tau_i^{\dagger}(\lambda_*,\theta_*)\varepsilon_{i*}\wtl{w}_i']$ is bounded in probability. By putting all these elements together, term $\mathcal{A}_1$ is equal to:
\begin{multline*}
  \mathcal{A}_1 = \mathbb{G}_n\left[\tau_i^{\diamond}(\lambda_*,\theta_*)h'\wtl w_{1,i}\wtl w_i'(I + \lambda_*(\theta_*) g_i(\theta_*)')\right]\check\Omega^{\diamond}(\wh\lambda,\theta_*)^{-1}\EE[g_i(\theta_{*})]\\
   + h'\frac{d \lambda_{*}(\theta_{*})'}{d\theta} \sqrt{n}\EE_n\left[g_i(\theta_*) - \EE[g_i(\theta_{*})]\right]\\
    - \mathbb{G}_n[\tau_i^{\dagger}(\lambda_*,\theta_*)\varepsilon_{i*}\wtl{w}_i']\Omega_*^{\dagger}(\theta_*)^{-1}\EE^{Q^*(\theta_*)}\left[\varepsilon_{i*}\wtl{w}_i h'\wtl w_{1,i} \wtl w_i'(I + \lambda_*(\theta_*) g_i(\theta_*)')\right]\Omega_*^{\dagger}(\theta_*)^{-1}\EE[g_i(\theta_*)]\\
    %+ \sqrt{n}\EE_n\left[\tau_i^{\diamond}(\lambda_*,\theta_*)\wtl w_{1,i} \wtl w_i'(\wh\lambda(\theta_*) - \lambda_*(\theta_*)) g_i(\theta_*)'\right]\check\Omega^{\diamond}(\wh\lambda,\theta_*)^{-1}\\
    - h'a' \mathbb{G}_n\left[e^{\lambda_*(\theta_*)'g_i(\theta_*)}\varepsilon_{i*}^2 \wtl w_{i} \wtl w_i'\right]\Omega_*^{\diamond}(\theta_*)^{-1}\EE[g_i(\theta_*)]\\
    + h'a'\mathbb{G}_n[\tau_i^{\dagger}(\lambda_*,\theta_*)\varepsilon_{i*}\wtl{w}_i']\Omega_*^{\dagger}(\theta_*)^{-1}\EE^{Q^*(\theta_*)}\left[\varepsilon_{i*}^3 \wtl w_i \wtl w_i \wtl w_i'  \right]\Omega_*^{\dagger}(\theta_*)^{-1}\EE[g_i(\theta_*)] + o_p(1).
\end{multline*}
Next, we analyse $\mathcal{A}_2 + \mathcal{A}_3$ which can be written as
\begin{multline*}
    \mathcal{A}_2 + \mathcal{A}_3 = h' \sqrt{n} \EE_n\left[\left(\tau_i(\wh{\lambda},\theta_*) - \tau_i^{\dagger}(\lambda_*,\theta_*)\right)\wtl{w}_{1,i}\wtl{w}_i'\right] \wh\lambda(\theta_*)\\
    - h'\EE_n[\left(1 - \tau_i^{\dagger}(\lambda_*,\theta_*)\right)\wtl{w}_{1,i}\wtl{w}_{i}']\sqrt{n}\left(\wh\lambda(\theta_*) - \lambda_*(\theta_*)\right)\\
    - \mathbb{G}_n[\left(1 - \tau_i^{\dagger}(\lambda_*,\theta_*)\right)h'\wtl{w}_{1,i}\wtl{w}_{i}'\lambda_*(\theta_*)].
\end{multline*}
\noindent We use Lemmas \ref{lem:1} and \ref{lem:A_2:A_3}, and the fact that $h'\EE_n[\left(1 - \tau_i^{\dagger}(\lambda_*,\theta_*)\right)\wtl{w}_{1,i}\wtl{w}_{i}] \xrightarrow{p} h'\EE[\left(1 - \tau_i^{\dagger}(\lambda_*,\theta_*)\right)\wtl{w}_{1,i}\wtl{w}_{i}]$ by the Law of Large Numbers to get:
\begin{multline*}
  \mathcal{A}_2 + \mathcal{A}_3 = -\mathbb{G}_n[\tau_i^{\dagger}(\lambda_*,\theta_*)\varepsilon_{i*}\wtl{w}_i']\Omega_*^{\dagger}(\theta_*)^{-1}\EE\left[\tau_i^{\dagger}(\lambda_*,\theta_*) g_i(\theta_*)h'\wtl{w}_{1,i}\wtl{w}_i'\right] \lambda_*(\theta_*)\\
  - \mathbb{G}_n\left(\tau_i^{\dagger}(\lambda_*,\theta_*)\right)\EE\left[\tau_i^{\dagger}(\lambda_*,\theta_*) h'\wtl{w}_{1,i}\wtl{w}_i'\right] \lambda_*(\theta_*)\\
  + h'\EE[\left(1 - \tau_i^{\dagger}(\lambda_*,\theta_*)\right)\wtl{w}_{1,i}\wtl{w}_{i}']\Omega_*^{\dagger}(\theta_*)^{-1}\mathbb{G}_n[\tau_i^{\dagger}(\lambda_*,\theta_*)\varepsilon_{i*}\wtl{w}_i]\\
  - \mathbb{G}_n[\left(1 - \tau_i^{\dagger}(\lambda_*,\theta_*)\right)h'\wtl{w}_{1,i}\wtl{w}_{i}'\lambda_*(\theta_*)] + o_p(1)\\
  = -\mathbb{G}_n[\tau_i^{\dagger}(\lambda_*,\theta_*)\varepsilon_{i*}\wtl{w}_i']\frac{d\lambda_*(\theta_*)}{d\theta'} - \mathbb{G}_n\left(\tau_i^{\dagger}(\lambda_*,\theta_*)\right)\EE^{Q_*(\theta_*)}\left[h'\wtl{w}_{1,i}\wtl{w}_i'\right] \lambda_*(\theta_*)\\ + h'\EE[\wtl{w}_{1,i}\wtl{w}_{i}']\Omega_*^{\dagger}(\theta_*)^{-1}\mathbb{G}_n[\tau_i^{\dagger}(\lambda_*,\theta_*)\varepsilon_{i*}\wtl{w}_i] - \mathbb{G}_n[\left(1 - \tau_i^{\dagger}(\lambda_*,\theta_*)\right)h'\wtl{w}_{1,i}\wtl{w}_{i}'\lambda_*(\theta_*)] + o_p(1)
\end{multline*}
\noindent Finally, we put together $\mathcal{A}_1$, $\mathcal{A}_2$ and $\mathcal{A}_3$ to get:
\begin{multline*}
  \mathcal{A}_1 + \mathcal{A}_2 + \mathcal{A}_3 = \\
  \mathbb{G}_n\left[\tau_i^{\dagger}(\lambda_*,\theta_*)h'\wtl w_{1,i}\wtl w_i'(I + \lambda_*(\theta_*) g_i(\theta_*)')\right]\Omega_*^{\dagger}(\theta_*)^{-1}\EE[g_i(\theta_*)] + h'\frac{d \lambda_{*}(\theta_{*})'}{d\theta} \mathbb{G}_n\left[(1 - \tau_i^{\dagger}(\lambda_*,\theta_*))\varepsilon_{i*}\wtl{w}_i\right]\\
  + h'a'\mathbb{G}_n[\tau_i^{\dagger}(\lambda_*,\theta_*)\varepsilon_{i*}\wtl{w}_i']\Omega_*^{\dagger}(\theta_*)^{-1}\EE^{Q^*(\theta_*)}\left[\varepsilon_{i*}^3 \wtl w_i \wtl w_i \wtl w_i'  \right]\Omega_*^{\dagger}(\theta_*)^{-1}\EE[g_i(\theta_*)]\\
  - h'\mathbb{G}_n[\tau_i^{\dagger}(\lambda_*,\theta_*)\varepsilon_{i*}\wtl{w}_i']\Omega_*^{\dagger}(\theta_*)^{-1}\EE^{Q^*(\theta_*)}\left[g_i(\theta_*)\wtl w_{1,i} \wtl w_i'(I + \lambda_*(\theta_*) g_i(\theta_*)')\right]\Omega_*^{\dagger}(\theta_*)^{-1}\EE[g_i(\theta_*)]\\
  - h'a' \mathbb{G}_n\left[\tau_i^{\dagger}(\lambda_*,\theta_*)\varepsilon_{i*}^2 \wtl w_{i} \wtl w_i'\right]\Omega_*^{\dagger}(\theta_*)^{-1}\EE[g_i(\theta_*)] - \mathbb{G}_n\left(\tau_i^{\dagger}(\lambda_*,\theta_*)\right)\EE^{Q_*(\theta_*)}\left[h'\wtl{w}_{1,i}\wtl{w}_i'\right] \lambda_*(\theta_*)\\
  + h'\EE[\wtl{w}_{1,i}\wtl{w}_{i}']\Omega_*^{\dagger}(\theta_*)^{-1}\mathbb{G}_n[\tau_i^{\dagger}(\lambda_*,\theta_*)\varepsilon_{i*}\wtl{w}_i]
  - h'\mathbb{G}_n[\left(1 - \tau_i^{\dagger}(\lambda_*,\theta_*)\right)\wtl{w}_{1,i}\wtl{w}_{i}'\lambda_*(\theta_*)] + o_p(1).%\\
%  = h'\mathbb{G}_n\left[\tau_i^{\dagger}(\lambda_*,\theta_*)\wtl w_{1,i}\wtl w_i'(I + \lambda_*(\theta_*) g_i(\theta_*)')\right]\Omega_*^{\dagger}(\theta_*)^{-1}\EE[g_i(\theta_*)] + h'\frac{d \lambda_{*}(\theta_{*})'}{d\theta} \mathbb{G}_n\left[(1 - \tau_i^{\dagger}(\lambda_*,\theta_*))\varepsilon_i(\theta_*)\wtl{w}_i\right]\\
%  + h'\mathbb{G}_n[\tau_i^{\dagger}(\lambda_*,\theta_*)\varepsilon_i(\theta_*)\wtl{w}_i']\Omega_*^{\dagger}(\theta_*)^{-1}b\Omega_*^{\dagger}(\theta_*)^{-1}\EE[g_i(\theta_*)] + h'\EE[\wtl{w}_{1,i}\wtl{w}_{i}']\Omega_*^{\dagger}(\theta_*)^{-1}\mathbb{G}_n[\tau_i^{\dagger}(\lambda_*,\theta_*)\varepsilon_i(\theta_*)\wtl{w}_i]\\
%  - h'a' \mathbb{G}_n\left[\tau_i^{\dagger}(\lambda_*,\theta_*)\varepsilon_i^2 \wtl w_{i} \wtl w_i'\right]\Omega_*^{\dagger}(\theta_*)^{-1}\EE[g_i(\theta_*)]\\
%  - \mathbb{G}_n[\left(1 - \tau_i^{\dagger}(\lambda_*,\theta_*)\right)h'\wtl{w}_{1,i}\wtl{w}_{i}'\lambda_*(\theta_*)] + o_p(1).
\end{multline*}
By the Lindberg-Levy central limit theorem, $\mathcal{A}_1 + \mathcal{A}_2 + \mathcal{A}_3$ is asymptotically normal with zero mean and variance given by
\begin{multline*}
  h'\EE[\wtl w_{1i} \wtl w_i'] \Omega_*^{\dagger}(\theta_*)^{-1}\EE^{Q^*(\theta_*)}\left[\tau_i^{\dagger}(\lambda_*,\theta_*)\varepsilon_{i*}^2 \wtl w_i\wtl w_i'\right] \Omega_*^{\dagger}(\theta_*)^{-1} \EE[\wtl w_{i} \wtl w_{1i}'] h\\
  + Var \Big(\tau_i^{\dagger}(\lambda_*,\theta_*)h'\wtl w_{1,i}\wtl w_i'(I + \lambda_*(\theta_*) g_i(\theta_*)')\Omega_*^{\dagger}(\theta_*)^{-1}\EE[g_i(\theta_*)] + h'\frac{d \lambda_{*}(\theta_{*})'}{d\theta} (1 - \tau_i^{\dagger}(\lambda_*,\theta_*))\varepsilon_{i*}(\theta_*)\wtl{w}_i\\
  + h'a'\tau_i^{\dagger}(\lambda_*,\theta_*)\varepsilon_i(\theta_*)\wtl{w}_i'\Omega_*^{\dagger}(\theta_*)^{-1}\EE^{Q^*(\theta_*)}\left[\varepsilon_{i*}^3 \wtl w_i \wtl w_i \wtl w_i'  \right]\Omega_*^{\dagger}(\theta_*)^{-1}\EE[g_i(\theta_*)]\\
  - h'\tau_i^{\dagger}(\lambda_*,\theta_*)\varepsilon_{i*}\wtl{w}_i'\Omega_*^{\dagger}(\theta_*)^{-1}\EE^{Q^*(\theta_*)}\left[g_i(\theta_*)\wtl w_{1,i} \wtl w_i'(I + \lambda_*(\theta_*) g_i(\theta_*)')\right]\Omega_*^{\dagger}(\theta_*)^{-1}\EE[g_i(\theta_*)]\\
  - h'a' \tau_i^{\dagger}(\lambda_*,\theta_*)\varepsilon_{i*}^2 \wtl w_{i} \wtl w_i'\Omega_*^{\dagger}(\theta_*)^{-1}\EE[g_i(\theta_*)] - \tau_i^{\dagger}(\lambda_*,\theta_*)\EE^{Q_*(\theta_*)}\left[h'\wtl{w}_{1,i}\wtl{w}_i'\right] \lambda_*(\theta_*)\\
  - \left(1 - \tau_i^{\dagger}(\lambda_*,\theta_*)\right)h'\wtl{w}_{1,i}\wtl{w}_{i}'\lambda_*(\theta_*)\Big)\\
  + 2 h'\EE[\wtl w_{1i} \wtl w_i'] \Omega_*^{\dagger}(\theta_*)^{-1}\EE^{Q^*(\theta_*)}\left[\tau_i^{\dagger}(\lambda_*,\theta_*)\varepsilon_{i*} \wtl w_i \wtl w_i'(I + \lambda_*(\theta_*) g_i(\theta_*)')\Omega_*^{\dagger}(\theta_*)^{-1}\EE[g_i(\theta_*)]\wtl w_{1,i}'h\right]\\
  + 2 h'\EE[\wtl w_{1i} \wtl w_i'] \Omega_*^{\dagger}(\theta_*)^{-1}\EE^{Q^*(\theta_*)}\left[\left(1 - \tau_i^{\dagger}(\lambda_*,\theta_*)\right)\varepsilon_{i*}^2  \wtl w_i \wtl w_i'\right]\frac{d \lambda_{*}(\theta_{*})}{d\theta'}h\\
  + 2 h'\EE[\wtl w_{1i} \wtl w_i'] \Omega_*^{\dagger}(\theta_*)^{-1}\EE^{Q^*(\theta_*)}\left[\tau_i^{\dagger}(\lambda_*,\theta_*)\varepsilon_{i*}^2 \wtl w_i \wtl{w}_i'\right]\Omega_*^{\dagger}(\theta_*)^{-1}\EE^{Q^*(\theta_*)}\left[\varepsilon_{i*}(\theta_*)^3 \wtl w_i \wtl w_i'  \Omega_*^{\dagger}(\theta_*)^{-1}\EE[g_i(\theta_*)]\wtl w_i'\right]a h \\
  - 2 h'\EE[\wtl w_{1i} \wtl w_i'] \Omega_*^{\dagger}(\theta_*)^{-1}\EE^{Q^*(\theta_*)}\left[\tau_i^{\dagger}(\lambda_*,\theta_*)\varepsilon_{i*}^2 \wtl w_i \wtl{w}_i'\right]\Omega_*^{\dagger}(\theta_*)^{-1}\EE^{Q^*(\theta_*)}\left[g_i(\theta_*)\wtl w_i'(I + \lambda_*(\theta_*) g_i(\theta_*)')\Omega_*^{\dagger}(\theta_*)^{-1}\EE[g_i(\theta_*)]\wtl w_{1,i}'\right]h\\
  - 2 h'\EE[\wtl w_{1i} \wtl w_i'] \Omega_*^{\dagger}(\theta_*)^{-1}\EE^{Q^*(\theta_*)}\left[\tau_i^{\dagger}(\lambda_*,\theta_*)\varepsilon_{i*}^3 \wtl w_i \wtl w_i' \Omega_*^{\dagger}(\theta_*)^{-1}\EE[g_i(\theta_*)]\wtl w_{i}'\right]  a h \\
  - 2 h'\EE[\wtl w_{1i} \wtl w_i'] \Omega_*^{\dagger}(\theta_*)^{-1}\EE^{Q^*(\theta_*)}\left[\tau_i^{\dagger}(\lambda_*,\theta_*) \varepsilon_{i*} \wtl w_i \right] \lambda_*(\theta_*)'\EE^{Q^*(\theta_*)}[\wtl{w}_i\wtl{w}_{1,i}'h]\\
  - 2 h'\EE[\wtl w_{1i} \wtl w_i'] \Omega_*^{\dagger}(\theta_*)^{-1}\EE^{Q^*(\theta_*)}\left[\left(1 - \tau_i^{\dagger}(\lambda_*,\theta_*)\right) \varepsilon_{i*} \wtl w_i \lambda_*(\theta_*)' \wtl{w}_{i} \wtl{w}_{1,i}'\right] h =: h' H_* h.
\end{multline*}
\noindent The matrix $H_*$ is non-singular under Assumptions \ref{Ass_2_NS} \textit{(b)} and \ref{Ass_1_NS} \textit{(c)}.
\begin{flushright}
  $\square$
\end{flushright}

%------------------------------------------------------------------------------------------------------------------------------------------------------------------------------
\begin{lem}\label{lem:Omega}
  Let Assumptions \ref{Ass_absolute_continuity}, \ref{Ass_1_NS} (a), \ref{Ass_3} \textit{(e)-(f)} hold. Then,
    $$\left\|\check\Omega^{\diamond}(\wh\lambda,\theta_*) - \Omega_*^{\diamond}(\theta_*)\right\| \xrightarrow{p} 0$$
    as $n\rightarrow \infty$.
\end{lem}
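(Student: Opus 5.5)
The plan is to split $\check\Omega^{\diamond}(\wh\lambda,\theta_*) - \Omega_*^{\diamond}(\theta_*)$ into a stochastic-equicontinuity piece and a deterministic-continuity piece:
\begin{multline*}
\check\Omega^{\diamond}(\wh\lambda,\theta_*) - \Omega_*^{\diamond}(\theta_*) = \left[\check\Omega^{\diamond}(\wh\lambda,\theta_*) - \Omega_*^{\diamond}(\wh\lambda,\theta_*)\right] \\
+ \left[\Omega_*^{\diamond}(\wh\lambda,\theta_*) - \Omega_*^{\diamond}(\lambda_*,\theta_*)\right],
\end{multline*}
where $\Omega_*^{\diamond}(\lambda,\theta_*) = \EE[e^{\lambda'g_i(\theta_*)}\varepsilon_i(\theta_*)^2\wtl w_i\wtl w_i']$ is viewed as a deterministic function of $\lambda$ evaluated at the random point $\wh\lambda = \wh\lambda(\theta_*)$. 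Throughout, $\theta_*$ is fixed, so the only moving index is $\lambda$.

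For the first bracket, I will use Lemma \ref{lem:2}: given $\eta>0$, with probability at least $1-\eta$ for $n$ large, $\wh\lambda$ lies in the compact ball $B_{\delta}(\lambda_*)$ for any fixed $\delta>0$. On that event the bracket is bounded entrywise by
\[
\sup_{\lambda\in B_\delta(\lambda_*)} \left|\EE_n\left[e^{\lambda'g_i(\theta_*)}\varepsilon_i(\theta_*)^2\wtl w_{i,j}\wtl w_{i,k}\right] - \EE\left[e^{\lambda'g_i(\theta_*)}\varepsilon_i(\theta_*)^2\wtl w_{i,j}\wtl w_{i,k}\right]\right|.
\]
The summand is continuous in $\lambda$ for each $w$. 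Its absolute value is dominated by the integrable function $\gamma_4$ from Assumption \ref{Ass_3}\textit{(e)} with $(\ell,\ell',i)=(1,2,0)$, which is exactly the case $e^{\lambda'g}\wtl w_j\wtl w_k\varepsilon^2$. Together with the i.i.d.\ sampling of Assumption \ref{Ass_1_NS}\textit{(a)} and the compactness of $B_\delta(\lambda_*)$, \cite[Lemma 2.4]{NeweyMcFadden1994} gives a uniform LLN, so this supremum is $o_p(1)$. Since $d$ is fixed, the entrywise bound transfers to the matrix norm.

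For the second bracket, the same Lemma 2.4 also yields continuity of $\lambda\mapsto\Omega_*^{\diamond}(\lambda,\theta_*)$ on $B_\delta(\lambda_*)$; alternatively this follows directly from dominated convergence with the envelope $\gamma_4$. Since $\wh\lambda\xrightarrow{p}\lambda_*$ by Lemma \ref{lem:2}, the continuous mapping theorem gives $\Omega_*^{\diamond}(\wh\lambda,\theta_*)\xrightarrow{p}\Omega_*^{\diamond}(\theta_*)$.

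Combining the two brackets by the triangle inequality, and letting $\eta\to0$, gives the result. Assumption \ref{Ass_absolute_continuity} is used only to guarantee that $\lambda_*(\theta_*)$ exists and that $\theta_*\in H_n$ with probability approaching one, so that $\wh\lambda(\theta_*)$ is defined.

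I expect the main obstacle to be a careful statement of the domination. Assumption \ref{Ass_3}\textit{(e)} as written bounds $e^{\ell\lambda'g(w,\theta)}\wtl w_j\wtl w_k\varepsilon(\theta)^{\ell'}\wtl w_{k'}^i$ on $B_\delta(\lambda_*)$, and I must check that the choice $(\ell,\ell',i)=(1,2,0)$ delivers exactly the envelope needed, uniformly in $\lambda$, for every pair $(j,k)$. Part \textit{(f)} is not needed for this argument and will only be cited for completeness.
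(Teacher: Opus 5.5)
Your proposal is correct and follows essentially the same route as the paper. The paper uses the same split into a uniform-LLN term, handled by Newey--McFadden Lemma 2.4 with the envelope from Assumption \ref{Ass_3}\textit{(e)}, $(\ell,\ell',i)=(1,2,0)$, on the ball where Lemma \ref{lem:2} places $\wh\lambda$, and a continuity term, and it combines the two entrywise via the triangle inequality. The one difference is in the second term: you treat it as a deterministic continuous function of $\lambda$ evaluated at $\wh\lambda$ and then apply the continuous mapping theorem, whereas the paper applies the CMT to the integrand and then dominated convergence; your version is a slightly cleaner statement of the same step.
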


\proof
Let us denote $\varepsilon_{i*} := \varepsilon_i(\theta_*)$. %\textcolor{red}{For every $j,k\in\{1,\ldots,d\}$ and every $\delta >0$ let us consider the class of functions $\mathcal{F}_{j,k} := \{\lambda \mapsto e^{\lambda' g_i(\theta_*)}\varepsilon_i^2 \wtl{w}_{i,j}\wtl{w}_{i,k}; \, \|\lambda - \lambda_*(\theta_*)\| \leq \delta \}$. By Lemma \ref{lem:2}, $\|\wh\lambda(\theta_*) - \lambda_*(\theta_*)\|_2 \xrightarrow{p} 0$ and so for any $\delta >0$ and $\eta>0$ there exists a number $N(\delta,\eta)$ such that for every $n> N(\delta,\eta)$, $\|\wh\lambda(\theta_*) - \lambda_*(\theta_*)\|_2 \leq \delta$ with probability $\geq \eta$. Therefore, for every $j,k= 1,\ldots,d$, and every $n > N(\delta,\eta)$, $e^{\wh{\lambda}(\theta_*)' g_i(\theta_*)}\varepsilon_i^2 \wtl{w}_{i,j}\wtl{w}_{i,k} \in \mathcal{F}_{j,k}$ with probability $\geq \eta$.}
By Lemma \ref{lem:2}, $\|\wh\lambda(\theta_*) - \lambda_*(\theta_*)\|_2 \xrightarrow{p} 0$ and so for any $\delta >0$ and $\eta>0$ there exists a number $N(\delta,\eta)$ such that for every $n> N(\delta,\eta)$, $\|\wh\lambda(\theta_*) - \lambda_*(\theta_*)\|_2 \leq \delta$ with probability $\geq \eta$. Therefore, for every $n > N(\delta,\eta)$, $\wh\lambda(\theta_*)\in B_{\delta}(\lambda_*(\theta_*))$. By Assumption \ref{Ass_3} \textit{(e)} with $(\ell,\ell',i) = (1,2,0)$, compactness of $B_{\delta}(\lambda_*(\theta_*))$ and \cite[Lemma 2.4]{NeweyMcFadden1994} we have that for every $n > N(\delta,\eta)$,
\begin{multline}\label{eq:res:1}
  \left|\EE_n\left[e^{\wh\lambda(\theta_*)' g_i(\theta_*)}\varepsilon_{i*}^2 \wtl{w}_{i,j}\wtl{w}_{i,k}\right] - \EE\left[e^{\wh\lambda(\theta_*)' g_i(\theta_*)}\varepsilon_{i*}^2 \wtl{w}_{i,j}\wtl{w}_{i,k}\right]\right|\\
  \leq \sup_{\lambda\in B_{\delta}(\lambda_*(\theta_*))}\left|\EE_n\left[e^{\lambda' g_i(\theta_*)}\varepsilon_{i*}^2 \wtl{w}_{i,j}\wtl{w}_{i,k}\right] - \EE\left[e^{\lambda' g_i(\theta_*)}\varepsilon_{i*}^2 \wtl{w}_{i,j}\wtl{w}_{i,k}\right]\right| \xrightarrow{p} 0
\end{multline}
\noindent for every sequence of random variables $\wh\lambda(\theta_*)$ for which the result of Lemma \ref{lem:2} holds.\\
\indent Moreover, by the continuous mapping theorem and Lemma \ref{lem:2}, $e^{\wh{\lambda}(\theta_*)' g_i(\theta_*)}\varepsilon_{i*}^2 \wtl{w}_{i,j}\wtl{w}_{i,k} \xrightarrow{p} e^{\lambda_*(\theta_*)' g_i(\theta_*)}\varepsilon_{i*}^2 \wtl{w}_{i,j}\wtl{w}_{i,k}$ for every $j,k =1,\ldots,d$. It then follows from the dominated convergence theorem (applicable by Assumption \ref{Ass_3} \textit{(e)} with $(\ell,\ell',i) = (1,2,0)$) that
\begin{equation}\label{eq:res:2}
  \left|\EE\left[e^{\wh{\lambda}(\theta_*)' g_i(\theta_*)}\varepsilon_{i*}^2 \wtl{w}_{i,j}\wtl{w}_{i,k}\right] - \EE\left[e^{\lambda_*(\theta_*)' g_i(\theta_*)}\varepsilon_{i*}^2 \wtl{w}_{i,j}\wtl{w}_{i,k}\right]\right| \xrightarrow{p} 0
\end{equation}
\noindent for every $j,k = 1,\ldots,d$. By putting together \ref{eq:res:1} and \eqref{eq:res:2}, and by T we have:
\begin{multline*}
  \left\|\check\Omega^{\diamond}(\wh\lambda,\theta_*) - \Omega_*^{\diamond}(\theta_*)\right\| \leq d\max_{j,k\in\{1,\ldots,d\}} \left|\EE_n\left[e^{\wh\lambda(\theta_*)' g_i(\theta_*)}\varepsilon_{i*}^2 \wtl{w}_{i,j}\wtl{w}_{i,k}\right] - \EE\left[e^{\lambda_*(\theta_*)' g_i(\theta_*)}\varepsilon_{i*}^2 \wtl{w}_{i,j}\wtl{w}_{i,k}\right]\right|\\
  \leq d\max_{j,k\in\{1,\ldots,d\}} \left|\EE_n\left[e^{\wh\lambda(\theta_*)' g_i(\theta_*)}\varepsilon_{i*}^2 \wtl{w}_{i,j}\wtl{w}_{i,k}\right] - \EE\left[e^{\wh{\lambda}(\theta_*)' g_i(\theta_*)}\varepsilon_{i*}^2 \wtl{w}_{i,j}\wtl{w}_{i,k}\right]\right|\\ \hfill
  + d\max_{j,k\in\{1,\ldots,d\}} \left|\EE\left[e^{\wh{\lambda}(\theta_*)' g_i(\theta_*)}\varepsilon_{i*}^2 \wtl{w}_{i,j}\wtl{w}_{i,k}\right] - \EE\left[e^{\lambda_*(\theta_*)' g_i(\theta_*)}\varepsilon_{i*}^2 \wtl{w}_{i,j}\wtl{w}_{i,k}\right]\right|
  = o_p(1).
\end{multline*}

\begin{flushright}
  $\square$
\end{flushright}

%%------------------------------------------------------------------------------------------------------------------------------------------------------------------------------
\begin{lem}\label{lem:Omega_star}
  Let Assumptions \ref{Ass_absolute_continuity}, \ref{ass:feasibility}, \ref{Ass_1_NS} (a), \ref{Ass_3} (d)-(e) hold. Then, as $n\rightarrow \infty$,
  \begin{equation}
    \left[\Omega_*^{\diamond}(\wh\lambda,\theta_*) - \Omega_*^{\diamond}(\theta_*)\right] = (\wh\lambda(\theta_*) - \lambda_*(\theta_*))'\EE\left[e^{\lambda_*(\theta_*)'g_i(\theta_*)}\varepsilon_i(\theta_*)^3 \wtl w_i \wtl w_i \wtl w_i'  \right] + o_p(1).
  \end{equation}
\end{lem}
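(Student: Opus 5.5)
The plan is to treat $\lambda\mapsto\Omega_*^{\diamond}(\lambda,\theta_*)=\EE[e^{\lambda'g_i(\theta_*)}\varepsilon_i(\theta_*)^2\wtl w_i\wtl w_i']$ as a deterministic matrix-valued function of $\lambda$, with $\theta_*$ held fixed. We then plug the random argument $\wh\lambda(\theta_*)$ into a first-order mean value expansion around $\lambda_*(\theta_*)$. Throughout, write $\varepsilon_{i*}:=\varepsilon_i(\theta_*)$ and recall that $g_i(\theta_*)=\varepsilon_{i*}\wtl w_i$. We work entry by entry, fixing $j,k\in\{1,\ldots,d\}$, and consider
\[
\phi_{jk}(\lambda):=\EE\left[e^{\lambda'g_i(\theta_*)}\varepsilon_{i*}^2\wtl w_{i,j}\wtl w_{i,k}\right].
\]

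\textbf{Step 1 (differentiability).} We first show that $\phi_{jk}$ is continuously differentiable on the ball $B_{\delta}(\lambda_*)$, with gradient
\[
\nabla\phi_{jk}(\lambda)=\EE\left[e^{\lambda'g_i(\theta_*)}\varepsilon_{i*}^3\wtl w_{i,j}\wtl w_{i,k}\wtl w_i\right].
\]
This requires differentiating under the expectation. The pointwise derivative of the integrand is $e^{\lambda'g_i}\varepsilon_{i*}^3\wtl w_{i,j}\wtl w_{i,k}\wtl w_i$. Assumption \ref{Ass_3} \textit{(e)} with $(\ell,\ell',i)=(1,3,1)$ supplies an integrable envelope $\gamma_4$ for this derivative, uniformly over $\lambda\in B_\delta(\lambda_*)$. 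The dominated convergence theorem, applied to difference quotients via the MVT, then gives both differentiability and continuity of the gradient. The same assumption with $(1,2,0)$ ensures that $\phi_{jk}$ itself is finite.

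\textbf{Step 2 (expansion at $\wh\lambda$).} By Lemma \ref{lem:2}, $\wh\lambda(\theta_*)\in B_\delta(\lambda_*)$ with probability approaching one. On that event the MVT gives, for some $\bar\lambda_{jk}$ on the segment between $\lambda_*(\theta_*)$ and $\wh\lambda(\theta_*)$,
\[
\phi_{jk}(\wh\lambda)-\phi_{jk}(\lambda_*)=(\wh\lambda-\lambda_*)'\nabla\phi_{jk}(\bar\lambda_{jk}).
\]
We split the right-hand side as $(\wh\lambda-\lambda_*)'\nabla\phi_{jk}(\lambda_*)$ plus the remainder $(\wh\lambda-\lambda_*)'[\nabla\phi_{jk}(\bar\lambda_{jk})-\nabla\phi_{jk}(\lambda_*)]$. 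The first piece is exactly the $(j,k)$ entry of the leading term in the statement.

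\textbf{Step 3 (remainder).} Since $\bar\lambda_{jk}\xrightarrow{p}\lambda_*$, continuity of $\nabla\phi_{jk}$ from Step 1 and the CMT give $\nabla\phi_{jk}(\bar\lambda_{jk})-\nabla\phi_{jk}(\lambda_*)=o_p(1)$. Multiplying by $\wh\lambda-\lambda_*=\mathcal{O}_p(n^{-1/2})$ (Lemma \ref{lem:2}) shows the remainder is $o_p(n^{-1/2})$, which is stronger than the claimed $o_p(1)$. This stronger rate is what is actually used when the lemma is multiplied by $\sqrt n$ in the proof of Lemma \ref{lem:asymp:normality:FOC}. Collecting the $d^2$ entries, whose number is fixed, yields the matrix statement.

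The main obstacle is Step 1, namely justifying the interchange of derivative and expectation uniformly on a neighbourhood of $\lambda_*$. This rests entirely on reading Assumption \ref{Ass_3} \textit{(e)} as giving a $\lambda$-uniform integrable bound on $e^{\lambda'g}\varepsilon^3\wtl w_j\wtl w_k\wtl w_{k'}$. I would check that reading first. If it fails, I would instead use a second-order expansion, bounding the Hessian remainder with the $(2,4,2)$ case and Cauchy--Schwarz.
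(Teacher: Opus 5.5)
Your proposal is correct and follows essentially the same route as the paper: a first-order mean value expansion in $\lambda$, then Lemma \ref{lem:2} with the CMT to send the intermediate point to $\lambda_*(\theta_*)$, then the DCT justified by Assumption \ref{Ass_3} \textit{(e)} with $(\ell,\ell',i)=(1,3,1)$. The only differences are that you apply the MVT to the deterministic map $\lambda\mapsto\phi_{jk}(\lambda)$ after differentiating under the expectation, which avoids the paper's intermediate point $\wtl\lambda_*$ implicitly depending on $w_i$, and that you correctly note the remainder is in fact $o_p(n^{-1/2})$, which is the rate needed when the lemma is used after multiplication by $\sqrt{n}$ in Lemma \ref{lem:asymp:normality:FOC}.
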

\proof
By a first order MVT expansion, there exists a $\tau\in[0,1]$ such that $\wtl\lambda_* := \tau\lambda_*(\theta_*) + (1- \tau)\wh \lambda(\theta_*)$ satisfies $e^{\wh\lambda(\theta_*)'g_i(\theta_*)} = e^{\lambda_*(\theta_*)'g_i(\theta_*)} + e^{\wtl\lambda_*'g_i(\theta_*)}g_i(\theta_*)'(\wh\lambda(\theta_*) - \lambda_*(\theta_*))$. By replacing this in the expression of $\Omega_*^{\diamond}(\wh\lambda,\theta_*)$ we get
\begin{equation}
  \left[\Omega_*^{\diamond}(\wh\lambda,\theta_*) - \Omega_*^{\diamond}(\theta_*)\right] = (\wh\lambda(\theta_*) - \lambda_*(\theta_*))'\EE\left[e^{\wtl\lambda_*'g_i(\theta_*)}\varepsilon_i(\theta_*)^3 \wtl w_i \wtl w_i \wtl w_i'  \right].
\end{equation}
\noindent Lemma \ref{lem:2} %, for any $\eta > 0$ there exists a finite $\delta >0$ and a finite $N(\delta,\eta)$ such that for every $n> N(\delta, \eta)$: $e^{\wh\lambda(\theta_*)'g_i(\theta_*)} = e^{\lambda_*(\theta_*)'g_i(\theta_*)} + e^{\lambda_*'g_i(\theta_*)}g_i(\theta_*)'(\wh\lambda(\theta_*) - \lambda_*(\theta_*))$ with probability $\geq \eta$. This
and the continuous mapping theorem imply that for every $k = 1,\ldots,d$,
$$e^{\wtl\lambda_*'g_i(\theta_*)}\varepsilon_i(\theta_*)^3 \wtl w_{i,k} \wtl w_i \wtl w_i' \xrightarrow{p} e^{\lambda_*(\theta_*)'g_i(\theta_*)}\varepsilon_i(\theta_*)^3 \wtl w_{i,k} \wtl w_i \wtl w_i'.$$
By using Assumption \ref{Ass_3} \textit{(e)} (with $(\ell,\ell',i) = (1,3,1)$) we can apply the DCT which guarantees:
$$\left\|\EE\left[e^{\wtl\lambda_*'g_i(\theta_*)}\varepsilon_i(\theta_*)^3 \wtl w_{i,k} \wtl w_i \wtl w_i'  \right] - \EE\left[e^{\lambda_*(\theta_*)'g_i(\theta_*)}\varepsilon_i(\theta_*)^3 \wtl w_{i,k} \wtl w_i \wtl w_i'  \right]\right\| \rightarrow 0.$$
This concludes the proof.

\begin{flushright}
  $\square$
\end{flushright}

%%------------------------------------------------------------------------------------------------------------------------------------------------------------------------------
\begin{lem}\label{lem:A_2:A_3}
  Let Assumptions \ref{Ass_absolute_continuity}, \ref{ass:feasibility}, \ref{Ass_1_NS} (a), \ref{Ass_3} (a)-(c), (e) hold. Then, as $n\rightarrow \infty$,
  \begin{multline*}
    \sqrt{n} \EE_n\left[\left(\tau_i(\wh\lambda,\theta_*) - \tau_i^{\dagger}(\lambda_*,\theta_*)\right)h'\wtl{w}_{1,i}\wtl{w}_i'\right] \wh{\lambda}(\theta_*) \\
    = -\mathbb{G}_n[\tau_i^{\dagger}(\lambda_*,\theta_*)\varepsilon_i(\theta_*)\wtl{w}_i']\Omega_*^{\dagger}(\theta_*)^{-1}\EE\left[\tau_i^{\dagger}(\lambda_*,\theta_*) g_i(\theta_*)h'\wtl{w}_{1,i}\wtl{w}_i'\right] \lambda_*(\theta_*)\\
    - \EE\left[\tau_i^{\dagger}(\lambda_*,\theta_*) h'\wtl{w}_{1,i}\wtl{w}_i'\right] \lambda_*(\theta_*)\mathbb{G}_n\left(\tau_i^{\dagger}(\lambda_*,\theta_*)\right) + o_p(1),
  \end{multline*}
   where the $o_p(1)$ is uniform in $h\in K$ for any compact set $K\subset \mathbb{R}^p$.
\end{lem}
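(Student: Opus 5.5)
The plan is to split $\tau_i(\wh\lambda,\theta_*)-\tau_i^{\dagger}(\lambda_*,\theta_*)$ into a part driven by $\wh\lambda(\theta_*)-\lambda_*(\theta_*)$ and a part driven by the random normaliser. Write $\tau_i(\wh\lambda,\theta_*)=\tau_i^{\diamond}(\wh\lambda,\theta_*)/\EE_n[\tau_j^{\diamond}(\wh\lambda,\theta_*)]$ and $\tau_i^{\dagger}(\lambda_*,\theta_*)=\tau_i^{\diamond}(\lambda_*,\theta_*)/\EE[\tau_j^{\diamond}(\lambda_*,\theta_*)]$. Set $c_*:=\EE[\tau_j^{\diamond}(\lambda_*,\theta_*)]$ and $\wh c:=\EE_n[\tau_j^{\diamond}(\wh\lambda,\theta_*)]$. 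Then
\[
\tau_i(\wh\lambda,\theta_*)-\tau_i^{\dagger}(\lambda_*,\theta_*)=\frac{\tau_i^{\diamond}(\wh\lambda,\theta_*)-\tau_i^{\diamond}(\lambda_*,\theta_*)}{\wh c}+\tau_i^{\diamond}(\lambda_*,\theta_*)\Big(\frac{1}{\wh c}-\frac{1}{c_*}\Big).
\]
Since $\wh\lambda(\theta_*)\xrightarrow{p}\lambda_*(\theta_*)$ by Lemma \ref{lem:2}, and $\wh c\xrightarrow{p}c_*$ by the uniform LLN under Assumption \ref{Ass_3} \textit{(b)} (Newey--McFadden Lemma 2.4 plus the DCT), we may replace $\wh\lambda(\theta_*)$ outside these differences by $\lambda_*(\theta_*)$ at cost $o_p(1)$.

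For the first piece, I would use a first-order MVT in $\lambda$:
\[
\tau_i^{\diamond}(\wh\lambda,\theta_*)-\tau_i^{\diamond}(\lambda_*,\theta_*)=\tau_i^{\diamond}(\wtl\lambda,\theta_*)\,g_i(\theta_*)'(\wh\lambda-\lambda_*).
\]
This turns $\sqrt n\,\EE_n[\cdot\,h'\wtl w_{1,i}\wtl w_i']\wh\lambda$ into
\[
\sqrt n(\wh\lambda-\lambda_*)'\,\EE_n\big[\tau_i^{\diamond}(\wtl\lambda,\theta_*)g_i(\theta_*)h'\wtl w_{1,i}\wtl w_i'\big]\wh\lambda/\wh c .
\]
The middle average converges to $\EE[\tau_i^{\diamond}(\lambda_*,\theta_*)g_ih'\wtl w_{1,i}\wtl w_i']$ uniformly over $\lambda$ in a shrinking ball and over $h\in K$. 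This uses Assumption \ref{Ass_3} \textit{(c)} with $(j,\ell,\ell',\ell'')=(2,1,1,1)$ and the uniform LLN, and the dependence on $h$ is linear so uniformity over compact $K$ is automatic. Dividing by $c_*$ produces the $\tau^{\dagger}$ weights. Plugging in Lemma \ref{lem:1} for $\sqrt n(\wh\lambda-\lambda_*)$ gives the first term of the claim.

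For the second piece, I would expand $1/\wh c-1/c_*=-(\wh c-c_*)/c_*^2+o_p(n^{-1/2})$ and decompose
\[
\sqrt n(\wh c-c_*)=\mathbb G_n[\tau_i^{\diamond}(\lambda_*,\theta_*)]+\sqrt n\,\EE_n\big[\tau_i^{\diamond}(\wtl\lambda,\theta_*)g_i(\theta_*)'\big](\wh\lambda-\lambda_*).
\]
The second summand is $o_p(1)$, since $\EE_n[\tau_i^{\diamond}(\wtl\lambda,\theta_*)g_i]\xrightarrow{p}\EE[\tau_i^{\diamond}(\lambda_*,\theta_*)g_i]=0$ by the first-order condition for $\lambda_*$ (Assumption \ref{Ass_3} \textit{(a)}), while $\sqrt n(\wh\lambda-\lambda_*)=\mathcal O_p(1)$. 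Hence $\sqrt n(1/\wh c-1/c_*)=-\mathbb G_n(\tau_i^{\dagger}(\lambda_*,\theta_*))/c_*+o_p(1)$. Multiplying by $\EE_n[\tau_i^{\diamond}(\lambda_*,\theta_*)h'\wtl w_{1,i}\wtl w_i']\wh\lambda\xrightarrow{p}c_*\,\EE[\tau_i^{\dagger}h'\wtl w_{1,i}\wtl w_i']\lambda_*$ yields the second term of the claim.

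The main obstacle is making the remainders uniform in $h\in K$ while $\wtl\lambda$ is random. I would handle this with the shrinking-ball argument already used in Lemma \ref{lem:Omega}: on an event of probability at least $\eta$, $\wtl\lambda$ lies in $B_{\delta}(\lambda_*)$. On that event I would bound each empirical average by its supremum over $B_{\delta}(\lambda_*)\times K$, which the dominating functions $\gamma_j$ of Assumption \ref{Ass_3} control. Continuity of the limit in $\lambda$ together with $\delta\downarrow0$ then gives the required $o_p(1)$. Since every quantity involved is bilinear in $h$, uniformity over compact $K$ reduces to uniformity of finitely many matrix-valued averages.
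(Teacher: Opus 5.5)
Your proposal is correct and follows essentially the same route as the paper. You use the identical split of $\tau_i(\wh\lambda,\theta_*)-\tau_i^{\dagger}(\lambda_*,\theta_*)$ into a tilt-difference piece divided by $\EE_n[\tau_j^{\diamond}(\wh\lambda,\theta_*)]$ plus a normaliser-difference piece, a first-order MVT in $\lambda$ combined with Lemma \ref{lem:1}, and the first-order condition $\EE[\tau_i^{\diamond}(\lambda_*,\theta_*)g_i(\theta_*)]=0$ to make the cross term in $\sqrt n(\wh c-c_*)$ vanish. The only cosmetic difference is that you linearise $1/\wh c-1/c_*$, whereas the paper keeps the exact form $(c_*-\wh c)/(\wh c\,c_*)$.
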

\proof
By a first order MVT expansion of the function $\lambda \mapsto e^{\lambda'g_i(\theta_*)}$ around $\lambda_*(\theta_*)$, evaluated at $\wh\lambda(\theta_*)$, there exists a $\tau\in[0,1]$ such that: $e^{\wh\lambda(\theta_*)'g_i(\theta_*)} = e^{\lambda_*(\theta_*)'g_i(\theta_*)} + e^{\wtl \lambda_*(\theta_*)'g_i(\theta_*)}g_i(\theta_*)'(\wh{\lambda}(\theta_*) - \lambda_*(\theta_*))$, where $\wtl \lambda_*(\theta_*) := \tau\wh{\lambda}(\theta_*) + (1 - \tau) \lambda_*(\theta_*)$. We use this result to get the second equality below:
  \begin{multline}
    \sqrt{n} \EE_n\left[\left(\tau_i(\wh\lambda,\theta_*) - \tau_i^{\dagger}(\lambda_*,\theta_*)\right)h'\wtl{w}_{1,i}\wtl{w}_i'\right] \wh{\lambda}(\theta_*)\\
    = \sqrt{n} \EE_n\left[\left(e^{\wh\lambda(\theta_*)'g_i(\theta_*)} - e^{\lambda_*(\theta_*)'g_i(\theta_*)}\right) h'\wtl{w}_{1,i}\wtl{w}_i'\right] \wh{\lambda}(\theta_*)\frac{1}{\EE_n\left[e^{\wh\lambda(\theta_*)' g_i(\theta_*)}\right]}\\ \hfill
    + \sqrt{n} \EE_n\left[e^{\lambda_*(\theta_*)'g_i(\theta_*)} h'\wtl{w}_{1,i}\wtl{w}_i'\right] \wh{\lambda}(\theta_*)\left(\frac{1}{\EE_n\left[e^{\wh\lambda(\theta_*)' g_i(\theta_*)}\right]} - \frac{1}{\EE\left[e^{\lambda_*(\theta_*)' g_i(\theta_*)}\right]}\right)\\
    = \sqrt{n} (\wh{\lambda}(\theta_*) - \lambda_*(\theta_*))'\EE_n\left[e^{\wtl \lambda_*(\theta_*)'g_i(\theta_*)}g_i(\theta_*)h'\wtl{w}_{1,i}\wtl{w}_i'\right] \wh{\lambda}(\theta_*)\frac{1}{\EE_n\left[e^{\wh\lambda(\theta_*)' g_i(\theta_*)}\right]} + \\
    \sqrt{n} \EE_n\left[e^{\lambda_*(\theta_*)'g_i(\theta_*)} h'\wtl{w}_{1,i}\wtl{w}_i'\right] \frac{\wh{\lambda}(\theta_*)}{\EE_n\left[e^{\wh\lambda(\theta_*)' g_i(\theta_*)}\right]}\left(\EE\left[e^{\lambda_*(\theta_*)' g_i(\theta_*)}\right] - \EE_n\left[e^{\wh\lambda(\theta_*)' g_i(\theta_*)}\right]\right)\frac{1}{\EE\left[e^{\lambda_*(\theta_*)' g_i(\theta_*)}\right]}.\label{eq:2:Lem_A_8}
  \end{multline}

Let us consider the factor $\sqrt{n}\left(\EE\left[e^{\lambda_*(\theta_*)' g_i(\theta_*)}\right] - \EE_n\left[e^{\wh\lambda(\theta_*)' g_i(\theta_*)}\right]\right)$ in the second term of the right hand side of \eqref{eq:2:Lem_A_8}:
\begin{multline}
  \sqrt{n}\left(\EE\left[e^{\lambda_*(\theta_*)' g_i(\theta_*)}\right] - \EE_n\left[e^{\wh\lambda(\theta_*)' g_i(\theta_*)}\right]\right)\\
  = -\sqrt{n}\EE_n\left[e^{\lambda_*(\theta_*)' g_i(\theta_*)} - \EE[e^{\lambda_*(\theta_*)' g_i(\theta_*)}]\right] - \sqrt{n}\left(\EE_n\left[e^{\wh\lambda(\theta_*)' g_i(\theta_*)}\right]  - \EE_n\left[e^{\lambda_*(\theta_*)' g_i(\theta_*)}\right]\right)\\
  = -\mathbb{G}_n\left(e^{\lambda_*(\theta_*)' g_i(\theta_*)}\right) - \sqrt{n}\EE_n\left[e^{\wtl{\lambda}_*(\theta_*)'g_i(\theta_*)}g_i(\theta_*)'\right]\left(\wh{\lambda}(\theta_*) - \lambda_*(\theta_*)\right),\label{eq:1:Lem_A_8}
\end{multline}
\noindent where we have used again the first order MVT expansion of the function $\lambda \mapsto e^{\lambda'g_i(\theta_*)}$ around $\lambda_*(\theta_*)$ evaluated at $\wh\lambda(\theta_*)$. By Lemma \ref{lem:2}, for any $\eta>0$ there exists a finite $\delta > 0$ and a finite $N(\delta,\eta)$ such that for every $n > N(\delta,\eta)$, $\wtl{\lambda}_*(\theta_*) \in B_{\delta}(\lambda_*(\theta_*))$ with probability $\geq\eta$. By this and since $\EE\left[e^{\lambda_*(\theta_*)'g_i(\theta_*)} g_i(\theta_*)\right] = 0$ then by T: $\forall n > N(\delta,\eta)$,
\begin{multline*}
  \left\|\EE_n\left[e^{\wtl{\lambda}_*(\theta_*)'g_i(\theta_*)}g_{i}(\theta_*)\right]\right\|_2 = \left\|\EE_n\left[e^{\wtl{\lambda}_*(\theta_*)'g_i(\theta_*)}g_{i}(\theta_*)\right] - \EE\left[e^{\wtl{\lambda}_*(\theta_*)'g_i(\theta_*)}g_{i}(\theta_*)\right] \right\|_2\\
  + \left\|\EE\left[\left(e^{\wtl{\lambda}_*(\theta_*)'g_i(\theta_*)} - e^{\lambda_*(\theta_*)'g_i(\theta_*)}\right)g_{i}(\theta_*)\right] \right\|_2\\
  \leq \sup_{\lambda\in B_{\delta}(\lambda_*(\theta_*))}\left\|\EE_n\left[e^{\lambda'g_i(\theta_*)}g_{i}(\theta_*)\right] - \EE\left[e^{\lambda'g_i(\theta_*)}g_{i,k}(\theta_*)\right]\right\|_2 \\
  + \left\|\EE\left[\left(e^{\wtl{\lambda}_*(\theta_*)'g_i(\theta_*)} - e^{\lambda_*(\theta_*)'g_i(\theta_*)}\right)g_{i}(\theta_*)\right]\right\|_2
\end{multline*}
\noindent with probability $\geq \eta$. By Assumption \ref{Ass_3} \textit{(a)}, compactness of $B_{\delta}(\lambda_*(\theta_*))$ and \cite[Lemma 2.4]{NeweyMcFadden1994}, the first term in the right hand side of the previous expression converges to zero in probability. To deal with the second term we use the CMT and Lemma \ref{lem:2} that guarantee that $\left|\left(e^{\wtl{\lambda}_*(\theta_*)'g_i(\theta_*)} - e^{\lambda_*(\theta_*)'g_i(\theta_*)}\right)g_{i,k}(\theta_*)\right| \xrightarrow{p} 0$ for every $k=1,\ldots,d$. By Assumption \ref{Ass_3} \textit{(a)} and the DCT we conclude that
\begin{displaymath}
  \left\|\EE\left[\left(e^{\wtl{\lambda}_*(\theta_*)'g_i(\theta_*)} - e^{\lambda_*(\theta_*)'g_i(\theta_*)}\right)g_{i}(\theta_*)'\right] \right\|\xrightarrow{p}0.
\end{displaymath}
\noindent We then conclude that $\left\|\EE_n\left[e^{\wtl{\lambda}_*(\theta_*)'g_i(\theta_*)}g_{i}(\theta_*)\right]\right\|_2 = o_p(1)$ and since $\sqrt{n}\left(\wh{\lambda}(\theta_*) - \lambda_*(\theta_*)\right) = \mathcal{O}_p(1)$ by Lemma \ref{lem:2}, then term $\sqrt{n}\EE_n\left[e^{\wtl{\lambda}_*(\theta_*)'g_i(\theta_*)}g_i(\theta_*)'\right]\left(\wh{\lambda}(\theta_*) - \lambda_*(\theta_*)\right)$ in \eqref{eq:1:Lem_A_8} converges to zero in probability as well.\\
We now analyse the first term in the right hand side of \eqref{eq:2:Lem_A_8}. By Lemma \ref{lem:1}:
$$\sqrt{n} (\wh{\lambda}(\theta_*) - \lambda_*(\theta_*))' = -\mathbb{G}_n[\tau_i^{\dagger}(\lambda_*,\theta_*)\varepsilon_i(\theta_*)\wtl{w}_i']\Omega_*^{\dagger}(\theta_*)^{-1} + o_p(1),$$ by the uniform law of large numbers, which is valid under Assumption \ref{Ass_3} \textit{(c)} (with $(j,\ell,\ell') = (1,1,1)$), we have that $\EE_n\left[e^{\lambda_*(\theta_*)'g_i(\theta_*)} h'\wtl{w}_{1,i}\wtl{w}_i'\right] \xrightarrow{p} \EE\left[e^{\lambda_*(\theta_*)'g_i(\theta_*)} h'\wtl{w}_{1,i}\wtl{w}_i'\right] $ uniformly in $h\in K$, and by Lemma \ref{lem:2}: $\wh\lambda(\theta_*)\xrightarrow{p}\lambda_*(\theta_*)$. It remains to consider the terms $\EE_n\left[e^{\wtl \lambda_*(\theta_*)'g_i(\theta_*)}g_i(\theta_*)h'\wtl{w}_{1,i}\wtl{w}_i'\right]$ and $\EE_n\left[e^{\wh\lambda(\theta_*)' g_i(\theta_*)}\right]$. We start with the first one, which is a $(d\times d)$ matrix, for which we analyse every element of the matrix: by T, for every $j,k=1,\ldots,d$, and $\forall n > N(\delta,\eta)$,

\begin{multline*}
  \sup_{h\in K}\left|\EE_n\left[e^{\wtl \lambda_*(\theta_*)'g_i(\theta_*)}g_{i,j}(\theta_*)h'\wtl{w}_{1,i}\wtl{w}_{i,k}\right] - \EE\left[e^{\lambda_*(\theta_*)'g_i(\theta_*)}g_{i,j}(\theta_*)h'\wtl{w}_{1,i}\wtl{w}_{i,k}\right]\right|\\
  \leq \sup_{h\in K}\sup_{\lambda\in B_{\delta}(\lambda_*(\theta_*))}\left|\EE_n\left[e^{\lambda'g_i(\theta_*)}g_{i,j}(\theta_*)h'\wtl{w}_{1,i}\wtl{w}_{i,k}\right] - \EE\left[e^{\lambda'g_i(\theta_*)}g_{i,j}(\theta_*)h'\wtl{w}_{1,i}\wtl{w}_{i,k}\right]\right|\\ \hfill
  + \sup_{h\in K}\left|\EE\left[\left(e^{\wtl \lambda_*(\theta_*)'g_i(\theta_*)} - e^{\lambda_*(\theta_*)'g_i(\theta_*)}\right)g_{i,j}(\theta_*)h'\wtl{w}_{1,i}\wtl{w}_{i,k}\right]\right|
\end{multline*}
\noindent which converges to zero in probability under Assumption \ref{Ass_3} \textit{(f)} by compactness of $B_{\delta}(\lambda_*(\theta_*))$, \cite[Lemma 2.4]{NeweyMcFadden1994}, the CMT and the DCT. Similarly,  $\forall n > N(\delta,\eta)$,
\begin{multline*}
  \left|\EE_n\left[e^{\wh\lambda(\theta_*)' g_i(\theta_*)}\right] - \EE\left[e^{\lambda_*(\theta_*)' g_i(\theta_*)}\right]\right| \\
  \leq \sup_{\lambda \in B_{\delta}(\lambda_*(\theta_*))} \left|\EE_n\left[e^{\lambda' g_i(\theta_*)}\right] - \EE\left[e^{\lambda' g_i(\theta_*)}\right]\right| + \left|\EE\left[e^{\wh\lambda(\theta_*)' g_i(\theta_*)}  -e^{\lambda_*(\theta_*)' g_i(\theta_*)}\right]\right|
\end{multline*}
\noindent which converges to zero in probability under Assumption \ref{Ass_3} \textit{(b)}. By putting all these elements back in \eqref{eq:2:Lem_A_8} we obtain:
\begin{multline*}
  \sqrt{n} \EE_n\left[\left(\tau_i(\wh\lambda,\theta_*) - \tau_i^{\dagger}(\lambda_*,\theta_*)\right)h'\wtl{w}_{1,i}\wtl{w}_i'\right] \wh{\lambda}(\theta_*) \\
  = -\mathbb{G}_n[\tau_i^{\dagger}(\lambda_*,\theta_*)\varepsilon_i(\theta_*)\wtl{w}_i']\Omega_*^{\dagger}(\theta_*)^{-1}\EE\left[e^{\lambda_*(\theta_*)'g_i(\theta_*)}g_i(\theta_*)h'\wtl{w}_{1,i}\wtl{w}_i'\right] \lambda_*(\theta_*)\frac{1}{\EE\left[e^{\lambda_*(\theta_*)' g_i(\theta_*)}\right]}\\
  - \EE\left[e^{\lambda_*(\theta_*)'g_i(\theta_*)} h'\wtl{w}_{1,i}\wtl{w}_i'\right] \frac{\lambda_*(\theta_*)}{\EE\left[e^{\lambda_*(\theta_*)' g_i(\theta_*)}\right]}\mathbb{G}_n\left(e^{\lambda_*(\theta_*)' g_i(\theta_*)}\right)\frac{1}{\EE\left[e^{\lambda_*(\theta_*)' g_i(\theta_*)}\right]} + o_p(1),
\end{multline*}
\noindent where the $o_p(1)$ is uniform in $h\in K$.
\begin{flushright}
  $\square$
\end{flushright}

%%------------------------------------------------------------------------------------------------------------------------------------------------------------------------------
\begin{lem}\label{lem:technical:endogeneity:1}
  Let Assumptions \ref{Ass_absolute_continuity}-\ref{ass:feasibility}, \ref{Ass_1_NS} (a)-(b), \ref{Ass_3} (b)-(c) hold and let $\wtl\theta := \theta_* + \frac{\tau}{\sqrt{n}} h$ for some $\tau\in[0,1]$ and any $h$ in a compact set $K\subset \mathbb{R}^p$. Then, as $n\rightarrow \infty$,
  \begin{displaymath}
    \EE_n\left[\tau_i(\wh\lambda,\wtl\theta) h'\wtl{w}_{1,i}\wtl{w}_i\right] \xrightarrow{p} \EE\left[\tau_i^{\dagger}(\lambda_*,\theta_*)h'\wtl{w}_{1,i}\wtl{w}_i\right]
  \end{displaymath}
  uniformly in $h\in K$.
\end{lem}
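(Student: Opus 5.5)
The plan is to write $\tau_i(\wh\lambda,\wtl\theta) = e^{\wh\lambda(\wtl\theta)'g_i(\wtl\theta)}/\EE_n[e^{\wh\lambda(\wtl\theta)'g_j(\wtl\theta)}]$ and treat the numerator average and the normalizing denominator separately. The target is
\[
\EE_n\bigl[e^{\wh\lambda(\wtl\theta)'g_i(\wtl\theta)}h'\wtl w_{1,i}\wtl w_i\bigr] \xrightarrow{p} \EE\bigl[e^{\lambda_*(\theta_*)'g_i(\theta_*)}h'\wtl w_{1,i}\wtl w_i\bigr]
\]
uniformly in $h\in K$, together with $\EE_n[e^{\wh\lambda(\wtl\theta)'g_j(\wtl\theta)}]\xrightarrow{p}\EE[e^{\lambda_*(\theta_*)'g_j(\theta_*)}]>0$. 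Once both limits are in hand, the continuous mapping theorem gives the ratio limit $\EE[\tau_i^{\dagger}(\lambda_*,\theta_*)h'\wtl w_{1,i}\wtl w_i]$. The ratio step is uniform in $h$ because the denominator does not depend on $h$ beyond $\wtl\theta$, and $h$ enters the numerator linearly over a compact set.

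The first step is to localize the random arguments. Since $\|\wtl\theta-\theta_*\|\le \|h\|/\sqrt n\le C/\sqrt n$ uniformly on $K$, we have $\wtl\theta\in B_{*,n}\cap B_\delta(\theta_*)$ for $n$ large. By Lemma \ref{lem:continuity:sample}, $\wh\lambda(\cdot)$ is continuous on $B_{*,n}$, and by Lemma \ref{lem:2} together with Lemma \ref{lem:continuity}, $\wh\lambda(\wtl\theta)\in B_\delta(\lambda_*)$ with probability approaching one. To make this uniform in $h$, I would write
\[
\wh\lambda(\wtl\theta)-\lambda_* = \bigl(\wh\lambda(\wtl\theta)-\wh\lambda(\theta_*)\bigr) + \bigl(\wh\lambda(\theta_*)-\lambda_*\bigr).
\]
The second difference is controlled by Lemma \ref{lem:2}. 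For the first, I apply the MVT using the derivative formula in Lemma \ref{lem:continuity:sample}(iii), whose factors are $\mathcal O_p(1)$ by Lemma \ref{lem:Omega}-type arguments, so the first difference is $\mathcal O_p(1/\sqrt n)$ uniformly in $h\in K$.

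Next, on the event that $(\wh\lambda(\wtl\theta),\wtl\theta)\in B_\delta(\lambda_*)\times B_\delta(\theta_*)$, I bound the numerator error by the sum of two pieces:
\[
\sup_{h\in K}\sup_{\lambda\in B_\delta(\lambda_*),\,\theta\in B_\delta(\theta_*)}\bigl|\EE_n[f(\lambda,h,\theta)]-\EE[f(\lambda,h,\theta)]\bigr| + \sup_{h\in K}\bigl|\EE[f(\wh\lambda(\wtl\theta),h,\wtl\theta)]-\EE[f(\lambda_*,h,\theta_*)]\bigr|,
\]
where $f(\lambda,h,\theta)=e^{\lambda'g(w,\theta)}h'\wtl w_1\wtl w$. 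The first supremum vanishes in probability by the uniform LLN \cite[Lemma 2.4]{NeweyMcFadden1994}, which applies because of the compactness of the index set and the envelope in Assumption \ref{Ass_3}(c) with $(j,\ell,\ell',\ell'')=(1,1,1,0)$. The second vanishes by continuity of $f$ in $(\lambda,\theta)$ (the CMT) and the DCT with the same envelope. The denominator is handled identically using Assumption \ref{Ass_3}(b).

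The main obstacle is the mismatch between Assumption \ref{Ass_3}(c) and what the argument needs. Part (c) places $\varepsilon(\theta_*)$ in the exponent while letting $\theta$ vary elsewhere, whereas here the exponent contains $g(w,\wtl\theta)=\varepsilon(\wtl\theta)\wtl w$. To close this gap I would use linearity: $\varepsilon(\wtl\theta)=\varepsilon(\theta_*)-\tau h'\wtl w_1/\sqrt n$. This gives
\[
e^{\lambda'g(w,\wtl\theta)} = e^{\lambda'g(w,\theta_*)}\,e^{-\tau\lambda'\wtl w\,h'\wtl w_1/\sqrt n}.
\]
The second factor can then be absorbed by enlarging $\delta$, or controlled via the $\gamma$-functions indexed by $\theta\in B_\delta(\theta_*)$ in parts (b)–(c) if those envelopes are read as dominating over the whole $\theta$-ball. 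If that reading fails, I would instead expand the second factor to first order and control the remainder term $\lambda'\wtl w\,h'\wtl w_1$ using Assumption \ref{Ass_3}(f).
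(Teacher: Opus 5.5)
Your argument is the paper's. The paper writes the difference as the centred numerator average $(\EE_n-\EE)\bigl[e^{\wh\lambda(\wtl\theta)'g_i(\wtl\theta)}h'\wtl w_{1,i}\wtl w_i\bigr]$ plus the bias $\EE\bigl[(e^{\wh\lambda(\wtl\theta)'g_i(\wtl\theta)}-e^{\lambda_*(\theta_*)'g_i(\theta_*)})h'\wtl w_{1,i}\wtl w_i\bigr]$, both divided by $\EE_n[e^{\wh\lambda(\wtl\theta)'g_i(\wtl\theta)}]$, plus a denominator correction. It then disposes of these with Newey--McFadden's Lemma 2.4 under Assumption~\ref{Ass_3}(c), the CMT and DCT, and Assumption~\ref{Ass_3}(b). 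That is exactly your numerator/denominator/ratio scheme.

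You are right that (c) as written has $\varepsilon(\theta_*)$, not $\varepsilon(\wtl\theta)$, in the exponent. The paper applies it anyway, so it tacitly uses a strengthened (c): $\varepsilon(\theta)$ in the exponent and an envelope integrable uniformly over $\theta\in B_\delta(\theta_*)$. That is your second option, and you should commit to it, because your other two fixes fail.
\begin{itemize}
\item The factor $e^{-\tau\lambda'\wtl w\,h'\wtl w_1/\sqrt n}$ is not of the form $e^{\mu'\wtl w\,\varepsilon(\theta_*)}$. It is unbounded in $w$ where $\varepsilon(\theta_*)$ is near zero, so no enlargement of the $\lambda$-ball dominates it.
\item A first-order expansion leaves a remainder proportional to $e^{\lambda'g(w,\bar\theta)}\,\lambda'\wtl w\,(h'\wtl w_1)^2\,\wtl w/\sqrt n$, with $\bar\theta$ between $\theta_*$ and $\wtl\theta$. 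The exponent is again off $\theta_*$, and the polynomial factor is not the one controlled by (f), which carries $\varepsilon(\theta)$ and a single power of $h'\wtl w_1$.
\end{itemize}

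Your uniform-in-$h$ localization of $\wh\lambda(\wtl\theta)$ addresses something the paper skips. The paper just cites Lemma~\ref{lem:2}, which is pointwise in $\theta$ and centred at $\lambda_*(\theta)$. As you sketch it, however, the step is circular. Bounding $\partial\wh\lambda/\partial\theta'$ at points between $\theta_*$ and $\wtl\theta$ by Lemma~\ref{lem:Omega}-type arguments presupposes that $\wh\lambda$ already lies in $B_\delta(\lambda_*)$ there. Moreover, the factor $\EE_n[e^{\lambda'g_i(\theta)}\wtl w_{1,i}\wtl w_i']$ in that derivative is essentially the numerator this lemma is about.

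A first-exit argument along the segment, using the continuity of $\wh\lambda(\cdot)$ from Lemma~\ref{lem:continuity:sample}, repairs this once the strengthened envelope is granted. A simpler route exists because only consistency, not a rate, is needed here. Use convexity of $\lambda\mapsto\EE_n[e^{\lambda'g_i(\theta)}]$ together with the uniform LLN over $(\lambda,\theta)$ from Assumption~\ref{Ass_3}(b), which, unlike (c), already has $\varepsilon(\theta)$ in the exponent. This gives $\sup_{h\in K}\|\wh\lambda(\wtl\theta)-\lambda_*(\theta_*)\|_2\xrightarrow{p}0$ directly.
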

\proof
Notice that $\wtl\theta\in B_{*,n}$. The following decomposition holds:
  \begin{multline}
    \EE_n\left(\tau_i(\wh\lambda,\wtl\theta)h'\wtl{w}_{1,i}\wtl{w}_i\right) - \EE\left[\tau_i^{\dagger}(\lambda_*,\theta_*)h'\wtl{w}_{1,i}\wtl{w}_i\right]\\
    = \EE_n\left[e^{\wh\lambda(\wtl\theta)'g_i(\wtl\theta)} h'\wtl{w}_{1,i}\wtl{w}_i  - \EE\left[e^{\wh\lambda(\wtl\theta)'g_i(\wtl\theta)} h'\wtl{w}_{1,i}\wtl{w}_i\right]\right] \frac{1}{\EE_n\left[e^{\wh\lambda(\wtl\theta)' g_i(\wtl\theta)}\right]}\\
    + \EE\left[\left(e^{\wh\lambda(\wtl\theta)'g_i(\wtl\theta)} - e^{\lambda_*(\theta_*)'g_i(\theta_*)}\right)h'\wtl{w}_{1,i}\wtl{w}_i\right]\frac{1}{\EE_n\left[e^{\wh\lambda(\wtl\theta)' g_i(\wtl\theta)}\right]}\\ \hfill
    + \EE\left[e^{\lambda_*(\theta_*)'g_i(\theta_*)}h'\wtl{w}_{1,i}\wtl{w}_i\right]\frac{1}{\EE_n\left[e^{\wh\lambda(\wtl\theta)' g_i(\wtl\theta)}\right]}\left(\EE[e^{\lambda_*(\theta_*)'g_i(\theta_*)}] - \EE_n\left[e^{\wh\lambda(\wtl\theta)' g_i(\wtl\theta)}\right]\right)\frac{1}{\EE[e^{\lambda_*(\theta_*)'g_i(\theta_*)}]}.\label{eq:2:Lem_A_9:endogeneity}
  \end{multline}

\noindent By Lemma \ref{lem:2} which guarantees that $\wh\lambda(\wtl\theta) \in B_{1/\sqrt{n}}(\lambda_*(\theta_*))$ for $n$ large, by compactness of $B_{1/\sqrt{n}}(\lambda_*(\theta_*))$ and of $B_{1/\sqrt{n}}(\theta_*)$, and by \cite[Lemma 2.4]{NeweyMcFadden1994} which is valid under Assumption \ref{Ass_3} \textit{(c)} with $j=\ell = \ell' = 1$, we have that $$\EE_n\left[e^{\wh\lambda(\wtl\theta)'g_i(\wtl\theta)} h'\wtl{w}_{1,i}\wtl{w}_i'  - \EE\left[e^{\wh\lambda(\wtl\theta)'g_i(\wtl\theta)} h'\wtl{w}_{1,i}\wtl{w}_i'\right]\right] \xrightarrow{p}0$$
uniformly in $(\wh\lambda(\wtl\theta),h) \in B_{1/\sqrt{n}}(\lambda_*(\theta_*))\times K$. By Lemma \ref{lem:2}, $\wtl\theta\rightarrow \theta_*$, by the CMT and the DCT (which is valid under Assumption \ref{Ass_3} \textit{(c)} with $j=\ell = \ell' = 1$) we have that $\EE\left[\left(e^{\wh\lambda(\wtl\theta)'g_i(\wtl\theta)} - e^{\lambda_*(\theta_*)'g_i(\theta_*)}\right)h'\wtl{w}_{1,i}\wtl{w}_i'\right]\xrightarrow{p} 0$ uniformly in $h\in K$. Similarly, $\EE_n\left[e^{\wh\lambda(\wtl\theta)' g_i(\wtl\theta)}\right]\xrightarrow{p} \EE[e^{\lambda_*(\theta_*)'g_i(\theta_*)}]$ uniformly in $h\in K$ under Assumption \ref{Ass_3} \textit{(b)}. Hence, \eqref{eq:2:Lem_A_9:endogeneity} converges to zero in probability uniformly in $h\in K$.
\begin{flushright}
  $\square$
\end{flushright}

%%------------------------------------------------------------------------------------------------------------------------------------------------------------------------------
\begin{lem}\label{lem:technical:endogeneity:2}
  Let Assumptions \ref{Ass_absolute_continuity}-\ref{ass:feasibility}, \ref{Ass_1_NS} (a)-(b), \ref{Ass_3} (b)-(c) hold and let $\wtl\theta := \theta_* + \frac{\tau}{\sqrt{n}} h$ for some $\tau\in[0,1]$ and any $h$ in a compact set $K\subset \mathbb{R}^p$. Then, as $n\rightarrow \infty$,
  \begin{displaymath}
    \EE_n\left[\tau_i(\wh\lambda,\wtl\theta) \left(h'\wtl{w}_{1,i}\wtl{w}_i'\wh\lambda(\wtl\theta)\right)^2\right] \xrightarrow{p} \EE\left[\tau_i^{\dagger}(\lambda_*,\theta_*) \left(h'\wtl{w}_{1,i}\wtl{w}_i'\lambda_*(\theta_*)\right)^2\right]
  \end{displaymath}
  uniformly in $h\in K$.
\end{lem}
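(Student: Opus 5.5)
The plan is to mimic the decomposition used in the proof of Lemma~\ref{lem:technical:endogeneity:1}: split the empirical average into a uniform LLN piece, a continuity piece, and a normalization piece. Write $\tau_i(\wh\lambda,\wtl\theta)=e^{\wh\lambda(\wtl\theta)'g_i(\wtl\theta)}/\EE_n[e^{\wh\lambda(\wtl\theta)'g_j(\wtl\theta)}]$. Let
\[
f(\lambda,\theta,h;w):=e^{\lambda'g(w,\theta)}(h'\wtl w_{1}\wtl w'\lambda)^2 .
\]
Then the difference between the left-hand side and its limit equals
\[
\bigl(\EE_n-\EE\bigr)[f(\wh\lambda(\wtl\theta),\wtl\theta,h;w_i)]\,\frac{1}{\EE_n[e^{\wh\lambda'g_j(\wtl\theta)}]}
+\EE\bigl[f(\wh\lambda(\wtl\theta),\wtl\theta,h;w)-f(\lambda_*(\theta_*),\theta_*,h;w)\bigr]\,\frac{1}{\EE_n[e^{\wh\lambda'g_j(\wtl\theta)}]}
\]
plus a term $\EE[f(\lambda_*,\theta_*,h;w)]$ times the difference of reciprocals of $\EE_n[e^{\wh\lambda(\wtl\theta)'g_j(\wtl\theta)}]$ and $\EE[e^{\lambda_*'g_j(\theta_*)}]$. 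I will show each of the three pieces is $o_p(1)$ uniformly in $h\in K$.

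The first step localizes $\wh\lambda(\wtl\theta)$. Since $\wtl\theta\in B_{*,n}$ and $\wtl\theta\to\theta_*$ uniformly in $h\in K$, I combine Lemma~\ref{lem:2} at $\wtl\theta$ with the continuity of $\theta\mapsto\lambda_*(\theta)$ from Lemma~\ref{lem:continuity}. This gives, for every $\delta>0$ and $\eta>0$, that $\wh\lambda(\wtl\theta)\in B_\delta(\lambda_*)$ and $\wtl\theta\in B_\delta(\theta_*)$ with probability at least $1-\eta$ for $n$ large. On this event, the first piece is bounded by
\[
\sup_{h\in K}\ \sup_{\lambda\in B_\delta(\lambda_*)}\ \sup_{\theta\in B_\delta(\theta_*)}\bigl|(\EE_n-\EE)f(\lambda,\theta,h;\cdot)\bigr| .
\]
The index set is compact and $f$ is continuous in $(\lambda,\theta,h)$. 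Expanding the square gives $(h'\wtl w_1\wtl w'\lambda)^2\le \|h\|^2\|\lambda\|^2\|\wtl w_1\|^2\|\wtl w\|^2$, and the resulting terms $e^{\lambda'g}\,\wtl w_{1,k}\wtl w_{l}\wtl w_{1,k'}\wtl w_{l'}$ are dominated by the envelope of Assumption~\ref{Ass_3}\textit{(c)} with $(j,\ell,\ell',\ell'')=(1,2,2,0)$, which is exactly the quadratic-in-$h'\wtl w_1$ case. Hence \cite[Lemma 2.4]{NeweyMcFadden1994} gives uniform convergence to $0$.

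For the second piece, $f(\wh\lambda(\wtl\theta),\wtl\theta,h;w)\to f(\lambda_*,\theta_*,h;w)$ pointwise by the CMT, uniformly over $h\in K$ by compactness. The same envelope then allows the DCT, so the expectation of the difference is $o_p(1)$. The normalizing factor $\EE_n[e^{\wh\lambda(\wtl\theta)'g_i(\wtl\theta)}]\to\EE[e^{\lambda_*'g_i(\theta_*)}]>0$ by the same two-step argument under Assumption~\ref{Ass_3}\textit{(b)}, exactly as in Lemma~\ref{lem:technical:endogeneity:1}. Its reciprocal is therefore $O_p(1)$, and since $\EE[f(\lambda_*,\theta_*,h;w)]$ is bounded uniformly on $K$, the third piece vanishes as well.

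The main obstacle is the first step: the random tilting parameter is evaluated at the moving point $\wtl\theta$, not at $\theta_*$, so Lemma~\ref{lem:2} only gives pointwise control. I would handle this through the continuity of $\lambda_*(\cdot)$ together with the joint sup over $\theta\in B_\delta(\theta_*)$ in the uniform LLN. This sup is why the envelopes in Assumption~\ref{Ass_3} are required uniformly over $B_\delta(\theta_*)$. If that joint uniformity in $\theta$ is delicate, the fallback is to write $\varepsilon(\wtl\theta)=\varepsilon(\theta_*)-\tau h'\wtl w_1/\sqrt n$, which is linear in $\theta$, and absorb the $O(n^{-1/2})$ perturbation using the same moment bounds.
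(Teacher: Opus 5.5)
Your proposal is correct and follows essentially the same route as the paper. It uses the identical three-term split (uniform LLN term, continuity/DCT term, normalization term), localizes $\wh\lambda(\wtl\theta)$ via Lemma~\ref{lem:2}, applies \cite[Lemma 2.4]{NeweyMcFadden1994} under Assumption~\ref{Ass_3}\textit{(c)} with $(j,\ell,\ell',\ell'')=(1,2,2,0)$, and uses Assumption~\ref{Ass_3}\textit{(b)} for the denominator. Your fixed $\delta$-ball, combined with the continuity of $\lambda_*(\cdot)$ from Lemma~\ref{lem:continuity}, is a slightly cleaner way to justify the uniform LLN step than the paper's shrinking $B_{1/\sqrt{n}}$ ball, and the uniformity caveats you flag apply equally to the paper's own argument.
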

\proof
Notice that $\wtl\theta\in B_{*,n}$. The following decomposition holds:
  \begin{multline}
    \EE_n\left[\tau_i(\wh\lambda,\wtl\theta)\left(h'\wtl{w}_{1,i}\wtl{w}_i'\wh\lambda(\wtl\theta)\right)^2\right] - \EE\left[\tau_i^{\dagger}(\lambda_*,\theta_*) \left(h'\wtl{w}_{1,i}\wtl{w}_i'\lambda_*(\theta_*)\right)^2\right]\\
    = \EE_n\left[e^{\wh\lambda(\wtl\theta)'g_i(\wtl\theta)} \left(h'\wtl{w}_{1,i}\wtl{w}_i'\wh\lambda(\wtl\theta)\right)^2  - \EE\left[e^{\wh\lambda(\wtl\theta)'g_i(\wtl\theta)} \left(h'\wtl{w}_{1,i}\wtl{w}_i'\wh\lambda(\wtl\theta)\right)^2\right]\right] \frac{1}{\EE_n\left[e^{\wh\lambda(\wtl\theta)' g_i(\wtl\theta)}\right]}\\
    + \left(\EE\left[e^{\wh\lambda(\wtl\theta)'g_i(\wtl\theta)}\left(h'\wtl{w}_{1,i}\wtl{w}_i'\wh\lambda(\wtl\theta)\right)^2\right] - \EE\left[e^{\lambda_*(\theta_*)'g_i(\theta_*)} \left(h'\wtl{w}_{1,i}\wtl{w}_i'\lambda_*(\theta_*)\right)^2\right]\right)\frac{1}{\EE_n\left[e^{\wh\lambda(\wtl\theta)' g_i(\wtl\theta)}\right]}\\ \hfill
    + \EE\left[e^{\lambda_*(\theta_*)'g_i(\theta_*)}\left(h'\wtl{w}_{1,i}\wtl{w}_i'\lambda_*(\theta_*)\right)^2\right]\frac{1}{\EE_n\left[e^{\wh\lambda(\wtl\theta)' g_i(\wtl\theta)}\right]}\left(\EE\left[e^{\lambda_*(\theta_*)' g_i(\theta_*)}\right] - \EE_n\left[e^{\wh\lambda(\wtl\theta)' g_i(\wtl\theta)}\right]\right)\frac{1}{\EE\left[e^{\lambda_*(\theta_*)' g_i(\theta_*)}\right]}.
\label{eq:2:Lem_A_10:endogeneity}
  \end{multline}

\noindent By Lemma \ref{lem:2} which guarantees that $\wh\lambda(\wtl\theta) \in B_{1/\sqrt{n}}(0)$ for $n$ large, by compactness of $B_{1/\sqrt{n}}(0)$ and of $B_{1/\sqrt{n}}(\theta_\circ)$, and by \cite[Lemma 2.4]{NeweyMcFadden1994} which is valid under Assumption \ref{Ass_3} \textit{(c)} with $j = 1$ and $\ell = \ell' = 2$, we have that
$$\EE_n\left[e^{\wh\lambda(\wtl\theta)'g_i(\wtl\theta)} \left(h'\wtl{w}_{1,i}\wtl{w}_i'\wh\lambda(\wtl\theta)\right)^2  - \EE\left[e^{\wh\lambda(\wtl\theta)'g_i(\wtl\theta)} \left(h'\wtl{w}_{1,i}\wtl{w}_i'\wh\lambda(\wtl\theta)\right)^2\right]\right] \xrightarrow{p}0$$
uniformly in $(\wh\lambda(\wtl\theta),h) \in B_{1/\sqrt{n}}(\lambda_*(\theta_*))\times K$. By Lemma \ref{lem:2}, by the fact that $\wtl\theta\rightarrow \theta_*$, by the CMT and the DCT (which is valid under Assumption \ref{Ass_3} \textit{(c)} with $j = 1$ and $\ell = \ell' = 2$) we have that $\EE\left[e^{\wh\lambda(\wtl\theta)'g_i(\wtl\theta)}\left(h'\wtl{w}_{1,i}\wtl{w}_i'\wh\lambda(\wtl\theta)\right)^2\right]\xrightarrow{p} \EE\left[e^{\lambda_*(\theta_*)'g_i(\theta_*)} \left(h'\wtl{w}_{1,i}\wtl{w}_i'\lambda_*(\theta_*)\right)^2\right]$ uniformly in $h\in K$. Similarly, $\EE_n\left[e^{\wh\lambda(\wtl\theta)' g_i(\wtl\theta)}\right]\xrightarrow{p} \EE[e^{\lambda_*(\theta_*)' g_i(\theta_*)}]$ under Assumption \ref{Ass_3} \textit{(b)} uniformly in $h\in K$. Hence, \eqref{eq:2:Lem_A_10:endogeneity} converges to zero in probability uniformly in $h\in K$.
\begin{flushright}
  $\square$
\end{flushright}

%%------------------------------------------------------------------------------------------------------------------------------------------------------------------------------
\begin{lem}\label{lem:technical:endogeneity:3}
  Let Assumptions \ref{Ass_absolute_continuity}-\ref{ass:feasibility}, \ref{Ass_1_NS} (a)-(b), \ref{Ass_3} (b),(g) hold and let $\wtl\theta := \theta_* + \frac{\tau}{\sqrt{n}} h$ for some $\tau\in[0,1]$ and any $h$ in a compact set $K\subset \mathbb{R}^p$. Then, as $n\rightarrow \infty$,
  \begin{displaymath}
    \EE_n\left[\tau_i(\wh\lambda,\wtl\theta) g_i(\wtl\theta)\wh\lambda(\wtl\theta)'\wtl{w}_i\wtl{w}_{1,i}'\right]h \xrightarrow{p} \EE\left[\tau_i^{\dagger}(\lambda_*,\theta_*) g_i(\theta_*)\lambda_*(\theta_*)'\wtl{w}_i\wtl{w}_{1,i}'\right]h
  \end{displaymath}
  uniformly in $h\in K$.
\end{lem}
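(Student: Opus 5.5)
I would prove Lemma \ref{lem:technical:endogeneity:3} with the same three-term decomposition used for Lemmas \ref{lem:technical:endogeneity:1} and \ref{lem:technical:endogeneity:2}. First note that $\wtl\theta\in B_{*,n}$, so $\wh\lambda(\wtl\theta)$ is well defined with probability approaching one (Assumption \ref{ass:feasibility}, Lemma \ref{lem:continuity:sample}). Write $\tau_i(\wh\lambda,\wtl\theta)=e^{\wh\lambda(\wtl\theta)'g_i(\wtl\theta)}/\EE_n[e^{\wh\lambda(\wtl\theta)'g_j(\wtl\theta)}]$ and set $f(\lambda,\theta,h;w_i):=e^{\lambda'g_i(\theta)}g_i(\theta)\lambda'\wtl w_i\wtl w_{1,i}'h$. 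The difference to be controlled then splits as
\[
\frac{\EE_n[f(\wh\lambda(\wtl\theta),\wtl\theta,h;w_i)]-\EE[f(\lambda,\theta,h;w_i)]|_{\lambda=\wh\lambda(\wtl\theta),\theta=\wtl\theta}}{\EE_n[e^{\wh\lambda(\wtl\theta)'g_i(\wtl\theta)}]}
+\frac{\EE[f]|_{\wh\lambda(\wtl\theta),\wtl\theta}-\EE[f(\lambda_*(\theta_*),\theta_*,h;w_i)]}{\EE_n[e^{\wh\lambda(\wtl\theta)'g_i(\wtl\theta)}]}
\]
plus a third term. The third term is $\EE[f(\lambda_*,\theta_*,h)]$ times the difference of the reciprocals of $\EE_n[e^{\wh\lambda(\wtl\theta)'g_i(\wtl\theta)}]$ and $\EE[e^{\lambda_*(\theta_*)'g_i(\theta_*)}]$.

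The key preliminary step is localization. By Lemma \ref{lem:2}, together with the continuity of $\theta\mapsto\lambda_*(\theta)$ from Lemma \ref{lem:continuity} and $\wtl\theta-\theta_*=O(n^{-1/2})$ uniformly in $h\in K$, for every $\delta>0$ we have $\wh\lambda(\wtl\theta)\in B_\delta(\lambda_*)$ and $\wtl\theta\in B_\delta(\theta_*)$ with probability approaching one. On that event, the first term is bounded by
\[
\sup_{\lambda\in B_\delta(\lambda_*),\,\theta\in B_\delta(\theta_*),\,h\in K}\bigl|\EE_n[f]-\EE[f]\bigr|.
\]
This supremum goes to zero in probability by the uniform LLN \cite[Lemma 2.4]{NeweyMcFadden1994}. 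The conditions are met because $f$ is continuous in $(\lambda,\theta,h)$, the index set is compact, and Assumption \ref{Ass_3}\textit{(g)} supplies an integrable envelope $\gamma_6$.

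To match the form in \textit{(g)}, I would expand $\varepsilon_i(\theta)=\varepsilon_i(\theta_*)-(\theta-\theta_*)'\wtl w_{1,i}$, using linearity. This splits $f$ into a piece with $e^{\lambda'\wtl w\varepsilon(\theta_*)}\varepsilon(\theta)\wtl w\lambda'\wtl w h'\wtl w_1$, covered by \textit{(f)}/\textit{(g)}, and remainder pieces carrying an extra factor $\tau h'\wtl w_{1,i}/\sqrt n$. The remainder pieces are covered by the $(j,\ell)=(1,2)$ case of \textit{(g)}, and the $1/\sqrt n$ factor makes them vanish. The same expansion handles the tilt factor $e^{\lambda' g_i(\wtl\theta)}=e^{\lambda'g_i(\theta_*)}e^{-\tau\lambda'\wtl w_i h'\wtl w_{1,i}/\sqrt n}$. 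Its second factor tends to one pointwise, and the dominated convergence theorem applies.

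For the second term, I would use $(\wh\lambda(\wtl\theta),\wtl\theta)\xrightarrow{p}(\lambda_*(\theta_*),\theta_*)$ and the continuous mapping theorem to get pointwise convergence of the integrand. Dominated convergence under the envelope in \textit{(g)} then gives convergence of the expectation. Uniformity in $h\in K$ follows because $f$ is linear in $h$: it suffices to treat the $p$ coordinate directions $h=e_k$. For the denominator, $\EE_n[e^{\wh\lambda(\wtl\theta)'g_i(\wtl\theta)}]\xrightarrow{p}\EE[e^{\lambda_*'g_i(\theta_*)}]>0$ by the same uniform LLN plus DCT argument under Assumption \ref{Ass_3}\textit{(b)}, exactly as in Lemma \ref{lem:technical:endogeneity:1}. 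This kills the third term and lets the denominators be replaced by their limits. Assembling the three pieces gives the claimed limit $\EE[\tau_i^\dagger(\lambda_*,\theta_*)g_i(\theta_*)\lambda_*(\theta_*)'\wtl w_i\wtl w_{1,i}']h$.

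I expect the main obstacle to be the localization step. Lemma \ref{lem:2} is stated for fixed $\theta$, so I would first try to upgrade it to $\wh\lambda(\wtl\theta)-\lambda_*(\theta_*)=o_p(1)$ uniformly over $\wtl\theta$ in the shrinking ball. My first attempt would be the mean value theorem with the derivative formula in Lemma \ref{lem:continuity:sample}\textit{(iii)}, using $\|\wtl\theta-\theta_*\|\le M/\sqrt n$ and the $O_p(1)$ size of that derivative, which follows from Lemma \ref{lem:Omega} and Assumption \ref{Ass_1_NS}\textit{(c)}.
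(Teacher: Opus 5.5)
Your proposal follows the paper's proof: the same three-term decomposition, the Newey--McFadden uniform LLN on a compact neighbourhood for the centred term, CMT plus dominated convergence for the bias term, and convergence of $\EE_n[e^{\wh\lambda(\wtl\theta)'g_i(\wtl\theta)}]$ under Assumption \ref{Ass_3}\textit{(b)} for the normalisation. The paper simply asserts the localisation of $\wh\lambda(\wtl\theta)$ from Lemma \ref{lem:2} and cites \textit{(f)} for the envelope, so your uniform-in-$\theta$ localisation and exponent bookkeeping are extra care rather than a different route; note that the mean-value bound on $\partial\wh\lambda/\partial\theta'$ at intermediate points needs a continuation (or convexity) argument, because Lemma \ref{lem:Omega} controls $\check\Omega^{\diamond}$ only at $\theta_*$ and citing it there is otherwise circular.
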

\proof
Notice that $\wtl\theta\in B_{*,n}$. The following decomposition holds:
  \begin{multline}
    \EE_n\left[\tau_i(\wh\lambda,\wtl\theta)g_i(\wtl\theta)\wh\lambda(\wtl\theta)'\wtl{w}_i\wtl{w}_{1,i}'\right]h - \EE\left[\tau_i(\lambda_*,\theta_*) g_i(\theta_*)\lambda_*(\theta_*)'\wtl{w}_i\wtl{w}_{1,i}'\right]h\\
    = \EE_n\left[e^{\wh\lambda(\wtl\theta)'g_i(\wtl\theta)} g_i(\wtl\theta)\wh\lambda(\wtl\theta)'\wtl{w}_i\wtl{w}_{1,i}'h  - \EE\left[e^{\wh\lambda(\wtl\theta)'g_i(\wtl\theta)} g_i(\wtl\theta)\wh\lambda(\wtl\theta)'\wtl{w}_i\wtl{w}_{1,i}'h\right]\right] \frac{1}{\EE_n\left[e^{\wh\lambda(\wtl\theta)' g_i(\wtl\theta)}\right]}\\
    + \left(\EE\left[e^{\wh\lambda(\wtl\theta)'g_i(\wtl\theta)} g_i(\wtl\theta)\wh\lambda(\wtl\theta)'\wtl{w}_i\wtl{w}_{1,i}'h\right] - \EE\left[\tau_i^{\diamond}(\lambda_*,\theta_*) g_i(\theta_*)\lambda_*(\theta_*)'\wtl{w}_i\wtl{w}_{1,i}'\right]h\right)\frac{1}{\EE_n\left[e^{\wh\lambda(\wtl\theta)' g_i(\wtl\theta)}\right]}\\\hfill
    + \EE\left[\tau_i^{\diamond}(\lambda_*,\theta_*) g_i(\theta_*)\lambda_*(\theta_*)'\wtl{w}_i\wtl{w}_{1,i}'\right]h\frac{1}{\EE_n\left[e^{\wh\lambda(\wtl\theta)' g_i(\wtl\theta)}\right]}\frac{\left(\EE\left[e^{\lambda_*(\theta_*)' g_i(\theta_*)}\right] - \EE_n\left[e^{\wh\lambda(\wtl\theta)' g_i(\wtl\theta)}\right]\right)}{\EE\left[e^{\lambda_*(\theta_*)' g_i(\theta_*)}\right]}.\label{eq:2:Lem_A_11:endogeneity}
  \end{multline}
\noindent By Lemma \ref{lem:2} which guarantees that $\wh\lambda(\wtl\theta) \in B_{1/\sqrt{n}}(\lambda_*(\theta_*))$ for $n$ large, by compactness of $B_{1/\sqrt{n}}(\lambda_*(\theta_*))$ and of $B_{1/\sqrt{n}}(\theta_*)$, and by \cite[Lemma 2.4]{NeweyMcFadden1994} which is valid under Assumption \ref{Ass_3} \textit{(f)}, we have that
$$\EE_n\left[e^{\wh\lambda(\wtl\theta)'g_i(\wtl\theta)} g_i(\wtl\theta)\wh\lambda(\wtl\theta)'\wtl{w}_i\wtl{w}_{1,i}'h  - \EE\left[e^{\wh\lambda(\wtl\theta)'g_i(\wtl\theta)} g_i(\wtl\theta)\wh\lambda(\wtl\theta)'\wtl{w}_i\wtl{w}_{1,i}'h\right]\right] \xrightarrow{p}0$$
uniformly in $(\wh\lambda(\wtl\theta),h) \in B_{1/\sqrt{n}}(\lambda_*(\theta_*))\times K$. By Lemma \ref{lem:2}, by $\wtl\theta\rightarrow \theta_*$ uniformly in $h\in K$, by the CMT and the DCT (which is valid under Assumption \ref{Ass_3} \textit{(f)}) we have that $\EE\left[e^{\wh\lambda(\wtl\theta)'g_i(\wtl\theta)}g_i(\wtl\theta)\wh\lambda(\wtl\theta)'\wtl{w}_i\wtl{w}_{1,i}'h\right]\xrightarrow{p} \EE\left[e^{\lambda_*(\theta_*)'g_i(\theta_*)}g_i(\theta_*)\lambda_*(\theta_*)'\wtl{w}_i\wtl{w}_{1,i}'h\right]$ uniformly in $h\in K$. Similarly, $\EE_n\left[e^{\wh\lambda(\wtl\theta)' g_i(\wtl\theta)}\right]\xrightarrow{p} \EE_n\left[e^{\lambda_*(\theta_*)' g_i(\theta_*)}\right]$ under Assumption \ref{Ass_3} \textit{(b)} uniformly in $h \in K$. Hence, \eqref{eq:2:Lem_A_11:endogeneity} converges to zero in probability uniformly in $h\in K$.
\begin{flushright}
  $\square$
\end{flushright}

%%------------------------------------------------------------------------------------------------------------------------------------------------------------------------------
\begin{lem}\label{lem:technical:endogeneity:second:derivative}
  Let Assumptions \ref{Ass_absolute_continuity}-\ref{ass:feasibility}, \ref{Ass_1_NS} (a)-(b), \ref{Ass_3} (b)-(g) hold and let $\wtl\theta := \theta_* + \frac{\tau}{\sqrt{n}} h$ for some $\tau\in[0,1]$ and any $h$ in a compact set $K\subset \mathbb{R}^p$. Then, as $n\rightarrow \infty$,
  \begin{displaymath}
    h'\frac{d^2[\wh\lambda(\wtl\theta)'\wh g(\wtl\theta)]}{d\theta d\theta'}h = \mathcal{O}_p(1)
  \end{displaymath}
  uniformly in $h\in K$.
\end{lem}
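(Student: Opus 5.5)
Since $g_i(\theta)=\varepsilon_i(\theta)\wtl w_i$ is linear in $\theta$, $\wh g(\theta)$ has constant first derivative $-\EE_n[\wtl w_i\wtl w_{1,i}']$ and zero second derivative. The plan is to expand the second derivative by the product rule,
\begin{equation*}
h'\frac{d^2[\wh\lambda(\wtl\theta)'\wh g(\wtl\theta)]}{d\theta d\theta'}h = \sum_{l=1}^d h'\frac{d^2\wh\lambda_l(\wtl\theta)}{d\theta d\theta'}h\,\wh g_l(\wtl\theta) - 2h'\frac{d\wh\lambda(\wtl\theta)'}{d\theta}\EE_n[\wtl w_i\wtl w_{1,i}']h,
\end{equation*}
and to bound the two pieces uniformly in $h\in K$. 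Because $K$ is compact and $\wtl\theta\in B_{*,n}$, Lemma \ref{lem:continuity:sample} guarantees that $\wh\lambda$ is $\mathcal{C}^2$ at $\wtl\theta$ with probability approaching one, so the expansion is legitimate.

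\textbf{Second (first-derivative) term.} I will invoke the limit already used in the proof of Theorem \ref{thm:stochasticLAN:endogeneity}. It combines Lemmas \ref{lem:technical:endogeneity:1}, \ref{lem:technical:endogeneity:3} and \ref{lem:Omega} and gives $h'\,d\wh\lambda(\wtl\theta)'/d\theta \xrightarrow{p} h'\EE^{Q^*(\theta_*)}[\wtl w_{1,i}\wtl w_i'(I+\lambda_*(\theta_*)g_i(\theta_*)')]\Omega_*^{\diamond}(\theta_*)^{-1}$ uniformly in $h\in K$. The factor $\EE_n[\wtl w_i\wtl w_{1,i}']h$ converges by the uniform LLN under Assumptions \ref{Ass_3_extended}(c) and \ref{Ass_2_NS}. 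Hence this term is $\mathcal{O}_p(1)$ uniformly in $h\in K$.

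\textbf{First term.} Uniformly in $h$, $\wh g(\wtl\theta)=\wh g(\theta_*)-\EE_n[\wtl w_i\wtl w_{1,i}']\tau h/\sqrt n = \EE[g_i(\theta_*)]+o_p(1)$, so it is $\mathcal{O}_p(1)$. It therefore suffices to show that $h'\,d^2\wh\lambda_l(\wtl\theta)/d\theta d\theta'\,h=\mathcal{O}_p(1)$ for each $l$. To get these second derivatives, I will differentiate the identity of Lemma \ref{lem:continuity:sample}(iii), $\check\Omega^{\diamond}(\wh\lambda,\theta)\,\partial\wh\lambda/\partial\theta' = \EE_n[\tau_i^{\diamond}(\wh\lambda,\theta)\wtl w_{1,i}\wtl w_i'(I+\wh\lambda g_i(\theta)')]'$, once more in direction $h$. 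This expresses $\check\Omega^{\diamond}(\wh\lambda,\wtl\theta)$ times the directional second derivative as a finite sum of empirical averages. Each average is a product of $e^{\wh\lambda'g_i(\wtl\theta)}$, powers of $\varepsilon_i(\wtl\theta)$, entries of $\wtl w_i$, $h'\wtl w_{1,i}$, and the already-bounded first derivative $h'\,d\wh\lambda/d\theta$. A typical example is $\EE_n[\tau_i^{\diamond}\,\varepsilon_i^2\wtl w_i\wtl w_i'(h'\,d\wh\lambda/d\theta)\,(\cdot)]$.

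Each such average will be handled exactly as in Lemmas \ref{lem:technical:endogeneity:1}--\ref{lem:technical:endogeneity:3}:
\begin{itemize}
\item Lemma \ref{lem:2} localizes $\wh\lambda(\wtl\theta)$ in $B_\delta(\lambda_*)$ with high probability.
\item Dominating functions come from Assumption \ref{Ass_3}(c),(e),(f),(g): part (g) with $(j,\ell)=(2,1),(1,2)$ covers the terms with $h'\wtl w_{1,i}$ squared or with $\varepsilon^2$, and part (e) with $(\ell,\ell',i)=(1,3,1)$ covers the $\varepsilon^3$ cubic terms.
\item These bounds give a uniform LLN via \cite[Lemma 2.4]{NeweyMcFadden1994} over the compact set $B_\delta(\lambda_*)\times K$, together with the DCT for the population limits.
\end{itemize}
Finally, the left-hand matrix $\check\Omega^{\diamond}(\wh\lambda,\wtl\theta)$ converges to $\Omega_*^{\diamond}(\theta_*)$ by the argument of Lemma \ref{lem:Omega}, extended to $\wtl\theta\to\theta_*$. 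By Assumption \ref{Ass_1_NS}(c) its inverse is bounded with probability approaching one, which yields the $\mathcal{O}_p(1)$ bound.

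\textbf{Main obstacle.} The hard step is the bookkeeping of the second derivative of $\wh\lambda$. The derivative of $\check\Omega^{\diamond}(\wh\lambda,\theta)^{-1}$ produces cubic terms in $g_i$ paired with $d\wh\lambda/d\theta$, and one must check that every product has a matching integrable envelope among Assumption \ref{Ass_3}(c)--(g) uniformly in $\theta\in B_\delta(\theta_*)$ rather than only at $\theta_*$. I would first write the directional derivative in the compact form $-\check\Omega^{\diamond\,-1}\,[\partial_h\check\Omega^{\diamond}]\,\check\Omega^{\diamond\,-1}(\cdots) + \check\Omega^{\diamond\,-1}\partial_h(\cdots)$ and then match its terms to the listed function classes.
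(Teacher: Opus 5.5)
Your proposal is correct and follows essentially the same route as the paper: the split $h'\frac{d^2\wh\lambda(\wtl\theta)'}{d\theta d\theta'}\wh g(\wtl\theta)h-2h'\frac{d\wh\lambda(\wtl\theta)'}{d\theta}\EE_n[\wtl w_i\wtl w_{1,i}']h$ coming from linearity of $g$, and convergence of the first-derivative factor via Lemmas \ref{lem:technical:endogeneity:1}, \ref{lem:technical:endogeneity:3} and \ref{lem:Omega}. The second derivative is handled the same way too: you differentiate the implicit-function formula for $\partial\wh\lambda/\partial\theta'$ once more (numerator term plus the derivative of $\check\Omega^{\diamond}(\wh\lambda,\wtl\theta)^{-1}$), and control each empirical average through Lemma \ref{lem:2}, the envelopes in Assumption \ref{Ass_3}, the uniform LLN and the DCT. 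The only difference is that the paper goes on to identify the probability limit explicitly, while you stop at the $\mathcal{O}_p(1)$ bound, which is all the statement requires.
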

\proof
Notice that $\wtl\theta\in B_{*,n}$. First, since $d^2 \wh g(\theta)/d\theta d\theta' = 0$, then $h'\frac{d^2[\wh\lambda(\wtl\theta)'\wh g(\wtl\theta)]}{d\theta d\theta'}h = h'\frac{d^2\wh\lambda(\wtl\theta)'}{d\theta d\theta'} \wh g(\wtl\theta)h - 2h'\frac{d\wh\lambda(\wtl\theta)'}{d\theta} \EE_n\left[\wtl{w}_i \wtl{w}_{1,i}'\right]h$. Under Assumption \ref{Ass_3_extended} \textit{(c)}, and Assumption \ref{Ass_3} \textit{(c)} (with $j=\ell=\ell' = 1$, $\ell'' = 0$) and \textit{(g)}, we apply the Uniform Law of Large Numbers and Lemma \ref{lem:Omega} to show that: $\EE_n\left[\wtl{w}_i \wtl{w}_{1,i}'\right]h \xrightarrow{p} \EE\left[\wtl{w}_i \wtl{w}_{1,i}'\right]h$ and $h'\frac{d\wh\lambda(\wtl\theta)'}{d\theta} \xrightarrow{p}h'\frac{d\lambda_*(\theta_*)'}{d\theta}$ uniformly in $h\in K_n$.\\
\indent We then consider the first term on the right hand side:
  \begin{multline}
    h'\frac{d^2\wh\lambda(\wtl\theta)}{d\theta d\theta'} \wh g(\wtl\theta)h = \frac{d\EE_n \left[\tau_i^{\diamond}(\wh\lambda,\wtl\theta) h'\wtl w_{1,i}\wtl w_i' (I + \wh\lambda(\wtl\theta) g_i(\wtl\theta)')\right]\check\Omega^{\diamond}(\wh\lambda,\wtl\theta)^{-1}}{d\theta'} \wh g(\wtl\theta)h\\
    = h' \EE_n\left(\frac{d}{d \theta'} \left( e^{\wh\lambda(\wtl\theta)' g_i(\wtl\theta)} \right)h \wtl w_{1,i} \wtl w_i' (I + \wh\lambda(\wtl\theta) g_i(\wtl\theta)') + \tau_i^{\diamond}(\wh\lambda,\wtl\theta)\wtl w_{1,i} \wtl w_i' \frac{d}{d \theta'} \left(\wh\lambda(\wtl\theta) g(\wtl\theta)' \right)h \right) \check\Omega^{\diamond}(\wh\lambda,\wtl\theta)^{-1} \wh g(\wtl\theta)\\
    - \EE_n\left[ \tau_i^{\diamond}(\wh\lambda,\wtl\theta) h'\wtl{w}_{1,i} \wtl w_i' (I + \wh \lambda(\wtl\theta) g_i(\wtl\theta)') \right]\check\Omega^{\diamond}(\wh\lambda,\wtl\theta)^{-1} \\
    \quad \times \EE_n\left( \frac{d}{d \theta} \left( e^{\wh\lambda(\wtl\theta)' g_i(\wtl\theta)} \right) \varepsilon_i(\wtl\theta)^2 \wtl w_{i} \wtl w_i' + 2\tau_i^{\diamond}(\wh\lambda,\wtl\theta)\varepsilon_i(\wtl\theta) \wtl{w}_{1,i}'h\wtl w_i \wtl w_i' \right) \check\Omega^{\diamond}(\wh\lambda,\wtl\theta)^{-1}  \wh g(\wtl\theta).\label{eq:2nd_derivative:proof:1}
  \end{multline}
We now analyse the asymptotic behaviour of each of these terms. First, by Lemma \ref{lem:Omega}: $\left\|\check\Omega^{\diamond}(\wh\lambda,\theta_*) - \Omega_*^{\diamond}(\theta_*)\right\| \xrightarrow{p} 0$ as $n\rightarrow \infty$ and by the continuous mapping theorem: $\check\Omega^{\diamond}(\wh\lambda,\theta_*)^{-1} \xrightarrow{p} \Omega_*^{\diamond}(\theta_*)^{-1}$. Moreover,
\begin{displaymath}
  \frac{d}{d \theta'} \left( e^{\wh\lambda(\wtl\theta)' g_i(\wtl\theta)} \right)h = \tau_i^{\diamond}(\wh\lambda,\wtl{\theta})\left[g_i(\wtl\theta)'\frac{d\wh\lambda(\wtl\theta)}{d\theta'}h + \wh\lambda(\wtl\theta)'\wtl w_i \wtl w_{1,i}'h\right].
\end{displaymath}
We replace this expression in \eqref{eq:2nd_derivative:proof:1}. Hence the first factor becomes:
\begin{multline*}
  h' \EE_n\left(\frac{d}{d \theta'} \left( e^{\wh\lambda(\wtl\theta)' g_i(\wtl\theta)} \right)h \wtl w_{1,i} \wtl w_i' (I + \wh\lambda(\wtl\theta) g_i(\wtl\theta)') + \tau_i^{\diamond}(\wh\lambda,\wtl\theta)\wtl w_{1,i} \wtl w_i' \frac{d}{d \theta'} \left(\lambda_*(\wtl\theta) g(\wtl\theta)' \right)h \right)\\
  = h'\frac{d\wh\lambda(\wtl\theta)'}{d\theta}\EE_n\left(\tau_i^{\diamond}(\wh\lambda,\wtl{\theta})\left[g_i(\wtl\theta) + \wh\lambda(\wtl\theta)'\wtl w_i \wtl w_{1,i}'h\right] h'\wtl w_{1,i} \wtl w_i' (I + \wh\lambda(\wtl\theta) g_i(\wtl\theta)')\right)\\
   + \EE_n\left(\tau_i^{\diamond}(\wh\lambda,\wtl\theta)h'\wtl w_{1,i} \wtl w_i' \left[\frac{d}{d \theta'} \wh\lambda(\wtl\theta)h g(\wtl\theta)' + \wh\lambda(\wtl\theta)\wtl w_{i} \wtl w_{1,i}'h\right]\right)
\end{multline*}
By Assumption \ref{Ass_3} \textit{(c)} with $j=2,\ell = \ell' = \ell'' =1$ and with $j=1,\ell = \ell'' =1, \ell'=2$, Assumption \ref{Ass_3} \textit{(g)} with $(j,\ell)=(1,2)$ and $(j,\ell)=(2,1)$, \textit{(c)} which allows us to use the uniform Law of Large Numbers and the DCT, the previous expression converges in probability towards:
\begin{multline}
  h'\frac{d\wh\lambda(\wtl\theta)'}{d\theta}\EE\left(\tau_i^{\diamond}(\lambda_*,\theta_*)\left[g_i(\theta_*) + \lambda_*(\theta_*)'\wtl w_i \wtl w_{1,i}'h\right] h'\wtl w_{1,i} \wtl w_i' (I + \lambda_*(\theta_*) g_i(\theta_*)')\right)\\
   + \EE\left(\tau_i^{\diamond}(\lambda_*,\theta_*)h'\wtl w_{1,i} \wtl w_i' \left[\frac{d}{d \theta'} \wh\lambda(\theta_*)h g(\theta_*)' + \lambda_*(\theta_*)\wtl w_{i} \wtl w_{1,i}'h\right]\right)\\
   =  h'\frac{d\wh\lambda(\wtl\theta)'}{d\theta}\EE\left(\tau_i^{\diamond}(\lambda_*,\theta_*)\left[g_i(\theta_*) + \lambda_*(\theta_*)'\wtl w_i \wtl w_{1,i}'h\right] h'\wtl w_{1,i} \wtl w_i' (2I + \lambda_*(\theta_*) g_i(\theta_*)')\right)
\end{multline}
uniformly in $h\in K_n$. Finally, $h'\frac{d\wh\lambda(\wtl\theta)'}{d\theta}$ converges in probability towards $\EE \left[e^{\lambda_{*}(\theta)'g_i(\theta)}\wtl w_{1,i}\wtl w_i'(I + \lambda_*(\theta) g_i(\theta)')\right]$ uniformly in $h\in K_n$ under Assumption \ref{Ass_3} \textit{(c)} with $j=2,\ell=\ell'=1,\ell''=0$ and Assumption \ref{Ass_3} \textit{(f)}.\\
\indent We now analyse the second term in \eqref{eq:2nd_derivative:proof:1}:

\begin{multline}
  \EE_n\left( \frac{d}{d \theta} \left( e^{\wh\lambda(\wtl\theta)' g_i(\wtl\theta)} \right) \varepsilon_i(\wtl\theta)^2 \wtl w_{i} \wtl w_i' + 2\tau_i^{\diamond}(\wh\lambda,\wtl\theta)\varepsilon_i(\wtl\theta) \wtl{w}_{1,i}'h\wtl w_i \wtl w_i' \right)\\
  = \EE_n\left( \tau_i^{\diamond}(\wh\lambda,\wtl{\theta})\left[g_i(\wtl\theta)'\frac{d\wh\lambda(\wtl\theta)}{d\theta'}h + \wh\lambda(\wtl\theta)'\wtl w_i \wtl w_{1,i}'h\right]\varepsilon_i(\wtl\theta)^2 \wtl w_{i} \wtl w_i' + 2\tau_i^{\diamond}(\wh\lambda,\wtl\theta)\varepsilon_i(\wtl\theta) \wtl{w}_{1,i}'h\wtl w_i \wtl w_i' \right)
\end{multline}
which, under Assumption \ref{Ass_3} \textit{(e)} with $i=\ell =1$, $\ell' = 3$, and Assumption \ref{Ass_3} \textit{(c)} with $(j,\ell,\ell',\ell'') = (2,1,1,1)$, converges in probability towards
\begin{multline}
  \EE\left(\tau_i^{\diamond}(\lambda_*,\theta_*)\left[g_i(\theta_*)'\frac{d\lambda_*(\theta_*)}{d\theta'}h + \lambda_*(\theta_*)'\wtl w_i \wtl w_{1,i}'h\right]\varepsilon_i(\theta_*)^2 \wtl w_{i} \wtl w_i' + 2\tau_i^{\diamond}(\lambda_*,\theta_*)\varepsilon_i(\theta_*) \wtl{w}_{1,i}'h\wtl w_i \wtl w_i' \right)
\end{multline}
uniformly in $h\in K_n$.\\
By putting all this together we find that
\begin{multline*}
  h'\frac{d^2[\wh\lambda(\wtl\theta)'\wh g(\wtl\theta)]}{d\theta d\theta'}h \xrightarrow{p} \\
  h'\frac{d\lambda_*(\theta_*)'}{d\theta}\Omega_*^{\diamond}(\theta_*)\EE\left(\tau_i^{\diamond}(\lambda_*,\theta_*)\left[g_i(\theta_*) + \lambda_*(\theta_*)'\wtl w_i \wtl w_{1,i}'h\right] h'\wtl w_{1,i} \wtl w_i' (2I + \lambda_*(\theta_*) g_i(\theta_*)')\right)\Omega_*^{\diamond}(\theta_*)\EE[g_i(\theta_*)]\\
  -h'\frac{d\lambda_*(\theta_*)'}{d\theta} \Omega_*^{\diamond}(\theta_*)\EE\Big(\tau_i^{\diamond}(\lambda_*,\theta_*)\left[g_i(\theta_*)'\frac{d\lambda_*(\theta_*)}{d\theta'}h + \lambda_*(\theta_*)'\wtl w_i \wtl w_{1,i}'h\right]\varepsilon_i(\theta_*)^2 \wtl w_{i} \wtl w_i'\\
  + 2\tau_i^{\diamond}(\lambda_*,\theta_*)\varepsilon_i(\theta_*) \wtl{w}_{1,i}'h\wtl w_i \wtl w_i' \Big) \Omega_*^{\diamond}(\theta_*)\EE[g_i(\theta_*)] - h'\frac{d\lambda_*(\theta_*)'}{d\theta}\EE\left[\wtl{w}_i \wtl{w}_{1,i}'\right]h\\
  = h'\frac{d^2\lambda_*(\theta_*)}{d\theta d\theta'} \EE [g_i(\theta_*)]h  - h'\frac{d\lambda_*(\theta_*)'}{d\theta}\EE\left[\wtl{w}_i \wtl{w}_{1,i}'\right]h
\end{multline*}
\noindent uniformly in $h\in K_n$. We conclude that $h'\frac{d^2[\wh\lambda(\wtl\theta)'\wh g(\wtl\theta)]}{d\theta d\theta'}h $ is bounded in probability uniformly in $h\in K_n$.
\begin{flushright}
  $\square$
\end{flushright}

%%------------------------------------------------------------------------------------------------------------------------------------------------------------------------------
%%% -------------------------------------------------------------------------------- EXOGENEITY --------------------------------------------------------------------------------
%%------------------------------------------------------------------------------------------------------------------------------------------------------------------------------

%%------------------------------------------------------------------------------------------------------------------------------------------------------------------------------
%%------------------------------------------------------------------------------------------------------------------------------------------------------------------------------
\subsubsection{Technical results for the base model under exogeneity}
%=========================================================================================================
\begin{lem}\label{lem:2:base:exogeneity}
    Suppose Assumptions \ref{ass:feasibility}, \ref{Ass_1_NS} (a), and \ref{Ass_3} (d)-(e) hold. Then, for every given $\theta\in B_{\circ,n}$,
    $$\sqrt{n}\|\wh\lambda(\theta)\|_2 = \mathcal{O}_p(1)$$
    or, equivalently, for any $\eta >0$ there exists a finite $\delta>0$ and a finite $N(\delta,\eta)$ such that: $\forall n > N(\delta,\eta)$,
    $$P\left(\sqrt{n}\|\wh\lambda(\theta)\|_2 < \delta\right) > \eta.$$
\end{lem}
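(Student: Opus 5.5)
The plan is to mirror the proof of Lemma \ref{lem:2}, now using the simplification that, under exogeneity, $\theta_\circ$ is the true value and $\lambda_*(\theta_\circ)=0$. The claim concerns $\theta\in B_{\circ,n}$, not only $\theta=\theta_\circ$, so I first fix $\theta\in B_{\circ,n}$ and define $\lambda_*(\theta)$ through Lemma \ref{lem:continuity} with $\theta_*$ replaced by $\theta_\circ$. That lemma gives a continuous map with $\lambda_*(\theta_\circ)=0$ and $\EE[e^{\lambda_*(\theta)'g_i(\theta)}g_i(\theta)]=0$. Because $\theta$ lies within $O(M_n n^{-1/2})$ of $\theta_\circ$ and the moment function is linear in $\theta$, I expect $\|\lambda_*(\theta)\|_2$ to be of the same order as $\|\EE[g_i(\theta)]\|_2=\|\EE[\wtl w_i\wtl w_{1,i}'](\theta-\theta_\circ)\|_2$. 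The target is then the triangle inequality
\[
\|\wh\lambda(\theta)\|_2\le\|\wh\lambda(\theta)-\lambda_*(\theta)\|_2+\|\lambda_*(\theta)\|_2 .
\]

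For the first term I apply \cite[Corollary 3.2.6]{VanDerVaartWellner1996} with $\mathbb{M}_n(\lambda)=-\EE_n[e^{\lambda'g_i(\theta)}]$ and $\mathbb{M}(\lambda)=-\EE[e^{\lambda'g_i(\theta)}]$, exactly as in Lemma \ref{lem:2}. The steps are as follows.
\begin{enumerate}
\item Build the class $\mathcal{M}_\delta(\theta)=\{e^{\lambda_*(\theta)'g_i(\theta)}-e^{\lambda'g_i(\theta)}:\lambda\in B_\delta(\lambda_*(\theta))\}$. Its members are Lipschitz in $\lambda$, with envelope $2\delta\|g_i(\theta)\|_2\sup_\lambda e^{\lambda'g_i(\theta)}$, and Assumption \ref{Ass_3}\textit{(d)} makes this envelope square integrable with $L_2(P)$ norm of order $\delta$.
\item Bound the bracketing numbers polynomially by \cite[Example 19.7]{VanDerVaart2000}. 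This gives a finite bracketing integral, a Glivenko--Cantelli class and hence consistency of $\wh\lambda(\theta)$ via \cite[Corollary 3.2.3]{VanDerVaartWellner1996}, and a modulus $\phi_n(\delta)\lesssim\delta$.
\item Obtain the quadratic curvature $\mathbb{M}(\lambda)-\mathbb{M}(\lambda_*(\theta))\le -c\|\lambda-\lambda_*(\theta)\|^2$ near $\lambda_*(\theta)$ from Assumption \ref{Ass_1_NS}\textit{(c)}, evaluated at $\lambda=0$ and extended to a neighbourhood. Continuity in $\theta$, via Assumption \ref{Ass_3}\textit{(e)} and the DCT, makes the eigenvalue bound uniform over $B_{\circ,n}$.
\item Conclude $r_n=\sqrt n$, i.e. $\sqrt n\|\wh\lambda(\theta)-\lambda_*(\theta)\|_2=\mathcal{O}_p(1)$.
\end{enumerate}
An alternative for this first term, once consistency is known, is a direct expansion of the first-order condition $\EE_n[e^{\wh\lambda'g_i(\theta)}g_i(\theta)]=0$ around $0$. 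This gives $\wh\lambda(\theta)=-\check\Omega^{\diamond}(\bar\lambda,\theta)^{-1}\wh g(\theta)$, and $\sqrt n\,\wh g(\theta)=\mathbb{G}_n[g_i(\theta)]+\sqrt n\,\EE[g_i(\theta)]$. Under exogeneity $\EE[g_i(\theta_\circ)]=0$, so the problem again reduces to controlling $\sqrt n\,\EE[g_i(\theta)]$.

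The main obstacle is the second term, $\sqrt n\|\lambda_*(\theta)\|_2$. For $\theta=\theta_\circ+h/\sqrt n$ with bounded $h$, the mean value theorem and Lemma \ref{lem:continuity}\textit{(iii)} give $\lambda_*(\theta)=\frac{d\lambda_*(\bar\theta)}{d\theta'}h/\sqrt n$, so $\sqrt n\|\lambda_*(\theta)\|_2$ is bounded by $\|h\|$ times a bounded derivative. For a general fixed $\theta$ in $B_{\circ,n}$, whose radius is $M_n n^{-1/2}$ with $M_n\to\infty$ slowly, I read ``every given $\theta$'' as a point of the form $\theta_\circ+h/\sqrt n$ with $h$ in a compact set. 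This is the use made of the result in the LAN proofs, and it gives the bound uniformly in $h\in K$. If the statement is intended for all $\theta\in B_{\circ,n}$, the honest rate would be $\mathcal{O}_p(1+\sqrt n\|\theta-\theta_\circ\|)$, and I would record that qualification.
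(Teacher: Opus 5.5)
Your proposal is correct and follows the paper's route. The paper's proof only says that it ``proceeds as the proof of Lemma~\ref{lem:2}'', i.e.\ the \cite[Corollary 3.2.6]{VanDerVaartWellner1996} rate argument for $\mathbb{M}_n(\lambda)=-\EE_n[e^{\lambda'g_i(\theta)}]$ at fixed $\theta$, and that argument is exactly your first term.

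What you add is explicit control of the centring, $\sqrt n\|\lambda_*(\theta)\|_2\le \|h\|\sup\|d\lambda_*/d\theta'\|$. The paper leaves this implicit: it effectively takes $\lambda_*(\theta)=0$ on $B_{\circ,n}$, in line with the clause $\EE[g(w,\theta)]=0$ in Assumption~\ref{ass:feasibility}. Your reading of $B_{\circ,n}$ as $\{\theta_\circ+h/\sqrt n: h\text{ bounded}\}$ is right, and so is your caveat that the rate degrades to $\mathcal{O}_p(1+\sqrt n\|\theta-\theta_\circ\|_2)$ at distance $M_n n^{-1/2}$.
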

\proof The proof proceeds as the proof of Lemma \ref{lem:2} and then it is omitted.

\begin{flushright}
  $\square$
\end{flushright}

%%------------------------------------------------------------------------------------------------------------------------------------------------------------------------------
\begin{lem}\label{lem:technical:exogeneity:1}
  Let Assumptions \ref{Ass_0_NS}-\ref{ass:feasibility}, \ref{Ass_1_NS} (a)-(b), \ref{Ass_3} (b)-(c) hold and let $\wtl\theta := \theta_\circ + \frac{\tau}{\sqrt{n}} h$ for some $\tau\in[0,1]$ and any $h$ in a compact set $K\subset \mathbb{R}^p$. Then, as $n\rightarrow \infty$,
  \begin{displaymath}
    \EE_n\left[\tau_i(\wh\lambda,\wtl\theta) h'\wtl{w}_{1,i}\wtl{w}_i\right] \xrightarrow{p} \EE\left[h'\wtl{w}_{1,i}\wtl{w}_i\right]
  \end{displaymath}
  uniformly in $h\in K$.
\end{lem}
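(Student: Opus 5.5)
The proof follows the template of Lemma \ref{lem:technical:endogeneity:1}, now with $\lambda_*(\theta_\circ)=0$, so that $\tau_i^{\dagger}(\lambda_*,\theta_\circ)\equiv 1$ and $\EE[e^{\lambda_*(\theta_\circ)'g_i(\theta_\circ)}]=1$. Write $\tau_i(\wh\lambda,\wtl\theta)=e^{\wh\lambda(\wtl\theta)'g_i(\wtl\theta)}/\EE_n[e^{\wh\lambda(\wtl\theta)'g_j(\wtl\theta)}]$ and note that $\wtl\theta\in B_{\circ,n}$ for every $h\in K$. By Assumption \ref{ass:feasibility}, $\wh\lambda(\wtl\theta)$ is therefore well defined with probability approaching one. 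The difference $\EE_n[\tau_i(\wh\lambda,\wtl\theta)h'\wtl w_{1,i}\wtl w_i]-\EE[h'\wtl w_{1,i}\wtl w_i]$ splits into three pieces, exactly as in \eqref{eq:2:Lem_A_9:endogeneity}:
\begin{itemize}
\item[(i)] an empirical-process piece, $\EE_n\big[e^{\wh\lambda(\wtl\theta)'g_i(\wtl\theta)}h'\wtl w_{1,i}\wtl w_i-\EE[e^{\lambda'g_i(\theta)}h'\wtl w_{1,i}\wtl w_i]_{\lambda=\wh\lambda(\wtl\theta),\theta=\wtl\theta}\big]$, divided by the normaliser;
\item[(ii)] a bias piece, $\EE[(e^{\lambda'g_i(\theta)}-1)h'\wtl w_{1,i}\wtl w_i]$ evaluated at $(\wh\lambda(\wtl\theta),\wtl\theta)$, also divided by the normaliser;
\item[(iii)] a normaliser piece, $\EE[h'\wtl w_{1,i}\wtl w_i]\big(1-\EE_n[e^{\wh\lambda(\wtl\theta)'g_i(\wtl\theta)}]\big)/\EE_n[e^{\wh\lambda(\wtl\theta)'g_i(\wtl\theta)}]$.
\end{itemize}

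The first step is to localise $\wh\lambda(\wtl\theta)$. Lemma \ref{lem:2:base:exogeneity} gives $\sqrt n\|\wh\lambda(\theta)\|_2=\mathcal O_p(1)$ pointwise on $B_{\circ,n}$. We need this uniformly over $\wtl\theta$ as $h$ ranges over $K$. We would obtain that from the $\mathcal C^2$ implicit-function representation of Lemma \ref{lem:continuity:sample}, adapted to $\theta_\circ$: a mean-value step gives $\wh\lambda(\wtl\theta)=\wh\lambda(\theta_\circ)+\frac{d\wh\lambda(\bar\theta)}{d\theta'}\tau h/\sqrt n$. The derivative is bounded in probability because $\check\Omega^{\diamond}$ is invertible by Assumption \ref{Ass_1_NS} \textit{(c)} and the numerator is controlled by Assumption \ref{Ass_3} \textit{(c)}. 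Hence, for any $\delta>0$, $\wh\lambda(\wtl\theta)\in B_\delta(0)$ uniformly in $h\in K$ with probability approaching one.

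Given this localisation, piece (i) is bounded by
$$\sup_{\lambda\in B_\delta(0),\,\theta\in B_\delta(\theta_\circ),\,h\in K}\big\|\EE_n[e^{\lambda'g_i(\theta)}h'\wtl w_{1,i}\wtl w_i]-\EE[e^{\lambda'g_i(\theta)}h'\wtl w_{1,i}\wtl w_i]\big\|.$$
This tends to zero in probability by \cite[Lemma 2.4]{NeweyMcFadden1994}: the index set is compact, the map is continuous in $(\lambda,\theta,h)$, and it is dominated under Assumption \ref{Ass_3} \textit{(c)} with $j=\ell=\ell'=1$.

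For piece (ii), $e^{\lambda'g_i(\theta)}\to 1$ pointwise as $(\lambda,\theta)\to(0,\theta_\circ)$. Dominated convergence with the envelope $\gamma_2$ then gives $\sup_{\lambda\in B_\delta(0),\theta\in B_\delta(\theta_\circ),h\in K}\|\EE[(e^{\lambda'g_i(\theta)}-1)h'\wtl w_{1,i}\wtl w_i]\|\to0$ as $\delta\downarrow0$. Letting $\delta\to0$ slowly closes this term.

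For the normaliser, the same uniform LLN and dominated convergence argument, under Assumption \ref{Ass_3} \textit{(b)}, gives $\EE_n[e^{\wh\lambda(\wtl\theta)'g_i(\wtl\theta)}]\xrightarrow{p}1$ uniformly in $h$. It is therefore bounded away from zero with probability approaching one, which handles both the denominators in (i)–(ii) and piece (iii), whose prefactor $\EE[h'\wtl w_{1,i}\wtl w_i]$ is bounded on $K$ by Assumption \ref{Ass_2_NS}.

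One further point: the statement writes the limit as $\EE[h'\wtl w_{1,i}\wtl w_i]$ while the proof of Theorem \ref{thm:stochasticLAN:exogeneity} uses $\EE_n[\cdot]$. These agree up to $o_p(1)$ uniformly on $K$ by Assumption \ref{Ass_3_extended} \textit{(c)}.

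The main obstacle is the uniform-in-$h$ localisation of $\wh\lambda(\wtl\theta)$. Lemma \ref{lem:2:base:exogeneity} is only pointwise in $\theta$, so the implicit-function derivative bound is the step I would carry out carefully. The remaining steps are routine ULLN and dominated convergence arguments.
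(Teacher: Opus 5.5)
Your proposal follows the paper's proof essentially line by line: the same three-term decomposition \eqref{eq:2:Lem_A_9}, the uniform law of large numbers (Newey and McFadden, Lemma 2.4) under Assumption \ref{Ass_3} \textit{(c)} for the empirical-process piece, dominated convergence for the bias piece, and Assumption \ref{Ass_3} \textit{(b)} for the normaliser. The one difference is that the paper simply cites the pointwise Lemma \ref{lem:2:base:exogeneity} to place $\wh\lambda(\wtl\theta)$ near zero uniformly in $h\in K$, whereas you rightly note that a uniform localisation is needed and supply one via the implicit-function derivative; when you write that step out, bounding $d\wh\lambda/d\theta'$ at intermediate points already presupposes that $\wh\lambda$ is localised there, so either run a continuation argument along the segment or use a convexity argument on $\lambda\mapsto\EE_n[e^{\lambda'g_i(\theta)}]$ uniformly over $\theta$ near $\theta_\circ$.
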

\proof
Notice that $\wtl\theta\in B_{\circ,n}$. Since $\lambda_*(\theta_\circ) = 0$ so that $\tau_i^{\dagger}(\lambda_*,\theta_\circ) = 1$, the following decomposition holds:
  \begin{multline}
    \EE_n\left(\tau_i(\wh\lambda,\wtl\theta)h'\wtl{w}_{1,i}\wtl{w}_i\right) - \EE\left[h'\wtl{w}_{1,i}\wtl{w}_i\right]\\
    %= \EE_n\left[e^{\wh\lambda(\theta_\circ)'g_i(\theta_\circ)} h'\wtl{w}_{1,i}\wtl{w}_i'  - \EE\left[e^{\lambda_*(\theta_\circ)'g_i(\theta_\circ)}h'\wtl{w}_{1,i}\wtl{w}_i'\right]\right] \frac{1}{\EE_n\left[e^{\wh\lambda(\theta_\circ)' g_i(\theta_\circ)}\right]}\\ \hfill
    %+ \EE\left[e^{\lambda_*(\theta_\circ)'g_i(\theta_\circ)} h'\wtl{w}_{1,i}\wtl{w}_i'\right]\left(\frac{1}{\EE_n\left[e^{\wh\lambda(\theta_\circ)' g_i(\theta_*)}\right]} - \frac{1}{\EE\left[e^{\lambda_*(\theta_\circ)' g_i(\theta_*)}\right]}\right)\\
    = \EE_n\left[e^{\wh\lambda(\wtl\theta)'g_i(\wtl\theta)} h'\wtl{w}_{1,i}\wtl{w}_i  - \EE\left[e^{\wh\lambda(\wtl\theta)'g_i(\wtl\theta)} h'\wtl{w}_{1,i}\wtl{w}_i\right]\right] \frac{1}{\EE_n\left[e^{\wh\lambda(\wtl\theta)' g_i(\wtl\theta)}\right]}\\
    + \EE\left[\left(e^{\wh\lambda(\wtl\theta)'g_i(\wtl\theta)} - 1\right)h'\wtl{w}_{1,i}\wtl{w}_i\right]\frac{1}{\EE_n\left[e^{\wh\lambda(\wtl\theta)' g_i(\wtl\theta)}\right]}\\ \hfill
    + \EE\left[h'\wtl{w}_{1,i}\wtl{w}_i\right]\frac{1}{\EE_n\left[e^{\wh\lambda(\wtl\theta)' g_i(\wtl\theta)}\right]}\left(1 - \EE_n\left[e^{\wh\lambda(\wtl\theta)' g_i(\wtl\theta)}\right]\right).\label{eq:2:Lem_A_9}
  \end{multline}

\noindent By Lemma \ref{lem:2:base:exogeneity} which guarantees that $\wh\lambda(\wtl\theta) \in B_{1/\sqrt{n}}(0)$ for $n$ large, by compactness of $B_{1/\sqrt{n}}(0)$ and of $B_{1/\sqrt{n}}(\theta_\circ)$, and by \cite[Lemma 2.4]{NeweyMcFadden1994} which is valid under Assumption \ref{Ass_3} \textit{(c)} with $j=\ell = \ell' = 1$, we have that
\[
\EE_n\left[e^{\wh\lambda(\wtl\theta)'g_i(\wtl\theta)} h'\wtl{w}_{1,i}\wtl{w}_i'  - \EE\left[e^{\wh\lambda(\wtl\theta)'g_i(\wtl\theta)} h'\wtl{w}_{1,i}\wtl{w}_i'\right]\right] \xrightarrow{p}0
\]
uniformly in $(\wh\lambda(\wtl\theta),h) \in B_{1/\sqrt{n}}(0)\times K$. By Lemma \ref{lem:2:base:exogeneity}, by the convergence $\wtl\theta\rightarrow \theta_\circ$ uniform on $h\in K$, by the CMT and the DCT (which is valid under Assumption \ref{Ass_3} \textit{(c)} with $j=\ell = \ell' = 1$) we have that $\EE\left[\left(e^{\wh\lambda(\wtl\theta)'g_i(\wtl\theta)} - 1\right)h'\wtl{w}_{1,i}\wtl{w}_i'\right]\xrightarrow{p} 0$ uniformly in $h\in K$. Similarly, $\EE_n\left[e^{\wh\lambda(\wtl\theta)' g_i(\wtl\theta)}\right]\xrightarrow{p} 1$ uniformly in $h\in K$ under Assumption \ref{Ass_3} \textit{(b)}. Hence, \eqref{eq:2:Lem_A_9} converges to zero in probability uniformly in $h \in K$.
\begin{flushright}
  $\square$
\end{flushright}
%%------------------------------------------------------------------------------------------------------------------------------------------------------------------------------
\begin{lem}\label{lem:technical:exogeneity:2}
  Let Assumptions \ref{Ass_0_NS}, \ref{Ass_1_NS} (a)-(b), \ref{Ass_3} (b)-(c) hold and let $\wtl\theta := \theta_\circ + \frac{\tau}{\sqrt{n}} h$ for some $\tau\in[0,1]$ and any $h$ in a compact set $K\subset \mathbb{R}^p$. Then, as $n\rightarrow \infty$,
  \begin{displaymath}
    \EE_n\left[\tau_i(\wh\lambda,\wtl\theta) \left(h'\wtl{w}_{1,i}\wtl{w}_i'\wh\lambda(\wtl\theta)\right)^2\right] \xrightarrow{p} 0
  \end{displaymath}
  uniformly in $h\in K$.
\end{lem}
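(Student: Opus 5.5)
I will mirror the proof of Lemma \ref{lem:technical:endogeneity:2}, specialised to exogeneity, where $\lambda_*(\theta_\circ)=0$ and so $\tau_i^{\dagger}(\lambda_*,\theta_\circ)=1$. The target limit $\EE[(h'\wtl w_{1,i}\wtl w_i'\lambda_*(\theta_\circ))^2]$ is then zero. I would split the quantity as
\begin{multline*}
\EE_n\left[\tau_i(\wh\lambda,\wtl\theta)\left(h'\wtl{w}_{1,i}\wtl{w}_i'\wh\lambda(\wtl\theta)\right)^2\right]
= \frac{1}{\EE_n[e^{\wh\lambda(\wtl\theta)'g_i(\wtl\theta)}]}\Big(\EE_n\big[e^{\wh\lambda(\wtl\theta)'g_i(\wtl\theta)}(h'\wtl w_{1,i}\wtl w_i'\wh\lambda(\wtl\theta))^2\big]\\
-\EE\big[e^{\lambda'g_i(\theta)}(h'\wtl w_{1,i}\wtl w_i'\lambda)^2\big]\big|_{\lambda=\wh\lambda(\wtl\theta),\theta=\wtl\theta} + \EE\big[e^{\lambda'g_i(\theta)}(h'\wtl w_{1,i}\wtl w_i'\lambda)^2\big]\big|_{\lambda=\wh\lambda(\wtl\theta),\theta=\wtl\theta}\Big).
\end{multline*}
I would then show separately that the denominator tends to one and that the empirical-minus-population difference and the population term both tend to zero.

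First I localise. Since $\wtl\theta\in B_{\circ,n}$, Lemma \ref{lem:2:base:exogeneity} together with the continuity of $\theta\mapsto\wh\lambda(\theta)$ from Lemma \ref{lem:continuity:sample} gives $\wh\lambda(\wtl\theta)\in B_{\delta}(0)$ with probability approaching one, for any fixed $\delta>0$. Here I use that $\wh\lambda(\theta_\circ)=\mathcal O_p(n^{-1/2})$ and that $\wtl\theta-\theta_\circ=\mathcal O(n^{-1/2})$ uniformly in $h\in K$. On that event, everything is controlled by suprema over the compact set $B_\delta(0)\times B_\delta(\theta_\circ)\times K$.

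Next I handle the denominator and the centered term. Assumption \ref{Ass_3}(b) together with \cite[Lemma 2.4]{NeweyMcFadden1994} gives uniform convergence of $\EE_n[e^{\lambda'g_i(\theta)}]$ to its mean over this compact set. The CMT and the DCT, again dominated via (b), then give $\EE_n[e^{\wh\lambda(\wtl\theta)'g_i(\wtl\theta)}]\xrightarrow{p}1$. The centered difference in the numerator vanishes uniformly by Assumption \ref{Ass_3}(c) with $(j,\ell,\ell',\ell'')=(1,2,2,0)$ and \cite[Lemma 2.4]{NeweyMcFadden1994}. The quadratic-in-$\lambda$ factor $(h'\wtl w_{1,i}\wtl w_i'\lambda)^2$ is bounded by $\|\lambda\|^2(h'\wtl w_{1,i})^2\|\wtl w_i\|^2$, which is covered by that envelope.

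Finally, for the population term I bound it by $\|\wh\lambda(\wtl\theta)\|_2^2\,\sup_{\lambda\in B_\delta(0),h\in K}\EE[e^{\lambda'g_i(\theta)}(h'\wtl w_{1,i})^2\|\wtl w_i\|^2]$. The supremum is finite by the same envelope, so the term is $\mathcal O_p(n^{-1})$ uniformly in $h$.

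The main obstacle is making the domination uniform in $\theta=\wtl\theta$, since Assumption \ref{Ass_3}(c) is written with $\varepsilon(\theta_*)$ in the exponent. I would handle it by writing $g_i(\wtl\theta)=g_i(\theta_\circ)-\wtl w_i\wtl w_{1,i}'\tau h/\sqrt n$. Because $\lambda=\mathcal O_p(n^{-1/2})$, the extra exponent term $\lambda'\wtl w_i\wtl w_{1,i}'h/\sqrt n$ is $\mathcal O_p(n^{-1})$ times polynomial moments. This reduces the envelope to one at $\theta_\circ$, up to a factor $e^{o_p(1)}$ on the relevant event.
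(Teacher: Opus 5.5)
Your argument is the paper's own. You use the same split of $\EE_n[\tau_i(\wh\lambda,\wtl\theta)(h'\wtl w_{1,i}\wtl w_i'\wh\lambda(\wtl\theta))^2]$ into a centred empirical term and a population term, both divided by $\EE_n[e^{\wh\lambda(\wtl\theta)'g_i(\wtl\theta)}]$. You also use the same ingredients: localisation from Lemma \ref{lem:2:base:exogeneity}, the uniform LLN under Assumption \ref{Ass_3}(c) with $(1,2,2,0)$, and Assumption \ref{Ass_3}(b) for the denominator. Bounding the population term by $\|\wh\lambda(\wtl\theta)\|_2^2$ times a supremum, instead of CMT plus DCT, is a cosmetic difference.

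The step that does not work as written is your patch for the $\varepsilon(\theta_*)$-in-the-exponent problem. You factor $e^{\lambda'g_i(\wtl\theta)}=e^{\lambda'g_i(\theta_\circ)}\exp(-\tau\lambda'\wtl w_i\wtl w_{1,i}'h/\sqrt n)$ and replace the second factor by $e^{o_p(1)}$. That is legitimate inside a sample average, where it only needs $\sup_{h\in K}\|\wh\lambda(\wtl\theta)\|_2\max_{i\le n}\|\wtl w_i\|_2\|\wtl w_{1,i}\|_2/\sqrt n\xrightarrow{p}0$. Inside $\EE[\cdot]$, however, $\lambda$ and $h$ are frozen, and $w\mapsto\exp(-\tau\lambda'\wtl w\wtl w_1'h/\sqrt n)$ is in general unbounded. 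So the patch does not show that your supremum of $\EE[e^{\lambda'g_i(\wtl\theta)}(h'\wtl w_{1,i})^2\|\wtl w_i\|_2^2]$ is finite. Nor does it supply the $\theta$-uniform envelope that Newey--McFadden's Lemma 2.4 needs for your centred term and your denominator; Assumption \ref{Ass_3}(b) also only gives an envelope pointwise in $\theta$. The paper has the same silent gap when it cites (c) at $\wtl\theta$.

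The cleanest repair is to avoid population quantities at $\wtl\theta$ altogether. The summand is nonnegative and the target is $0$, so
\[
0\le\EE_n\Big[\tau_i(\wh\lambda,\wtl\theta)\big(h'\wtl w_{1,i}\wtl w_i'\wh\lambda(\wtl\theta)\big)^2\Big]\le\|\wh\lambda(\wtl\theta)\|_2^2\,\frac{\EE_n\big[e^{\wh\lambda(\wtl\theta)'g_i(\wtl\theta)}(h'\wtl w_{1,i})^2\|\wtl w_i\|_2^2\big]}{\EE_n\big[e^{\wh\lambda(\wtl\theta)'g_i(\wtl\theta)}\big]}.
\]
For the numerator, apply your sample-average reduction and then Markov's inequality under (c) at $\theta_\circ$; taking $h$ along coordinate directions gives an integrable envelope for $e^{\lambda'g_i(\theta_\circ)}\|\wtl w_{1,i}\|_2^2\|\wtl w_i\|_2^2$. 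This makes the numerator $\mathcal{O}_p(1)$ uniformly in $h\in K$. For the denominator, Jensen gives the lower bound $\exp\{\wh\lambda(\wtl\theta)'\EE_n[g_i(\wtl\theta)]\}\xrightarrow{p}1$ with no exponential moment at all.

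Alternatively, you can rescue your (and the paper's) decomposition. Use (b) at $\theta_\circ$ and at $\theta_\circ+u$, together with $\sup_{\|\lambda\|_2\le c}e^{\lambda'v}=e^{c\|v\|_2}$ and $u'\wtl w_1\,\wtl w=g(w,\theta_\circ)-g(w,\theta_\circ+u)$. This gives exponential moments of $\|\wtl w\|_2|u'\wtl w_1|$, and hence, by H\"older, an envelope uniform in $\theta$ near $\theta_\circ$.

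Finally, continuity of $\theta\mapsto\wh\lambda(\theta)$ alone does not give $\sup_{h\in K}\|\wh\lambda(\wtl\theta)\|_2\xrightarrow{p}0$, because the random maps need not be equicontinuous in $n$. You need a stochastic bound on $\partial\wh\lambda/\partial\theta'$ over the shrinking ball, or a convexity argument as for GEL multipliers. The paper also skips this step, by applying the pointwise Lemma \ref{lem:2:base:exogeneity} at $\wtl\theta$.
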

\proof
Notice that $\wtl\theta\in B_{\circ,n}$. The following decomposition holds:
  \begin{multline}
    \EE_n\left[\tau_i(\wh\lambda,\wtl\theta)\left(h'\wtl{w}_{1,i}\wtl{w}_i'\wh\lambda(\wtl\theta)\right)^2\right]\\
    = \EE_n\left[e^{\wh\lambda(\wtl\theta)'g_i(\wtl\theta)} \left(h'\wtl{w}_{1,i}\wtl{w}_i'\wh\lambda(\wtl\theta)\right)^2  - \EE\left[e^{\wh\lambda(\wtl\theta)'g_i(\wtl\theta)} \left(h'\wtl{w}_{1,i}\wtl{w}_i'\wh\lambda(\wtl\theta)\right)^2\right]\right] \frac{1}{\EE_n\left[e^{\wh\lambda(\wtl\theta)' g_i(\wtl\theta)}\right]}\\
    + \EE\left[e^{\wh\lambda(\wtl\theta)'g_i(\wtl\theta)}\left(h'\wtl{w}_{1,i}\wtl{w}_i'\wh\lambda(\wtl\theta)\right)^2\right]\frac{1}{\EE_n\left[e^{\wh\lambda(\wtl\theta)' g_i(\wtl\theta)}\right]}.%\\ \hfill
%    + \EE\left[\left(h'\wtl{w}_{1,i}\wtl{w}_i'\wh\lambda(\wtl\theta)\right)^2\right]\frac{1}{\EE_n\left[e^{\wh\lambda(\wtl\theta)' g_i(\wtl\theta)}\right]}\left(1 - \EE_n\left[e^{\wh\lambda(\wtl\theta)' g_i(\wtl\theta)}\right]\right).
\label{eq:2:Lem_A_10}
  \end{multline}

\noindent By Lemma \ref{lem:2:base:exogeneity} which guarantees that $\wh\lambda(\wtl\theta) \in B_{1/\sqrt{n}}(0)$ for $n$ large, by compactness of $B_{1/\sqrt{n}}(0)$ and of $B_{1/\sqrt{n}}(\theta_\circ)$, and by \cite[Lemma 2.4]{NeweyMcFadden1994} which is valid under Assumption \ref{Ass_3} \textit{(c)} with $j = 1$ and $\ell = \ell' = 2$, we have that
$$\EE_n\left[e^{\wh\lambda(\wtl\theta)'g_i(\wtl\theta)} \left(h'\wtl{w}_{1,i}\wtl{w}_i'\wh\lambda(\wtl\theta)\right)^2  - \EE\left[e^{\wh\lambda(\wtl\theta)'g_i(\wtl\theta)} \left(h'\wtl{w}_{1,i}\wtl{w}_i'\wh\lambda(\wtl\theta)\right)^2\right]\right] \xrightarrow{p}0$$
uniformly in $(\wh\lambda(\wtl\theta),h) \in B_{1/\sqrt{n}}(0)\times K$. By Lemma \ref{lem:2:base:exogeneity}, by the fact that $\wtl\theta\rightarrow \theta_\circ$ uniformly in $h\in K$, by the CMT and the DCT (which is valid under Assumption \ref{Ass_3} \textit{(c)} with $j = 1$ and $\ell = \ell' = 2$) we have that $\EE\left[e^{\wh\lambda(\wtl\theta)'g_i(\wtl\theta)}\left(h'\wtl{w}_{1,i}\wtl{w}_i'\wh\lambda(\wtl\theta)\right)^2\right]\xrightarrow{p} 0$ uniformly in $h\in K$. Similarly, $\EE_n\left[e^{\wh\lambda(\wtl\theta)' g_i(\wtl\theta)}\right]\xrightarrow{p} 1$ under Assumption \ref{Ass_3} \textit{(b)} uniformly in $h\in K$. Hence, \eqref{eq:2:Lem_A_10} converges to zero in probability uniformly in $h \in K$.
\begin{flushright}
  $\square$
\end{flushright}

%%------------------------------------------------------------------------------------------------------------------------------------------------------------------------------
\begin{lem}\label{lem:technical:exogeneity:3}
  Let Assumptions \ref{Ass_0_NS}, \ref{Ass_1_NS} (a)-(b), \ref{Ass_3} (b),(g) hold and let $\wtl\theta := \theta_\circ + \frac{\tau}{\sqrt{n}} h$ for some $\tau\in[0,1]$ and any $h$ in a compact set $K\subset \mathbb{R}^p$. Then, as $n\rightarrow \infty$,
  \begin{displaymath}
    \EE_n\left[\tau_i(\wh\lambda,\wtl\theta) g_i(\wtl\theta)\wh\lambda(\wtl\theta)'\wtl{w}_i\wtl{w}_{1,i}'\right]h \xrightarrow{p} 0
  \end{displaymath}
  \noindent uniformly in $h\in K$.
\end{lem}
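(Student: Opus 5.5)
The argument follows the proofs of Lemmas \ref{lem:technical:endogeneity:3} and \ref{lem:technical:exogeneity:2}. The only new feature under exogeneity is that the limit is zero, because $\lambda_*(\theta_\circ)=0$. Note first that $\wtl\theta\in B_{\circ,n}$, so by Assumption \ref{ass:feasibility} and Lemma \ref{lem:continuity:sample} the map $\theta\mapsto\wh\lambda(\theta)$ is well defined there with probability approaching one. Write $\tau_i(\wh\lambda,\wtl\theta)=e^{\wh\lambda(\wtl\theta)'g_i(\wtl\theta)}/\EE_n[e^{\wh\lambda(\wtl\theta)'g_j(\wtl\theta)}]$ and decompose the quantity of interest as
\begin{multline*}
\EE_n\Big[e^{\wh\lambda(\wtl\theta)'g_i(\wtl\theta)}g_i(\wtl\theta)\wh\lambda(\wtl\theta)'\wtl w_i\wtl w_{1,i}'h-\EE\big[e^{\lambda'g_i(\theta)}g_i(\theta)\lambda'\wtl w_i\wtl w_{1,i}'h\big]_{\lambda=\wh\lambda(\wtl\theta),\theta=\wtl\theta}\Big]\frac{1}{\EE_n[e^{\wh\lambda(\wtl\theta)'g_i(\wtl\theta)}]}\\
+\EE\big[e^{\lambda'g_i(\theta)}g_i(\theta)\lambda'\wtl w_i\wtl w_{1,i}'h\big]_{\lambda=\wh\lambda(\wtl\theta),\theta=\wtl\theta}\frac{1}{\EE_n[e^{\wh\lambda(\wtl\theta)'g_i(\wtl\theta)}]}.
\end{multline*}

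For the first (empirical-process) term, Lemma \ref{lem:2:base:exogeneity} together with the continuity in Lemma \ref{lem:continuity:sample} places $\wh\lambda(\wtl\theta)$ in the compact ball $B_{\delta}(0)$ with probability at least $\eta$ for $n$ large. Then $\sup_{\lambda\in B_\delta(0),\theta\in B_\delta(\theta_\circ),h\in K}$ of the centred empirical mean tends to zero in probability. This follows from \cite[Lemma 2.4]{NeweyMcFadden1994}, using the dominating function supplied by Assumption \ref{Ass_3} \textit{(g)}. That assumption covers exactly the product $e^{\lambda'g}\varepsilon(\theta)\wtl w\,\lambda'\wtl w\,h'\wtl w_1$ through its component indexed by $\wtl w_k$, with $(j,\ell)$ adapted as in \textit{(f)}. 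Continuity in $(\lambda,\theta)$ is automatic because $g_i$ is linear in $\theta$.

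For the second (population) term, observe that $\wh\lambda(\wtl\theta)\xrightarrow{p}0$ and $\wtl\theta\to\theta_\circ$, both uniformly in $h\in K$. The integrand is linear in $\lambda$, so it converges pointwise in probability to $e^{0}g_i(\theta_\circ)\,0'\wtl w_i\wtl w_{1,i}'h=0$ by the CMT. The DCT, with the same envelope from Assumption \ref{Ass_3} \textit{(g)}, then gives convergence of the expectation to $0$ uniformly in $h\in K$, since $K$ is compact and $h$ enters linearly. Finally, $\EE_n[e^{\wh\lambda(\wtl\theta)'g_i(\wtl\theta)}]\xrightarrow{p}1$ uniformly in $h$ by Assumption \ref{Ass_3} \textit{(b)}, the ULLN and the DCT, exactly as in Lemma \ref{lem:technical:exogeneity:1}. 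Its reciprocal is therefore $O_p(1)$, and the two terms combine to give the claim.

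The main obstacle is the uniformity. The bound in Lemma \ref{lem:2:base:exogeneity} is stated only pointwise in $\theta$, whereas $\wtl\theta$ depends on $h$. To handle this I would exploit linearity: $g_i(\wtl\theta)=g_i(\theta_\circ)-\wtl w_i\wtl w_{1,i}'\tau h/\sqrt n$. I would then bound $\sup_{h\in K}\|\wh\lambda(\wtl\theta)\|$ using the $\mathcal C^1$ property from Lemma \ref{lem:continuity:sample} together with a mean value step, $\|\wh\lambda(\wtl\theta)-\wh\lambda(\theta_\circ)\|\le \sup\|d\wh\lambda/d\theta'\|\,\|h\|/\sqrt n$, where the derivative is bounded in probability via Lemma \ref{lem:Omega}. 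This keeps $\wh\lambda(\wtl\theta)$ in a fixed compact neighbourhood of $0$ uniformly in $h$, which is all the ULLN and DCT steps require.
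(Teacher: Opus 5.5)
Your proposal is correct and matches the paper's proof. It uses the same split into a centred empirical term controlled by \cite[Lemma 2.4]{NeweyMcFadden1994} and a population term that vanishes by the CMT and DCT because $\lambda_*(\theta_\circ)=0$. It also has $\EE_n[e^{\wh\lambda(\wtl\theta)'g_i(\wtl\theta)}]\xrightarrow{p}1$ from Assumption \ref{Ass_3} \textit{(b)}. Your mean-value step for uniformity in $h$ is an extra refinement the paper leaves implicit; note that it would need Lemma \ref{lem:Omega} to hold uniformly over $\theta$ near $\theta_\circ$, not just pointwise.

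One correction: the envelope that matches $e^{\lambda'g}\,g\,\lambda'\wtl w\,h'\wtl w_{1}$ is Assumption \ref{Ass_3} \textit{(f)}, which is what the paper's proof actually invokes. The functions in \textit{(g)} carry an extra factor $\wtl w_k$ and an extra $\varepsilon$ or $h'\wtl w_1$.
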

\proof
The following decomposition holds:
  \begin{multline}
    \EE_n\left[\tau_i(\wh\lambda,\wtl\theta)g_i(\wtl\theta)\wh\lambda(\wtl\theta)'\wtl{w}_i\wtl{w}_{1,i}'\right]h\\
    = \EE_n\left[e^{\wh\lambda(\wtl\theta)'g_i(\wtl\theta)} g_i(\wtl\theta)\wh\lambda(\wtl\theta)'\wtl{w}_i\wtl{w}_{1,i}'h  - \EE\left[e^{\wh\lambda(\wtl\theta)'g_i(\wtl\theta)} g_i(\wtl\theta)\wh\lambda(\wtl\theta)'\wtl{w}_i\wtl{w}_{1,i}'h\right]\right] \frac{1}{\EE_n\left[e^{\wh\lambda(\wtl\theta)' g_i(\wtl\theta)}\right]}\\
    + \EE\left[e^{\wh\lambda(\wtl\theta)'g_i(\wtl\theta)} g_i(\wtl\theta)\wh\lambda(\wtl\theta)'\wtl{w}_i\wtl{w}_{1,i}'h\right]\frac{1}{\EE_n\left[e^{\wh\lambda(\wtl\theta)' g_i(\wtl\theta)}\right]}.\label{eq:2:Lem_A_11}
  \end{multline}
\noindent By Lemma \ref{lem:2:base:exogeneity} which guarantees that $\wh\lambda(\wtl\theta) \in B_{1/\sqrt{n}}(0)$ for $n$ large, by compactness of $B_{1/\sqrt{n}}(0)$ and of $B_{1/\sqrt{n}}(\theta_\circ)$, and by \cite[Lemma 2.4]{NeweyMcFadden1994} which is valid under Assumption \ref{Ass_3} \textit{(f)}, we have that
$$\EE_n\left[e^{\wh\lambda(\wtl\theta)'g_i(\wtl\theta)} g_i(\wtl\theta)\wh\lambda(\wtl\theta)'\wtl{w}_i\wtl{w}_{1,i}'h  - \EE\left[e^{\wh\lambda(\wtl\theta)'g_i(\wtl\theta)} g_i(\wtl\theta)\wh\lambda(\wtl\theta)'\wtl{w}_i\wtl{w}_{1,i}'h\right]\right] \xrightarrow{p}0$$
uniformly in $(\wh\lambda(\wtl\theta),h) \in B_{1/\sqrt{n}}(0)\times K$. By Lemma \ref{lem:2:base:exogeneity}, by the fact that $\wtl\theta\rightarrow \theta_\circ$ uniformly in $h\in K$, by the CMT and the DCT (which is valid under Assumption \ref{Ass_3} \textit{(f)}) we have that $\EE\left[e^{\wh\lambda(\wtl\theta)'g_i(\wtl\theta)}g_i(\wtl\theta)\wh\lambda(\wtl\theta)'\wtl{w}_i\wtl{w}_{1,i}'h\right]\xrightarrow{p} 0$ uniformly in $h\in K$. Similarly, $\EE_n\left[e^{\wh\lambda(\wtl\theta)' g_i(\wtl\theta)}\right]\xrightarrow{p} 1$ under Assumption \ref{Ass_3} \textit{(b)} uniformly in $h\in K$. Hence, \eqref{eq:2:Lem_A_11} converges to zero in probability uniformly in $h\in K$.
\begin{flushright}
  $\square$
\end{flushright}

%%------------------------------------------------------------------------------------------------------------------------------------------------------------------------------
%%------------------------------------------------------------------------------------------------------------------------------------------------------------------------------
\subsection{Technical results for the extended model}

%=========================================================================================================
\begin{lem}\label{lem:1:extended}
    Let Assumptions \ref{ass:feasibility}, \ref{Ass_1_NS} (a) - (b) with $\Theta$ replaced by $\Psi$ and , \ref{Ass_3} (d)-(e) with $\theta_*$ and $\lambda_*(\theta_*)$ replaced by $\theta_\circ$ and $0$, respectively, hold. Then,
    $$\sqrt{n}\wh\lambda(\psi_\circ) = -\Omega_{\psi_\circ}^{-1}\mathbb{G}_n\left[\varepsilon_i(\theta_\circ)\wtl{w}_i - \left(\begin{array}{c}
      \EE[\varepsilon_i(\theta_\circ)x_i]\\
      0\\
      0
    \end{array}\right)\right] + o_p(1),$$
    where $\Omega_{\psi_\circ} := \EE[g_e(w_i,\psi_\circ)g_e(w_i,\psi_\circ)']$.
\end{lem}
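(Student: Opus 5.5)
I would follow the route of the proof of Lemma \ref{lem:1}, specialised to the extended model, where $\lambda_*(\psi_\circ)=0$. Write $g_{e,i}:=g_e(w_i,\psi_\circ)=\varepsilon_i(\theta_\circ)\wtl w_i-\wtl v_\circ$, with $\wtl v_\circ=(v_\circ',0',0')'$. Under Assumption \ref{ass:feasibility}, $\psi_\circ\in H_n$ with probability approaching one, so $\wh\lambda(\psi_\circ)$ exists and satisfies the sample first-order condition
\[
\EE_n\big[e^{\wh\lambda(\psi_\circ)'g_{e,i}}g_{e,i}\big]=0 .
\]
A preliminary step is the root-$n$ rate $\sqrt n\|\wh\lambda(\psi_\circ)\|_2=\mathcal O_p(1)$. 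This is the extended-model analogue of Lemmas \ref{lem:2} and \ref{lem:2:base:exogeneity}, and I would prove it by the same bracketing and M-estimation argument, with $\mathbb M_n(\lambda)=-\EE_n[e^{\lambda'g_{e,i}}]$ maximised in population at $\lambda=0$. The argument goes through because $\EE[g_{e,i}]=0$ by \eqref{eq:extended_correct_spec}.

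Next I would take a mean value expansion of $\lambda\mapsto \EE_n[e^{\lambda'g_{e,i}}g_{e,i}]$ around $0$, evaluated at $\wh\lambda(\psi_\circ)$:
\[
0=\EE_n[g_{e,i}]+\EE_n\big[e^{\bar\lambda'g_{e,i}}g_{e,i}g_{e,i}'\big]\wh\lambda(\psi_\circ),
\]
with $\bar\lambda$ between $0$ and $\wh\lambda(\psi_\circ)$, taken row by row. Multiplying by $\sqrt n$ and using $\EE[g_{e,i}]=0$, the first term equals $\mathbb G_n[g_{e,i}]$, which is exactly the centred process in the statement, $\mathbb G_n[\varepsilon_i(\theta_\circ)\wtl w_i-\wtl v_\circ]$. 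It remains to show
\[
\check\Omega_e(\bar\lambda):=\EE_n\big[e^{\bar\lambda'g_{e,i}}g_{e,i}g_{e,i}'\big]\xrightarrow{p}\Omega_{\psi_\circ}.
\]
Once that holds, Assumption \ref{Ass_1_NS}(c), applied at $\lambda=0$ in the extended model, gives invertibility of $\Omega_{\psi_\circ}$, so
\[
\sqrt n\,\wh\lambda(\psi_\circ)=-\check\Omega_e(\bar\lambda)^{-1}\mathbb G_n[g_{e,i}]=-\Omega_{\psi_\circ}^{-1}\mathbb G_n[g_{e,i}]+o_p(1).
\]
The last equality uses the CMT together with $\mathbb G_n[g_{e,i}]=\mathcal O_p(1)$ from the Lindeberg--Lévy CLT. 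Finite second moments of $g_{e,i}$ follow from Assumption \ref{Ass_3}(e) at $\lambda=0$.

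The main obstacle is this matrix convergence. I would copy the proof of Lemma \ref{lem:Omega}, splitting
\[
\check\Omega_e(\bar\lambda)-\Omega_{\psi_\circ}=\big\{\EE_n-\EE\big\}\big[e^{\bar\lambda'g_{e,i}}g_{e,i}g_{e,i}'\big]+\EE\big[(e^{\bar\lambda'g_{e,i}}-1)g_{e,i}g_{e,i}'\big].
\]
Since $\bar\lambda\in B_\delta(0)$ with probability at least $\eta$ for large $n$ by the rate step, I would bound the first term by its supremum over $\lambda\in B_\delta(0)$. That supremum vanishes by \cite[Lemma 2.4]{NeweyMcFadden1994}, using the dominating function from Assumption \ref{Ass_3}(e) with $(\ell,\ell',i)=(1,2,0)$, translated to $g_e$. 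This is the delicate point: $g_{e,i}g_{e,i}'$ contains the cross terms $\varepsilon_i\wtl w_i\wtl v_\circ'$ and the constant $\wtl v_\circ\wtl v_\circ'$. These are dominated by combining that bound with Assumption \ref{Ass_3}(b) and (a). The second term vanishes by the CMT and dominated convergence under the same envelopes. Finally, the rate step is what makes the remainder negligible: the mean value form already absorbs the curvature, so no separate $\mathcal O_p(1/n)$ cubic term needs to be controlled, unlike the second-order expansion written in the proof of Lemma \ref{lem:1}.
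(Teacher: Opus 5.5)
Your proposal is correct and follows essentially the paper's route. Both linearize the sample first-order condition $\EE_n[e^{\lambda'g_{e,i}}g_{e,i}]=0$ around $\lambda_*(\psi_\circ)=0$, using the root-$n$ rate of Lemma \ref{lem:2:extended}, the identity $\EE[g_{e,i}]=0$, and the CLT. The only difference is minor: the paper takes a second-order expansion with the Jacobian at $0$, so that the plain LLN gives $\EE_n[g_{e,i}g_{e,i}']\to\Omega_{\psi_\circ}$, and it bounds the remainder, which involves third moments of $g_{e,i}$, by Assumption \ref{Ass_3}(e) with $(\ell,\ell',i)=(1,3,1)$; your first-order mean-value form instead requires uniform convergence of the intermediate-point Jacobian as in Lemma \ref{lem:Omega}. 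You handle that correctly, and the envelopes you borrow from Assumption \ref{Ass_3}(a)--(b), including those for the $\wtl v_\circ$ cross terms, can in fact be derived from the lemma's hypotheses (d)--(e).
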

\proof
By a second order MVT expansion of the function $\lambda\mapsto \EE_n[e^{\wh\lambda'g_e(w_i,\psi_{\circ})} g_e(w_i,\psi_{\circ})]$ around $\lambda_*(\psi_\circ) = 0$, evaluated at $\wh\lambda(\psi_{\circ})$ we can write : $\wh\lambda(\psi_\circ) = -\EE[g_e(w_i,\psi_\circ)g_e(w_i,\psi_\circ)']^{-1}\EE_n[g_e(w_i,\psi_\circ)] + \mathcal{O}_p(1/n)$, where the $\mathcal{O}_p(1/n)$ follows from the Law of Large Numbers (which implies the convergence in probability of $\EE_n[g_e(w_i,\psi_\circ)g_e(w_i,\psi_\circ)']$ towards $\EE[g_e(w_i,\psi_\circ)g_e(w_i,\psi_\circ)']$) and from Lemma \ref{lem:2:extended} and Assumption \ref{Ass_3} \textit{(e)} (with $(\ell,\ell',i) = (1,3,1)$ and with $\theta_*$ replaced with $\theta_\circ$) that allows to control the quadratic term in the MVT expansion. Moreover, $\EE[g_e(w_i,\psi_\circ)] = 0$. Therefore,
\begin{equation}
  \sqrt{n} \wh\lambda(\psi_\circ) = -\EE[g_e(w_i,\psi_\circ)g_e(w_i,\psi_\circ)']^{-1}\mathbb{G}_n[g_e(w_i,\psi_\circ)] + o_p(1).
\end{equation}
\begin{flushright}
  $\square$
\end{flushright}

%=========================================================================================================
\begin{lem}\label{lem:2:extended}
    Suppose Assumptions \ref{ass:feasibility}, \ref{Ass_1_NS} (a) - (b) with $\Theta$ replaced by $\Psi$ and \ref{Ass_3} (e) with $\theta_*$ and $\lambda_*(\theta_*)$ replaced with $\theta_\circ$ and $0$, respectively, hold. Then, for every given $\psi\in\Psi$,
    $$\sqrt{n}\|\wh\lambda(\psi) - \lambda_*(\psi)\|_2 = \mathcal{O}_p(1).$$
\end{lem}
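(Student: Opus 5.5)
The plan is to mirror the proof of Lemma \ref{lem:2}, now for the extended model at a fixed $\psi\in\Psi$, using the M-estimation rate result \cite[Corollary 3.2.6]{VanDerVaartWellner1996}. Take the criterion $\mathbb{M}_n(\lambda) := -\EE_n[e^{\lambda' g_e(w_i,\psi)}]$ and its population counterpart $\mathbb{M}(\lambda) := -\EE[e^{\lambda' g_e(w_i,\psi)}]$. The latter is maximised at $\lambda_*(\psi)$, and by Assumption \ref{ass:feasibility} the sample problem is feasible with probability approaching one, so $\wh\lambda(\psi)$ is well defined.

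\emph{Step 1 (consistency).} Consider the class $\{w\mapsto e^{\lambda_*(\psi)'g_e(w,\psi)}-e^{\lambda'g_e(w,\psi)};\ \lambda\in B_\delta(\lambda_*(\psi))\}$. It is Lipschitz in $\lambda$ by the MVT, with envelope
\[
M_\delta = 2\delta\,\|g_e(w,\psi)\|_2 \sup_{\lambda\in B_\delta}e^{\lambda'g_e(w,\psi)}.
\]
Since $g_e(w,\psi)=g_b(w,\theta)-\wtl v$ and $\wtl v$ is a constant vector, Assumption \ref{Ass_3}\textit{(e)} (with $\theta_*$ and $\lambda_*$ replaced by $\theta_\circ$ and $0$) gives square integrability of this envelope. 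Example 19.7 of \cite{VanDerVaart2000} then bounds the bracketing numbers polynomially, so the class is Glivenko--Cantelli. Combined with strict concavity of $\mathbb{M}$ (its Hessian is $-\EE[e^{\lambda'g_e}g_eg_e']$, negative definite by Assumption \ref{Ass_1_NS}\textit{(c)}), which gives well-separatedness of the maximiser, \cite[Corollary 3.2.3]{VanDerVaartWellner1996} yields $\wh\lambda(\psi)\xrightarrow{p}\lambda_*(\psi)$.

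\emph{Step 2 (rate).} Here I need two facts. First, a quadratic curvature bound $\mathbb{M}(\lambda)-\mathbb{M}(\lambda_*)\le -c\|\lambda-\lambda_*\|^2$ near $\lambda_*$, which follows from a second-order Taylor expansion and the eigenvalue bound in Assumption \ref{Ass_1_NS}\textit{(c)}. Second, the modulus bound
\[
\EE\sup_{\|\lambda-\lambda_*\|\le\delta}|\mathbb{G}_n(\cdot)|\lesssim \|M_\delta\|_{P,2}\lesssim \delta,
\]
using finiteness of the bracketing integral. With $\phi_n(\delta)=\delta$, the rate condition $r_n^2\phi_n(1/r_n)\le\sqrt n$ gives $r_n=\sqrt n$, which is the claim.

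\emph{Main obstacle.} The difficulty is that the statement is asserted for every $\psi\in\Psi$, while Assumptions \ref{Ass_3} and \ref{Ass_1_NS}\textit{(c)} are only local around $\psi_\circ$ (and Assumption \ref{ass:feasibility} only on $B_{\circ,n}$). My first attempt is to restrict to $\psi\in B_{\circ,n}$, which is all that Theorem \ref{lem_likelihood_ET_approx_extended} and Lemma \ref{lem:1:extended} actually use. On that set $\lambda_*(\psi)$ lies in a small ball around $0$ by a continuity argument analogous to Lemma \ref{lem:continuity}, so the envelope and curvature conditions transfer. For a genuinely fixed $\psi$ far from $\psi_\circ$, I would instead impose the analogous local domination and eigenvalue conditions at $(\psi,\lambda_*(\psi))$.
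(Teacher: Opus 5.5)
Your proposal follows the paper's proof essentially step for step: the same criterion $-\EE_n[e^{\lambda' g_e(w_i,\psi)}]$ at fixed $\psi$, the same Lipschitz envelope $M_\delta$, polynomial bracketing via Example 19.7 of \cite{VanDerVaart2000}, consistency from \cite[Corollary 3.2.3]{VanDerVaartWellner1996}, and the rate from \cite[Corollary 3.2.6]{VanDerVaartWellner1996} with $\|M_{1/\sqrt n}\|_{P,2}\lesssim n^{-1/2}$ giving $r_n\asymp\sqrt n$. Your caveats are consistent with what the paper's argument actually delivers. You restrict to $\psi$ near $\psi_\circ$ and make the curvature condition of Assumption \ref{Ass_1_NS}\textit{(c)} explicit; the paper's proof likewise passes to $\psi\in B_{\circ,n}$ and needs that curvature implicitly for Corollary 3.2.6. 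One small correction: square integrability of the envelope is what Assumption \ref{Ass_3}\textit{(d)} provides, which is what the paper invokes here, not \textit{(e)}.
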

\proof To prove the lemma we intend to apply \cite[Corollary 3.2.6]{VanDerVaartWellner1996} with $\mathbb{M}_n = -\EE_n\left[e^{\lambda'g_{e,i}(\psi)}\right]$. We keep $\psi$ fixed and we see $\mathbb{M}_n$ as a function of the data $w_i$ and of $\lambda$. For $\lambda_* = \arg\max_{\lambda\in\mathbb{R}^d}\mathbb{M}$, where $\mathbb{M} :=  -\EE\left[e^{\lambda'g_{e,i}(\psi)}\right]$, define the set of functions $\mathcal{M}_{\delta} := \left\{w_i\mapsto e^{\lambda_*'g_{e,i}(\psi)} - e^{\lambda'g_{e,i}(\psi)}; \lambda \in B_{\delta}(\lambda_*) \right\}$. Let $M_{\delta}$ denote an envelope function of $\mathcal{M}_{\delta}$. Because the functions in $\mathcal{M}_{\delta}$ are Lipschitz in the index parameter $\lambda\in B_{\delta}(\lambda_*)$ and
$$|e^{\lambda_2'g_{e,i}(\psi)} - e^{\lambda_1'g_{e,i}(\psi)}| \leq 2\delta \|g_{e,i}(\psi) e^{\wtl \lambda_{12}'g_{e,i}(\psi)}\|_2$$
for every $w_i$ and every $\lambda_1,\lambda_2\in B_{\delta}(\lambda_*)$ by the MVT for a given $\tau\in(0,1)$ such that $\wtl\lambda_{12} := \lambda_1 + \tau(\lambda_2 - \lambda_1)$ and by CS, then an envelope function is given by
\[
M_{\delta} = 2\delta \|g_{e,i}(\psi)\|_2 \sup_{\lambda_1,\lambda_2\in B_{\delta}(\lambda_*)} e^{\wtl \lambda_{12}'g_{e,i}(\psi)}
\]
(see the discussion before \cite[Theorem 2.7.11]{VanDerVaartWellner1996}). Moreover, there exists a constant $K$ such that the bracketing number $N_{[\,]}(\varepsilon \|M_{\delta}\|_{P,2},\mathcal{M}_{\delta}, \|\cdot\|_{P,2})$ of $\mathcal{M}_{\delta}$ is upper bounded by $K(2\delta)^d/\varepsilon^d$ by \cite[Example 19.7]{VanDerVaart2000} for every $0<\varepsilon < 2\delta $ if $\|M_{\delta}\|_{P,2}<C/\sqrt{n}$, which is guaranteed for $\delta = 1/\sqrt{n}$ under Assumption \ref{Ass_3} (d) and Assumption \ref{Ass_1_NS} \textit{(b)} with $\Theta$ replaced by $\Psi$. Hence, the bracketing integral $J_{[\,]}(1,\mathcal{M}_{\delta},\|\cdot\|_{P,2}) := \bigintsss_0^{1}\sqrt{1 + \log N_{[\,]}(\varepsilon \|M_{\delta}\|_{P,2},\mathcal{M}_{\delta}, \|\cdot\|_{P,2})}d\varepsilon$ is bounded and the class of functions $\mathcal{M}_{\delta}$ is $P$-Glivenko-Cantelli (see \textit{e.g.} \cite[Theorem 2.5.6]{VanDerVaartWellner1996} and its simplified version discussed on \cite[pages 243-244]{VanDerVaartWellner1996}).\\
\indent Therefore, by \cite[Corollary 3.2.3]{VanDerVaartWellner1996}, $\wh\lambda(\psi) \xrightarrow{p} \lambda_*(\psi)$ for every given $\psi\in B(\circ,n)$. Finally, to apply \cite[Corollary 3.2.6]{VanDerVaartWellner1996} we need to upper bound $\EE (\sup_{f\in \mathcal{M}_{1/\sqrt{n}}}\left|\sqrt{n}\EE_n[f - \EE(f)]\right|)$. According with the discussion following \cite[Corollary 3.2.6]{VanDerVaartWellner1996} (and because $J_{[\,]}(1,\mathcal{M}_{\delta},\|\cdot\|_{P,2})$ is bounded by the previous argument), the latter is upper bounded by a constant times $\|M_{1/\sqrt{n}}\|_{P,2}$. Since $\|M_{1/\sqrt{n}}\|_{P,2} \leq C/\sqrt{n}$, then $r_n^4 \|M_{1/r_n}\|_{P,2}^2 \leq n$ yields $r_n \asymp \sqrt{n}$. Therefore, the result of \cite[Corollary 3.2.6]{VanDerVaartWellner1996} holds with $r_n \leq \sqrt{n}$. This proves the result of the Lemma.

\begin{flushright}
  $\square$
\end{flushright}

\section{Details of duality theory for the ETEL problem in the IV setting}
In this section, we first consider a generic moment function $g(w,\theta ):\mathbb{R}^{d+1}\times \mathbb{R}^{p}\rightarrow \mathbb{R}^{d}$ supposed to be known. Then, the conditions on $g(w,\theta)$ will be written in terms of conditions on $g_b(w,\theta)$ and $g_e(w,\theta)$ when necessary. The primal optimization problem $\mathcal{P}_s$ that defines the ETEL is given by:
\begin{align}
  (\mathcal{P}_s)\qquad & \min_{q_{1},\ldots ,q_{n}>0}\sum_{i=1}^{n}\left[q_{i}\log (nq_{i})\right]  \notag \\
  & \text{subject to }\sum_{i=1}^{n}q_{i}= 1,\qquad \qquad \text{and }\qquad \sum_{i=1}^{n}q_{i}g(w_{i},\theta )=0,\label{eq:KLbasic:duality}
\end{align}
which depends on the parameter vector $\theta\in\Theta$, or more simply
\begin{equation}
  (\mathcal{P}_s)\qquad \qquad \min_{q_{1},\ldots ,q_{n}\in \mathcal{Q}_n(\theta)}\sum_{i=1}^{n}\left[q_{i}\log (nq_{i})\right],
\end{equation}
\noindent where $\mathcal{Q}_n(\theta)$ is the feasible set of the optimisation problem $\mathcal{P}_s$ and is defined as:
\begin{displaymath}
  \mathcal{Q}_n(\theta) := \left\{\{q_i\}_{i=1}^n; \sum_{i=1}^n q_i = 1, \; q_1>0,\ldots,q_n>0, \textrm{ and }\sum_{i=1}^n g(w_i,\theta)q_i = 0\right\}.
\end{displaymath}
\noindent We notice that $\mathcal{Q}_n(\theta)$ is the convex hull of $\bigcup_{i=1}^{n}g(w_{i},\theta )$
The subindex in $\mathcal{P}_s$ stresses that this is the primal problem in the sample. The function $u\mapsto \phi(u) \equiv \log(u) u$, which is the summand in the first line of $\mathcal{P}_s$, is a strictly convex function defined on $\mathbb{R}_+$. Denote by $\phi_+(\cdot)$ the extension of $\phi$ to $\mathbb{R}$ defined as : $\phi_+(u) = \phi(u)$ if $u>0$, $\phi_+(u) = 0$ if $u=0$, and $\phi_+(u) = \infty$ if $u<0$. The function $u\mapsto \phi_+(u)$ is strictly convex, essentially smooth, and satisfy $(0,\infty)\subset dom(\phi_+)$ (see \cite[Example 5.6]{BorweinLewis1991}). Strict convexity of $\phi_+$ on $\mathbb{R}_+$ and \cite[Proposition 2.11]{BorweinLewis1991} imply that any optimal solution to $(\mathcal{P}_s)$ is unique.\\
\indent To reduce the dimensionality from $n$ to $d$ it is convenient to work with the dual problem (denoted $\mathcal{D}_s$). The Lagrangian dual problem is
\begin{align}\label{eq:2:sample}
  (\mathcal{D}_s)\qquad & \sup_{\gamma\in\mathbb{R},\lambda\in\mathbb{R}^d} \gamma - \frac{1}{n}\sum_{i=1}^n \phi_+^{*}(\gamma + \lambda' g(w_i,\theta)),
\end{align}
\noindent where $\phi_+^{*}(t)$ is the convex conjugate of $\phi_+:\mathbb{R} \rightarrow (-\infty,\infty]$ and it is given by $t\mapsto \phi_+^*(t) = e^{t-1}$. By solving the optimization problem in \eqref{eq:2:sample} with respect to $\gamma$, we get a dual optimal value $\wh\gamma(\lambda,\theta) = 1 - \log(\EE_n[\exp\{\lambda^{'}g(w_i,\theta) \}])$ and if we replace it back in \eqref{eq:2:sample}, the resulting dual problem $\mathcal{D}_s$ is:
\begin{align}\label{eq:3:sample}
  (\mathcal{D}_s)\qquad & \sup_{\lambda\in\mathbb{R}^d} - \log\left(\frac{1}{n}\sum_{i=1}^n e^{\lambda' g(w_i,\theta)}\right).
\end{align}\\
\indent The next lemma establishes the relation between the primal and the dual ETEL problems and recall the assumptions that guarantee duality results.

\begin{lem}\label{lem:1:duality}
  Suppose that $\sum_{i=1}^n |g_j(w_i,\theta)| < \infty$, $P$-a.s. for all $j=1,\ldots,d$ and every $\theta\in\Theta$. Then, if for a given $\theta$ there exists a feasible solution $\{\wh q_i(\theta)\}_{i=1}^n$ to $\mathcal{P}_s$, that is, $\{\wh q_i(\theta)\}_{i=1}^n\in\mathcal{Q}_n(\theta)$ and $\{\wh q_i(\theta)\}_{i=1}^n = \arg\min_{q_{1},\ldots ,q_{n}\in \mathcal{Q}_n(\theta)}\sum_{i=1}^{n}\left[q_{i}\log (nq_{i})\right]$, then the values of $(\mathcal{P}_s)$ and $(\mathcal{D}_s)$ are equal for this particular $\theta$.
\end{lem}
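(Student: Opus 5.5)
The plan is to follow the standard Fenchel--Rockafellar / Lagrangian duality route for convex integral functionals, in the finite-dimensional form of \cite{BorweinLewis1991}. Fix $\theta$ and write the primal as the minimization of the separable convex function $F(q):=\sum_{i=1}^n \phi_+(nq_i)/n$ over $\mathbb{R}^n$, subject to the linear constraint $Aq=b$. Here $A$ is the $(d+1)\times n$ matrix with columns $(1,g(w_i,\theta)')'$ and $b=(1,0')'$. Using $\phi_+$ rather than $\phi$ builds the positivity constraint into the objective. The finiteness condition $\sum_i|g_j(w_i,\theta)|<\infty$ ensures that every entry of $A$ is finite, so the constraint map is a well-defined linear operator. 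In the infinite-data formulation of Borwein--Lewis this is exactly the integrability of the constraint functions.

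First, weak duality. For any $(\gamma,\lambda)$ and any feasible $q$, the Fenchel--Young inequality $\phi_+(u)\ge tu-\phi_+^*(t)$ is applied with $u=nq_i$ and $t=\gamma+\lambda'g(w_i,\theta)$. Summing over $i$ and dividing by $n$, then using $\sum q_i=1$ and $\sum q_ig(w_i,\theta)=0$, gives $\sum_i q_i\log(nq_i)\ge \gamma-\frac1n\sum_i\phi_+^*(\gamma+\lambda'g(w_i,\theta))$. So value$(\mathcal P_s)\ge$ value$(\mathcal D_s)$.

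Second, strong duality, which needs a constraint qualification. The hypothesis that a feasible solution exists in $\mathcal{Q}_n(\theta)$, with all $q_i>0$, is a Slater-type condition. It means $b\in A(\mathrm{int}\,\mathrm{dom}\,F)$, i.e.\ $b$ lies in the image of the relative interior of the domain under $A$. This is precisely the hypothesis of \cite[Corollary 2.6 / Theorem 4.8]{BorweinLewis1991} (equivalently, Rockafellar's Corollary 31.2.1 in finite dimensions). I would invoke that result to conclude that there is no duality gap and that the dual supremum is attained.

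Alternatively, the argument can be made fully explicit. By strict convexity and essential smoothness of $\phi_+$, the unique primal minimizer $\wh q(\theta)$ has positive coordinates. Hence the KKT conditions hold with multipliers $(\gamma,\lambda)$: $\log(n\wh q_i)+1=\gamma+\lambda'g(w_i,\theta)$, that is, $n\wh q_i=\phi_+^{*\prime}(\gamma+\lambda'g_i)$. For this pair, Fenchel--Young holds with equality termwise, so the primal and dual values coincide. Existence of the multipliers follows from the linearity of the constraints together with differentiability of the objective at an interior point, where Slater's condition is automatic for affine constraints.

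Finally, I would profile out $\gamma$. The first-order condition gives $e^{\gamma-1}\EE_n[e^{\lambda'g_i}]=1$, so $\wh\gamma=1-\log\EE_n[e^{\lambda'g_i}]$, and substituting back recovers the form \eqref{eq:3:sample}. The main obstacle is verifying the constraint qualification so that the KKT multipliers exist and the minimizer is interior, rather than sitting on the boundary where $\phi_+$ has infinite slope. I would handle this via essential smoothness: $\phi_+'(u)\to-\infty$ as $u\downarrow0$, so at a boundary point a feasible direction toward the strictly positive feasible point would strictly decrease the objective. This rules out boundary solutions.
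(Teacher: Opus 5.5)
Your proposal is correct, and its main line is the paper's: the paper also invokes Corollary 2.6 of Borwein--Lewis, with the constraint qualification supplied by the strictly positive weights in $\mathcal{Q}_n(\theta)$. The paper additionally checks that $\phi_+$ is closed, which that corollary requires, so you should state it too. Your alternative route is a valid self-contained proof that avoids the citation and also gives dual attainment: obtain multipliers from $\nabla F(\wh q)\in(\ker A)^{\perp}$ at the interior minimizer, then combine termwise equality in Fenchel--Young with weak duality.
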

This lemma implies that under Assumption \ref{ass:feasibility} and because of strict convexity of the function $\phi_+$, which together guarantee existence of a feasible solution with probability approaching $1$ for every $\theta\in \bigcap_{\{M_n; M_n\rightarrow \infty\} }B(\theta_*, M_n n^{-1/2})$, then the probability that the values of $(\mathcal{P}_s)$ and $(\mathcal{D}_s)$ are equal converges to $1$ for every $\theta \in \bigcap_{\{M_n; M_n\rightarrow \infty\} }B(\theta_*, M_n n^{-1/2})$.\\

\proof
The result of the theorem follows from \cite[Corollary 2.6]{BorweinLewis1991} once we show that the conditions of that corollary are satisfied. The first condition requires that the domain of integration is a finite measure space. In our case the integrals in the minimization problem are with respect to the empirical measure $\frac{1}{n}\sum_{i=1}^n \delta_{w_i}$, where $\delta_{w_i}$ denotes a Dirac measure on $w_i$. Then, we can take the domain of integration to be $[\min_i w_i, \max_i w_i]$ which is a finite measure space.\\
\indent The second condition of \cite[Corollary 2.6]{BorweinLewis1991} requires that $\phi_+$ is closed and convex. Convexity is satisfied. For closedness, it is sufficient to notice that the set $\{u\in\mathbb{R}; \phi_+(u)\leq \alpha\}$ is a closed set for every $\alpha\in\mathbb{R}$ because $0$ is the limit point of $\phi_+(u)$ as $u\rightarrow 0^+$. Finally, we have to check the qualification constraint which, since $(0,\infty)\subset dom(\phi_+)$, it is equivalent to require that there exists a feasible solution $\{\wh q_i\}_{i=1}^n$ such that $\wh{q}_i >0$, for every $i=1,\ldots,n$. The latter is satisfied because it is one of the conditions that define the feasible set $\mathcal{Q}_n(\theta)$.
\begin{flushright}
$\square$
\end{flushright}

The condition $\sum_{i=1}^n|g(w_i,\theta)| < \infty$, $P$-a.s. for every $\theta\in\Theta$ in Lemma \ref{lem:1:duality} becomes the condition $\sum_{i=1}^n|(y_i - \theta'\wtl{w}_{1,i})\wtl{w}_i| < \infty$, $P$-a.s. for every $\theta\in\Theta$ in both the base and the extended models. This is because the auxiliary parameter $v$ does not enter this condition as it is evident from \cite{BorweinLewis1991}.\\
\indent The next lemma characterizes the solution of the primal problem $\mathcal{P}_s$ in terms of the solution of the dual problem $\mathcal{D}_s$.
\begin{lem}\label{lem:2:duality}
  Assume the functions $w\mapsto g_j(w,\theta)$ are continuously differentiable for all $j=1,\ldots,d$ and every $\theta\in\Theta$ and suppose the conditions of Lemma \ref{lem:1:duality} are satisfied. For every $\theta$ for which $\mathcal{Q}_n(\theta)$ is non-empty, let $(\wh\gamma(\theta),\wh\lambda(\theta))$ be a dual optimal solution. Then, the unique optimal solution of the primal problem is given by: $\forall \theta$ such that $\mathcal{Q}_n(\theta)\neq \varnothing$,
    \begin{eqnarray}
      \wh q_i(\theta) & = & \frac{1}{n}((\phi_+)^*)'(\wh\gamma + \wh\lambda^{'}g(w_i,\theta)),\\
      & = & \frac{e^{\wh{\lambda}'g(w_i,\theta)}}{n\EE_n[e^{\wh{\lambda}'g(w_j,\theta)}]} \qquad \textrm{ for } i=1,\ldots,n.
    \end{eqnarray}
\end{lem}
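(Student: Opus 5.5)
Fix $\theta$ with $\mathcal{Q}_n(\theta)\neq\varnothing$. I would derive the primal solution from the dual optimality conditions, using Lemma \ref{lem:1:duality} for the absence of a duality gap. The plan has three steps: (i) show the dual supremum is attained at some $(\wh\gamma,\wh\lambda)$; (ii) write the Karush--Kuhn--Tucker/Fenchel conditions linking a primal optimum to a dual optimum; (iii) compute $(\phi_+^*)'$ explicitly and eliminate $\wh\gamma$ using the normalization constraint.

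For (i), take the profiled dual \eqref{eq:3:sample}. The map $\lambda\mapsto \log\EE_n[e^{\lambda'g(w_i,\theta)}]$ is convex and smooth. Nonemptiness of $\mathcal{Q}_n(\theta)$ means $0$ lies in the relative interior of the convex hull of $\{g(w_i,\theta)\}$, because every $q_i>0$. Hence for every direction $u$ with $u'g(w_i,\theta)\not\equiv0$ some $u'g(w_i,\theta)>0$, so the objective is coercive on the orthogonal complement of the null directions. A minimizer $\wh\lambda$ therefore exists, unique modulo those directions, which leave $\lambda'g(w_i,\theta)$ unchanged. Set $\wh\gamma=1-\log\EE_n[e^{\wh\lambda'g(w_i,\theta)}]$, as derived above.

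For (ii), use Lemma \ref{lem:1:duality} (zero duality gap) together with attainment. By \cite[Corollary 2.6]{BorweinLewis1991}, the primal optimum is characterized by $n q_i\in\partial\phi_+^*(\wh\gamma+\wh\lambda'g(w_i,\theta))$; this is the Fenchel--Young equality case. Alternatively, one can verify it directly: the candidate $q_i$ is strictly positive, and it satisfies the constraints because of the first-order condition $\EE_n[e^{\wh\lambda'g_i}g_i]=0$ of the dual. The weak-duality inequality $\sum q_i\log(nq_i)\ge \gamma-\EE_n[\phi_+^*(\gamma+\lambda'g_i)]$ then holds with equality at this candidate. It is therefore primal optimal, and uniqueness follows from strict convexity via \cite[Proposition 2.11]{BorweinLewis1991}.

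For (iii), $\phi_+^*(t)=e^{t-1}$ is differentiable, so $q_i=\frac1n e^{\wh\gamma-1+\wh\lambda'g(w_i,\theta)}$. Substituting $\wh\gamma$ gives $e^{\wh\gamma-1}=1/\EE_n[e^{\wh\lambda'g_j}]$, which yields the second displayed formula, and $\sum_i q_i=1$ holds automatically.

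The main obstacle is step (i): attainment of the dual and the handling of possible degeneracy of $\{g(w_i,\theta)\}$. I would treat it by restricting $\lambda$ to the span of the $g(w_i,\theta)$ and using the relative-interior condition for coercivity. The differentiability hypothesis on $g$ is not needed for fixed $\theta$.
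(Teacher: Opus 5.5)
Your proof is correct, but it takes a different route from the paper. The paper proves the lemma in one line by citing \cite[Theorem 5.5]{BorweinLewis1991}, a general result on recovering the primal optimum from a dual optimum in partially finite convex programs, applied with $D=\lim_{u\to\infty}\phi(u)/u=\infty$. The smoothness of $w\mapsto g(w,\theta)$ and the summability condition of Lemma \ref{lem:1:duality} are imposed only to fit that theorem's hypotheses.

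You instead stay inside the finite-dimensional sample problem. Fenchel--Young gives $\sum_i q_i\log(nq_i)\ge \gamma-\EE_n[\phi_+^*(\gamma+\lambda'g(w_i,\theta))]$ for every feasible $q$ and every $(\gamma,\lambda)$. The candidate $q_i=n^{-1}e^{\wh\gamma-1+\wh\lambda'g(w_i,\theta)}$ is feasible by the dual first-order condition and the choice of $\wh\gamma$, and it attains equality. So it is optimal, and strict convexity gives uniqueness.

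Your route buys self-containedness and weaker hypotheses. Equality in Fenchel--Young at a feasible point already shows there is no duality gap. So neither Lemma \ref{lem:1:duality} nor the Corollary 2.6 citation in your step (ii) is needed, and both can be dropped; as you note, neither is the differentiability in $w$. Your step (i) also supplies what the statement presupposes: a dual optimum exists whenever $\mathcal{Q}_n(\theta)\neq\varnothing$. It also shows $\wh q_i$ is the same for every dual optimum when the $g(w_i,\theta)$ do not span $\mathbb{R}^d$.

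For the coercivity there, you can use compactness of the set of unit vectors $u$ orthogonal to the null directions. On that set $c:=\min_u \max_i u'g(w_i,\theta)>0$, which gives $\EE_n[e^{\lambda'g(w_i,\theta)}]\ge n^{-1}e^{c\|\lambda\|}$ on that subspace. Equivalently, a finite convex function that diverges along every ray there has bounded level sets.

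The paper's citation, by contrast, places the result in a measure-space framework that would also handle population analogues, which this sample-level lemma does not require.
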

\proof
Under the assumptions of the lemma, the result follows from \cite[Theorem 5.5]{BorweinLewis1991} with $D:= \lim_{u\rightarrow\infty}(\phi(u)/u) = \infty$. % where $D$ is such that there exists $\lambda^*\in\mathbb{R}^d$ such that
%$$\lambda^{*'}g(w_i,\theta) < D,\qquad \textrm{ for all }w_i\in [\min_i w_i,\max_i w_i].$$
\begin{flushright}
$\square$
\end{flushright}
\indent Continuous differentiability is satisfied for both the base and the extended model. The result of Lemma \ref{lem:2:duality} is important and means that if there exists a feasible solution for the dual problem then, we can find from it a solution for the primal problem for every $\theta$ for which $\mathcal{Q}_n(\theta)$ is non-empty.

Existence of a solution of $\mathcal{P}_s$ is not guaranteed even if the function $\phi_+(\cdot)$ is strictly convex. In fact, it might be that the feasible set $\mathcal{Q}_n(\theta)$ is empty for some $\theta$. Under Assumption \ref{Ass_absolute_continuity}, then in order to justify any estimation procedure based on the ETEL we need the following assumption. %For a constant $C>0$, and any $\wtl\theta\in\Theta$, define the $n^{-1/2}$-ball around $\wtl\theta$ as $B(\wtl\theta,C n^{-1/2}) := \{\theta\in\Theta:\,\|\theta - \wtl\theta\| \leq C n^{-1/2}\}.$ Denote by $int \Delta_n := \{q := (q_1,\ldots, q_n)'; \sum_{i=1}^n q_i = 1, \; q_1>0,\ldots,q_n>0\}$ the interior of the $n-1$ simplex.
\begin{ass}\label{ass:feasibility:Appendix:duality}
  For any constant $C>0$ and any $\theta\in B(\theta_*, C n^{-1/2})$:
  \begin{equation}
    P\left(\sum_{i=1}^n g(w_i,\theta)q_i = 0 \textrm{ for at least one } q \in int \Delta_n \right) \rightarrow 1.
  \end{equation}
\end{ass}
This assumption is slightly weaker than Assumption \ref{ass:feasibility} that we require because the condition has to hold for a $\theta$ in a slightly smaller set $B(\theta_*, C n^{-1/2})$.
%\indent \textcolor{red}{To add: results in Schennach (2007) are based on arguments built on the first derivative of the tilting parameter. This of course assumes... }

\end{document}